\algnewcommand\algorithmicinput{\textbf{INPUT:}}
\algnewcommand\INPUT{\item[\algorithmicinput]}
\algnewcommand\algorithmicoutput{\textbf{OUTPUT:}}
\algnewcommand\OUTPUT{\item[\algorithmicoutput]}
\algnewcommand\algorithmicinitialize{\textbf{Initialize:}}
\algnewcommand\initialize{\item[\algorithmicinitialize]}
\algnewcommand\algorithmicstage{\textbf{Stage}}
\algnewcommand\stage{\item[\algorithmicstage]}
\newtheorem{lemma}{Lemma}
\newtheorem{proposition}{Proposition}
\newtheorem{corollary}{Corollary}
\newtheorem{definition}{Definition}
\newtheorem{theorem}{Theorem}
\newcommand{\1}  { {(1)}} 
\newcommand{\2}  { {(2)}}
\newcommand{\eps}{\varepsilon}
\newcommand{\E}{\mathbb E}
\DeclareMathOperator*{\argmin}{arg\,min}
\DeclareMathOperator*{\argmax}{arg\,max}
 \newcommand{\lb}{ \langle}
\newcommand{\rb}{ \rangle}
\newcommand{\lt}{ {\mathcal L^2}}
 \newcommand{\I}{ {\mathcal I}}
  \newcommand{\hk}{ { \mathcal H (K) }}
\newcommand{\kmh}{{K^{-1/2}}}
\newcommand{\kh}{{K^{1/2}}}
\newcommand{\shat}{\widehat{\Sigma}}
\newcommand{\shati}{\widehat{\Sigma}_{\mathcal{I}}}
\newcommand{\sx}{ {\Sigma}}
\newcommand{\weta}{ \widehat{\eta}}
\newcommand{\what}[1]{\widehat{#1}}
\newcommand{\wtilde}[1]{\widetilde{#1}}
\newcommand{\st}{ {(s,t]}}
\newcommand{\te}{ {(t,e]}}
\newcommand{\se}{ {(s,e]}}
\newcommand{\sem}{ {(s_m,e_m]}}
\newcommand{\semp}{ {(s_m',e_m']}}
\newcommand{\flst}{f^\lambda_{(s,t]}}
\newcommand{\fhast}{\widehat{f}_{(s,t]}}
\newcommand{\bhast}{\widehat{\beta}_{\st}}
\newcommand{\bhase}{\widehat{\beta}_{\se}}
\newcommand{\shast}{\widehat{\Sigma}_{\st}}
\newcommand{\bhate}{\widehat{\beta}_{(t,e]}}
\newcommand{\Gset}{\widetilde{G}^{s,e}_t}
\newcommand{\Gsetm}{\widetilde{G}^{s_m,e_m}_t}
\newcommand{\Whset}{\widehat{W}^{s,e}_t}
\newcommand{\Whsetm}{\widehat{W}^{s_m,e_m}_t}
\newcommand{\lft}{\left}
\newcommand{\rgt}{\right}
\newcommand{\boldI}{{\bf I}}
\newcommand{\Ktilde}{{\mathcal{K}}}
\newcommand{\T}{ {\mathcal T}}
\newcommand{\J}{ {\mathcal J}}
\newcommand{\B}{ {\mathcal B}}
\newcommand{\calP}{ {\mathcal P}}
\newcommand{\etahat}{ {\widehat{\eta}}}
\newcommand{\Khat}{ {\widehat{\Ktilde}}}
\newcommand{\calD}{{\mathcal D}}
\newcommand{\calQ}{ {\mathcal Q}}
\newcommand{\dau}{\partial}
\newtheorem{assumption}{Assumption}
\date{\vspace{-5ex}}
\begin{document}

\title{Estimation and Inference for Change Points in Functional  Regression Time Series}

\author[1]{Shivam Kumar}
\author[2]{Haotian Xu}
\author[3]{Haeran Cho}
\author[4]{Daren Wang}
\affil[1,4]{Department of ACMS, University of Notre Dame}
\affil[2]{Department of Statistics, University of Warwick}
\affil[3]{School of Mathematics, University of Bristol}

\maketitle

\begin{abstract}
   In this paper, we study the estimation and inference of change points under a functional linear regression model with changes in the slope function. We present a novel Functional Regression Binary Segmentation (FRBS) algorithm which is computationally efficient as well as achieving consistency in multiple change point detection. This algorithm utilizes the predictive power of piece-wise constant functional linear regression models in the reproducing kernel Hilbert space framework. We further propose a refinement step that improves the localization rate of the initial estimator output by FRBS, and derive asymptotic distributions of the refined estimators for two different regimes determined by the magnitude of a change.  To facilitate the construction of confidence intervals for underlying change points based on the limiting distribution, we propose a consistent block-type long-run variance estimator.  Our theoretical justifications for the proposed approach accommodate temporal dependence and heavy-tailedness in both the functional covariates and the measurement errors. Empirical  effectiveness of our methodology is demonstrated through extensive simulation studies and an application to the Standard and Poor’s 500 index dataset.
\end{abstract}

\section{Introduction}\label{sec: multiple CP}

Functional Data Analysis (FDA) studies data that are represented as random functions. The infinite dimension of functional data poses a significant challenge to the development of statistical methodologies. 
Functional Principal Component Analysis (FPCA), a pivotal approach in FDA, focuses on characterizing the dominant modes of variation in random functions. Seminal contributions to the development and application of FPCA include, for example, \citet{ramsay2005principal} and \citet{muller2005functional}. Another important approach in this area employs strategies based on Reproducing Kernel Hilbert Space (RKHS) for estimating the mean, covariance, and slope functions, as demonstrated in \cite{caiRkhsCovariance}.  Unlike non-parametric methods such as FPCA, the RKHS-based approach selects the most representative functional features in an adaptive manner from an RKHS. We refer to \citet{muller2016functional} for a comprehensive overview of the FDA. Extensive treatments of the subject can also be found in \citet{ramsay2002applied}, \citet{kokoszka2012functional}, \citet{hsingeubanktheoretical}, and \citet{kokoszka2017book}.

Functional time series analysis is an important area within FDA, focusing on functional data with temporal dependence. From the modeling perspective, \citet{Yao2000functional} focused on functional   regression via local linear modeling; \citet{kowal2017abayesian} investigated functional linear models; and \citet{kowal2019functional2017b} explored functional autoregression.  To analysis functional time series, \citet{panaretos2013fourier} employed a Fourier analysis-based approach and \citet{PanaretosFTS} considered the estimation of the dynamics of functional time series in a sparse sampling regime. We refer to \citet{konerstaicu2023second} for a comprehensive survey.

In this paper, we focus on a functional linear regression model with the slope function changing in a piece-wise constant manner. 
Given the data sequence $\lft\{(y_j,X_j )\rgt\}_{j=1}^n$, we consider the model  
	\begin{align}\label{eq: model}
	    y_{j} &= \lb \beta ^*_j ,  X_j \rb_\lt  +\eps_j , \qquad 1\le j \le n,
	\end{align}
 where $\{y_j\}_{j=1}^n$ are the scalar responses, $\{X_j\}_{j=1}^n$ the functional covariates, $\{\eps_j\}_{j=1}^n$ the centered noise sequence, and $\{\beta^*_j\}_{j=1}^n$ the true slope functions. Here, we denote $\lb \beta ^*_j ,  X_j \rb_\lt 
 =  \int \beta_j^*(u) X_j(u) du  $.
 We assume that there exists a collection of time points  $\{\eta_k\}_{k=0}^{\Ktilde+1} \subset \{0,1,\ldots,n\}$ with $0 =\eta_0 <\eta_1 < \ldots <\eta_\Ktilde < \eta_{\Ktilde+1} = n $ such that
 \begin{equation}\label{eq: model2}
        \beta^*_j \neq \beta^*_{j+1} \quad \text{if and only if} \quad j \in \{\eta_1, \ldots,\eta_\Ktilde\}.     
 \end{equation}
We refer to the model specified in \eqref{eq: model} and \eqref{eq: model2}  as the functional linear regression model with change points.  Our goals are twofold: to estimate the locations of the change points consistently, and to derive the limiting distributions of these estimators and consequently construct an asymptotically valid confidence interval around each change point.

The considered problems are part of the vast body of change point analysis. The primary interest of change point analysis is to detect the existence of change points and estimate change points' locations in various data types. \citet{wangun2020univariate} and \citet{sullivan2000change} have addressed the detection of changes in the mean and covariance of a sequence of fixed-dimensional multivariate data, while \citet{samworth2018high} and \citet{kaul2023covariance} have focused on high-dimensional settings.
Change point problems have been investigated in various settings, such as signal processing \citep{tartakovsky2012signal}, regression \citep{SeoShin2016lasso}, networks \citep{barnett2016network}, and factor models \citep{Chofactormodels}, to name but a few.
In functional settings, \citet{Dette} studied the detection of changes in the eigensystem, while \citet{LiLiHsing2022functional}, \citet{harris2022scalable}, and \citet{carlos2022change} considered problems related to detecting changes in the mean and  \citet{jiao2023covariance} that in the covariance. Change point detection problems within this context have also been investigated in the Bayesian framework, e.g.~\citet{ghoshal2021bayesian}.
Beyond estimation of change points, recently, the limiting distributions of change point estimators  have been studied in high-dimensional regression \citep{xu2022HDregression, kaul2021inference}, multivariate non-parametric \citep{carlos2023change} as well as functional \citep{aue2009functional, aue2018functional} settings.

Despite these contributions, the estimation and inference of  change points in functional linear regression settings remain   unaddressed, and this paper aims to fill this gap.  
To this end, we first propose a two-step procedure based on RKHS, to detect and locate the multiple change points. 
Then, we investigate limiting distributions of change point estimators and introduce a new methodology to construct a confidence interval for each change point. 
This requires the estimation of long-run variance in the presence of temporal dependence which is of independent interest on its own, as highlighted by studies such as  \cite{LRVkhismatullina2020multiscale} and \citet{kokoszkaLRV}.

The framework adopted in this study is general, accommodating heavy-tailedness and temporal dependence in both functional covariates and
noise sequences.  More specifically, our methodology only requires the existence of sixth moments and a polynomial decay of $\alpha$-mixing coefficients for both functional covariates and
noise sequences, which greatly expands its applicability.  We also allow  for  the number of change points, denoted by $\Ktilde$, to diverge with the sample size.
 
\subsection{List of contributions} 
We briefly summarise the main contributions made in this paper below. 
\begin{itemize}
    \item To the best of our knowledge, our work is the first attempt at estimating and inferring change points in functional linear regression settings.
    Our theory only requires weak moment assumptions as well as accommodates temporal dependence and the number of change points to increase with the sample size. Besides the error bound for change point localization, we establish the corresponding minimax lower bound, thereby demonstrating the optimality of the proposed change point estimator.

    \item To facilitate  the practicability of our inference procedure, we introduce a block-type long-run variance estimator and prove its consistency. This estimator is subsequently employed to construct an asymptotically valid confidence interval for each change point.

    \item We demonstrate the numerical performance of our proposed method through extensive numerical examples and  real data analysis using Standard and Poor's 500 index datasets. Our approach numerically outperforms alternative change point estimation methods that rely on FPCA or high-dimensional regression methods.

\end{itemize}

\subsection{Basics of RKHS}
\label{sec:prelim}
This section briefly reviews the basics of RKHS that are relevant to functional linear regression.  We refer to  \citet{wainwrighthdsbook} for a detailed introduction to RKHS.

For any compact set $\T$, denote the space of square-integrable functions defined on   $\T$ as 
 $$\lt(\T) =  \bigg \{ f: \T \to \mathbb R  :  \|f\|_\lt^2= \int_\T f^2(u) du < \infty \bigg  \} .$$   
For any $f,g \in \lt(\T)$,  let 
$$\lb f, g \rb_\lt = \int_\T f(u)g(u)\, du . $$   
For a linear map $F$ from $\lt(\T) $ to $\lt(\T)$, define $\|F\|_{\mathrm{op}} = \sup_{\substack{ \|h\|_\lt   = 1 }}  \|  F(h)\| _\lt$. A kernel function $R: \T \times \T \to \mathbb{R}$ is a   symmetric  and nonnegative definite function. 
 The   integral operator $L_R $ of $R $ is a linear map from $ \lt(\T)  $ to $ \lt(\T)$   defined as  
 $$L_R(f)(\cdot) = \int_{\T} R(\cdot,u) f(u) \,du .$$ Suppose  in addition that $R$  is bounded. Then, Mercer's theorem (e.g.\ Theorem 12.20 of \cite{wainwrighthdsbook}) implies that there exists a set of orthonormal eigenfunctions $\{\psi^R_l\}_{l=1}^\infty  \subset \lt(\T )$ and a sequence  of  nonnegative eigenvalues $\{\theta^R_l\}_{l=1}^\infty$ sorted non-increasingly, such that
$$
\quad R(u_1,u_2) = \sum_{l= 1}^\infty  \theta^R_l \psi^R_l(u_1) \psi^R_l(u_2). 
$$
Thus, we have that 
$ L_R(\psi^R_l) =  \theta_l^R \psi^R_l$.
Define the RKHS generated by  $R$ as 
$$  \mathcal{H}(R) =\bigg \{ f \in \lt(\T) :   \|f \| _{\mathcal{H}(R)} ^2 =   \sum_{l=1}^\infty \frac{\lb f, \psi^R_l \rb_\lt ^2 }{\theta^R_l} < \infty  \bigg\} . $$
For any $ f, g \in \mathcal{H}(R)$, denote 
\begin{equation}\label{eq: RKHS defintion}
    \lb f, g \rb_{\mathcal{H}(R)} = \sum_{l =  1}^\infty  \frac{\lb f, \psi^R_l \rb_\lt\lb g, \psi^R_l \rb_\lt}{\theta^R_l}.    
\end{equation} 
Define   
$$R^{1/2}(u_1,u_2) = \sum_{l =1 }^\infty   \sqrt{\theta^R_l} \psi^R_l(u_1) \psi^R_l(u_2).$$
Thus, $L_{R^{1/2} }(\psi^R_l) =  \sqrt { \theta_l^R} \psi^R_l$.
It follows that    $L_{R^{1/2}}:  \lt(\T) \to \mathcal{H}(R)$  is bijective and distance-preserving.
In addition, if $\{ \Phi_l \}_{l=1}^\infty$ is a   $\lt(\T)$ basis, then  $ \{ L_{R^{1/2}}(\Phi_l )\}_{l=1}^\infty$ is a basis of $\mathcal{H}(R)$.
For any $f,g \in \lt(\T)$, denote $$R[f,g]= \iint_{\T\times \T} f(u_1)R(u_1,u_2)g(u_2)\; du_1 du_2.$$ Let $R_1$ and $ R_2$ be any generic kernel functions. We denote the composition of $R_1$ and $R_2$ as 
$$
R_1   R_2 (u_1,u_2)  = \int_\T R_1(u_1,v) R_2(v,u_2) dv.
$$

\subsection{Notation and organization}
For two positive real number  sequences $\{a_j\}_{j= 1}^\infty $ and $\{b_j\}_{j= 1}^\infty $, we write $a_j \lesssim b_j$ or $a_j = O(b_j)$ if there exists  an absolute positive constant $C$ such that $a_j \le Cb_j$. We  denote  $a_j \asymp b_j$, if $a_j \lesssim b_j$ and $b_j \lesssim a_j$. We write   $a_j = o\lft( b_j\rgt)$ if $\lim_{j\to \infty} b_j^{-1} a_j \to 0$. For a sequence of $\mathbb{R}$-valued random variables $\{X_j\}_{j= 1}^\infty$, we denote  $X_j = O_\mathbb P\left(a_j\right)$ if $\lim _{M \rightarrow \infty} \lim \sup _j \mathbb{P}\left(\left|X_j\right| \geq M a_j\right)=0$. We denote $X_j=o_\mathbb P\left(a_j\right)$ if $\lim \sup _j  \mathbb{P}\left(\left|X_j \right| \geq M a_j\right)=0$ for all $M>0$. The convergences in distribution and probability are respectively denoted by $\overset{\mathcal{D}}{\longrightarrow}$ and $\overset{P}{\longrightarrow}$. 
With slight abuse of notations, for any positive integers $s$ and $e$ where $0 \le s < e < n$, we use  $\se$ to denote the set $\se \cap \{1, \ldots, n\}$.

The rest of the paper is organized as follows. Section~\ref{sec: CP estimation} introduces our new methodology  for estimating multiple change points within functional linear regression settings. Section~\ref{sec: theory CP} studies the theoretical properties of the proposed estimators, establishing their minimax optimality and limiting distributions. In Section~\ref{sec: CI}, we discuss the construction of confidence intervals around each change point and provide an asymptotically valid  procedure   for the long-run variance estimation. Finally, Section~\ref{sec: NR} demonstrates the superior performance of our proposed method through its application to both simulated and real-world datasets, highlighting its advantages over potential competitors. The implementation of the proposed methodology can be found at \href{https://github.com/civamkr/FRBS}{\fcolorbox{green}{white}{\textcolor{black}{\fontfamily{pcr}\selectfont{https://github.com/civamkr/FRBS.}}}}

\section{Change point estimation}\label{sec: CP estimation}
 
In this section, we introduce  our method for change point   estimation under the functional linear regression model defined in \eqref{eq: model}. To motivate our approach, we first consider a closely related  two-sample testing problem in the functional linear regression setting. Given data $\{ (y_j, X_j) \}_{j=1}^n$ generated from \eqref{eq: model},  consider 
$$
H_0: \beta^*_{s+1} = \ldots = \beta^*_{e} \quad \text{ vs.} \quad H_a: \beta^*_{s+1} = \ldots = \beta^*_{t} \neq \beta^*_{t+1} = \ldots = \beta^*_{e},
$$
where $ 0<s<t<e \le n $. In other words, we are interested in testing whether there is a change in the slope function at time $t$ within the interval $(s,e]$.  The corresponding   likelihood ratio statistic~is 
\begin{align}\label{eq: likelihood}
    \what{W}_t^{s,e} = \frac{\max_{\beta\in \hk} \mathfrak{L}\lft( { \{ y_j, X_j\} }_{j=s+1}^e, \beta \rgt) }{\max_{   \beta_1\in \hk   } \mathfrak{L} \lft( { \{ y_j, X_j \}  }_{j=s+1}^t, \beta_1 \rgt) 
    \max_{  \beta_2 \in \hk  }\mathfrak{L}  \lft( { \{  y_j, X_j   \} }_{j=t+1}^e, \beta_2 \rgt)}
\end{align}
where, assuming for the moment that $\{\epsilon_j\}_{j = 1}^n$ are i.i.d.~standard normal, we have the likelihood function $\mathfrak{L}( { \{ y_j, X_j \} }_{j=s+1}^e, \beta ) = \prod_{j = s + 1}^e (2\pi)^{-1/2} e^{- ( y_j - \lb X_j, \beta\rb_\lt )^2/2 } $, and $\hk$ denotes RKHS corresponding to kernel $K$ defined in \Cref{assume: model assumption spectral} below.
 Note that  \eqref{eq: likelihood} can be further simplified to 
\begin{equation}\label{eq: definition Whset 2}
    \what{W}_t^{s,e} = \sum_{j=s+1}^e \lft( y_j - \lb X_j, \what{\beta}_{(s,e] } \rb_\lt \rgt)^2 - \sum_{j=s+1}^t \lft( y_j - \lb X_j, \what{\beta}_{(s,t] } \rb_\lt \rgt)^2 - \sum_{j=t+1}^e \lft( y_j - \lb X_j, \what{\beta}_{(t,e] } \rb_\lt \rgt)^2,
\end{equation}
where $\what{\beta}_{(s,e] }$ is the maximum likelihood estimator of the slope function based on $\{ (y_j, X_j) \}_{j=s+1}^e $. In practice,   finding $\what{\beta}_{(s,e] }$ is a challenging task, as  the intrinsic dimension of $\hk$ is infinite.  Inspired by \citet{caiyuan2012minimax}, we consider the following penalized estimator 
\begin{equation}\label{def: bhast}
    \bhase  = \argmin_{\beta\in \hk } \left\{ \frac{1}{(e-s)}\sum_{j\in \se }\left( y_j - \lb X_j, \beta \rb_\lt \right)^2  + \lambda_{e-s} \|\beta\|^2_\hk \right\},
\end{equation}
where $\lambda_{e-s}  $ is a tuning parameter to ensure the smoothness of the estimator. While \eqref{def: bhast} is an optimization problem in an infinite-dimensional space, the solution can be found in a finite-dimensional subspace via the representer theorem in RKHS \citep{caiyuan2010rkhs}, and is therefore statistically sound and numerically robust. In this case,
the population counterpart of \eqref{eq: definition Whset 2} is 
\begin{equation}\label{eq: signal_population}
    {W}_t^{s,e} =  \frac{(t-s)(e-t)}{(e-s)}\sx[ \beta^*_\st - \beta^*_\te, \beta^*_\st - \beta^*_\te],
\end{equation}
where $\beta^*_\se = (e-s)^{-1}\sum_{j = s+1}^e \beta^*_j$, and   
$ \Sigma $ is the covariance operator of $\{ X_j\}_{j=1}^\infty$, the centered and stationary covariate sequence,  i.e.\ 
$ \Sigma (u_1, u_2) = \E ( X_1(u_1)X_1(u_2))$.
To illustrate the effectiveness of the likelihood ratio statistics  $\what{W}_t^{s,e}$ in revealing the location of a change point, we demonstrate in \Cref{fig_cusum} that displays 
$\what{W}_t^{s,e}$ and its population counterpart $W_t^{s,e}$ in a situation where the interval $(s,e]$ contains a single change point at $\eta$.
We observe that $\what{W}_t^{s,e}$ closely approximates $W_t^{s,e}$, which is a `tent-shape' function in $t$ and is maximized at $\eta$, and thus $\what{W}_t^{s,e}$ attains its maximum close to~$\eta$ (in fact, exactly at $\eta$ in this example).

\begin{figure}[h]
\centering
\includegraphics[width=10cm]{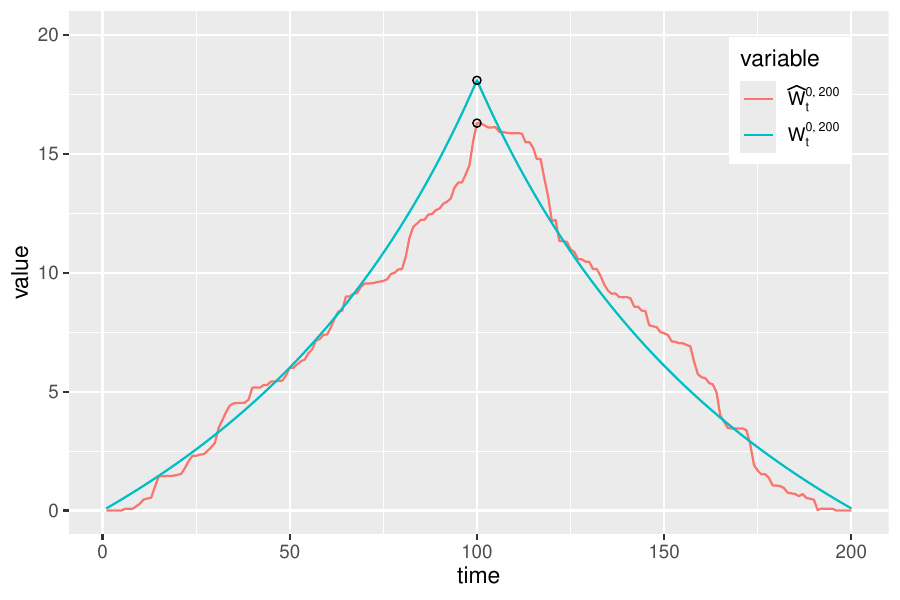}
\caption{Plot of $\widehat W_{t}^{s, e}$ and its population version ${W}_t^{s, e}$ with $s = 0$ and $e = 200$ over $t = 1, \dots, 199$. The data are simulated under Scenario~1 (\Cref{sec: NR_Simu}) with $n = 200$ and a change point occurs at $\eta = 100$. The estimator $\bhase$ is obtained with $\lambda_{e-s} = 0.2$. Both $\widehat W_{t}^{s, e}$ and ${W}_t^{s,e}$ achieve their maximum at $t = 100$.}
\label{fig_cusum}
\end{figure}

In what follows, we propose a two-step method for change point estimation in functional regression time series. In Step~$1$, we propose  a computationally efficient approach based on the statistic $\what{W}_t^{s,e}$, to generate preliminary change point estimators. Then, in Step~$2$, we utilize the preliminary estimators from Step~$1$ to develop the final estimators with  enhanced accuracy. 
 
\subsection*{Step~1: preliminary estimator}
\label{sec: SBS}
In Step~$1$, our goal is to consistently estimate change points with computational efficiency. Our approach utilizes the seeded binary segmentation first proposed in \citet{kovacs2023seeded}, which, with a set of deterministic intervals drawn as in Definition~\ref{def: Seeded Intervals} below, systematically scans for change points in the data at multiple resolutions.

\begin{definition}[Seeded intervals]
\label{def: Seeded Intervals} Let $n$ be the length of a given time series and $\Delta$ a given integer satisfying $0 < \Delta < n$.
Letting $M= \lft\lceil \log_{2} \lft( n/\Delta \rgt) \rgt\rceil + 1$ be the total number of layers, denote  
$\mathfrak{l}_k = n/2^{k-1}$ and $\mathfrak{b}_k = \mathfrak{l}_k/2 = n/2^k$,
for the layer index $k = 1, \ldots, M$. 
Then, the collection of seeded intervals is defined as 
\begin{align*}
\J = \bigcup_{k=1}^{M} \J_k \text{ \ where \ }  \J_k = \bigcup_{i=1}^{2^k - 1} \Big\{ \lft[ \big\lceil(i-1)\mathfrak{b}_k \big\rceil, \big\lfloor(i - 1)\mathfrak{b}_k + \mathfrak{l}_k \big\rfloor \Big]\rgt\},
\end{align*}
where $\J_k$ is the seeded intervals in the $k$-th layer.
 
\end{definition}
The total number of the intervals in $\mathcal{J}$ is bounded from above by
\begin{align}
|\J| = \sum_{k=1}^{ \lft\lceil \log_{2} \lft(\frac{n}{\Delta} \rgt) \rgt\rceil + 1} (2^k - 1) = 2^{\lft\lceil \log_{2} \lft(\frac{n}{\Delta} \rgt) \rgt\rceil + 2} - 3 - {\lft\lceil \log_{2} \lft(\frac{n}{\Delta} \rgt) \rgt\rceil} \le 8 \lft( \frac{n}{\Delta}\rgt).
\label{rm: card J}
\end{align}
 Note that by the construction of $\mathcal J$,
 each change point $\eta_k$ is contained in exactly two intervals that belong to the last layer of the seeded intervals, $\mathcal{J}_M$. We exploit this fact in Step~$2$ for the refined change point estimator.

\Cref{alg: FBS} outlines the procedure of computing the preliminary change point estimators, which is called with $(s, e] = (0, n]$. This algorithm recursively detects change points based on the likelihood ratio statistics $\{\Whsetm, \, s_m < t < e_m\}$ defined in \eqref{eq: definition Whset 2}. 
Specifically, using the set of seeded intervals, the algorithm iteratively identifies the shortest interval associated with a strong signal for a change (in the sense that $\Whsetm$ exceeds a threshold~$\tau$), an idea first proposed by \citet{baranowski2019narrowest} for detecting multiple change points in the mean of a univariate time series.
Upon detection of each change point, it stores the estimator and proceeds to search for further change points separately within the sections of the data determined by two consecutive estimators previously detected.
In the absence of a change point within a data section $(s, e]$, we expect all $\Whsetm, \, s < t < e$, to fall below the given threshold $\tau$, in which case the algorithm excludes the interval $(s, e]$ from further consideration. 
In addition to the threshold $\tau$, \Cref{alg: FBS} requires the choice of the regularization parameter $\lambda_{e, s}$ for the local estimation of the slope function, which takes the form $\lambda_{e-s} = \omega (e - s)^{-2r/(2r + 1)}$ with some $\omega > 0$ and $r$ that controls the regularity of the regression coefficient (see \Cref{assume: model assumption spectral}).
The choice of these tuning parameters are discussed in \Cref{sec: NR}.  
\begin{algorithm}[h]
 \caption{Functional Regression Binary Segmentation: FRBS$\lft(\se,\J, \omega, \tau\rgt)$.}\label{alg: FBS}
 \begin{algorithmic}
    \INPUT{Data $\{\lft(y_j, X_j \rgt)\}_{j=1}^n$},  seeded intervals $\J$, tuning parameters $\omega$, $\tau > 0$.
    \initialize If $\se = (0,n]$, the estimated change point set $\what{\B} = \emptyset$.

    \For{$ (s_m, e_m] \in \J$}
        \State compute $\Whsetm $ with $\lambda_{e_m-s_m} = \omega (e_m-s_m)^{-\frac{2r}{2r+1}} $
        \If{$(s_m, e_m] \subset (s,e]$}
            \State$b_m \gets \argmax_{s_m<t<e_m}  \Whsetm $
            \State$a_m \gets  \widehat{W}^{s_m,e_m}_{b_m} $
        \Else
        \State$a_m = 0$
        \EndIf
    \EndFor
    \State $\mathcal{M}^{s,e} := \{m: a_m>\tau\}$
    \If{$\mathcal{M}^{s,e} \neq \emptyset$}
        \State$m^* \gets \argmin_{m\in \mathcal{M}^{s,e}} |e_m - s_m| $
        \State$\widehat{\B} \gets \widehat{\B} \cup \{b_{m^*}\}$
        \State FRBS $\lft((s, b_{m^*}], \J, \omega, \tau\rgt)$
        \State FRBS $\lft((b_{m^*}, e], \J, \omega, \tau\rgt)$
    \EndIf
    \OUTPUT{$\widehat{\B} $}. 
 \end{algorithmic}
\end{algorithm}

\subsection*{Step~2: refined estimator}\label{sec: Final CP estimator}

Let $\widehat{\B} = \{\etahat_k, \, 1 \le k \le \Khat: \, \etahat_1 < \ldots < \etahat_{\Khat} \}$ denote the set of preliminary change point estimators returned by \Cref{alg: FBS}. 
In this step, we produce the refined estimators $\{\wtilde{\eta}_k\}_{k=1}^\Khat$ with enhanced accuracy.

For each $k = 1, \ldots, \Khat$, let  $(s_k^\1, e_k^\1]$ and $(s_k^\2, e_k^\2]$ be the two seeded intervals  in $\J_M$
that contains $\weta_k$, where $\J_M$ is the last layer of seeded intervals from \Cref{def: Seeded Intervals}.  
Further, we write 
\begin{equation}\label{eq: def s_k and e_k}
    s_k = \min\{s_k^\1, s_k^\2\}, \qquad e_k = \max\{e_k^\1, e_k^\2\}.
\end{equation}
The localization performance of our preliminary estimator ensures that, with high probability, the interval $(s_k, e_k]$ contains one and only one change point $\eta_k$ and it is sufficiently large (see~Lemma~C.3 in appendix).  
Over such $(s_k, e_k]$, we apply the following refinement procedure to further enhance the accuracy of the change point estimator, which also enables the investigation into the asymptotic distribution of the resultant estimator.  
For each $k$, let 
\begin{align}\label{eq: local refinement}
    \widetilde{\eta}_k &= \argmin_{s_k < t < e_k} \mathcal{Q}_k(t), \quad \text{where}
    \\
 \mathcal{Q}_k(t) &=   \sum_{j=s_k+1}^t \lft( Y_j - \lb X_j, \widehat{\beta}_{(s_k, \what{\eta}_k] } \rb_\lt \rgt)^2 + \sum_{j=t+1}^{e_k} \lft( Y_j - \lb X_j, \widehat{\beta}_{(\what{\eta}_k, e_k] } \rb_\lt \rgt)^2. \nonumber 
\end{align}
As shown later in Section~\ref{sec: Limiting distribution}, the refined estimator $\widetilde{\eta}_k$ attains the rate of localization matching the minimax lower bound, and thus is minimax optimal.

\section{Theoretical properties}\label{sec: theory CP}

In this section, we establish theoretical properties of the change point estimators proposed in 
\Cref{sec: CP estimation}.
We first introduce the required assumptions for the model~\eqref{eq: model}--\eqref{eq: model2}, which permit temporal dependence and heavy-tailedness in the data.

To quantify the degree of temporal dependence, we adopt the $\alpha$-mixing coefficient which is a standard tool commonly used in the time series literature. Recall that a stochastic process $\{Z_t\}_{t\in \mathbb{Z}}$ is said to be $\alpha$-mixing (strong mixing) if 
$$
\alpha(k) = \sup_{t \in \mathbb{Z}}\alpha\lft( \sigma(Z_s, s\le t), \, \sigma(Z_s, s\ge t + k)  \rgt) \to 0
$$
as $k \to \infty$, where we write $\alpha(\mathcal{A}, \mathcal{B}) = \sup_{A\in \mathcal{A}, B \in \mathcal{B}} \left| \mathbb P (A\cap B) - \mathbb P (A)\mathbb P (B) \right|$ for any two $\sigma$-fields $\mathcal{A}$ and $\mathcal{B}$. 
\Cref{assume: model assumption flr CP} concerns the distributions of the functional covariates and the noise sequence.  
\begin{assumption}\label{assume: model assumption flr CP}
 \ 
 \\
    (i)  The functional  covariate sequence $\{X_j\}_{j =1}^n \subset  \lt(\T)$  satisfies $\E[X_j] = 0$, $\E[\|X_j\|^2_\lt] < \infty$, and for any $f\in \lt(\T)$, there exists some constant $c > 0$ such that
    $$
    (\E[\lb X_j, f\rb_\lt^6] )^{1/6} \le c ( \E[\lb X_j, f \rb_\lt^2] )^{1/2}.
    $$
\
\\
   (ii) The noise sequence $\{\varepsilon_j\}_{j  = 1}^n \subset \mathbb R  $ satisfies $\E[\eps_j|X_j] = 0$ and $\E[\eps_j^6|X_j] < \infty$. 
 \
 \\
   (iii) The sequence $\{(X_j, \eps_j)\}_{j=1}^n$ is stationary and $\alpha$-mixing with the mixing coefficients sastisfying $\sum_{k=1}^\infty k^{1/3}\alpha^{1/3}(k) < \infty$.
\end{assumption}
Under \Cref{assume: model assumption flr CP}, the functional  covariate and the noise sequences are allowed to possess heavy tails. In particular, \Cref{assume: model assumption flr CP}~(i) assumes that the $6$-th moment of the random variable $\langle X_j,f \rangle_\lt $ is bounded  by its second moment, which  holds if e.g.\ each $X_j$ is a Gaussian process.  
Similar assumptions on the moments of the functional covariate are made in \citet{caiyuan2012minimax} for the investigation into the penalized slope estimator in~\eqref{def: bhast} in the stationary setting.  
The $\alpha$-mixing condition essentially requires that $\alpha(k) = o(1/k^4)$, allowing the mixing coefficient to decay at a polynomial rate. 
 
\begin{assumption}\label{assume: model assumption spectral}
 \ 
 \\
    (i)  
  The slope function satisfies $\beta^*_j \in \hk$, for all $j = 1, \ldots, n$, where $\hk$ is the RKHS generated by  the  kernel function $K$.
\
\\
(ii) It holds that 
    $$
    \kh\sx\kh (t_1,t_2) = \sum_{l\ge1} \mathfrak{s}_l \phi_l(t_1) \phi_l(t_2),
    $$ 
    where $\{\phi_l\}_{l =1}^\infty $ are the eigenfunctions and $\{\mathfrak{s}_l\}_{l =1}^\infty $ the corresponding eigenvalues satisfying    $\mathfrak{s}_l \asymp l^{-2r}$ for some   constant $r>1$.

\end{assumption}

\Cref{assume: model assumption spectral}~(i) requires  that  the slope functions are in the RKHS generated by the kernel function $K$, which regularizes the smoothness of the slope function. 
\Cref{assume: model assumption spectral}~(ii) requires that the   function $ \kh\sx\kh $ admits an eigen-decomposition with polynomial-decaying eigenvalues, which controls the regularity of regression prediction.  Both \Cref{assume: model assumption spectral}~(i) and (ii) are also found in  \citet{caiyuan2012minimax}. 
 
Under the model~\eqref{eq: model},  we define the change size of two consecutive slope functions as 
\begin{align}
\label{eq:kappa}
\kappa^2_k = \Sigma[ \beta^*_{\eta_k} - \beta^*_{\eta_{k+1}}, \beta^*_{\eta_k} - \beta^*_{\eta_{k+1}} ].
\end{align}
The form of $\kappa^2_k$ is closely related to the population counterpart (defined in~\eqref{def: bhast}) of the likelihood ratio statistics $\Whset$ in~\eqref{eq: definition Whset 2}. In fact, if the time interval $(s,e]$ contains only one change point $\eta_k$, the statistic $\widehat W^{s,e}_{\eta_k} $ converges asymptotically to ${W}_{\eta_k}^{s,e}$, which in turn satisfies
 \begin{equation}\label{eq: signal}
    {W}_{\eta_k}^{s,e} = \frac{(\eta_k -s)(e-\eta_k )}{(e-s)}\sx[ \beta^*_ {(s,\eta_k]  }- \beta^*_{( \eta_k,t ]  }, \beta^*_{(s,\eta_k]  }  - \beta^*_{( \eta_k,t ]  } ] = \frac{(\eta_k -s)(e-\eta_k )}{(e-s)} \kappa^2_k.
\end{equation} 

The detectability of each change point $\eta_k$ depends on both the change size $\kappa_k$ and how far it is from the adjacent change points. 
Therefore, we define the minimal change size and the minimal spacing of change points as
$$\kappa = \min_{1 \le k \le \Ktilde} \kappa_k \quad \text{and} \quad \Delta = \min_{1 \le k \le \Ktilde + 1} (\eta_k - \eta_{k - 1}),$$   
respectively.
\Cref{assume: SNR Signal to Noise ratio} specifies the signal-to-noise condition for the consistency of our method in terms of $\kappa$ and $\Delta$.
\begin{assumption}\label{assume: SNR Signal to Noise ratio}
Suppose that   
    $$
     \min\lft\{ \frac{\kappa^2 \Delta^2}{n \cdot n^{1/(2r+1)} \log^{1+ 2\xi}(n)}, \; \frac{\kappa^2 \Delta^{r/(2r+1)} }{ \log^{1+ 2\xi}(\Delta)} \rgt\}\to \infty,
    $$
    where $\xi >0$ is some constant and $r$ is defined in \Cref{assume: model assumption spectral}.
\end{assumption}

To establish the consistency of the preliminary estimators in \Cref{thm: Consistency}, it is sufficient to have 
$$\kappa^2 \Delta^2/(n \cdot n^{1/(2r+1)} \log^{1+ 2\xi}(n) ) \to \infty . $$
The additional requirement that $$\kappa^2 \Delta^{r/(2r+1)}/ \log^{1+2\xi}(\Delta) \to \infty,$$  is required  to derive the limiting distribution of the refined estimator in \Cref{thm: Limiting distribution}.
When $\Delta$ is of the same order as $n$, since $r > 1$, the first condition in \Cref{assume: SNR Signal to Noise ratio} dominates and it is simplified to $\kappa^2 \Delta^{2r/(2r+1)}  \to \infty$. Similar assumptions have been employed in \cite{padilla2021optimal, madrid2024change} for nonparametric change point analysis, where the smoothness of the density function plays a similar role as $r$.

\subsection{Consistency of the preliminary estimator}\label{sec: consistency} 
 
 We first present the main theorem establishing the consistency of  \Cref{alg: FBS}  and the associated rate of localization.
 
\begin{theorem}\label{thm: Consistency}
Suppose that Assumptions \ref{assume: model assumption flr CP}, \ref{assume: model assumption spectral}, and \ref{assume: SNR Signal to Noise ratio} hold. 
 Let $c_{\tau,1} > 32$ and $c_{\tau,2}\in (0,1/20)$ denote absolute constants.
 Suppose that $\tau$ satisfies
\begin{equation}\label{eq: range of tau}
    c_{\tau,1} \lft( \frac{n}{\Delta} \rgt) n^{1/(2r+1)} \log^{1+2\xi}(n) < \tau < c_{\tau,2} \kappa^2 \Delta,
\end{equation}
where $r$ and $\xi$ are defined in Assumptions \ref{assume: model assumption spectral} and \ref{assume: SNR Signal to Noise ratio}, respectively, and that $\omega >1/2$ is any finite constant.
Also,  let $\mathcal J$ be  seeded intervals constructed according to  \Cref{def: Seeded Intervals} with $\Delta$ defined in \Cref{assume: SNR Signal to Noise ratio}.
Then, $\text{FRBS}( (0,n], \mathcal J, \omega, \tau)$ outputs $ \widehat{\B} = \{\etahat_k\}_{k=1}^\Khat$ which satisfies
\begin{align*}
    \mathbb{P}\bigg( \Khat = \Ktilde; \, \max_{1 \le k \le \Ktilde} \kappa_k^2 \lft| \etahat_k - \eta_k  \rgt| \le C_1 \lft(\frac{n}{\Delta}\rgt) \Delta^{1/(2r+1)} \log^{1+ 2\xi}(n) \bigg) \to 1 \text{ \ as \ } n \to \infty,
\end{align*}
where $2<C_1< c_{\tau,1}/16$.
\end{theorem}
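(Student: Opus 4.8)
The plan is to follow the narrowest-over-threshold strategy of \citet{baranowski2019narrowest}, adapted to the seeded intervals of \citet{kovacs2023seeded}, splitting the argument into a probabilistic ingredient---a uniform control of the scan statistics $\Whset$ around their population versions $W_t^{s,e}$ from \eqref{eq: signal_population}---and a deterministic, essentially combinatorial, analysis of the recursion tree of \Cref{alg: FBS}. Throughout write $\varrho_n := (n/\Delta)\,\Delta^{1/(2r+1)}\log^{1+2\xi}(n)$ for the scale on the right-hand side of the claimed bound. The range \eqref{eq: range of tau} confines $\tau$ between a constant multiple of $\varrho_n$ and $c_{\tau,2}\kappa^2\Delta$; this range is nonempty precisely because \Cref{assume: SNR Signal to Noise ratio} yields $\varrho_n = o(\kappa^2\Delta)$, and the same assumption gives $\varrho_n/\kappa^2 = o(\Delta)$, ensuring the localization window around each change point is asymptotically negligible compared with the minimal spacing.

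\textbf{Step 1: a uniform deviation event.} I would work on
\[
\mathcal{A} := \Bigl\{\, \max_{s<t<e}\bigl|\Whset - W_t^{s,e}\bigr| \le \mathfrak{e}(e-s)\ \text{ for every }(s,e]\in\J \,\Bigr\}, \qquad \mathfrak{e}(\ell) := c_0\,\tfrac{n}{\Delta}\,\ell^{1/(2r+1)}\log^{1+2\xi}(n),
\]
with $c_0$ a small absolute constant; note $\mathfrak{e}(\Delta)\asymp\varrho_n$ while $\mathfrak{e}(n)<\tau$ once $c_0$ is small relative to $c_{\tau,1}$. On $(s,e]$ one decomposes $\Whset-W_t^{s,e}$ into (i) the prediction errors of the penalized estimators $\bhase,\bhast,\bhate$ measured in the $\sx$-norm, which under Assumptions~\ref{assume: model assumption flr CP}--\ref{assume: model assumption spectral} are $O_{\mathbb P}(\ell^{-2r/(2r+1)}\,\mathrm{polylog}(n))$ by an analysis of \eqref{def: bhast} in the spirit of \citet{caiyuan2012minimax}, contributing $O(\ell\cdot\ell^{-2r/(2r+1)})=O(\ell^{1/(2r+1)})$; (ii) linear terms $\sum_j\eps_j\lb X_j,\delta\rb_\lt$; and (iii) quadratic centering terms $\sum_j(\lb X_j,\delta\rb_\lt^2-\sx[\delta,\delta])$, with $\delta$ ranging over a set of slowly growing effective dimension. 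Terms (ii)--(iii) are controlled uniformly in $\delta$ by peeling together with a Fuk--Nagaev / blocking concentration inequality for $\alpha$-mixing sequences with sixth moments---this is where the polynomial mixing decay of \Cref{assume: model assumption flr CP}(iii) and the exponent $1+2\xi$ enter---followed by a union bound over the $|\J|\le 8n/\Delta$ seeded intervals and over $t$; this gives $\mathbb{P}(\mathcal{A})\to 1$.

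\textbf{Step 2: deterministic recursion on $\mathcal{A}$.} On $\mathcal{A}$, I would prove by induction over the recursion tree that every call $\text{FRBS}((s,e],\cdot)$ preserves the invariant that $s,e\in\{0,n\}$ or each lies within $C\varrho_n/\kappa^2$ of a true change point, and every true change point strictly inside $(s,e]$ is still undetected and lies at distance at least $\Delta/2$ from both $s$ and $e$. Two facts drive the step. First, if $(s,e]$ contains no such ``well-isolated'' change point, then a change point within $C\varrho_n/\kappa_k^2$ of an endpoint contributes at most $O(\varrho_n)$ to the tent-shaped $W_t^{s,e}$, so each $a_m\le\mathfrak{e}(e-s)+O(\varrho_n)<\tau$ and no estimator is added; together with $\varrho_n/\kappa^2=o(\Delta)$ this yields $\Khat\le\Ktilde$ and forbids two estimators from landing near the same change point. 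Second, if $(s,e]$ does contain a well-isolated undetected $\eta_k$, the multi-resolution design in \Cref{def: Seeded Intervals} supplies a layer-$M$ interval $I_k\subseteq(s,e]$ of length $\asymp\Delta$ that contains only $\eta_k$, with $\eta_k$ at distance $\gtrsim\Delta$ from both endpoints of $I_k$; then its score is at least $W_{\eta_k}^{I_k}-\mathfrak{e}(|I_k|)\gtrsim\kappa_k^2\Delta-\varrho_n>\tau$, so $\mathcal M^{s,e}\neq\emptyset$. For the narrowest over-threshold interval $(s_{m^*},e_{m^*}]$ one argues $|e_{m^*}-s_{m^*}|\le|I_k|\lesssim\Delta$, that it isolates a single true change point $\eta_{k^*}$ bounded away from its endpoints, and then, since $t\mapsto W_t^{s_{m^*},e_{m^*}}$ is tent-shaped with apex $\eta_{k^*}$ and slopes of order $\kappa_{k^*}^2$ while $b_{m^*}=\argmax_{s_{m^*}<t<e_{m^*}}\widehat W^{s_{m^*},e_{m^*}}_t$, the comparison $W_{b_{m^*}}^{s_{m^*},e_{m^*}}\ge W_{\eta_{k^*}}^{s_{m^*},e_{m^*}}-2\mathfrak{e}(|e_{m^*}-s_{m^*}|)$ forces $\kappa_{k^*}^2|b_{m^*}-\eta_{k^*}|\lesssim\mathfrak{e}(\Delta)\asymp\varrho_n$. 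The invariant is then re-established for the two children $(s,b_{m^*}]$ and $(b_{m^*},e]$: $\eta_{k^*}$ now sits within $\varrho_n/\kappa_{k^*}^2$ of the split point, hence is no longer well-isolated in either child, while every other change point keeps separation $\ge\Delta-\varrho_n/\kappa^2\ge\Delta/2$ from the split point. Collecting the per-detection bounds over the at most $\Ktilde$ detections gives $\Khat=\Ktilde$ and $\max_k\kappa_k^2|\etahat_k-\eta_k|\lesssim\varrho_n$; the numerical constants in \eqref{eq: range of tau} and in the conclusion are exactly those that make the two displayed inequalities above hold simultaneously.

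\textbf{Main obstacle.} I expect the bottleneck to be twofold. Probabilistically, proving $\mathbb{P}(\mathcal{A})\to 1$ requires concentration strong enough to absorb a data-dependent estimator $\bhase$ (effectively a supremum of empirical processes over an RKHS ball of slowly growing effective dimension) under only sixth moments and polynomially decaying $\alpha$-mixing, while tracking the precise interplay between the interval length $\ell$, the cardinality $|\J|\asymp n/\Delta$, and the log-power $1+2\xi$ so that $\mathfrak{e}(n)$ still falls below $\tau$ and $\mathfrak{e}(\Delta)$ matches $\varrho_n$; this is where the weak-moment and temporal-dependence generality of the paper is genuinely paid for. Deterministically, the new bookkeeping relative to the random-interval version of narrowest-over-threshold is to verify that the narrowest over-threshold \emph{seeded} interval isolates exactly one change point away from its endpoints: since a sub-interval of a seeded interval need not itself be seeded, one cannot simply pass to a shorter interval around a single change point, and must instead argue directly from \Cref{def: Seeded Intervals} and the minimal-spacing condition that any short seeded interval straddling two change points near its ends can only produce sub-threshold values of $W_t^{s,e}$.
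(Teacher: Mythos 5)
Your overall architecture (uniform control of $\widehat{W}_t^{s,e}$ over the seeded intervals, then a deterministic induction over the recursion tree with detection, rejection, and localization steps) is the same as the paper's, and your Steps~1--2 for detection and rejection are essentially correct once the deviation bound is stated in the right form. However, there is a genuine gap in two linked places.

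First, the event $\mathcal{A}$ as you define it --- a purely additive bound $\max_t|\widehat{W}_t^{s,e}-W_t^{s,e}|\le\mathfrak{e}(e-s)$ with $\mathfrak{e}(\ell)\asymp(n/\Delta)\ell^{1/(2r+1)}\log^{1+2\xi}(n)$ --- is not provable under the paper's assumptions. The decomposition of $\widehat{W}_t^{s,e}-W_t^{s,e}$ contains cross terms such as $\sum_{j}\lb X_j,\beta^*_{(s,t]}-\beta^*_{(t,e]}\rb_{\mathcal{L}^2}\,\eps_j$ and centered quadratic terms, whose fluctuations scale like $\sqrt{W_t^{s,e}\log(t-s)}$, i.e.\ they grow with the signal. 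Near the peak of the tent, $W_{\eta_k}^{s,e}\asymp\kappa_k^2\Delta$, so these terms are of order $\kappa_k\sqrt{\Delta\log n}$, which under \Cref{assume: SNR Signal to Noise ratio} can vastly exceed $\mathfrak{e}(\Delta)$ (e.g.\ $\kappa\asymp1$, $\Delta\asymp n$ gives $\sqrt{n\log n}$ versus $n^{1/(2r+1)}\mathrm{polylog}(n)$ with $r>1$). This is why the paper's event carries the signal-proportional slack $0.5\,\widetilde{G}_t^{s,e}$ on top of the additive term: that multiplicative loss is harmless for thresholding decisions but unavoidable for a bound that is uniform in $t$.

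Second, and more seriously, your localization step inherits this problem and would not deliver the claimed rate even with the corrected event. The comparison ``$W_{b}^{s_{m^*},e_{m^*}}\ge W_{\eta_k}^{s_{m^*},e_{m^*}}-2\mathfrak{e}$ plus tent shape'' only yields $\kappa_k^2|b-\eta_k|\lesssim\kappa_k\sqrt{\Delta\log n}+\varrho_n$, because the deviation of $\widehat{W}$ from $W$ at the two compared points is genuinely of order $\kappa_k\sqrt{\Delta\log n}$, not $\varrho_n$. The paper instead localizes by comparing the residual sums of squares at $b$ and at $\eta_k$ directly and cancelling everything outside the window $(\eta_k,b]$: the resulting signal term is $\sum_{j=\eta_k+1}^{b}\lb X_j,\beta^*_{\eta_{k+1}}-\beta^*_{\eta_k}\rb_{\mathcal{L}^2}^2\approx(b-\eta_k)\kappa_k^2$, while the noise terms are partial sums over only $b-\eta_k$ observations and hence $O_{\mathbb P}\bigl(\sqrt{(b-\eta_k)}\,\kappa_k\log^{1+\xi}(\cdot)\bigr)$, uniformly over the window length via a peeling argument. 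Solving the resulting self-normalized inequality in $b-\eta_k$ is what produces $\kappa_k^2|b-\eta_k|\lesssim\varrho_n$. Without this local cancellation your argument loses a factor of order $\kappa_k\sqrt{\Delta\log n}/\varrho_n$ in the localization error, so the stated theorem does not follow from your proposal as written.
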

\Cref{thm: Consistency} shows that uniformly for all $ k =1,\ldots \mathcal K$,
$$|\what{\eta}_k - \eta_k | = O_{\mathbb P}\lft( \kappa_k^{-2} \lft(\frac{n}{\Delta}\rgt) \Delta^{1/(2r+1)} \log^{1+ 2\xi}(n)\rgt)  = o_{\mathbb P} (\Delta),$$
where the last equality follows from  \Cref{assume: SNR Signal to Noise ratio}.

\subsection{Consistency and the limiting distributions of  the refined estimators}
\label{sec: Limiting distribution}
In this subsection, we analyze the 
 consistency and the limiting distributions of the refined change point estimators.  In particular, we  demonstrate  that the limiting distribution of the refined change point estimator $\wtilde\eta_k$ is divided into two regimes determined  by the change  size $\kappa_k$: (i) the non-vanishing regime where $\kappa_k \to \varrho_k$  for some positive constant  $\varrho_k > 0$; and (ii) the vanishing regime where $  \kappa_k \to 0$.

\begin{theorem}\label{thm: Limiting distribution}
    Suppose that Assumptions \ref{assume: model assumption flr CP}, \ref{assume: model assumption spectral}, and \ref{assume: SNR Signal to Noise ratio} hold. Let $\lft\{ \wtilde{\eta}_k \rgt\}_{k=1}^\Khat$ denote the refined change point estimators obtained as in  \eqref{eq: local refinement} and assume that $\Khat = \mathcal{K}$.
    \begin{enumerate}[label=\Alph*.]
         \item (Non-vanishing regime)  
    For any given $k\in \{1, \ldots, \Khat\}$, suppose $ \kappa_k \to \varrho_k$ as $n\to \infty $, with $\varrho_k > 0$ being an absolute constant. Then    $\lft| \wtilde{\eta}_k - \eta_k \rgt| = O_\mathbb P(1)$.
    In addition,  as $n \to \infty$,
             $$
             \wtilde{\eta}_k - \eta_k \xrightarrow{\mathcal{D}} \argmin_{\gamma\in \mathbb{Z}} S_k(\gamma)
             $$
             where, for $\gamma \in \mathbb{Z}$, $S_k(\gamma)$ is a two-sided random walk defined as
             \begin{align*}
                 S_k({\gamma}) = 
                 \begin{cases}
                        \sum_{j = \gamma }^{-1} \lft\{ -2 \varrho_k \lft\lb X_j, \Psi_k \rgt\rb_\lt \eps_j + \varrho_k^2 \lft\lb X_j, \Psi_k \rgt\rb_\lt^2 \rgt\} & \text{for \ } \gamma<0,
                        \\
                        0 &  \text{for \ } \gamma = 0,
                        \\
                        \sum_{j=1}^{\gamma} \lft\{ 2 \varrho_k \lft\lb X_j, \Psi_k \rgt\rb_\lt \eps_j + \varrho_k^2 \lft\lb X_j, \Psi_k \rgt\rb_\lt^2 \rgt\} & \text{for \ } \gamma > 0,
                 \end{cases}
             \end{align*}
             with
    $$
    \Psi_k = \lim_{n\to \infty} \frac{\beta^*_{\eta_{k+1}}- \beta^*_{\eta_{k}}}{\sqrt{\Sigma\lft[ \beta^*_{\eta_{k+1}}- \beta^*_{\eta_{k}}, \beta^*_{\eta_{k+1}}- \beta^*_{\eta_{k}} \rgt]}}.
    $$

        \item (Vanishing regime) For any given $k\in \{1, \ldots, \Khat\}$, suppose  $\kappa_k \to 0$ as $n\to \infty $.   Then $\lft| \wtilde{\eta}_k - \eta_k \rgt| = O_\mathbb P (\kappa_k^{-2})$.
           In addition,  as $n \to \infty$,
             $$
             \kappa_k^2(\wtilde{\eta}_k - \eta_k) \xrightarrow{\mathcal{D}} \argmin_{\gamma\in \mathbb{R}} \{ |\gamma| + \sigma_\infty(k) \mathbb{W}(\gamma)\},
             $$
             where
\begin{equation}\label{eq: def long-run variance}
        \sigma_\infty^2(k) = 4 \lim_{n\to \infty} \text{Var}\lft( \frac{1}{n} 
 \sum_{j=1}^n \frac{\lb X_j, \beta^*_{\eta_{k}} - \beta^*_{\eta_{k+1}}  \rb_\lt \eps_j }{\kappa_k}  \rgt),
    \end{equation} 
    and $\mathbb{W}(\gamma)$ is a two-sided standard Brownian motion defined as
             \begin{equation*}
                  \mathbb{W}(\gamma) = 
                 \begin{cases}
                        \mathbb{B}_1(-\gamma) & \text{for \ } \gamma<0,
                        \\
                        0 & \text{for \ } \gamma = 0,
                        \\
                        \mathbb{B}_2(\gamma) & \text{for \ } \gamma > 0,
                 \end{cases}
             \end{equation*}
             with $\mathbb{B}_1(r)$ and $\mathbb{B}_2(r)$ denoting two independent standard Brownian motions.
         \end{enumerate}

\end{theorem}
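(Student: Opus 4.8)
The plan is to derive both limiting laws from the argmin continuous mapping theorem applied to suitably normalized versions of the local objective $\mathcal{Q}_k$. We work throughout on the event, supplied by \Cref{thm: Consistency} and Lemma~C.3 in the appendix, that $\Khat=\mathcal{K}$, that $(s_k,e_k]$ contains $\eta_k$ as its unique change point with $\min\{\eta_k-s_k,\,e_k-\eta_k\}\asymp\Delta$, and that $|\weta_k-\eta_k|=o_\mathbb{P}(\Delta)$; this event has probability tending to one, so conditioning on it is harmless. Set $D_k(\gamma)=\mathcal{Q}_k(\eta_k+\gamma)-\mathcal{Q}_k(\eta_k)$, so that $\wtilde{\eta}_k-\eta_k=\argmin_{\gamma}D_k(\gamma)$. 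Cancelling the common terms in $\mathcal{Q}_k(\eta_k+\gamma)$ and $\mathcal{Q}_k(\eta_k)$, and using $Y_j=\lb X_j,\beta^*_{\eta_{k+1}}\rb_\lt+\eps_j$ for $j>\eta_k$, we obtain, for $\gamma>0$,
\begin{equation*}
D_k(\gamma) = \sum_{j = \eta_k+1}^{\eta_k + \gamma} \Big\{ \big(\eps_j + \lb X_j, \beta^*_{\eta_{k+1}} - \bhat_{(s_k,\weta_k]} \rb_\lt\big)^2 - \big(\eps_j + \lb X_j, \beta^*_{\eta_{k+1}} - \bhat_{(\weta_k, e_k]} \rb_\lt\big)^2 \Big\},
\end{equation*}
and the mirror-image expression for $\gamma<0$ with $\beta^*_{\eta_{k+1}}$ replaced by $\beta^*_{\eta_k}$ and the two estimators swapped. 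Expanding the squares splits $D_k(\gamma)$ into a drift term built from $\lb X_j,\beta^*_{\eta_k}-\beta^*_{\eta_{k+1}}\rb_\lt^2$, a martingale-type term of the same form multiplied by $\eps_j$, and remainder terms carrying the slope-estimation errors $\bhat_{(s_k,\weta_k]}-\beta^*_{\eta_k}$ and $\bhat_{(\weta_k,e_k]}-\beta^*_{\eta_{k+1}}$. The first step in either regime is a preliminary rate bound: one shows that for a large enough constant $C$, $D_k(\gamma)>0$ with probability tending to one uniformly over $C\le|\gamma|\le\delta\Delta$ (non-vanishing regime), resp. $C\kappa_k^{-2}\le|\gamma|\le\delta\Delta$ (vanishing regime). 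This confines $\wtilde{\eta}_k-\eta_k$ to $O_\mathbb{P}(1)$, resp. $O_\mathbb{P}(\kappa_k^{-2})$, and follows by bounding the drift from below --- by the ergodic theorem the partial sums of $\lb X_j,\beta^*_{\eta_k}-\beta^*_{\eta_{k+1}}\rb_\lt^2$ grow like $|\gamma|\kappa_k^2$ since $\E\lb X_1,\beta^*_{\eta_k}-\beta^*_{\eta_{k+1}}\rb_\lt^2=\kappa_k^2$ --- while controlling the martingale and remainder terms by maximal inequalities for $\alpha$-mixing sequences (valid under the moment and mixing-rate conditions of \Cref{assume: model assumption flr CP}) and the prediction-error rates for the penalized RKHS estimator \eqref{def: bhast} established in the appendix.

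\emph{Regime A (non-vanishing).} Here $|\gamma|$ is confined to a bounded set, so the remainder terms are $o_\mathbb{P}(1)$; writing $\beta^*_{\eta_{k+1}}-\beta^*_{\eta_k}=\kappa_k\Psi_{k,n}$ with $\kappa_k\to\varrho_k$ and $\Psi_{k,n}\to\Psi_k$, each $D_k(\gamma)$ converges in law to $S_k(\gamma)$, and joint convergence over any finite set of $\gamma$'s follows from the stationarity of $\{(X_{\eta_k+j},\eps_{\eta_k+j})\}_j$ and the continuous mapping theorem. Since $\argmin_\gamma S_k(\gamma)$ is a.s. finite and unique --- each one-sided random walk has strictly positive drift $\varrho_k^2\E\lb X_1,\Psi_k\rb_\lt^2>0$ and thus diverges, and ties carry probability zero --- the argmax continuous mapping theorem for processes indexed by $\mathbb{Z}$, together with the $O_\mathbb{P}(1)$ tightness of the argmin, yields $\wtilde{\eta}_k-\eta_k\xrightarrow{\mathcal{D}}\argmin_{\gamma\in\mathbb{Z}}S_k(\gamma)$.

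\emph{Regime B (vanishing).} We rescale time: put $\gamma=\lfloor\kappa_k^{-2}r\rfloor$ and consider $\wtilde{D}_k(r)=D_k(\lfloor\kappa_k^{-2}r\rfloor)$. The drift part equals $\kappa_k^2\sum\lb X_j,\Psi_{k,n}\rb_\lt^2$ over $\lfloor\kappa_k^{-2}|r|\rfloor$ consecutive indices and converges to $|r|$ by the ergodic theorem since $\Sigma[\Psi_k,\Psi_k]=1$; the martingale part equals $2\kappa_k\sum\lb X_j,\Psi_{k,n}\rb_\lt\eps_j$ over the same indices, and since $\{\lb X_j,\Psi_k\rb_\lt\eps_j\}$ is a centred stationary $\alpha$-mixing sequence with finite third moments (Cauchy--Schwarz and \Cref{assume: model assumption flr CP}) and $\sum_k k^{1/3}\alpha^{1/3}(k)<\infty$, the functional central limit theorem gives $2\kappa_k\sum_{j=1}^{\lfloor\kappa_k^{-2}r\rfloor}\lb X_j,\Psi_k\rb_\lt\eps_j\Rightarrow\sigma_\infty(k)\mathbb{B}(r)$, with $\sigma_\infty^2(k)$ the long-run variance in \eqref{eq: def long-run variance}; the remainders are $o_\mathbb{P}(1)$ uniformly on compact $r$-sets by the slope-error bounds. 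Hence $\wtilde{D}_k(\cdot)\Rightarrow|r|+\sigma_\infty(k)\mathbb{W}(r)$ in the Skorokhod topology on compacts, and, since this limit has an a.s. unique minimizer and the rescaled argmin is $O_\mathbb{P}(1)$, the argmin continuous mapping theorem gives $\kappa_k^2(\wtilde{\eta}_k-\eta_k)\xrightarrow{\mathcal{D}}\argmin_{r\in\mathbb{R}}\{|r|+\sigma_\infty(k)\mathbb{W}(r)\}$.

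\emph{Main obstacle.} The principal difficulty is the uniform treatment of the slope-estimation remainder terms. Both $\bhat_{(s_k,\weta_k]}$ and $\bhat_{(\weta_k,e_k]}$ are fitted on data segments whose inner endpoint $\weta_k$ is random and known only up to $o_\mathbb{P}(\Delta)$, and their errors must be shown negligible after being multiplied by partial sums of $\lb X_j,\cdot\rb_\lt$ and $\lb X_j,\cdot\rb_\lt\eps_j$ over windows of length $O(1)$ (regime~A) or $O(\kappa_k^{-2})$ (regime~B). Handling this requires marrying the prediction-error rate for the penalized RKHS estimator --- a Cai--Yuan-type bound extended to $\alpha$-mixing, heavy-tailed data in the appendix --- with Bernstein/maximal inequalities for $\alpha$-mixing partial-sum processes, and a decoupling of the dependence between $\weta_k$ and the local sample, e.g. by replacing $\weta_k$ with $\eta_k$ at the cost of an $o_\mathbb{P}$-term justified on the good event above. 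By comparison, the functional CLT in regime~B and the a.s. uniqueness of the two limiting minimizers are standard.
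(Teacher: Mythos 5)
Your proposal is correct and follows essentially the same route as the paper: establish tightness of $\wtilde{\eta}_k-\eta_k$ at the appropriate scale by lower-bounding the drift of $\mathcal{Q}_k(\eta_k+\gamma)-\mathcal{Q}_k(\eta_k)$ against the noise and slope-estimation remainders (the paper's decomposition \eqref{eq: s3a}--\eqref{eq: s3d}), then reduce to the population objective $\mathcal{Q}^*_k$, and conclude via finite-dimensional convergence (regime A) or the $\alpha$-mixing functional CLT (regime B) combined with the argmax continuous mapping theorem. The obstacle you single out — controlling the estimation errors of $\bhat_{(s_k,\weta_k]}$ and $\bhat_{(\weta_k,e_k]}$ uniformly over the local window, with the random endpoint $\weta_k$ handled on the good event from \Cref{thm: Consistency} — is exactly where the paper's technical effort is concentrated (Lemma~C.2 and the appendix deviation bounds), with the only cosmetic difference being that the paper replaces your appeal to the ergodic theorem by explicit second-moment maximal inequalities with a peeling argument, which are needed anyway for uniformity in $\gamma$.
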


 \Cref{thm: Limiting distribution} establishes the 
localization error bound with rate $\kappa_k^{-2}$ 
for the refined change point estimator as well as the corresponding limiting distributions. 
 The  localization error  bound   in  \Cref{thm: Limiting distribution} significantly improves upon that  attained by the preliminary estimator derived in \Cref{thm: Consistency}. Note that \Cref{thm: Limiting distribution} assumes $\Khat = K$, which holds asymptotically with probability tending to one by \Cref{thm: Consistency}. We make a similar condition in Theorems~\ref{thm: long run variance estimation} and \ref{thm: CI vanishing} below.
 
 In the following \Cref{lemma: lower bound}, we  further provide a matching lower bound to show that   the locatization error rate established in \Cref{thm: Limiting distribution} is minimax optimal.

\begin{lemma}\label{lemma: lower bound}
    Let $\{(y_j, X_j)\}_{j=1}^n$ be a functional regression time series following the models in~\eqref{eq: model}--\eqref{eq: model2} with $\Ktilde = 1$, and suppose that   Assumptions \ref{assume: model assumption flr CP} and \ref{assume: model assumption spectral} hold.
    Let $\mathbb P^n_{\kappa,\Delta}$ be the corresponding joint distribution. For any diverging sequence $\rho_n \to \infty $, consider the class of distributions
    $$
    \mathfrak{P} = \lft\{\mathbb P^n_{\kappa,\Delta}\; :\; \min\lft\{ \frac{\kappa^2 \Delta^2}{n \cdot n^{1/(2r+1)} \log^{1+ 2\xi}(n)}, \; \frac{\kappa^2 \Delta^{r/(2r+1)} }{ \log^{1+ 2\xi}(\Delta)} \rgt\} > \rho_n   \rgt\}.
    $$
    Then for sufficiently large  $n$, it holds that 
    $$
    \inf_{\weta}\; \sup_{\mathbb P \in \mathfrak{P}} \E\lft[ |\weta - \eta(\mathbb P)| \rgt] \ge  \frac{1}{\kappa^2 e^{2}}.
    $$
\end{lemma}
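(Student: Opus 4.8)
The plan is to use a two-point (Le~Cam) argument: construct two distributions $\mathbb{P}_0, \mathbb{P}_1 \in \mathfrak{P}$ that share the same marginal law for the covariates $\{X_j\}$ but differ only in the location of the single change point, $\eta(\mathbb{P}_0) = v_0$ versus $\eta(\mathbb{P}_1) = v_1$ with $|v_0 - v_1| \asymp \kappa^{-2}$, and then show that the Kullback--Leibler divergence (or equivalently the likelihood-ratio behaviour) between $\mathbb{P}_0^n$ and $\mathbb{P}_1^n$ is bounded, so no estimator can distinguish them. First I would fix a diverging $\rho_n$, pick $\kappa, \Delta$ meeting the constraint defining $\mathfrak{P}$ with $\Delta \asymp n$ (say), and set up the common covariate model: let $X_j$ be i.i.d.\ (which is permitted under Assumptions~\ref{assume: model assumption flr CP} and \ref{assume: model assumption spectral}, and makes the mixing coefficients vanish), Gaussian, chosen so that $\Sigma$ has the spectral form required by \Cref{assume: model assumption spectral}~(ii). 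Take $\varepsilon_j \sim N(0,1)$ i.i.d., independent of the $X_j$'s. Then choose a fixed direction $\delta \in \hk$ with $\Sigma[\delta,\delta] = \kappa^2$ and define, for a candidate location $v$, the slope sequence $\beta^*_j = \beta_0$ for $j \le v$ and $\beta^*_j = \beta_0 + \delta$ for $j > v$; this realises a model in \eqref{eq: model}--\eqref{eq: model2} with $\Ktilde = 1$ and change size exactly $\kappa$, hence lies in $\mathfrak{P}$.

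The core computation is the KL divergence. Under $\mathbb{P}_0$ (change at $v_0$) versus $\mathbb{P}_1$ (change at $v_1 = v_0 + h$, $h>0$), the two models agree on all coordinates $j \notin (v_0, v_1]$; on the block $j \in (v_0, v_1]$ one model has $y_j = \langle \beta_0, X_j\rangle_\lt + \varepsilon_j$ and the other has $y_j = \langle \beta_0 + \delta, X_j\rangle_\lt + \varepsilon_j$. Conditionally on $\{X_j\}$, each such coordinate contributes a Gaussian-mean-shift KL term equal to $\tfrac12 \langle \delta, X_j\rangle_\lt^2$, so
$$
\mathrm{KL}(\mathbb{P}_0^n \,\|\, \mathbb{P}_1^n) = \frac12 \sum_{j = v_0+1}^{v_1} \E\!\big[\langle \delta, X_j\rangle_\lt^2\big] = \frac{h}{2}\,\Sigma[\delta,\delta] = \frac{h\,\kappa^2}{2}.
$$
Choosing $h = \lfloor \kappa^{-2} \rfloor$ gives $\mathrm{KL} \le 1/2$, hence (by Pinsker or directly) the total variation distance between $\mathbb{P}_0^n$ and $\mathbb{P}_1^n$ is bounded away from $1$. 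Then the standard Le~Cam two-point lower bound for estimating a real parameter under absolute-error loss yields
$$
\inf_{\weta}\,\max_{i\in\{0,1\}} \E_{\mathbb{P}_i}\big[|\weta - \eta(\mathbb{P}_i)|\big] \;\ge\; \frac{h}{2}\cdot \frac12 e^{-\mathrm{KL}} \;\ge\; \frac{1}{2e^2}\,\lfloor \kappa^{-2}\rfloor,
$$
using the exponential form $\sup \E|\weta - \eta| \ge \tfrac{h}{4} e^{-\mathrm{KL}(\mathbb{P}_0^n\|\mathbb{P}_1^n)}$ of the two-point bound; absorbing constants and, if desired, re-indexing $h$ so the bound reads exactly $\kappa^{-2} e^{-2}$ gives the claimed inequality.

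I do not expect a genuine obstacle here, but the point that requires care is verifying that the constructed pair $\mathbb{P}_0, \mathbb{P}_1$ genuinely lies in $\mathfrak{P}$ \emph{for the same} $(\kappa,\Delta)$: one must check that shifting the change point by $h \asymp \kappa^{-2}$ does not violate the spacing constraint, which is immediate when $\Delta \asymp n$ and $\kappa^{-2} = o(\Delta)$ (guaranteed since $\kappa^2\Delta^{r/(2r+1)}\to\infty$ forces $\kappa^{-2} = o(\Delta^{r/(2r+1)}) = o(\Delta)$). A secondary bookkeeping point is to confirm that the Gaussian, i.i.d.\ design indeed satisfies the sixth-moment bound in \Cref{assume: model assumption flr CP}~(i) (true for Gaussians, with $c$ the sixth-moment constant of a standard normal) and that the prescribed $\Sigma$ can be built with eigenvalues $\asymp l^{-2r}$ compatible with the kernel $K$ (choose $K$ and $\Sigma$ simultaneously diagonal in a common basis). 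Finally, one converts the two-point $\max$ into the $\sup_{\mathbb{P}\in\mathfrak{P}}$ in the statement trivially, since $\mathbb{P}_0,\mathbb{P}_1\in\mathfrak{P}$.
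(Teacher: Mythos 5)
Your proposal is correct and follows essentially the same route as the paper: a two-point Le~Cam argument in which the two hypotheses differ only in the change-point location, the conditional Gaussian mean-shift gives $\mathrm{KL}=h\kappa^2/2$ over the block of $h$ shifted coordinates, and taking $h\asymp\kappa^{-2}$ (the paper uses $\delta=4/\kappa^2$, which makes $\tfrac{\delta}{4}e^{-\delta\kappa^2/2}$ equal exactly $\kappa^{-2}e^{-2}$) yields the bound. The only differences are cosmetic: the paper takes Brownian-motion covariates and sets the post-change slope to zero, while you take an i.i.d.\ Gaussian design diagonal in the kernel basis, and you tune the separation $h$ slightly differently before matching constants.
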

The class of distributions $\mathfrak{P}$ encompasses all possible scenarios where \Cref{assume: SNR Signal to Noise ratio} is satisfied.
\Cref{lemma: lower bound}  complements the upper bound   established in \Cref{thm: Limiting distribution} in both the vanishing and the non-vanishing regimes.  The matching bounds in \Cref{thm: Limiting distribution} and \Cref{lemma: lower bound}   indicate  that our refined estimator is minimax optimal.

\section{Confidence interval for the change points}\label{sec: CI}

In this section, we provide a practical way to construct confidence intervals for the true change points under the vanishing regime based on the limiting distribution derived  
in \Cref{thm: Limiting distribution}B. Since the limiting distribution in the vanishing regime contains an unknown long-run variance, we study a consistent estimator before proposing our method for constructing confidence intervals for true change points. 
 
\subsection{Long run variance estimation}

To utilize the limiting distribution  in the vanishing regime derived in \Cref{thm: Limiting distribution}B, we first need to consistently estimate the long-run variance $\sigma_\infty^2(k)$ defined in~\eqref{eq: def long-run variance}.
The long-run variance depends on the size of change $\kappa_k$ at the change point $\eta_k$ as defined in \eqref{eq:kappa}. To this end, we propose the  plug-in estimator 
\begin{equation}\label{eq: kappa hat}
        \what{\kappa}_k = \sqrt{\what{\Sigma}_{(s_k,e_k]} \lft[ \what{\beta}_{(s_k, \weta_k]} - \what{\beta}_{(\weta_k, e_k]}, \what{\beta}_{(s_k, \weta_k]} - \what{\beta}_{(\weta_k, e_k]} \rgt]},
\end{equation}
where  $s_k$ and $e_k$ are defined in  \eqref{eq: def s_k and e_k}, $\what{\beta}_{(s_k, \weta_k]} $ and $\what{\beta}_{(\weta_k, e_k]}$ are obtained in \eqref{def: bhast}, and 
$$ \widehat \Sigma_{(s_k, e_k)} (u_1, u_2) = \frac{1}{e_k-s_k}\sum_{j=s_k+1}^{e_k} X_j(u_1) X_j(u_2)$$
is the sample covariance operator for the functional data $\{ X_j\}_{j=s_k+1}^{e_k}$.
We show the consistency of~$\what{\kappa}_k$ in Lemma~D.1 in appendix. Note that for simplicity we use $\what{\beta}_{(s_k, \weta_k]} $ and $\what{\beta}_{(\weta_k, e_k]}$, which are the by-products of \Cref{alg: FBS}. It is also possible to construct a consistent estimator of $\kappa_k$ using the refitted slope functions after obtaining the refined change point estimators $\widetilde \eta_k$. For the same reason, we also use $\what{\beta}_{(s, e]} $ in \Cref{alg: LRV} to estimate the long-run variance.

For the estimation of $\sigma_\infty^2(k)$, we make use of  a block-type strategy which has previously been adopted by \cite{casini2021minimax} for the estimation of the long-run variance in a fixed-dimensional time series setting.  
In \Cref{alg: LRV}, we outline our proposal for the estimation of $\sigma_\infty^2(k)$.
Our proposed method  first partitions the data into mutually disjoint blocks of size $2q$ for some positive integer $q$, and filters out the intervals  that contain change point estimators and which are adjacent to them.
This filtering ensures that with high probability, the remaining intervals do not contain any change point. Let us denote the set of remaining intervals by $\mathcal P$.
For each given $\mathcal I = (m, m + 2q] \in \mathcal P$, we first compute the  statistic   
$$Z_j = \what{\kappa}_k^{-1}\cdot \lft\lb X_j, \what{\beta}_{(s_k, \weta_k]}  - \what{\beta}_{( \weta_k, e_k]} \rgt\rb_\lt \cdot \lft( y_ j  - \lft\lb X_j, \what{\beta}_{( m,  m+2q] } \rgt\rb_\lt \rgt)$$
at each $j \in \mathcal I$,
which approximates the sequence $\kappa_k^{-1} \lb X_j, \beta^*_{\eta_{k}} - \beta^*_{\eta_{k+1}} \rb_\lt \eps_j$. 
Then, we compute the scaled sample average of the centered sequence $Z_j - Z_{j + q}$, $j = m + 1, \ldots, m + q$, and denote it by $F_{\mathcal I}$.
The estimator $\what{\sigma}^2_\infty(k)$ is obtained as the average of the square of $F_{\mathcal I}$ over $\mathcal I \in \mathcal P$.

\begin{algorithm}[h]
 \caption{Long run variance estimation: LRV($\{\what{\beta}_{(s,e]}\}$, $\{\what{\kappa}_k\}_{k=1}^{\Khat}$, $\{\wtilde{\eta}_k\}_{k=1}^\Khat$,  $q $, $\omega$)}\label{alg: LRV}
 \begin{algorithmic}
    \INPUT{ Preliminary estimators $\{\what{\kappa}_k\}_{k=1}^{\Khat}$    and 
    $\{\wtilde{\eta}_k\}_{k=1}^\Khat$, tuning parameter $q  \in \mathbb Z^+$}.
    \State $\mathcal N \gets  \{1, \ldots, \lfloor n/2q \rfloor \} - \bigcup_{k=1}^\Khat \{\lfloor \wtilde{\eta}_k/2q \rfloor - 1,  \lfloor \wtilde{\eta}_k/2q \rfloor, \lfloor \wtilde{\eta}_k/2q \rfloor +1 \} $
   
    \State $\calP \gets    \bigcup_{i \in \mathcal N}  \{ ( 2q\cdot( i  - 1)  , \, 2q\cdot i ]  \} $ \Comment{$\mathcal P$ is a collection of mutually disjoint intervals in $(0,n]$. }
    
    \For{$ k = 1,\ldots, \Khat$}
    \State Compute  $s_k$ and $e_k$ as in \eqref{eq: def s_k and e_k},
        \For{$\mathcal I =(m , m +2q] \in \calP$}
            \For{$j  \in ( m ,  m +2q]  $}
                 \State $Z_j \gets \what{\kappa}_k^{-1}\cdot \lft\lb X_j, \what{\beta}_{(s_k, \weta_k]}  - \what{\beta}_{( \weta_k, e_k]} \rgt\rb_\lt \cdot \lft( y_ j  - \lft\lb X_j, \what{\beta}_{( m,  m+2q] } \rgt\rb_\lt \rgt)$ 
            \EndFor
            
            \State $F_{\mathcal I } \gets \sqrt{2} \,q^{-1/2} \lft\{ \sum_{j =m+1 }^{m+q} \big( Z_j - Z_{j+q} \big) \rgt\}$
        \EndFor
        \State $\what{\sigma}^2_\infty(k) \gets |\mathcal P| ^{-1} \sum_{\mathcal I  \in \calP} \lft(F_{\mathcal I }\rgt)^2 $
    \EndFor
    \OUTPUT{$ \lft\{\what{\sigma}^2_\infty(k) \rgt\}_{k=1}^\Khat$}. 
 \end{algorithmic}
\end{algorithm}

\begin{theorem}\label{thm: long run variance estimation}
    Suppose that all the assumptions of \Cref{thm: Limiting distribution} hold and that   $\Khat = \mathcal{K}$.  In   \Cref{alg: LRV}, let $\{ (s_k, e_k) \}_{k=1}^\Ktilde$ be defined as in \eqref{eq: def s_k and e_k}, $\{\wtilde{\eta}_k\}_{k=1}^\Khat$ be the refined estimators as in \eqref{eq: local refinement},  $\{\what{\kappa}_k\}_{k=1}^{\Khat}$ be  defined  as  in \eqref{eq: kappa hat},    and 
   $q$ be  an integer  satisfying 
    \begin{equation}\label{eq: long run q conditions}
        \lft( \frac{\log^{2+2\xi}(\Delta)}{\kappa^2} \rgt)^{\frac{2r+1}{2r-1}}  \ll q \ll \Delta.
    \end{equation}
    Denote by $ \lft\{\what{\sigma}^2_\infty(k) \rgt\}_{k=1}^\Khat$ the output of \Cref{alg: LRV}.  Then,
    for any given $k\in \{1, \ldots, \Khat\}$, $\what{\sigma}^2_\infty(k) \xrightarrow{\mathbb P} \sigma^2_\infty(k)$ as $n \to \infty$.
\end{theorem}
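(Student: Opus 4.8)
The plan is to reduce the claim to a population-level long-run variance of a stationary sequence plus vanishing approximation errors, exploiting the block structure in Algorithm~\ref{alg: LRV}. Fix $k$. Define the oracle sequence $U_j = \kappa_k^{-1}\lb X_j, \beta^*_{\eta_k}-\beta^*_{\eta_{k+1}}\rb_\lt \eps_j$, which by Assumption~\ref{assume: model assumption flr CP} is stationary, $\alpha$-mixing with $\sum_k k^{1/3}\alpha^{1/3}(k)<\infty$, mean zero, and has finite $(3)$rd moment (using the sixth-moment bounds on $\lb X_j,f\rb_\lt$ and on $\eps_j\mid X_j$ together with Cauchy--Schwarz). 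The target is $\sigma^2_\infty(k) = 4\lim_n \var(n^{-1}\sum_{j=1}^n U_j) = 4\sum_{h\in\mathbb Z}\cov(U_0,U_h)$, the series converging under the mixing rate. First I would show that if we replace $Z_j$ by the oracle $U_j$ in the definition of $F_\I$, i.e.\ set $F_\I^{\mathrm{or}} = \sqrt2\, q^{-1/2}\sum_{j=m+1}^{m+q}(U_j - U_{j+q})$, then $|\mathcal P|^{-1}\sum_{\I\in\mathcal P}(F_\I^{\mathrm{or}})^2 \xrightarrow{\mathbb P}\sigma^2_\infty(k)$. This is a standard block/subsampling argument: $\E[(F_\I^{\mathrm{or}})^2] = 2q^{-1}\{2\var(\sum_{j=1}^q U_j) - \text{(cross-block covariance)}\}$; the cross-block covariance between $\{U_j\}_{j=m+1}^{m+q}$ and $\{U_j\}_{j=m+q+1}^{m+2q}$ is $o(q)$ by mixing, so $\E[(F_\I^{\mathrm{or}})^2]\to 4\sum_{h}\cov(U_0,U_h) = \sigma^2_\infty(k)$ as $q\to\infty$ (this uses $q\to\infty$, which is forced by the lower bound in~\eqref{eq: long run q conditions}). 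For the variance of the average over $\I\in\mathcal P$, note $|\mathcal P|\asymp n/q \to\infty$ and distinct blocks $\I$ are separated, so $\cov((F_\I^{\mathrm{or}})^2,(F_{\I'}^{\mathrm{or}})^2)$ decays in the gap between $\I$ and $\I'$ by mixing; a $4$th-moment bound on $F_\I^{\mathrm{or}}$ (requiring the finite sixth moment of $U_j$, which we have) controls the diagonal terms, giving $\var\big(|\mathcal P|^{-1}\sum_\I (F_\I^{\mathrm{or}})^2\big)\to 0$.

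The second, and main, step is to control the approximation error $Z_j - U_j$ uniformly over the retained blocks. Write $\what\delta_k = \what\beta_{(s_k,\weta_k]} - \what\beta_{(\weta_k,e_k]}$ and $\delta_k^* = \beta^*_{\eta_k}-\beta^*_{\eta_{k+1}}$. Then
\begin{align*}
Z_j - U_j &= \Big( \what\kappa_k^{-1}\lb X_j,\what\delta_k\rb_\lt - \kappa_k^{-1}\lb X_j,\delta_k^*\rb_\lt\Big)\eps_j
 - \what\kappa_k^{-1}\lb X_j,\what\delta_k\rb_\lt \lb X_j, \what\beta_{(m,m+2q]} - \beta^*_{(m,m+2q]}\rb_\lt,
\end{align*}
using that on a change-point-free block $y_j = \lb X_j,\beta^*_j\rb_\lt + \eps_j$ with $\beta^*_j = \beta^*_{(m,m+2q]}$ constant (this is exactly why $\mathcal P$ excludes blocks near $\weta_\ell$; by Theorem~\ref{thm: Consistency} and Lemma~C.3, with probability $\to 1$ every retained block is contained in a single stationarity segment). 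The first bracket is small because $\what\kappa_k\xrightarrow{\mathbb P}\kappa_k$ (Lemma~D.1) and $\|\what\delta_k - \delta_k^*\|$ is controlled by the slope-estimation rates of \citet{caiyuan2012minimax} applied on intervals of length $\asymp\Delta$, which feed into prediction-error bounds of order $\Delta^{-2r/(2r+1)}\log^{\ldots}(\Delta)$; the second term involves $\lb X_j, \what\beta_{(m,m+2q]} - \beta^*_{(m,m+2q]}\rb_\lt$ whose mean-square over $j\in\I$ is the prediction error of the RKHS estimator on a block of length $2q$, of order $q^{-2r/(2r+1)}\log^{1+2\xi}(q)$ (up to the $\kappa^{-2}$ scaling inside $Z_j$ which is absorbed). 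The key quantitative point: in $F_\I$ we sum $q$ terms and divide by $\sqrt q$, so the contribution of the approximation error to $F_\I$ is of order $\sqrt q \cdot (\text{per-term error in } L^2)$, and after squaring and averaging we need $q\cdot(\text{per-term }L^2\text{ error})^2 \to 0$; the per-term $L^2$ error coming from the $2q$-block estimator is $O(q^{-2r/(2r+1)}\log^{1+2\xi}(q))$ in the relevant scaled sense, so $q^{1 - 4r/(2r+1)}\mathrm{polylog}(q) = q^{-(2r-1)/(2r+1)}\mathrm{polylog}(q)\to 0$, and the part coming from $\what\delta_k$ (length-$\Delta$ estimation) needs $q\cdot \kappa^{-2}\cdot(\text{rate on }\Delta)\to 0$, which is exactly the role of the lower bound $q \gg (\log^{2+2\xi}(\Delta)/\kappa^2)^{(2r+1)/(2r-1)}$ in~\eqref{eq: long run q conditions}, while $q\ll\Delta$ guarantees $|\mathcal P|\to\infty$ and that the excluded neighborhood of each $\weta_\ell$ (a handful of blocks) is negligible and, via Theorem~\ref{thm: Consistency}, genuinely brackets $\eta_\ell$.

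Assembling: by the triangle inequality $\big| |\mathcal P|^{-1}\sum_\I F_\I^2 - |\mathcal P|^{-1}\sum_\I (F_\I^{\mathrm{or}})^2\big| \le |\mathcal P|^{-1}\sum_\I |F_\I - F_\I^{\mathrm{or}}|\cdot|F_\I + F_\I^{\mathrm{or}}|$, and Cauchy--Schwarz over $\I$ bounds this by $\big(|\mathcal P|^{-1}\sum_\I (F_\I - F_\I^{\mathrm{or}})^2\big)^{1/2}\big(|\mathcal P|^{-1}\sum_\I (F_\I + F_\I^{\mathrm{or}})^2\big)^{1/2}$; the second factor is $O_{\mathbb P}(1)$ by Step~1 plus the first factor, and the first factor is $o_{\mathbb P}(1)$ by Step~2. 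Combined with $|\mathcal P|^{-1}\sum_\I (F_\I^{\mathrm{or}})^2\xrightarrow{\mathbb P}\sigma^2_\infty(k)$ from Step~1, we get $\what\sigma^2_\infty(k)\xrightarrow{\mathbb P}\sigma^2_\infty(k)$. I expect Step~2 — in particular tracking the two distinct estimation scales ($2q$ for $\what\beta_{(m,m+2q]}$ and $\Delta$ for $\what\delta_k$), establishing uniformity over the $O(n/q)$ blocks, and verifying that~\eqref{eq: long run q conditions} is precisely the condition making both error budgets vanish — to be the main obstacle; Step~1 is a fairly routine mixing/blocking computation.
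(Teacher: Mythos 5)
Your overall architecture coincides with the paper's: replace $Z_j$ by the oracle $Z_j^* = \kappa_k^{-1}\lb X_j,\beta^*_{\eta_k}-\beta^*_{\eta_{k+1}}\rb_\lt\eps_j$, prove a law of large numbers for the squared oracle block statistics, show the plug-in error is negligible block-by-block via the RKHS deviation bounds, and finish with Slutsky using $\what\kappa_k/\kappa_k\to 1$. However, your Step 1 has a genuine gap that you dismiss as routine. To control $\var\bigl(|\mathcal P|^{-1}\sum_{\I}(F^{\mathrm{or}}_{\I})^2\bigr)$ by Chebyshev you need $\E[(F^{\mathrm{or}}_{\I})^4]=O(1)$, i.e.\ $\E[|\sum_{j=1}^q Z_j^*|^4]=O(q^2)$. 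This is a Yokoyama/Rosenthal-type bound for mixing sums, and with $\|Z^*\|_6<\infty$ it requires roughly $\sum_k (k+1)\,\alpha^{1/3}(k)<\infty$, which does \emph{not} follow from the assumed $\sum_k k^{1/3}\alpha^{1/3}(k)<\infty$ (take $\alpha^{1/3}(k)\asymp k^{-4/3-\epsilon}$). The paper avoids this by establishing only $\E[|\sum_{j\in\wtilde J_v}Z_j^*|^{8/3}]=O(q^{4/3})$ (Yokoyama with exponent $8/3$, for which the assumed mixing rate and sixth moments suffice) and then invoking a McLeish-type strong law for $\alpha$-mixing arrays with $p=4/3<2$, i.e.\ an $L^{4/3}$ law of large numbers for $(F^{\mathrm{or}}_{J_v})^2$ rather than an $L^2$ one. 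Your route would need either stronger mixing/moment assumptions or this same downgrade of the integrability exponent.

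A second, smaller issue is your accounting of where the lower bound on $q$ in \eqref{eq: long run q conditions} comes from. A requirement of the form $q\cdot\kappa^{-2}\cdot(\text{rate on }\Delta)\to 0$ would cap $q$ from \emph{above}, so it cannot be "the role of the lower bound." In the paper's computation the cross terms $\frac{1}{\kappa_k\sqrt q}\sum_{t\in\wtilde J_v}\lb X_t,\what\beta^{(1)}_k-\beta^*_{\eta_k}\rb_\lt\eps_t$ are bounded (via the two-interval deviation lemmas, with estimation window $\asymp\Delta$ and evaluation window $q\ll\Delta$) by $O_p\bigl(\kappa_k^{-1}\sqrt q\,\delta_q\log^{1+\xi}(q)\bigr)$ with $\delta_q\asymp q^{-2r/(2r+1)}$; since $2r/(2r+1)>1/2$, this \emph{decreases} in $q$, and demanding it be $o(1)$ yields exactly $q\gg(\log^{2+2\xi}(\Delta)/\kappa^2)^{(2r+1)/(2r-1)}$. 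So the binding constraint is the $\delta_q$-scale error inflated by $\kappa_k^{-1}$, not the $\delta_\Delta$-scale error, and larger $q$ helps because the estimation noise averages out over longer blocks. Your identification of the two estimation scales and of the overall reduction is otherwise correct, and your Step 2 decomposition matches the paper's $\what F_{J_v}=A_v+F^*_{J_v}+B_v$.
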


The choice of the tuning parameter $q$ needs to balance the bias   within each interval and the variance between all intervals in $\mathcal P$.  The practical choice of $q$ is outlined in \Cref{sec: NR_Simu}.

\subsection{Confidence interval  construction}\label{sec: CI construction}

In this subsection, we outline the practical procedure for constructing an asymptotically valid confidence interval in the vanishing regime for each change point. For any given $k\in \{1, \ldots, \Khat\}$ and confidence level $\alpha \in (0,1)$, the construction of a $100(1-\alpha)\%$ confidence interval for  $\eta_k$
is performed in two steps:
\\
\
\\
{\bf Step I.} \ Let $B \in \mathbb{N}$. For $b \in \{1, \ldots, B\}$, define
\begin{equation}\label{eq: uhat}
    \what{u}^{(b)}_k = \argmin_{r\in (-\infty , \infty )} \lft( |r| + \what{\sigma}_\infty(k) \mathbb W^{(b) } (r) \rgt)
\end{equation}
where $\what{\sigma}^2_\infty(k)$ is the long-run variance estimator obtained from \Cref{alg: LRV}, and
\begin{equation*}
    \mathbb W^{(b) } (r)  =
    \begin{cases}
    \frac{1}{\sqrt{n}} \sum_{j = \lfloor nr \rfloor}^{-1} z_j^{(b)} & \text{for \ } r <0,
    \\
    0 & \text{for \ } r =0,
    \\
    \frac{1}{\sqrt{n}} \sum_{j = 1}^{\lceil nr \rceil} z_j^{(b)} & \text{for \ } r > 0,
    \end{cases}
\end{equation*}
with  $\{z_j^{(b)}\}_{j=-\infty }^\infty $ being i.i.d.   standard normal random variables.
\\
\
\\
{\bf Step II.} Let $\what{q}_{ k,\alpha/2}  $ and $\what{q}_{ k,1 - \alpha/2} $ be the $\alpha/2$-quantile and $(1-\alpha/2)$-quantile  of the empirical distribution of $\{ \what{u}^{(b)}_k \}_{b=1}^B$. Then, the confidence interval for $\eta_k$ is constructed as
\begin{equation}\label{eq:conf interval}
    \lft[ \wtilde{\eta}_k - \frac{\what{q}_{ k,\alpha/2}  }{\what{\kappa}^2_k} , \;\; \wtilde{\eta}_k + \frac{\what{q}_{ k,1 - \alpha/2 }  }{\what{\kappa}^2_k} \rgt],
\end{equation}
where $\what{\kappa}^2_k$ is defined in \eqref{eq: kappa hat}.

\begin{theorem}\label{thm: CI vanishing}
    Suppose that all the assumptions of \Cref{thm: Limiting distribution} hold, and that $\Khat = \mathcal{K}$.
    For any given $k\in \{1, \ldots, \Khat\}$ and $b  = 1, \ldots, B $, let $\what{u}^{(b)}_k$ be defined as in~\eqref{eq: uhat}. 
    Then,  it holds that
    $$
    \frac{\kappa_k^2}{\what{\kappa}_k^2} \what{u}^{(b)}_k \xrightarrow{\calD} \argmin_{r \in \mathbb R} \big\{ |r| + \sigma_\infty(k) \mathbb W(r) \big\} \quad \text{as} \quad n \to \infty.
    $$
\end{theorem}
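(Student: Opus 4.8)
The plan is to recognise $\what{u}^{(b)}_k$ as the minimiser of a random process that converges weakly to the process appearing in the limit, and then to invoke an argmin continuous mapping theorem together with Slutsky's lemma and the consistency of $\what{\kappa}_k$.

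First I would prove a functional central limit theorem for the bootstrap random walk $\mathbb{W}^{(b)}(\cdot)$. Since $\{z_j^{(b)}\}_{j\in\mathbb{Z}}$ are i.i.d.\ standard normal, the restriction of $\mathbb{W}^{(b)}$ to $r>0$ is the $n^{-1}$-time, $n^{-1/2}$-space rescaling of the random walk $\sum_{j\le\cdot} z_j^{(b)}$, and the restriction to $r<0$ is the analogous rescaling of the random walk driven by $z_{-1}^{(b)},z_{-2}^{(b)},\ldots$, which is independent of the former; by Donsker's theorem applied on each half-line, $\mathbb{W}^{(b)}(\cdot)$ converges weakly in $D(\mathbb{R})$, under the topology of uniform convergence on compacts, to the two-sided standard Brownian motion $\mathbb{W}(\cdot)$ of \Cref{thm: Limiting distribution}B. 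Because the $z_j^{(b)}$ are generated independently of the data while $\what{\sigma}_\infty(k)\xrightarrow{\mathbb{P}}\sigma_\infty(k)$ by \Cref{thm: long run variance estimation}, the pair $(\what{\sigma}_\infty(k),\mathbb{W}^{(b)}(\cdot))$ converges jointly in distribution to $(\sigma_\infty(k),\mathbb{W}(\cdot))$. The map $(c,w)\mapsto\big(r\mapsto|r|+c\,w(r)\big)$ is continuous from $\mathbb{R}\times D(\mathbb{R})$ to $D(\mathbb{R})$ (uniformly on compacts), so by the continuous mapping theorem the objective process
\[
g_n(r):=|r|+\what{\sigma}_\infty(k)\,\mathbb{W}^{(b)}(r)
\]
converges weakly to $g(r):=|r|+\sigma_\infty(k)\,\mathbb{W}(r)$.

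Next I would pass from weak convergence of the processes $g_n$ to weak convergence of their minimisers via an argmin continuous mapping theorem (of Kim--Pollard / van der Vaart--Wellner type). Two hypotheses have to be checked. (a) \emph{Uniqueness of the limiting minimiser}: the process $g$ has a.s.\ a unique minimiser on $\mathbb{R}$; this is classical, the linear drift $|r|$ forcing $g(r)\to\infty$ a.s.\ as $|r|\to\infty$ (since $\mathbb{W}(r)/r\to0$ a.s.), so the minimiser lies a.s.\ in a random compact set on which Brownian motion plus a deterministic function has a unique minimiser a.s. (b) \emph{Tightness of the prelimit minimiser}: one must rule out escape of $\what{u}^{(b)}_k$ to $\pm\infty$. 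This follows from a "drift dominates noise" estimate: on the event where $\what{\sigma}_\infty(k)$ is bounded (true with high probability by \Cref{thm: long run variance estimation}) and $\sup_{|r|\le T}|\mathbb{W}^{(b)}(r)|$ is moderate (controlled via Doob's maximal inequality), for $|r|$ large the term $|r|$ exceeds $\what{\sigma}_\infty(k)\,|\mathbb{W}^{(b)}(r)|$, hence $g_n(r)>g_n(0)=0$, forcing $|\what{u}^{(b)}_k|\le T$; letting $T\to\infty$ yields uniform tightness. With (a) and (b) in hand the theorem gives
\[
\what{u}^{(b)}_k=\argmin_{r\in\mathbb{R}}g_n(r)\;\xrightarrow{\mathcal{D}}\;\argmin_{r\in\mathbb{R}}\big\{|r|+\sigma_\infty(k)\,\mathbb{W}(r)\big\}.
\]

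Finally, since $\what{\kappa}_k$ is a consistent estimator of $\kappa_k$, i.e.\ $\kappa_k^2/\what{\kappa}_k^2\xrightarrow{\mathbb{P}}1$ (Lemma~D.1 in the appendix), Slutsky's lemma upgrades the previous display to
\[
\frac{\kappa_k^2}{\what{\kappa}_k^2}\,\what{u}^{(b)}_k\;\xrightarrow{\mathcal{D}}\;\argmin_{r\in\mathbb{R}}\big\{|r|+\sigma_\infty(k)\,\mathbb{W}(r)\big\},
\]
which is the assertion. I expect the main technical obstacle to be part (b): obtaining the uniform-in-$n$ tightness of $\what{u}^{(b)}_k$ so that the argmin continuous mapping theorem can be applied over the non-compact domain $\mathbb{R}$, and, relatedly, making the joint weak convergence of $(\what{\sigma}_\infty(k),\mathbb{W}^{(b)})$ fully rigorous given that $\what{\sigma}_\infty(k)$ is itself random, though independent of the bootstrap noise.
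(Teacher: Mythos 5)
Your proposal is correct and follows essentially the same route as the paper's proof: tightness of the minimiser via a drift-dominates-noise bound, Donsker's theorem for the i.i.d.\ Gaussian partial sums on compacts combined with $\what{\sigma}_\infty^2(k) \xrightarrow{\mathbb{P}} \sigma_\infty^2(k)$ from \Cref{thm: long run variance estimation}, the argmax continuous mapping theorem, and finally Slutsky's theorem with $\kappa_k^2/\what{\kappa}_k^2 \xrightarrow{\mathbb{P}} 1$. The one imprecision is in your step (b): to rule out $|\what{u}^{(b)}_k| > T$ you must dominate the noise on the unbounded region $|r| > T$ rather than bound $\sup_{|r|\le T}|\mathbb{W}^{(b)}(r)|$ — the paper does this with its maximal inequality giving $|\mathbb{W}^{(b)}(r)| = O_p\bigl(\sqrt{|r|\log^{1+\xi}|r|}\bigr)$ uniformly, so that the inequality $\what{u}^{(b)}_k \le \what{\sigma}_\infty^2(k)\,|\mathbb{W}^{(b)}(\what{u}^{(b)}_k)|$ self-consistently forces $\what{u}^{(b)}_k = O_p(1)$.
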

\Cref{thm: CI vanishing} implies  that the confidence intervals  proposed in \eqref{eq:conf interval} is asymptotic valid  in     the  {\it vanishing regime} considered in Theorem~\ref{thm: Limiting distribution}B.
Confidence interval construction under the non-vanishing regime remains a challenging problem as the limiting distribution involves random quantities of typically unknown distributions. There are some recent attempts on this problem \cite[e.g.][]{kaul2021inference,ng2022bootstrap,cho2022bootstrap}. However, to the best of our knowledge,  there are few theoretical studies  for confidence interval construction  under the non-vanishing regime in the presence of temporal dependence.  
 
\section{Numerical results}\label{sec: NR}

In this section, we perform numerical experiments on simulated and real datasets to investigate the performance of the proposed change point estimation and inference procedure, which contains three steps: (i) the preliminary estimation of the change points, (ii) the refinement of change point estimators and (iii) the construction of confidence intervals. Throughout, we refer to our combined procedure as  `FRBS'. 
\subsection{Simulation studies}\label{sec: NR_Simu}
\medskip
\noindent \textbf{Settings.} 
We modify the simulation settings of \cite{caiyuan2010rkhs} or \cite{caiyuan2012minimax} by introducing temporal dependence in $\{X_j\}_{j = 1}^n$ and changes in $\{\beta_j^*\}_{j = 1}^n$.  Specifically, we simulate data from the model described in \eqref{eq: model}, where the error process $\{\varepsilon_j\}_{j = 1}^n$ is a sequence of i.i.d.\ standard normal random variables, and $\{X_j\}_{j = 1}^n$ is a stationary process following 
\begin{align*}
	X_{j} &= \sum_{m = 1}^{50}\zeta_mZ_{m,j}\phi_m, \qquad 1\le j \le n,
\end{align*}
with $\phi_1 = 1$, $\phi_{m+1} = \sqrt{2}\cos(m\pi t)$ for $m \geq 1$ and $\zeta_m = (-1)^{m+1}m^{-1}$. For each $m \geq 1$, $\{Z_{m,j}\}_{j = 1}^n$ is independently generated as an autoregressive process, i.e.\ $Z_{m,j} = 0.3 Z_{m,j-1} + \sqrt{1-0.3^2} \cdot e_{m,j}$ with $e_{m,j} \overset{i.i.d.}{\sim} N(0,1)$. Note that $\zeta_m^2 = m^{-2}$ are the eigenvalues of the covariance function of $X_{j}$, and $\phi_m$ are the corresponding eigenfunctions.

\noindent Let
\begin{align*}
    \beta^{(0)} = 4\sum_{m = 1}^{50}(-1)^{m+1}m^{-4}\phi_m \quad \text{and} \quad \beta^{(1)} = (4-c_{\beta})\sum_{m = 1}^{50}(-1)^{m+1}m^{-2}\phi_m,
\end{align*}
where the coefficient $c_{\beta} \in \{0.5, 1\}$. 
We consider the slope functions
\begin{align*}
    \beta_j^* = \begin{cases}
        \beta^{(0)} & \text{for \ } j \in \{1, \dots, \eta_1\},\\
        \beta^{(1)} & \text{for \ } j \in \{\eta_1 + 1, \dots, \eta_2\},\\
        \vdots \\
        \beta^{\text{($\mathcal K$ mod $2$)}} & \text{for \ } j \in \{\eta_{\mathcal K} + 1, \dots, n\}.
    \end{cases}
\end{align*}
The cases with $c_{\beta} = 0.5$ and $c_{\beta} = 1$ correspond to the settings with small and large jump sizes, respectively.  We further assume that for each $j$, the random function $X_j$ is observed in an evenly spaced fixed grid with size $p = 200$.
We choose the reproducing kernel Hilbert space $\mathcal{H}(K)$ as the Sobolev space $$\mathcal{W}_2^1 = \{f \in L^2[0,1]\,:\, \|f^{(j)}\|_{\mathcal{L}^2} < \infty, \, j = 0, 1\},$$
with the corresponding reproducing kernel
\begin{align*}
    K(s,t) = \begin{cases}
        \frac{\cosh(s)\cosh(1-t)}{\sinh(1)} & 0 \leq s \leq t \leq 1,\\
        \frac{\cosh(t)\cosh(1-s)}{\sinh(1)} & 0 \leq t \leq s \leq 1.
    \end{cases}
\end{align*}
Note that the reproducing kernel and the covariance function of $X_j$ share a common ordered set of eigenfunctions \citep[see][]{caiyuan2012minimax}.

\medskip
\noindent \textbf{Evaluation measurements.} Let $\{\eta_k\}_{k = 1}^{\mathcal K}$ and $\{\widehat{\eta}_k\}_{k = 1}^{\widehat{\mathcal K}}$ be the set of true change points and a set of estimated change points, respectively. To assess the performance of different methods in localization, we report (i) the proportions (out of $200$ repetitions) of over- or under-estimating $\mathcal K$, and (ii) the average and the standard deviation of the scaled Hausdorff distances between $\{\eta_k\}_{k = 1}^{\mathcal K}$ and $\{\widehat{\eta}_k\}_{k = 1}^{\widehat{\mathcal K}}$ defined as
\begin{align*}
    d_{\mathrm{H}} = \frac{1}{n}\max\Big\{\max_{j = 0, \dots \Khat+1}\min_{k = 0, \dots, \mathcal K+1}|\widehat{\eta}_{j} - \eta_k|, \max_{k = 0, \dots , \mathcal K+1} \min_{j = 0, \dots, \widehat{\mathcal K}+1} |\widehat{\eta}_{j} - \eta_k| \Big\},
\end{align*}
where we set $\widehat{\eta}_0 = 1$ and $\widehat{\eta}_{\widehat{\mathcal K}+1} = n+1$. 
Given a confidence level $\alpha \in (0,1)$, we evaluate the performance of the proposed confidence intervals by measuring their coverage of $\eta_k$, defined as
\begin{align*}
    \text{cover}_k(1-\alpha)  = \mathbbm{1}\bigg\{ \eta_k \in \bigg[\widetilde{\eta}_k + \frac{\widehat{q}_u(\alpha/2)}{\widehat{\kappa}_k^{2}}, \, \widetilde{\eta}_k + \frac{\widehat{q}_u(1-\alpha/2)}{\widehat{\kappa}_k^{2}} \bigg]\bigg\},
\end{align*}
for each $k \in \{ 1, \dots, \mathcal K\}$.
To ensure the validity of the above definition, we compute the averaged coverage among all the repetitions where we obtain $\widehat{\mathcal K} = \mathcal K$.
\medskip

\noindent \textbf{Comparison.}
To the best of our knowledge, no competitor exists for the change point problem in the functional linear (scalar-on-function) regression setting we consider in this paper.  However, considering that functional covariates are typically observed as high-dimensional vectors, we adopt the estimation and inference procedure developed for change points in high-dimensional linear regression (referred to as `HDLR') \citep{xu2022HDregression} as a competitor.  Note that HDLR is analogous to FRBS in the sense that they are both two-step procedures producing preliminary and refined estimators. Thus, we compare their performance on both steps.
Additionally, we include a method that combines FPCA and the HDLR in estimating the change point location. More specifically, we first perform the FPCA on the functional covariates and then perform the HDLR using the $n \times K$ score matrix outputted by FPCA as the covariate matrix, which is referred to as `FPCA+LR'. To perform FPCA, we use the R package \texttt{fdapace} \citep{fdapace_R} with the default settings.
\medskip

\noindent \textbf{Selection of tuning parameters and estimation of unknown quantities.}
Four tuning parameters are involved in the proposed change point localization and inference procedures. These are the number of layers $M$ for the seeded intervals (see~\Cref{def: Seeded Intervals}), $\omega$ and $\tau$ for the FRBS algorithm (see~\Cref{alg: FBS}) and the block size $2q$ for long-run variance estimation (see~\Cref{alg: LRV}) in the confidence interval construction.  We set $M = \lceil \log_2(10) \rceil + 1$. 
In place of $\omega$, which is used in specifying $\lambda_{e - s}$, we propose to select a single $\lambda_{e - s} = \lambda$ along with the threshold $\tau$, adapting the cross-validation method proposed by \cite{rinaldo2021localizing}.
Specifically, we first divide $\{(y_j, X_j)\}_{j =1}^{n}$ into those with odd and even indices, respectively. For each possible combination of $\lambda \in \{0.1, 0.2, 0.3, 0.4, 0.5\}$ and $\tau \in \{1, 1.5, 2, 2.5, 3\} \times n^{2/5}$, we obtain the FRBS outputs ($\widehat{\mathcal{B}}$, $\widehat{\mathcal K}$ and $\{\widehat{\beta}_k\}_{k = 0}^{\widehat{\mathcal K}}$) based on the training set, and compute the least squared prediction error on the test set as the validation loss.  We select the combination of $\lambda$ and $\tau$ that minimize the validation loss. Following the discussion after \Cref{thm: long run variance estimation}, we set $q = \Big\lceil  \Big(\max_{1 \le k \le \widehat{\mathcal K}}\{e_k - s_k\}\Big)^{2/5}\Big/2 \Big\rceil$ with $\{(s_k, e_k)\}_{k = 1}^{\widehat{\mathcal K}}$ given in \eqref{eq: def s_k and e_k}.  We note that the simulation results remain robust against the choices of the tuning parameters $M$ and $q$.  

For the HDLR, we use the CV method in \cite{changepoints_R} to select the tuning parameters for the DPDU algorithm therein, with candidate sets $\lambda \in \{0.05, 0.1, 0.5, 1, 2, 3, 4, 5\}$ and $\tau \in \{5, 10, 15, 20, 25, 30, 35, 40\}$, and use the default values of the other tuning parameters.

\medskip
\noindent \textbf{Scenario: single change point.} Let $\mathcal K = 1$ and $\eta = n/2$.  We vary $n \in \{200, 400, 600, 800\}$, $c_{\beta} \in \{0.5, 1\}$ and fix $p = 200$. Tables \ref{tab:sce1_1} and \ref{tab:sce1_2} summarize the localization and inference performance of FRBS, HDLR and FPCA$+$LR. \Cref{tab:sce1_2} excludes the case with $n = 200$ where there are a large number of repetitions with mis-estimated $\mathcal{K}$ for all methods. 

In \Cref{tab:sce1_1}, comparing the Hausdorff distance computed with
the preliminary ($d_{\mathrm{H}}^{\mathrm{pre}}$) and the refined estimators ($d_{\mathrm{H}}^{\mathrm{fin}}$), we see that the refinement step improves the performance for all methods in consideration as $n$ increases and/or the jump size increases. 
The detection power improves with the sample size as evidenced by the decrease in the proportion of under-detection. 
At the same time, FRBS does not detect more false positives as the sample size increases, unlike HDLR and FPCA$+$LR. 
Overall, the proposed FRBS outperforms both competitors by a large margin, in its detection accuracy as well as localization performance, demonstrating the advantage of adopting a functional approach over the high-dimensional one of HDLR.
Although the RKHS and the covariance function of $X_j$ are well-aligned, the dimension reduction-based approach of FPCA$+$LR comes short of the RKHS-based FRBS.

\Cref{tab:sce1_2} shows that our proposed construction of confidence intervals performs well especially when the jump size is relatively high. In contrast, the intervals constructed based on HDLR perform poorly in capturing the change points, often with the intervals being too narrow.
All these observations suggest the benefit of adopting the proposed functional approach.

\begin{table}[htbp]
\centering
\caption{In Scenario 1, the proportions of under- ($\widehat{\mathcal K} < \mathcal K$),  over-detection ($\widehat{\mathcal K} > \mathcal K$), and  the average and standard deviation (in parentheses) of scaled Hausdorff distance   over 200 repetitions  are reported for FRBS, HDLR, and FPCA+LR. The single change point is located at $\eta = n/2$.}
\begin{tabular}{ccccc}
\hline
\multicolumn{5}{c}{$\mathcal K = 1$ and $p = 200$} \\
$n$   & $\widehat{\mathcal K} < \mathcal K$ & $\widehat{\mathcal K} > \mathcal K$ & $d_{\mathrm{H}}^{\mathrm{pre}}$ & $d_{\mathrm{H}}^{\mathrm{fin}}$ \\ \hline                     \multicolumn{5}{c}{FRBS, $c_{\beta} = 0.5$ (small jump size)}        \\
200 & 0.310   & 0.015         & 0.198 (0.212)  & 0.190 (0.216)  \\
400 & 0.105       & 0.045     & 0.091 (0.150)  & 0.084 (0.151)   \\
600 & 0.045       & 0.030     & 0.060 (0.111)  & 0.048 (0.108)   \\
800 & 0.005       & 0.020     & 0.033 (0.053)  & 0.020 (0.044)   \\
\multicolumn{5}{c}{FRBS, $c_{\beta} = 1$ (large jump size)}        \\
200 & 0   & 0.025         & 0.033 (0.053)  & 0.027 (0.045)  \\
400 & 0       & 0.025     & 0.018 (0.032)  & 0.013 (0.026)   \\
600 & 0       & 0.020     & 0.017 (0.037)  & 0.012 (0.035)   \\
800 & 0       & 0.010     & 0.013 (0.028)  & 0.009 (0.027)   \\
\hline
\multicolumn{5}{c}{HDLR, $c_{\beta} = 0.5$ (small jump size)}        \\
200 & 0.630   & 0.050         & 0.350 (0.196)  & 0.354 (0.195)  \\
400 & 0.275       & 0.070     & 0.200 (0.196)  & 0.201 (0.202)   \\
600 & 0.100       & 0.110     & 0.127 (0.146)  & 0.126 (0.159)   \\
800 & 0.080       & 0.105     & 0.112 (0.137)  & 0.105 (0.147)   \\
\multicolumn{5}{c}{HDLR, $c_{\beta} = 1$ (large jump size)}        \\
200 & 0.080   & 0.150         & 0.118 (0.138)  & 0.115 (0.151)  \\
400 & 0.005       & 0.180     & 0.066 (0.078)  & 0.063 (0.104)   \\
600 & 0       & 0.100     & 0.041 (0.061)  & 0.034 (0.076)   \\
800 & 0       & 0.135     & 0.040 (0.059)  & 0.033 (0.078)   \\
\hline
\multicolumn{5}{c}{FPCA+LR, $c_{\beta} = 0.5$ (small jump size)}        \\
200 & 0.715   & 0.020         & 0.385 (0.181)  & 0.386 (0.183)  \\
400 & 0.365       & 0.060     & 0.230 (0.212)  & 0.224 (0.222)   \\
600 & 0.180       & 0.070     & 0.150 (0.181)  & 0.140 (0.189)   \\
800 & 0.070       & 0.115     & 0.103 (0.134)  & 0.085 (0.147)   \\
\multicolumn{5}{c}{FPCA+LR, $c_{\beta} = 1$ (large jump size)}        \\
200 & 0.125   & 0.100         & 0.130 (0.156)  & 0.126 (0.174)  \\
400 & 0.025       & 0.110     & 0.067 (0.099)  & 0.053 (0.113)   \\
600 & 0       & 0.070     & 0.037 (0.057)  & 0.026 (0.074)   \\
800 & 0       & 0.050     & 0.027 (0.047)  & 0.017 (0.055)   \\
\hline
\end{tabular}
\label{tab:sce1_1}
\end{table}

\begin{table}[t]
\centering
\caption{ In Scenario 1, the averaged coverage and the average and standard deviation (in parentheses) of the width of the confidence intervals from FRBS and HDLR over 200 repetitions are reported. The single change point is located at $\eta = n/2$.}
\begin{tabular}{ccccc}
\hline
\multicolumn{5}{c}{$\mathcal K = 1$ and $p = 200$} \\
 & \multicolumn{2}{c}{$\alpha = 0.01$} & \multicolumn{2}{c}{$\alpha = 0.05$} \\
$n$   & $\mathrm{cover}(1-\alpha)$ & $\mathrm{width}(1-\alpha)$ & $\mathrm{cover}(1-\alpha)$ & $\mathrm{width}(1-\alpha)$ \\ \hline
\multicolumn{5}{c}{FRBS, $c_{\beta} = 0.5$ (small jump size)}                                                                             \\
400 & 0.982            & 109.923 (38.551)  & 0.935    & 72.441 (24.811)  \\
600 & 0.973            & 111.502 (28.749)  & 0.924    & 73.076 (19.474) \\
800 & 0.974            & 117.712 (32.737)  & 0.918    & 77.015 (20.509) \\
\multicolumn{5}{c}{FRBS, $c_{\beta} = 1$ (large jump size)}                                                                             \\
400 & 0.989            & 42.877 (9.479)  & 0.923  & 28.515 (6.469)  \\
600 & 0.974            & 43.505 (7.582)  & 0.969  & 28.299 (4.811)  \\
800 & 0.990            & 43.892 (7.125)  & 0.949    & 28.831 (4.435) \\
\hline
\multicolumn{5}{c}{HDLR, $c_{\beta} = 0.5$ (small jump size)}                                                                             \\
400 & 0.405            & 19.504 (4.493)  & 0.321    & 13.450 (2.944)  \\
600 & 0.424            & 23.690 (4.432)  & 0.367    & 16.000 (2.849) \\
800 & 0.497            & 26.650 (4.294)  & 0.387    & 17.975 (2.685) \\
\multicolumn{5}{c}{HDLR, $c_{\beta} = 1$ (large jump size)}                                                                             \\
400 & 0.748            & 18.153 (3.826)  & 0.681  & 12.712 (2.464)  \\
600 & 0.828            & 20.722 (3.553)  & 0.750  & 14.272 (2.344)  \\
800 & 0.896            & 22.879 (3.551)  & 0.821    & 15.740 (2.284) \\
\hline
\end{tabular}
\label{tab:sce1_2}
\end{table}

An additional simulation study with unequally-spaced two change points is provided in the appendix. These results also show the effectiveness of the proposed method.

\subsection{Real data analysis}
We consider the daily closing price of the S\&P 500 index, from Jan-02-2019 to Jan-19-2023\footnote{The data set is available at \url{https://fred.stlouisfed.org/series/SP500}}.  Inspired by a series of papers \citep[e.g.][]{kokoszka2012functional,kokoszka2013predictability}, which study the predictability of stock prices using the intraday cumulative returns curves, we regress the daily returns ($y_j$) on the intratime cumulative return curves ($X_j$) of the previous one-month (i.e.\ $21$ working days), and use our proposed FRBS as a tool to explore the potential changes in the relationship 
under the model \eqref{eq: model}. Specifically, we transform the closing price data ($P_j$) into the log-ratio of close price between two consecutive days ($y_j$), in percent,
\[
y_j = 100 \cdot \log(P_j/P_{j-1}),
\]
and the discretized $X_j = (X_j(1), \dots, X_j(20))^{\top}$, in percent,
\[X_j(k) = 100 \cdot \log(P_{j-k}/ P_{j-21} ), \quad k = 1, 2, \dots, 20. \]
With $j$ ranging as $j = 22, \dots, 1271$, the sample size is $n = 1250$. \Cref{fig} plots $y_j$ and $X_j$.
\begin{figure}[H]
\centering
\includegraphics[width=8cm]{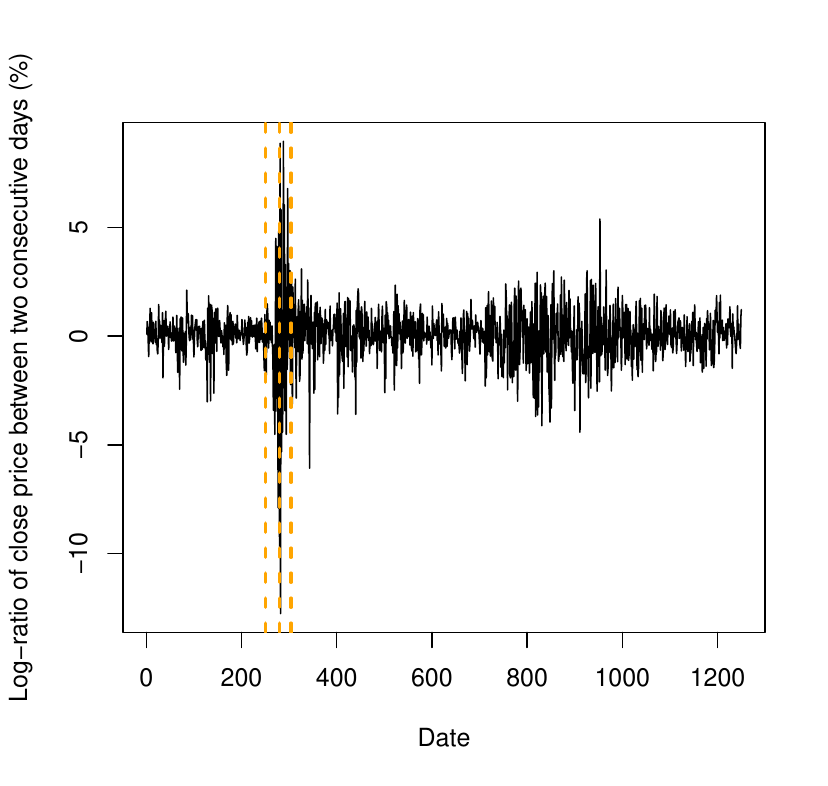}
\includegraphics[width=8cm]{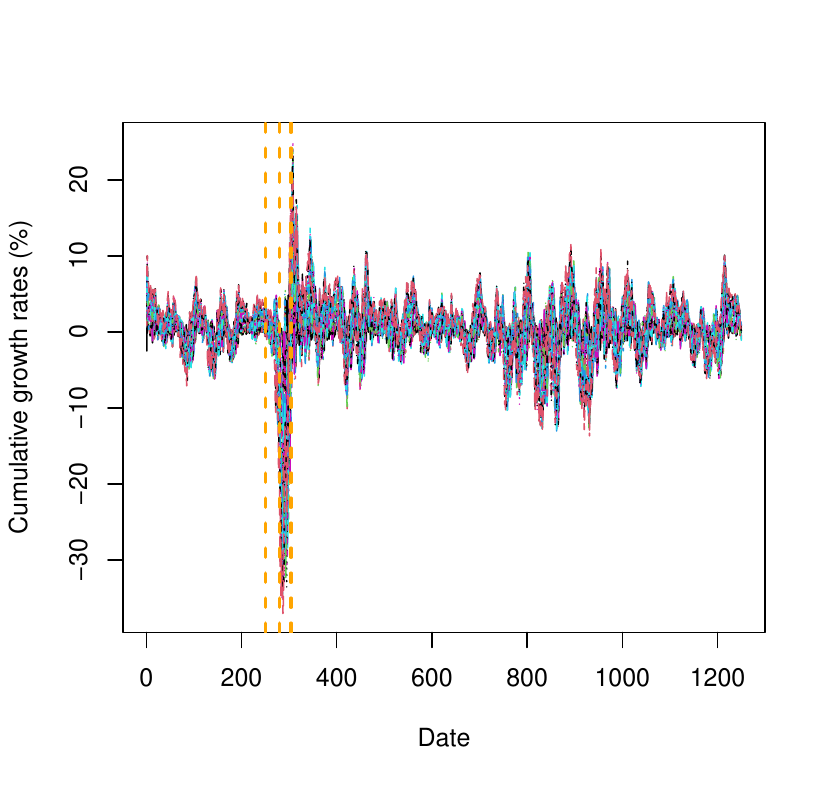}
\caption{The log-ratio of close price between two consecutive days ($y_j$, left); the cumulative growth rate ($X_j(k)$, $1 \le k \le 20$, right). The refined change point estimators are marked by dashed orange lines.}
\label{fig}
\end{figure}

With the tuning parameters selected as discussed in \Cref{sec: NR_Simu}, the proposed FRBS returns three change points at Jan-07-2020, Mar-11-2020, and May-07-2020 as the preliminary estimators and Jan-30-2020, Mar-11-2020, and Apr-16-2020 as the refined ones.  The first estimated change point, with a narrow $95\%$ confidence interval [Jan-28-2020, Feb-03-2020], coincides with the date when WHO officially declared a Public Health Emergency of International Concern. This period reflects investor's concerns about the pandemic's impact on the global economy which lad to increased market volatility and a significant sell-off.  The second estimated change point, with a $95\%$ confidence interval [Feb-20-2020, Mar-30-2020], matches the date when COVID-19 was characterized as a pandemic by WHO. This declaration confirmed the severity and global scale of the outbreak.  During this period, many countries implemented lockdown measures, which lad to huge volatility in financial markets and a sharp drop in the S\&P 500 index.   The third estimated change point reflects that the initial impact of COVID-19 gradually settled.  A series of economic and financial policies were introduced by the governments globally, and the market started to react to these policy changes. Our method produces a wide $95\%$ confidence interval as [Mar-05-2020, May-18-2021].  

In comparison, we consider the same transformed $y_j$ and $X_j$ but regard $X_j$ as a covariate vector of dimensional $20$ and use high-dimensional linear regression with change points \citep{xu2022HDregression} to study the relationship between $y_j$ and $X_j$. The HDLR algorithm outputs two change point estimators at Feb-18-2020 and Apr-14-2020.

\section{Discussion}
In this paper, we study the change point problem within the context of functional linear regression, with minimal assumptions while accommodating temporal dependence and heavy-tailed distributions. Our contribution includes deriving the consistency and the limiting distribution of the change point estimators, a novel advancement in this functional framework. Additionally, we propose a theoretically sound and numerically robust long-run variance estimator to enhance the practicality of our findings. We offer the numerical implementation of our proposed approach which is shown to perform well on synthetic and real datasets.

The theoretical analysis has illuminated several challenging and intriguing directions for future research. One direction could involve devising asymptotically valid confidence intervals in the non-vanishing regime with respect to the size of the change. Another direction could focus on developing methodologies to simultaneously distinguish between different regimes of the size of change, motivated by their difference in the limiting distribution in \Cref{thm: Limiting distribution}.

\newpage
\bibliographystyle{apalike}
\bibliography{references.bib}

\appendix
\section{Additional simulation results}
In this additional simulation, we consider the sample setting described in \Cref{sec: NR_Simu} with unequally-spaced two change points.

\medskip  
\noindent \textbf{Scenario: unequally-spaced two change points.} Let $\mathcal K = 2$ and the unequally-spaced change points $\{\eta_1, \eta_2\} = \{n/4, 5n/8\}$. We vary $n \in \{400, 600, 800\}$ and fix $p = 200$. \Cref{tab:sce2_1} shows the localization performance of both preliminary and final estimators improve as $n$ increases.  Due to the overall poor detection performance HDLR / FRBS when $c_{\beta} = 0.5$, we only report the results from the confidence intervals produced by FRBS for the setting with $c_{\beta} = 1$ in Tables \ref{tab:sce2_2} and \ref{tab:sce2_3}.  The comparison
between Tables \ref{tab:sce2_2} and \ref{tab:sce2_3} reveals that our inference procedure performs better when applied to $\eta_2$ associated with larger spacing with adjacent change points.

\begin{table}[htbp]
\centering
\caption{In Scenario 2, the proportions of under- ($\widehat{\mathcal K} < \mathcal K$),  over-detection ($\widehat{\mathcal K} > \mathcal K$), and  the average and standard deviation (in parentheses) of scaled Hausdorff distance   over 200 repetitions  are reported for FRBS and HDLR. The single change point is located at $\eta = n/2$.  The two change points are located at $\eta_1 = n/4$ and $\eta_2 = 5n/8$.}
\begin{tabular}{ccccc}
\hline
\multicolumn{5}{c}{$\mathcal K = 2$ and $p = 200$} \\ 
$n$   & $\widehat{\mathcal K} < \mathcal K$ & $\widehat{\mathcal K} > \mathcal K$ & $d_{\mathrm{H}}^{\mathrm{pre}}$ & $d_{\mathrm{H}}^{\mathrm{fin}}$ \\ \hline
\multicolumn{5}{c}{FRBS, $c_{\beta} = 0.5$ (small jump size)}                                                                             \\
400 & 0.460    & 0.015        & 0.194 (0.154)  & 0.190 (0.157)   \\
600 & 0.265        & 0.025    & 0.120 (0.134)  & 0.108 (0.139)   \\
800 & 0.150        & 0.015    & 0.081 (0.108)  & 0.068 (0.109)   \\
\multicolumn{5}{c}{FRBS, $c_{\beta} = 1$ (large jump size)}                                                                             \\
400 & 0.010    & 0.025        & 0.029 (0.042)  & 0.024 (0.042)   \\
600 & 0        & 0.015    & 0.020 (0.023)  & 0.011 (0.018)   \\
800 & 0        & 0.015    & 0.015 (0.021)  & 0.011 (0.019)   \\
\hline
\multicolumn{5}{c}{HDLR, $c_{\beta} = 0.5$ (small jump size)}                                                                             \\
400 & 0.930    & 0.015        & 0.275 (0.073)  & 0.300 (0.084)   \\
600 & 0.890        & 0.010    & 0.252 (0.062)  & 0.275 (0.080)   \\
800 & 0.895        & 0.015    & 0.249 (0.055)  & 0.273 (0.072)   \\
\multicolumn{5}{c}{HDLR, $c_{\beta} = 1$ (large jump size)}                                                                             \\
400 & 0.820    & 0.020        & 0.224 (0.045)  & 0.244 (0.058)   \\
600 & 0.900        & 0.015    & 0.229 (0.039)  & 0.244 (0.043)   \\
800 & 0.865        & 0.005    & 0.228 (0.039)  & 0.239 (0.039)   \\
\hline
\end{tabular}
\label{tab:sce2_1}
\end{table}

\begin{table}[htbp]
\centering
\caption{ In Scenario 2, the averaged coverage and the average and standard deviation (in parentheses) of the width of the confidence intervals from FRBS over 200 repetitions for $\eta_1  $ are reported.}
\begin{tabular}{ccccc}
\hline
\multicolumn{5}{c}{FRBS, $p = 200$} \\
 & \multicolumn{2}{c}{$\alpha = 0.01$} & \multicolumn{2}{c}{$\alpha = 0.05$} \\
$n$   & $\mathrm{cover}_1(1-\alpha)$ & $\mathrm{width}_1(1-\alpha)$ & $\mathrm{cover}_1(1-\alpha)$ & $\mathrm{width}_1(1-\alpha)$ \\ \hline
\multicolumn{5}{c}{$c_{\beta} = 1$ (large jump size)}                                                                        \\
400 & 0.959            & 43.591 (12.428) & 0.933 & 28.741 (7.823) \\
600 & 0.994            & 45.492 (12.807)  & 0.980  & 29.751 (8.081)  \\
800 & 0.979            & 43.477 (9.224)  & 0.958  & 28.456 (5.528)  \\
 \hline
\end{tabular}
\label{tab:sce2_2}
\end{table}

\begin{table}[htbp]
\centering
\caption{ In Scenario 2, the averaged coverage and the average and standard deviation (in parentheses) of the width of the confidence intervals from FRBS over 200 repetitions for $\eta_2  $ are reported.}
\begin{tabular}{ccccc}
\hline
\multicolumn{5}{c}{FRBS, $p = 200$} \\
 & \multicolumn{2}{c}{$\alpha = 0.01$} & \multicolumn{2}{c}{$\alpha = 0.05$} \\
$n$   & $\mathrm{cover}_2(1-\alpha)$ & $\mathrm{width}_2(1-\alpha)$ & $\mathrm{cover}_2(1-\alpha)$ & $\mathrm{width}_2(1-\alpha)$ \\ \hline
\multicolumn{5}{c}{$c_{\beta} = 1$ (large jump size)}                                                                                      \\
400 & 0.980            & 45.114 (12.250)   & 0.938    & 29.389 (7.501) \\
600 & 0.990            & 44.142 (10.934)  & 0.980  & 28.909 (6.844)  \\
800 & 0.990            & 44.067 (9.165)  & 0.958  & 28.798 (5.734)  \\ \hline
\end{tabular}
\label{tab:sce2_3}
\end{table}

\section{Proof of \Cref{thm: Consistency}}
\begin{proof}
For $(s_m,e_m] \in \J$ and for all $t\in \sem$ we define 
\begin{equation*}\label{eq: Gset functional cusum}
    \wtilde{G}_t^{s_m,e_m} = \frac{(t-s_m)(e_m-t)}{e_m-s_m}\Sigma\lft[ \beta^*_{(s_m,t]} - \beta^*_{(t,e_m]}, \beta^*_{(s_m,t]} - \beta^*_{(t,e_m]}  \rgt].    
\end{equation*}

For the interval $\sem$, consider the event
\begin{align}\nonumber
    \mathcal{A}(s_m,e_m) =& \, \lft\{ \text{for all } t \in \sem, \rgt.
    \\\nonumber 
    & \lft. \lft| \Whsetm - \Gsetm \rgt| - 0.5\Gsetm \le  \lft(\frac{n}{\Delta}\rgt) (\log^{1+2\xi} (n)) \lft(n^{1/(2r+1)} + 0.5 \rgt) \rgt\},
\end{align}
and define the event
\begin{equation}\label{eq: event A}
   \mathcal{A} = \bigcap_{(s_m,e_m] \in \J } \mathcal{A}(s_m,e_m).
\end{equation}
We established in \Cref{lemma: probability bound first lemma} that
$$
P\big( \mathcal{A} \big) \longrightarrow 1 \qquad \text{ as } n \to \infty.  
$$
All the analysis in the rest of this proof is under this asymptotically almost sure event $\mathcal{A}$.
The strategy here is to use an induction argument. 
Denote
$$
\vartheta_k = C_1 \frac{(n/\Delta) \Delta^{1/(2r+1)} \log^{1+ 2\xi}  n}{\kappa_k^2}.
$$

\
\\
{\bf Step 1:}
We show that, FRBS will consistently reject the existence of change points if they are no undetected change points in $\se$. By induction hypothesis, we have
$$
|\eta_k - s| \le \vartheta_k, \qquad |e - \eta_{k+1}| \le \vartheta_{k+1}.
$$
For each $\sem \in \J$ such that $\sem \subset \se$, there are four possible cases which are outlined below
\begin{enumerate}[label = \roman*)]
    \item $s_m < \eta_k < \eta_{k+1} < e_m$ with $\eta_k - s_m\le \vartheta_k  \text{ and } \eta_{k+1} - e_m \le \vartheta_{k+1}$
    \item $ \eta_{k} \le  s_m< e_m \le \eta_{k+1} $ with $s_m - \eta_{k} \le \vartheta_k$ and $\eta_{k+1}-e_m\le \vartheta_{k+1}$,
    \item $\eta_{k-1}< s_m  \le \eta_k < e_m \le \eta_{k+1} $ with $\eta_k - s_m \le \vartheta_k$,
    \item $\eta_{k-1}\le s_m < \eta_k \le  e_m < \eta_{k+1} $ with $ e_m -\eta_k \le \vartheta_k$.   
\end{enumerate}
We shall consider the first case, all other cases are simpler and could be handled similarly. There are two previously detected change point $\eta_k$ and $\eta_{k+1}$ in $\sem$ and we are going to show that FRBS shall not detect any change point in $\sem$. 
On the event $\mathcal{A}$ we write that 
\begin{align*}
    \forall t \in \semp \qquad \Whsetm &\le \frac{3}{2}\Gsetm +  \lft(\frac{n}{\Delta}\rgt) \log^{1+2\xi} (n) \lft(n^{1/(2r+1)} + \frac{1}{2} \rgt)
    \\
    &\le 3\kappa_k^2 (\eta_k -s_m) + 3\kappa_{k+1}^2(e-\eta_{k+1}) + 2\lft(\frac{n}{\Delta}\rgt) n^{1/(2r+1)} \log^{1+2\xi} (n)   
    \\
    &\le \lft(8C_1 + 2 \rgt) \lft(\frac{n}{\Delta}\rgt) n^{1/(2r+1)} \log^{1+2\xi} (n) < \tau
\end{align*}
where the second last line follows from \Cref{lemma: pop cusum two point projection} and the last line just follows from the definition of $\tau$.
\
\\
{\bf Step 2:}
We show that FRBS will correctly detect the existence of an undetected change point in $\se$. In this case, there exists some change point, $\eta_k$ in $\se$, such that
$$
\min\{\eta_k -s, e - \eta_k\} > \Delta - \vartheta_k,
$$
for some $1\le k \le \Ktilde$. Realize that $\Delta - \vartheta_k > 4\Delta/5$, asymptotically. For this step, it is sufficient to show that the set $\mathcal{M}^{s,e}$, form \Cref{alg: FBS}, is not empty. From the construction of intervals in $\J$ and from \Cref{lemma: interval in J}, we can always find an interval $\sem\in \J$ such that $\sem \subset \se$ containing $\eta_k$ such that 
\begin{equation}\label{eq: thm a2}
    e_m - s_m \le \Delta , \qquad\text{and}\qquad \min\{\eta_k - s_m, e_m - \eta_k\} \ge \Delta/5.    
\end{equation}
\
\\
On the event $\mathcal{A}$, we have
\begin{align}\label{eq: fth1}
    \max_{s_m<t\le e_m} \Whsetm  \ge  \widehat{W}^{s_m, e_m}_{\eta_k}   \ge \frac{1}{2} \widetilde{G}^{s_m, e_m}_{\eta_k} - \lft(\frac{n}{\Delta}\rgt) \log^{1+2\xi} (n) \lft(n^{1/(2r+1)} + \frac{1}{2} \rgt).    
\end{align}
Since $\eta_k$ is the only change point in $\sem$, using \eqref{eq: thm a2}, we write that
\begin{equation}\label{eq: fth111}
     \widetilde{G}^{s_m,e_m}_{\eta_k} = \kappa_k^2 \frac{(\eta_k -s_m)(e_m- \eta_k)}{(e_m-s_m)} \ge \frac{1}{2}\kappa_k^2 \min\{\eta_k - s_m, e_m - \eta_k\} \ge \frac{1}{10} \kappa_k^2 \Delta.
\end{equation}
We may extend \eqref{eq: fth1} to have
\begin{align*}
    \max_{s_m<t\le e_m}  \Whsetm  &\ge \frac{1}{2} \widetilde{G}^{s_m, e_m}_{\eta_k} - \lft(\frac{n}{\Delta}\rgt) \log^{1+2\xi} (n) \lft(n^{1/(2r+1)} + \frac{1}{2} \rgt)
    \\
    &\ge \frac{1}{20} \kappa_k^2 \Delta - \lft(\frac{n}{\Delta}\rgt) \lft( \log^{1+2\xi} (n) \rgt) \lft(n^{1/(2r+1)} + \frac{1}{2} \rgt)
    \\
    &\ge \frac{1}{20} \kappa_k^2 \Delta - o\lft( \kappa_k^2 \Delta \rgt) > \tau.
\end{align*}
where the second last line follows from \eqref{eq: fth111} and the last line follows from \Cref{assume: SNR Signal to Noise ratio}. Therefore $\mathcal{M}^{s,e} \neq \emptyset$.
\
\\
{\bf Step 3:}
 This is the localization step. We have $\mathcal{M}^{s,e} \neq \emptyset$. Let $b = b_{m^*}$ be the chosen point in \Cref{alg: FBS}. Let $(s_{m^*},e_{m^*}]$ be the corresponding interval. Since it is the narrowest one, we have $(e_{m^*} - s_{m^*}) \le (e_m-s_m) \le \Delta$, where $ \sem$ is the interval picked at \eqref{eq: thm a2}. Therefore, $(s_{m^*},e_{m^*}]$ can contains exactly one change point $\eta_k$.
\
\\
Without loss of generality, let's assume that $b > \eta_k$. Additionally, we shall assume that $(b - \eta_k) > \frac{3}{\kappa_k^2}$; if not, the localization rate follows directly.
Since 
$$
 \widehat{W}^{s_{m^*},e_{m^*}}_b \ge  \widehat{W}^{s_{m^*},e_{m^*}}_{\eta_k}.
$$
We write that
\begin{align*}
    &\sum_{j=s_{m^*}+1}^b \lft(Y_j - \lft\lb X_j, \what{\beta}_{(s_{m^*},b]} \rgt\rb_\lt \rgt)^2 + \sum_{j=b+1}^{e_{m^*}} \lft(Y_j - \lft\lb X_j, \what{\beta}_{(b,e_{m^*}]} \rgt\rb_\lt \rgt)^2
    \\
     \le &\sum_{j=s_{m^*}+1}^{\eta_k} \lft(Y_j - \lft\lb X_j, \what{\beta}_{(s_{m^*},{\eta_k}]} \rgt\rb_\lt \rgt)^2 + \sum_{j={\eta_k}+1}^{e_{m^*}} \lft(Y_j - \lft\lb X_j, \what{\beta}_{({\eta_k},e_{m^*}]} \rgt\rb_\lt \rgt)^2,
\end{align*}
which is equivalent to 
\begin{align}\label{eq: fth 11a}
         & \lft(\lft(\frac{b-\eta_k}{b - s_{m^*}}\rgt) + 1  \rgt)^2 \sum_{j=\eta_k+1}^{b} \lft\lb X_j, \beta^*_{\eta_{k+1}}  -\beta^*_{{\eta_k}} \rgt\rb_\lt^2 
         \\\label{eq: fth 11b}
         \le& \lft( \sum_{j=s_{m^*}+1}^{\eta_k} \lft(Y_j - \lft\lb X_j, \what{\beta}_{(s_{m^*},{\eta_k}]} \rgt\rb_\lt \rgt)^2 - \sum_{j=s_{m^*}+1}^{\eta_k} \lft(Y_j - \lft\lb X_j, \beta^*_{\eta_{k}} \rgt\rb_\lt \rgt)^2  \rgt)
        \\\label{eq: fth 11c}
        & + \lft(\sum_{j={\eta_k}+1}^{e_{m^*}} \lft(Y_j - \lft\lb X_j, \what{\beta}_{({\eta_k},e_{m^*}]} \rgt\rb_\lt \rgt)^2 - \sum_{j={\eta_k}+1}^{e_{m^*}} \lft(Y_j - \lft\lb X_j, \beta^*_{\eta_{k+1}} \rgt\rb_\lt \rgt)^2 \rgt)
        \\\label{eq: fth 11d}
        & + \lft(\sum_{j=s_{m^*}+1}^b \lft(Y_j - \lft\lb X_j, \beta^*_{(s_{m^*},b]} \rgt\rb_\lt \rgt)^2 -\sum_{j=s_{m^*}+1}^b \lft(Y_j - \lft\lb X_j, \what{\beta}_{(s_{m^*},b]} \rgt\rb_\lt \rgt)^2 \rgt)
        \\\label{eq: fth 11e}
        & + \lft( \sum_{j=b+1}^{e_{m^*}} \lft(Y_j - \lft\lb X_j, \beta^*_{\eta_{k+1}}\rgt\rb_\lt \rgt)^2 - \sum_{j=b+1}^{e_{m^*}} \lft(Y_j - \lft\lb X_j, \what{\beta}_{(b,e_{m^*}]} \rgt\rb_\lt \rgt)^2 \rgt)
        \\\label{eq: fth 11f}
        &+ 2 \lft(\frac{b - \eta_k}{b - s_{m^*}}\rgt)\lft( \sum_{j= s_{m^*} + 1}^{b} \lft\lb X_j, \beta^*_{\eta_{k}} - \beta^*_{(s_{m^*},b]} \rgt\rb_\lt \eps_j  \rgt) 
        \\\label{eq: fth 11g}
        &+2\lft( \sum_{j= \eta_k +1}^{b} \lft\lb X_j, \beta^*_{\eta_{k+1}} - \beta^*_{\eta_{k}} \rgt\rb_\lt \eps_j  \rgt).
\end{align}
Therefore we have,
$$
\eqref{eq: fth 11a}  \le \big| \eqref{eq: fth 11b} \big| + \big|\eqref{eq: fth 11c} \big| + \big|\eqref{eq: fth 11d} \big| + \big|\eqref{eq: fth 11e} \big| + \big|\eqref{eq: fth 11f} \big| + \big|\eqref{eq: fth 11g} \big| .
$$

\
\\
{\bf Step 3A: the order of magnitude of \eqref{eq: fth 11b}, \eqref{eq: fth 11c}, \eqref{eq: fth 11d} and \eqref{eq: fth 11e}.}
Following from the \Cref{lemma: new localization}, we have
\begin{align*}
   \big| \eqref{eq: fth 11b} \big| &= O_p\lft(\lft(n/\Delta\rgt)  (\eta_p - s_{m^*}) \delta_{\eta_p - s_{m^*}} \log^{1+\xi}(\eta_p - s_{m^*}) \rgt),
   \\
    \big|\eqref{eq: fth 11c} \big| &= O_p\lft(  \lft(n/\Delta\rgt) (e_{m^*} - \eta_p ) \delta_{e_{m^*} - \eta_p} \log^{1+\xi}(e_{m^*} - \eta_p) \rgt),
    \\
    \big|\eqref{eq: fth 11d} \big| &= O_p\lft(  \lft(n/\Delta\rgt) (b - s_{m^*}) \delta_{b - s_{m^*}} \log^{1+\xi}(b - s_{m^*}) \rgt),
    \\
    \big|\eqref{eq: fth 11e} \big| &= O_p\lft(  \lft(n/\Delta\rgt)  (e_{m^*} - b) \delta_{e_{m^*} - b} \log^{1+\xi}(e_{m^*} - b) \rgt),
\end{align*}
which lead us to
\begin{equation*}
    \big| \eqref{eq: fth 11b} \big| + \big|\eqref{eq: fth 11c} \big| + \big|\eqref{eq: fth 11d} \big| + \big|\eqref{eq: fth 11e} \big| = O_p \lft( \lft(n/\Delta\rgt)\Delta^{1/(2r+1)} \log^{1+\xi} (\Delta) \rgt).
\end{equation*}

\
\\
{\bf Step 3B: the order of magnitude of \eqref{eq: fth 11f} and \eqref{eq: fth 11g}.}
Observe that from \Cref{lemma: general markov type bound II} we may have
$$
\E\lft[ \frac{1}{\kappa_k^2} \lft|\sum_{j=\eta_k +1}^{ t'} \lft\lb X_j, \beta^*_{\eta_{k+1}} - \beta^*_{\eta_k} \rgt\rb_\lt\eps_j\rgt|^2\rgt] = O(t' - \eta_k),
$$
and using \Cref{lemma: max to Op by peeling}, we may write
\begin{equation}\label{eq: III term in thm2}
   \max_{1\le k \le \Ktilde} \max_{1/\kappa_k^2< t' < \eta_{k+1}} \lft|\frac{1}{\sqrt{(t'-\eta_k)}(\log^{1+\xi} \lft( (t' - \eta_k) \kappa_k^2 \rgt) + 1) } \frac{1}{\kappa_k} \sum_{j=\eta_k +1}^{ t'} \lft\lb X_j, \beta^*_{\eta_{k+1}} - \beta^*_{\eta_k} \rgt\rb_\lt\eps_j\rgt|^2 = O_p\lft(\Ktilde\rgt).  
\end{equation}
Following this, we have
$$
 \lft|\lft( \sum_{j= \eta_k +1}^{b} \lft\lb X_j, \beta^*_{\eta_{k+1}} - \beta^*_{\eta_{k}} \rgt\rb_\lt \eps_j  \rgt) \rgt| = O_p\lft( \sqrt{\Ktilde} \sqrt{(b- \eta_p)}\kappa_k \lft\{ \log^{1+\xi} \lft((b-\eta_k) \kappa_k^2\rgt) + 1 \rgt\} \rgt),
$$
which is a bound on \eqref{eq: fth 11g}.
Similarly using \eqref{eq: III term in thm2}, we get 
\begin{align*}
    &\lft|  \lft(\frac{b - \eta_k}{b - s_{m^*}}\rgt)\lft( \sum_{j= s_{m^*} + 1}^{b} \lft\lb X_j, \beta^*_{\eta_{p}} - \beta^*_{(s_{m^*},b]} \rgt\rb_\lt \eps_j  \rgt) \rgt| 
    \\
    =& O_p\lft( \sqrt{\Ktilde} \lft[\frac{b - \eta_k}{b - s_{m^*}}\rgt] \sqrt{(b- s_{m^*})}\kappa_k \lft\{ \log^{1+\xi}\lft( (b- s_{m^*}) \kappa_k^2\rgt) + 1 \rgt\} \rgt)
    \\
    =& O_p\lft( \sqrt{\Ktilde}  \sqrt{(b- \eta_k)}\kappa_k \log^{1+\xi} \lft((b- s_{m^*}) \kappa_k^2\rgt) \rgt),
\end{align*}
where we use $\frac{b - \eta_p}{b - s_{m^*}} \le 1$ and $\log \lft((b- s_{m^*}) \kappa_k^2\rgt) > 1$ in the last line. This bound \eqref{eq: fth 11f}.
Therefore
$$
\big|\eqref{eq: fth 11f} \big| + \big|\eqref{eq: fth 11g} \big| = O_p\lft( \sqrt{\Ktilde} \sqrt{(b- \eta_p)}\kappa_k \lft\{ \log^{1+\xi} ((b-\eta_p) \kappa_k^2) \rgt\} \rgt).
$$
\
\\
{\bf Step 3C: the lower bound of \eqref{eq: fth 11a}:}
Observe that from \Cref{lemma: general markov type bound I} we may have
$$
\E\lft[ \frac{1}{\kappa_k^2} \lft|\sum_{j=\eta_k +1}^{ t'} \lft(\lft\lb X_j, \beta^*_{\eta_{k+1}} - \beta^*_{\eta_k} \rgt\rb_\lt^2  - \kappa_k^2 \rgt)\rgt|^2\rgt] = O(t' - \eta_k),
$$
and using \Cref{lemma: max to Op by peeling}, we may write
\begin{equation}\label{eq: IV term}
     \max_{1\le k \le \Ktilde} \max_{\frac{1}{\kappa_k^2} +\eta_k< t' < \eta_{k+1} } \lft| \frac{1}{\sqrt{(t'-\eta_k)}(\log^{1+\xi} ((t'- \eta_k)\kappa_k^2) + 1) } \frac{1}{\kappa_k} \sum_{j=\eta_k +1}^{t'} \lft(\lft\lb X_j, \beta^*_{\eta_{k+1}} - \beta^*_{\eta_k} \rgt\rb_\lt^2  - \kappa_k^2 \rgt)\rgt|^2 = O_p\lft(\Ktilde\rgt).    
\end{equation}
Following \eqref{eq: IV term}, we may write
\begin{align}\nonumber
    \eqref{eq: fth 11a} &=  \lft( \frac{b - \eta_k}{b - s_{m^*}} + 1 \rgt)^2 \sum_{j=\eta_k+1}^{b} \lft\lb X_j, \beta^*_{\eta_{k+1}}  -\beta^*_{{\eta_k}} \rgt\rb_\lt^2 
    \\\nonumber
    &=  \lft( \frac{b - \eta_k}{b - s_{m^*}} + 1 \rgt)^2  \lft[ (b - \eta_k) \kappa_k^2 - O_p\lft( \sqrt{\Ktilde} \sqrt{(b - \eta_k)}\kappa_k \lft\{ \log^{1+\xi} ((b - \eta_k)\kappa_k^2) + 1 \rgt\} \rgt) \rgt]
    \\\nonumber
    &\ge  (b - \eta_k)\kappa^2_k - O_p\lft( \sqrt{\Ktilde} \sqrt{ (b - \eta_k) \kappa_k^2} \lft( \log^{1+\xi} ((b - \eta_k)\kappa_k^2) \rgt) \rgt),
\end{align}
where we use $ \frac{b - \eta_k}{b - s_{m^*}} +1 \ge 1$ and $\log \lft((b- s_{m^*}) \kappa_k^2\rgt) > 1$ in the last line.

\
\\
Following from step 3A, step 3B and step 3C, we get
\begin{align*}
    &(b - \eta_k)\kappa^2_p - O_p\lft( \sqrt{\Ktilde} \sqrt{(b- \eta_k)}\kappa_k \lft\{ \log^{1+\xi} (b-\eta_k) \kappa_k^2) \rgt\} \rgt) 
    \\
    \le &O_p \lft( (n/\Delta) \Delta^{1/(2r+1)} \log^{1+\xi} (\Delta) \rgt) + O_p\lft( \sqrt{\Ktilde} \sqrt{(b- \eta_k)}\kappa_k \lft\{ \log^{1+\xi} (b-\eta_k) \kappa_k^2) \rgt\} \rgt),    
\end{align*}
with $\Ktilde\le n/\Delta$, it implies

\begin{equation}\label{eq: consistency localization}
    (b - \eta_k) \kappa_p^2 = O_p \lft( (n/\Delta) \Delta^{1/(2r+1)} \log^{1+\xi} (\Delta) \rgt).
\end{equation}
This concludes the induction step when $\se$ contains an undetected change point.
\end{proof}

\subsection{Technical results for the proof of \Cref{thm: Consistency}}
\begin{lemma}\label{lemma: interval in J}
Let $\se \subset (0,n]$ be given. Let $\eta_k$ be a point in $\se$. Suppose $\min\{\eta_k -s, e - \eta_k\} > 4\Delta/5$. Then there exists an interval $(s_m,e_m] \in \J \cap \se$ containing $\eta_k$ such that
$$
e_m - s_m \le \Delta , \qquad\text{and}\qquad \min\{\eta_k - s_m, e_m- \eta_k\} \ge \Delta/5. 
$$
\end{lemma}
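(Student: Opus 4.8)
The plan is to prove Lemma~\ref{lemma: interval in J} by a direct, constructive argument using the nested dyadic structure of the seeded intervals in \Cref{def: Seeded Intervals}. The key observation is that the family $\J$ contains, for each layer $k$, intervals of length $\mathfrak{l}_k = n/2^{k-1}$ whose left endpoints are spaced $\mathfrak{b}_k = n/2^k$ apart, so consecutive intervals in layer $k$ overlap in a block of length $\mathfrak{b}_k = \mathfrak{l}_k/2$. I would first pick the layer $k^*$ to be the \emph{largest} index such that $\mathfrak{l}_{k^*} \ge \Delta$, equivalently $n/2^{k^*-1} \ge \Delta > n/2^{k^*}$; note this layer exists and $k^* \le M$ because $\Delta < n$, and by maximality $\mathfrak{l}_{k^*} < 2\Delta$, so any interval in $\J_{k^*}$ has length at most $\lfloor \mathfrak{l}_{k^*} \rfloor \le 2\Delta$. (One must be slightly careful: the claim wants $e_m - s_m \le \Delta$, not $\le 2\Delta$, so actually I would choose $k^*$ to be the \emph{smallest} index with $\mathfrak{l}_{k^*} \le \Delta$, i.e.\ $\mathfrak{b}_{k^*-1} = \mathfrak{l}_{k^*} \le \Delta < \mathfrak{l}_{k^*-1} = 2\mathfrak{l}_{k^*}$, which forces $\mathfrak{l}_{k^*} > \Delta/2$ and gives intervals of length between $\Delta/2$ and $\Delta$, up to floor/ceiling corrections that I would absorb by noting $\Delta \lesssim n$.)

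\textbf{The main step} is then to show that one of the layer-$k^*$ intervals containing $\eta_k$ actually sits inside $(s,e]$ and has both "arms" $\eta_k - s_m$ and $e_m - \eta_k$ at least $\Delta/5$. Since the left endpoints of $\J_{k^*}$ partition $[0,n]$ into sub-blocks of length $\approx \mathfrak{b}_{k^*} = \mathfrak{l}_{k^*}/2 \in (\Delta/4, \Delta/2]$, the point $\eta_k$ lies in exactly two consecutive layer-$k^*$ intervals (by the remark after \Cref{def: Seeded Intervals}, specialized one layer up), and in at least one of these two, $\eta_k$ is at distance $\ge \mathfrak{b}_{k^*}/1 \ge$ roughly $\mathfrak{l}_{k^*}/2 - (\text{one sub-block})$ from \emph{both} endpoints — more carefully, among the two intervals covering $\eta_k$, one has $\eta_k$ in its left half and one in its right half (up to the overlap), so picking the one where $\eta_k$ is "central enough" gives $\min\{\eta_k - s_m, e_m - \eta_k\} \ge \mathfrak{b}_{k^*}/2 \ge \Delta/8$; I would track the floor/ceiling slack to upgrade $\Delta/8$ to the stated $\Delta/5$ (the constants in the paper are not tight, so there is room, but I should double-check that $1/5$ rather than, say, $1/4$ is what the arithmetic forces). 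Finally, $(s_m, e_m] \subseteq (s,e]$ follows from the hypothesis $\min\{\eta_k - s, e - \eta_k\} > 4\Delta/5$ together with $e_m - s_m \le \Delta$: the interval $(s_m,e_m]$ extends at most $\Delta$ to either side of $\eta_k$ minus the $\ge \Delta/5$ on the other side, hence at most $4\Delta/5$ past $\eta_k$ in each direction, which stays within $(s,e]$.

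\textbf{The main obstacle} I anticipate is the bookkeeping of floor/ceiling operations in $\big\lceil (i-1)\mathfrak{b}_k\big\rceil$ and $\big\lfloor (i-1)\mathfrak{b}_k + \mathfrak{l}_k\big\rfloor$ together with the fact that $\Delta$ need not be a power of $2$ times a divisor of $n$; these distortions are each $O(1)$, but to land exactly on the constant $1/5$ (and on $e_m - s_m \le \Delta$ rather than $\le \Delta + O(1)$) one needs either that $\Delta$ is large relative to these $O(1)$ errors — which holds asymptotically since $\Delta \to \infty$ under \Cref{assume: SNR Signal to Noise ratio} — or a slightly more careful choice of layer. I would state the lemma's conclusion as holding for $n$ large enough (consistent with how it is invoked in Step~2 of the proof of \Cref{thm: Consistency}, which is already an asymptotic argument), absorb the rounding into the slack between $\Delta/4$ and $\Delta/5$, and keep the rest of the argument purely combinatorial.
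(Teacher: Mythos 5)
Your proposal is essentially the paper's own argument: work in the finest layer of $\J$ (length $\Delta$, spacing $\Delta/2$), observe that $\eta_k$ lies in exactly two overlapping intervals of that layer, and pick the one in which $\eta_k$ is more central, which yields $\min\{\eta_k-s_m,e_m-\eta_k\}\ge \Delta/4\ge \Delta/5$; the containment in $(s,e]$ then follows from the $4\Delta/5$ hypothesis exactly as you say (the paper leaves this last check implicit). The one loose end is your constant: if you insist on the true layer length $\mathfrak{l}_M\in(\Delta/2,\Delta]$ rather than the idealized $\mathfrak{l}_M=\Delta$, your worst case is $\mathfrak{l}_M/4>\Delta/8$, and no amount of floor/ceiling bookkeeping upgrades $\Delta/8$ to $\Delta/5$ when $n/\Delta$ is far from a power of two — the paper simply assumes $\mathfrak{l}_M=\Delta$, under which convention your argument closes cleanly.
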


\begin{proof}
    There are at most two intervals in each layer $\J_k$, for $1\le k \le M$, that contains any given point. We shall consider the layer with $\mathfrak{l}_k = \Delta$ and $\mathfrak{b}_k = \Delta/2$. Without loss of generality, let $\lft(\frac{(i-1)\Delta}{2}, \frac{(i-1)\Delta}{2} + \Delta \rgt]$ and $\lft(\frac{i\Delta}{2}, \frac{i\Delta}{2} + \Delta \rgt]$ are intervals containing $\eta_k$.

    \
    \\
    {\bf Case I:} Suppose $\eta_k - i\Delta/2 > (i+1)\Delta/2 -\eta_k$. Observe that $\eta_k - i\Delta/2 \ge \Delta/4$. The interval $(s_m,e_m] =\lft(\frac{i\Delta}{2}, \frac{i\Delta}{2} + \Delta \rgt]$ satisfies the required property because $\eta_k - s_m = \eta_k - i\Delta/2 \ge \Delta/4$ and $e_m - \eta_k > i\Delta/2 + \Delta - \lft( (i-2)\Delta/2 + \Delta \rgt) = \Delta/2 $.

    \
    \\
    {\bf Case II:} Suppose $\eta_k - i\Delta/2 \le (i+1)\Delta/2 -\eta_k $. Using arguments akin to the previous case, the interval $\left(\frac{(i-1)\Delta}{2}, \frac{(i-1)\Delta}{2} + \Delta \right]$ emerges as the necessary interval.
\end{proof}

\subsubsection{Large probability event}
Recall for any $a>0$, $\delta_a \asymp a^{-2r/(2r+1)}$.
\begin{lemma}\label{lemma: probability bound first lemma}
    Let $\xi > 0$. Then, as $n \to \infty$, we have
\begin{align*}
        P \bigg( \forall (s_m, e_m]\in \J&, \quad \forall t \in (s_m, e_m], \quad 
        \\
        &\lft| \Whsetm - \Gsetm \rgt| - 0.5\Gsetm \le   \lft(\frac{n}{\Delta}\rgt) \log^{1+2\xi} (n) \lft(n^{1/(2r+1)} + 0.5 \rgt)  \bigg) \to  1.   
\end{align*}
\end{lemma}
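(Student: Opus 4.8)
The plan is to fix an arbitrary seeded interval $(s,e]:=(s_m,e_m]\in\J$ and an arbitrary candidate split $t\in(s,e]$, reduce $\Whsetm$ to a population-target version, expand the latter so that $\Gsetm$ appears exactly, bound the residual stochastic terms, and finish with a union bound over the $|\J|\le 8n/\Delta$ seeded intervals and, within each, over the splits $t$.

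For any $I=(a,b]$ write $\widehat Q_I:=\sum_{j\in I}(y_j-\langle X_j,\widehat\beta_I\rangle_\lt)^2$ and $Q_I^*:=\sum_{j\in I}(y_j-\langle X_j,\beta^*_I\rangle_\lt)^2$, so that $\Whsetm=\widehat Q_{(s,e]}-\widehat Q_{(s,t]}-\widehat Q_{(t,e]}$. The difference $\widehat Q_I-Q_I^*$ is precisely an RSS-difference between the penalised estimator from \eqref{def: bhast} and the population slope-average $\beta^*_I$, i.e.\ exactly the object bounded in \Cref{lemma: new localization}; this gives $|\widehat Q_I-Q_I^*|=O_\mathbb{P}\big((n/\Delta)(b-a)\delta_{b-a}\log^{1+\xi}(b-a)\big)$, and since $x\mapsto x\delta_x\asymp x^{1/(2r+1)}$ is increasing and $b-a\le n$ this is $O_\mathbb{P}\big((n/\Delta)n^{1/(2r+1)}\log^{1+\xi}(n)\big)$; making this hold simultaneously over all seeded intervals and internal splits costs one further logarithmic factor. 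Hence, uniformly, $\Whsetm=Q_t^*+O_\mathbb{P}\big((n/\Delta)n^{1/(2r+1)}\log^{1+2\xi}(n)\big)$ with $Q_t^*:=Q_{(s,e]}^*-Q_{(s,t]}^*-Q_{(t,e]}^*$.

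Next, substituting $y_j=\langle X_j,\beta_j^*\rangle_\lt+\eps_j$ into each $Q_I^*$ cancels the three copies of $\sum_j\eps_j^2$, and using $\sum_{j\in I}(\beta^*_j-\beta^*_I)=0$ together with the identities $\beta^*_{(s,t]}-\beta^*_{(s,e]}=\tfrac{e-t}{e-s}(\beta^*_{(s,t]}-\beta^*_{(t,e]})$ and $\beta^*_{(t,e]}-\beta^*_{(s,e]}=\tfrac{t-s}{e-s}(\beta^*_{(t,e]}-\beta^*_{(s,t]})$ one obtains $Q_t^*=\Gsetm+R_1+R_2+R_3$, where, writing $\Delta\beta:=\beta^*_{(s,t]}-\beta^*_{(t,e]}$, $R_1$ is a weighted sum over $(s,t]$ and $(t,e]$ of the centred quadratic forms $\langle X_j,\Delta\beta\rangle_\lt^2-\Sigma[\Delta\beta,\Delta\beta]$, $R_2$ of the mean-zero bilinear forms $\langle X_j,\beta^*_j-\beta^*_I\rangle_\lt\langle X_j,\Delta\beta\rangle_\lt$, and $R_3$ of the noise terms $\eps_j\langle X_j,\Delta\beta\rangle_\lt$; all weights are $\le 1$, and $\Gsetm$ emerges because the expectations of the weighted $\langle X_j,\Delta\beta\rangle_\lt^2$ sums add up to exactly $\Gsetm$ by the weight identity $(t-s)\big(\tfrac{e-t}{e-s}\big)^2+(e-t)\big(\tfrac{t-s}{e-s}\big)^2=\tfrac{(t-s)(e-t)}{e-s}$. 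Each $R_i$ is a sum over an interval of a stationary $\alpha$-mixing array whose variance is controlled via \Cref{lemma: general markov type bound I}/\Cref{lemma: general markov type bound II} and the sixth-moment assumption; for $R_1$, $R_3$ (and for $R_2$ when the internal variation of $\{\beta^*_j\}$ on $I$ is moderate) this variance is at most a constant times $\Gsetm$, since weight$^2\times$length$\times\Sigma[\Delta\beta,\Delta\beta]\le\Gsetm$, and \Cref{lemma: max to Op by peeling} turns the bound into a uniform $O_\mathbb{P}(\cdot)$ statement with a further $\log^{1+\xi}$ penalty; the inequality $2ab\le\tfrac12 a^2+2b^2$ with $a^2\le\Gsetm$ then absorbs these contributions into the slack $\tfrac12\Gsetm$ up to an extra $O_\mathbb{P}\big(n^{1/(2r+1)}\log^{2+2\xi}(n)\big)$, which is dominated by the claimed bound. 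Combining with the reduction step and union-bounding over $\J$ and the internal splits gives the asserted event of probability tending to one.

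The main obstacle is $R_2$ together with the uniformity requirement: on long seeded intervals containing many change points the internal variation $\sum_{j\in I}\Sigma[\beta^*_j-\beta^*_I,\,\cdot\,]$ entering the variance of $R_2$ is not $O(\Gsetm)$, so the naive bound is too weak; one must exploit that $R_2$ still carries the factor $\Sigma[\Delta\beta,\cdot]$, which is precisely the scale of $\Gsetm$, and/or split cases on $|I|$, using that on short sub-intervals the heavy ridge shrinkage forces both $\widehat Q_I$ and $\Gsetm$ to be small. A secondary technical point is to verify that the penalty terms in \eqref{def: bhast} contribute only lower-order corrections, so that the population-target comparison in the reduction step is legitimate uniformly over all (sub)intervals.
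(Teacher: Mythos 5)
Your proposal is correct and follows essentially the same two-step route as the paper: first replace $\Whsetm$ by its population-slope version at a cost of $O_\mathbb{P}\big((n/\Delta)\,n^{1/(2r+1)}\log^{1+2\xi}(n)\big)$ using the regression deviation bounds (the paper bounds the nine cross terms $\B_1$--$\B_9$ directly rather than citing \Cref{lemma: new localization}, but these are the same estimates), then expand the population statistic around $\Gsetm$ into centred quadratic, bilinear and noise sums, each $O_\mathbb{P}\big(\sqrt{\Gsetm\log^{1+\xi}(t-s)}\big)$ by the Markov-type variance bounds plus the maximal inequality, absorb them into $0.5\,\Gsetm$ via AM--GM, and union-bound over $\J$. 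The obstacle you flag for $R_2$ is already resolved exactly as you suggest: \Cref{lemma: general markov type bound I} gives a variance of order $(t-s)\,\Sigma[\Delta\beta,\Delta\beta]$ in which the internal variation $\Sigma[\beta^*_j-\beta^*_I,\beta^*_j-\beta^*_I]$ enters only through the uniform constant $M$, so no case split on $|I|$ is needed.
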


\begin{proof}
    Let $\sem \in \J$ be fixed. For notational simplicity, denote $s = s_m$ and $e=e_m$.
    Denote 
    \begin{equation}\nonumber
    W^{*s,e}_t = \sum_{j=s+1}^e \lft( Y_j - \lb X_j, \beta^*_\se \rb_\lt \rgt)^2 - \sum_{j=s+1}^t \lft( Y_j - \lb X_j, \beta^*_\st \rb_\lt \rgt)^2 - \sum_{j=t+1}^e \lft( Y_j - \lb X_j, \beta^*_\te \rb_\lt \rgt)^2.
    \end{equation}
     We show in Step~$1$ that
    \begin{equation}\label{eq: step1 probablity bound}
        \max_{s < t \le e} \lft| \Whset - W^{*s,e}_t\rgt| = O_p\lft( (e-s)^{1/(2r+1)} \log^{1+\xi} (e-s) \rgt).
    \end{equation}
    In Step~$2$, we show that
    \begin{equation}\label{eq: step2 probablity bound}
         \max_{s<t\le e} \frac{1}{ \sqrt{ \Gset \log^{1+\xi} (t-s) }} \lft| W^{*s,e}_t - \Gset \rgt| = O_p\lft( 1 \rgt),
    \end{equation}
    when $\Gset \neq 0$. It follows from using $4ab \le (a+b)^2$ at \eqref{eq: step2 probablity bound} that
    \begin{equation}\nonumber
        \max_{s<t\le e} \frac{1}{ 0.5\lft(\Gset + \log^{1+\xi} (t-s) \rgt)} \lft| W^{*s,e}_t - \Gset \rgt| = O_p\lft( 1 \rgt).
    \end{equation}
    Therefore,
    \begin{equation}\label{eq: ntest1}
        P \bigg(\forall t \in \se,  \lft| \Whset - \Gset\rgt| - 0.5\Gset \le   \lft(n^{1/(2r+1)} + 0.5 \rgt) \log^{1+2\xi} (n)\bigg) \to 1 ,
    \end{equation}
    as $n \to \infty$. The factor $\log^\xi(n)$ is to make the event asymptotically almost surely. When $\se$ has no change point, we have $W^{*s,e}_t = \Gset = 0$ and \eqref{eq: ntest1} trivially holds. Following the cardinality of $\J$ at \eqref{rm: card J}, the main result now follows from the union bound.
    
    \
    \\
    {\bf Step 1:}
    Using $(a-b)^2 - (a-c)^2 = (b-c)^2 - 2(a-c)(b-c)$, we may write
    \begin{align}\nonumber
         &\Whset - W^{*s,e}_t
         \\\nonumber
         =& \underbrace{\sum_{j=s+1}^e \lft\lb X_j, \bhase - \beta^*_\se \rgt\rb_\lt^2}_{\B_1} - \underbrace{\sum_{j=s+1}^t \lft\lb X_j, \bhast - \beta^*_\st \rgt\rb_\lt^2}_{\B_2} - \underbrace{\sum_{j=t+1}^e \lft\lb X_j, \bhate - \beta^*_\te \rgt\rb_\lt^2}_{\B_3}
         \\\nonumber
         & \quad + \underbrace{2\sum_{j=s+1}^e \lft\lb X_j, \beta^*_\se -\bhase  \rgt\rb_\lt \eps_j}_{\B_4} - \underbrace{2\sum_{j=s+1}^t \lft\lb X_j, \beta^*_\st -\bhast \rgt\rb_\lt\eps_j}_{\B_5} - \underbrace{2\sum_{j=t+1}^e \lft\lb X_j, \beta^*_\te -\bhate\rgt\rb_\lt \eps_j}_{\B_6}
         \\\nonumber
         & \quad + \underbrace{2\sum_{j=s+1}^e \lft\lb X_j, \beta^*_\se -\bhase  \rgt\rb_\lt \lft\lb X_j, \beta_j^* - \beta^*_\se \rgt\rb_\lt}_{\B_7} - \underbrace{2\sum_{j=s+1}^t \lft\lb X_j, \beta^*_\st -\bhast \rgt\rb_\lt \lft\lb X_j, \beta_j^* - \beta^*_\st \rgt\rb_\lt}_{\B_9}
         \\\label{eq: new3a}
         &\quad - \underbrace{2\sum_{j=t+1}^e \lft\lb X_j, \beta^*_\te -\bhate\rgt\rb_\lt\lft\lb X_j, \beta_j^* - \beta^*_\te \rgt\rb_\lt}_{\B_9}.
     \end{align}
    We will show the technique to bound $\B_1, \B_2, \B_3$ and the result for $\B_4, \B_5, \B_6$ and $\B_7, \B_8, \B_9$ follows from the same outlined idea and the corresponding \Cref{lemma: consistency flr} and \Cref{lemma: unfold empirical bound} respectively. 

\
\\
Observe that for $|\B_2|$
\begin{align*}
    &\max_{s<t\le e} \sum_{j=s+1}^t \lft\lb X_j, \bhast - \beta^*_\st \rgt\rb_\lt^2 
    \\
    =& \max_{s<t\le e} (t-s) \shat_\st\lft[\bhast - \beta^*_\st,\bhast - \beta^*_\st \rgt]
    \\
    \le& \max_{s<t\le e} \lft( (t-s)^{1/(2r+1)} \log^{1+\xi} (t-s) \rgt) \max_{s<t\le e} \lft( \frac{\delta_{t-s}^{-1}}{\log^{1+\xi} (t-s)} \shat_\st\lft[\bhast - \beta^*_\st,\bhast - \beta^*_\st \rgt] \rgt)
    \\
    =& (s-e)^{1/(2r+1)} \log^{1+\xi} (e-s) O_p(1),
\end{align*}
where the last line follows from the fact that $z \mapsto z^a \log z $ is strictly increasing for any $a \ge 0$ and the \Cref{lemma: empirical excess risk}. 

\
\\
For $|\B_1|$, at $t=e$, we have
\begin{align*}
    \sum_{j=s+1}^e \lft\lb X_j, \bhase - \beta^*_\se \rgt\rb_\lt^2 = (e-s)^{1/(2r+1)} \log^{1+\xi} (e-s) O_p(1).
\end{align*}

The bound for the term $ |\B_3|$ follows by same arguement as $\B_1$.
This establish \eqref{eq: step1 probablity bound}.

    \
    \\
    {\bf Step 2:}
    Let $\Gset \neq 0$. Note that
    $$
    \beta^*_\st - \beta^*_\se = \lft(\frac{e-t}{e-s}\rgt) \lft( \beta^*_\st - \beta^*_\te \rgt)
    $$
    and 
    $$
    \beta^*_\te - \beta^*_\se = \lft(\frac{t-s}{e-s}\rgt) \lft( \beta^*_\te - \beta^*_\st \rgt).
    $$

    \
    \\
    Using $(a-b)^2 - (a-c)^2 = (b-c)^2 - 2(a-c)(b-c)$, we may write
    \begin{align}\nonumber
        W^{*s,e}_t =& \sum_{j=s+1}^t \lft\lb X_j, \beta^*_\st - \beta^*_\se \rgt\rb_\lt^2 + \sum_{j=t+1}^e \lft\lb X_j, \beta^*_\te - \beta^*_\se \rgt\rb_\lt^2
         \\\nonumber
         & \quad + 2\sum_{j=s+1}^t \lft\lb X_j,\beta^*_\se - \beta^*_\st \rgt\rb_\lt\eps_j + 2\sum_{j=t+1}^e \lft\lb X_j, \beta^*_\se - \beta^*_\te \rgt\rb_\lt \eps_j
         \\\nonumber
         & \quad + 2\sum_{j=s+1}^t \lft\lb X_j, \beta^*_\se -\beta^*_\st \rgt\rb_\lt \lft\lb X_j, \beta_j^* - \beta^*_\st \rgt\rb_\lt + 2\sum_{j=t+1}^e \lft\lb X_j, \beta^*_\se -\beta^*_\te \rgt\rb_\lt\lft\lb X_j, \beta_j^* - \beta^*_\te \rgt\rb_\lt
         \\\nonumber
         =& \lft(\frac{e-t}{e-s}\rgt)^2 \sum_{j=s+1}^t \lft\lb X_j, \beta^*_\st - \beta^*_\te \rgt\rb_\lt^2 + \lft(\frac{t-s}{e-s}\rgt)^2 \sum_{j=t+1}^e \lft\lb X_j, \beta^*_\te - \beta^*_\st \rgt\rb_\lt^2
         \\\nonumber
         & \quad + 2 \lft(\frac{e-t}{e-s}\rgt) \sum_{j=s+1}^t \lft\lb X_j,\beta^*_\te - \beta^*_\st \rgt\rb_\lt\eps_j + 2 \lft(\frac{t-s}{e-s}\rgt) \sum_{j=t+1}^e \lft\lb X_j, \beta^*_\st - \beta^*_\te \rgt\rb_\lt \eps_j
         \\\nonumber
         & \quad + 2 \lft(\frac{e-t}{e-s}\rgt) \sum_{j=s+1}^t \lft\lb X_j, \beta^*_\te -\beta^*_\st \rgt\rb_\lt \lft\lb X_j, \beta_j^* - \beta^*_\st \rgt\rb_\lt
         \\\nonumber
         & \quad + 2 \lft(\frac{t-s}{e-s}\rgt) \sum_{j=t+1}^e \lft\lb X_j, \beta^*_\st -\beta^*_\te \rgt\rb_\lt\lft\lb X_j, \beta_j^* - \beta^*_\te \rgt\rb_\lt.
    \end{align}
Observe that 
$$\Gset = \lft( \frac{(t-s)^2(e-t)}{(e-s)^2} + \frac{(t-s)(e-t)^2}{(e-s)^2}   \rgt)\Sigma\lft[ \beta^*_\st - \beta^*_\te, \beta^*_\st - \beta^*_\te  \rgt].
$$
Also, $\lft(\frac{e-t}{e-s}\rgt)^2 \le 1$, $\lft(\frac{t-s}{e-s}\rgt)^2 \le 1$, $\sum_{j=s+1}^t \Sigma[\beta^*_\te -\beta^*_\st, \beta_j^* - \beta^*_\st] = 0$ and $\sum_{j=t+1}^e \Sigma[\beta^*_\st -\beta^*_\te, \beta_j^* - \beta^*_\te] = 0 $. Using the triangle inequality, we may write

\begin{align}\nonumber
    &\lft| W^{*s,e}_t - \Gset \rgt| 
    \\\label{eq: f41}
    \le &  \lft(\frac{e-t}{e-s}\rgt)\lft| \sum_{j=s+1}^t \lft( \lft\lb X_j, \beta^*_\st - \beta^*_\te \rgt\rb_\lt^2 - \Sigma[\beta^*_\st - \beta^*_\te, \beta^*_\st - \beta^*_\te] \rgt) \rgt| 
    \\\label{eq: f42}
    +& \lft(\frac{t-s}{e-s}\rgt)\lft| \sum_{j=t+1}^e \lft( \lft\lb X_j, \beta^*_\te - \beta^*_\st \rgt\rb_\lt^2 - \Sigma[\beta^*_\st - \beta^*_\te, \beta^*_\st - \beta^*_\te] \rgt) \rgt|
         \\\label{eq: f43}
        + &   2 \lft(\frac{e-t}{e-s}\rgt)\lft| \sum_{j=s+1}^t \lft\lb X_j,\beta^*_\te - \beta^*_\st \rgt\rb_\lt\eps_j \rgt| + 2 \lft(\frac{t-s}{e-s}\rgt) \lft| \sum_{j=t+1}^e \lft\lb X_j, \beta^*_\st - \beta^*_\te \rgt\rb_\lt \eps_j \rgt|
         \\\label{eq: f44}
         +&  2 \lft(\frac{e-t}{e-s}\rgt)\lft| \sum_{j=s+1}^t \lft\lb X_j, \beta^*_\te -\beta^*_\st \rgt\rb_\lt \lft\lb X_j, \beta_j^* - \beta^*_\st \rgt\rb_\lt -  \Sigma[\beta^*_\te -\beta^*_\st, \beta_j^* - \beta^*_\st] \rgt|
         \\\label{eq: f45}
         +&   2 \lft(\frac{t-s}{e-s}\rgt)\lft| \sum_{j=t+1}^e \lft\lb X_j, \beta^*_\st -\beta^*_\te \rgt\rb_\lt\lft\lb X_j, \beta_j^* - \beta^*_\te \rgt\rb_\lt - \Sigma[\beta^*_\st -\beta^*_\te, \beta_j^* - \beta^*_\te] \rgt|.
\end{align}
Our approach involves bounding each of the six terms through four distinct sub-steps. In Step~$2$A, we establish the bound for equations \eqref{eq: f41} and \eqref{eq: f42}. Progressing to Step~$2$B, we derive the bound for equation \eqref{eq: f43}. Moving on to Step~$2$C, we obtain the bound for equations \eqref{eq: f44} and \eqref{eq: f45}. Notably, all these derived bounds are uniform across $t \in \se$. The final step, Step~$2$D, amalgamates these outcomes into a coherent result.

\
\\
{\bf Step 2A.}
Using \Cref{lemma: general markov type bound I} we have
\begin{align*}
     &\E\lft[\lft| \sum_{j=s+1}^t \lft( \lft\lb X_j, \beta^*_\st - \beta^*_\te \rgt\rb_\lt^2 - \Sigma[\beta^*_\st - \beta^*_\te, \beta^*_\st - \beta^*_\te] \rgt) \rgt|^2\rgt] = O(t-s) \Sigma[\beta^*_\st - \beta^*_\te, \beta^*_\st - \beta^*_\te]. 
\end{align*}
Writing $\Sigma[\beta^*_\st - \beta^*_\te, \beta^*_\st - \beta^*_\te] = \frac{(e-s)}{(t-s)(e-t)}\Gset$, we may also write it as
$$
\E\lft[  \frac{(e-t)(t-s)}{(e-s)} \frac{1}{\Gset}\lft|  \sum_{j=s+1}^t \lft( \lft\lb X_j, \beta^*_\st - \beta^*_\te \rgt\rb_\lt^2 - \Sigma[\beta^*_\st - \beta^*_\te, \beta^*_\st - \beta^*_\te] \rgt) \rgt|^2\rgt] = O(t-s)
$$
Using the \Cref{lemma: maximal inequality useful}, we may write
\begin{align}\nonumber
     &\E\lft[ \max_{s<t\le e} \frac{(e-t)}{(e-s)} \frac{1}{ \log^{1+\xi} (t-s) } \frac{1}{\Gset}\lft|  \sum_{j=s+1}^t \lft( \lft\lb X_j, \beta^*_\st - \beta^*_\te \rgt\rb_\lt^2 - \Sigma[\beta^*_\st - \beta^*_\te, \beta^*_\st - \beta^*_\te] \rgt) \rgt|^2\rgt] = O(1)
     \\\nonumber
    &\E\lft[ \max_{s<t\le e} \frac{(t-s)}{(e-s)}\frac{1}{ \log^{1+\xi} (t-s) } \frac{1}{\Gset}\lft|  \sum_{j=t+1}^e \lft( \lft\lb X_j, \beta^*_\te - \beta^*_\st \rgt\rb_\lt^2 - \Sigma[\beta^*_\st - \beta^*_\te, \beta^*_\st - \beta^*_\te] \rgt) \rgt|^2\rgt] = O(1),
\end{align}
This lead us to 
$$
\eqref{eq: f41} + \eqref{eq: f42} = O_p\lft( \sqrt{ \Gset \log^{1+\xi} (t-s) }   \rgt).
$$
\
\\
{\bf Step 2B.}
Using \Cref{lemma: general markov type bound II}, we may have 
$$
\E\lft[  \frac{(e-t)(t-s)}{(e-s)} \frac{1}{\Gset}\lft| \sum_{j=s+1}^t \lft\lb X_j,\beta^*_\te - \beta^*_\st \rgt\rb_\lt\eps_j \rgt|^2 \rgt] = O(t-s).
$$
And again from \Cref{lemma: maximal inequality useful}, it follows that
\begin{align}\nonumber
     &\E\lft[ \max_{s<t\le e} \frac{(e-t)}{(e-s) \log^{1+\xi} (t-s) } \frac{1}{\Gset}\lft| \sum_{j=s+1}^t \lft\lb X_j,\beta^*_\te - \beta^*_\st \rgt\rb_\lt\eps_j \rgt|^2\rgt] = O(1)
     \\\nonumber
     \implies&\E\lft[ \max_{s<t\le e} \frac{(t-s)}{(e-s) \log^{1+\xi} (t-s) } \frac{1}{\Gset}\lft| \sum_{j=t+1}^e \lft\lb X_j,\beta^*_\st - \beta^*_\te \rgt\rb_\lt\eps_j \rgt|^2\rgt] = O(1),
\end{align}
This lead us to 
$$
\eqref{eq: f43} = O_p\lft( \sqrt{ \Gset \log^{1+\xi} (t-s) }   \rgt).
$$

\
\\
{\bf Step 2C.}
Using \Cref{lemma: general markov type bound II}, we may have 
$$
\E\lft[  \frac{(e-t)(t-s)}{(e-s)} \frac{1}{\Gset}\lft| \sum_{j=s+1}^t \lft\lb X_j, \beta^*_\te -\beta^*_\st \rgt\rb_\lt \lft\lb X_j, \beta_j^* - \beta^*_\st \rgt\rb_\lt -  \Sigma[\beta^*_\te -\beta^*_\st, \beta_j^* - \beta^*_\st] \rgt|^2 \rgt] = O(t-s).
$$
And again from \Cref{lemma: maximal inequality useful}, it follows that
\begin{align}\nonumber
     &\E\lft[ \max_{s<t\le e} \frac{(e-t)}{(e-s) \log^{1+\xi} (t-s) } \frac{1}{\Gset}\lft| \sum_{j=s+1}^t \lft\lb X_j, \beta^*_\te -\beta^*_\st \rgt\rb_\lt \lft\lb X_j, \beta_j^* - \beta^*_\st \rgt\rb_\lt -  \Sigma[\beta^*_\te -\beta^*_\st, \beta_j^* - \beta^*_\st] \rgt|^2\rgt]
     \\\nonumber
     &= O(1)
     \\\nonumber
     &\E\lft[ \max_{s<t\le e} \frac{(t-s)}{(e-s) \log^{1+\xi} (t-s) } \frac{1}{\Gset}\lft| \sum_{j=t+1}^e \lft\lb X_j, \beta^*_\st -\beta^*_\te \rgt\rb_\lt\lft\lb X_j, \beta_j^* - \beta^*_\te \rgt\rb_\lt - \Sigma[\beta^*_\st -\beta^*_\te, \beta_j^* - \beta^*_\te] \rgt|^2\rgt]
     \\\nonumber
     &= O(1),
\end{align}
This lead us to 
$$
\eqref{eq: f44} + \eqref{eq: f45} = O_p\lft( \sqrt{ \Gset \log^{1+\xi} (t-s) }   \rgt).
$$

\
\\
{\bf Step 2D.}
Combining the results in Step~$2$A, Step~$2$B and Step~$2$C, we get
    $$
    \max_{s<t\le e} \frac{1}{\sqrt{ \Gset \log^{1+\xi} (t-s) }} \lft| W^{*s,e}_t - \Gset \rgt| = O_p\lft( 1 \rgt).
    $$  
\end{proof}

\begin{lemma}\label{lemma: new localization}
    Let $\xi >0$ and $\se \subset (0,n]$. Suppose $ \eta_{k-1} < s < \eta_k < e < \eta_{k+1}$. Then we have uniformly for all $ t\in \se$,
    \begin{align}\label{eq: bias1}
    &\sum_{j=s+1}^t \lft(Y_j - \lft\lb X_j, \bhast \rgt\rb_\lt \rgt)^2 - \sum_{j=s+1}^t \lft(Y_j - \lft\lb X_j, \beta^*_\st \rgt\rb_\lt \rgt)^2 = O_p\lft( (t-s) \delta_{t-s} \log^{1+\xi} (t-s) \rgt),
    \\
    &\sum_{j=t+1}^e \lft(Y_j - \lft\lb X_j, \bhate \rgt\rb_\lt \rgt)^2 - \sum_{j=t+1}^e \lft(Y_j - \lft\lb X_j, \beta^*_\te \rgt\rb_\lt \rgt)^2 = O_p\lft( (e-t) \delta_{e-t} \log^{1+\xi} (e-t) \rgt).        
    \end{align}
    Consequently, following from the union bound we have uniformly for all $\sem \in \J$ and for all $t\in \sem$
    \begin{equation}\label{eq: bias2}
         \sum_{j=s_m+1}^t \lft(Y_j - \lft\lb X_j, \what{\beta}_{(s_m,t]} \rgt\rb_\lt \rgt)^2 - \sum_{j=s_m+1}^t \lft(Y_j - \lft\lb X_j, \beta^*_{(s_m,t]} \rgt\rb_\lt \rgt)^2 = O_p\lft( \lft(n/\Delta\rgt) (t-s_m) \delta_{t-s_m} \log^{1+\xi} (t-s_m) \rgt).    
    \end{equation}
\end{lemma}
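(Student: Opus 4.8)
The plan is to expand the squared-error difference with the algebraic identity $(a-b)^2-(a-c)^2=(b-c)^2-2(a-c)(b-c)$ and reduce it to three pieces that are already understood: the empirical excess prediction risk of $\bhast$, a noise cross-term, and a cross-term carrying the bias induced by the change point sitting inside $\st$. Fix $(s,e]$ as in the statement and $t\in\se$; note $\st$ contains at most the single change point $\eta_k$ (and none when $t\le\eta_k$), and set $g_j=\beta^*_j-\beta^*_\st$, so that $\sum_{j=s+1}^t g_j=0$ and $g_j$ takes at most two values, each of $\Sigma$-seminorm $\lesssim\kappa_k$. Writing $Y_j-\lb X_j,\beta^*_\st\rb_\lt=\lb X_j,g_j\rb_\lt+\eps_j$ and applying the identity with $a=Y_j$, $b=\lb X_j,\bhast\rb_\lt$, $c=\lb X_j,\beta^*_\st\rb_\lt$,
\begin{align*}
&\sum_{j=s+1}^t \lft(Y_j-\lb X_j,\bhast\rb_\lt\rgt)^2-\sum_{j=s+1}^t\lft(Y_j-\lb X_j,\beta^*_\st\rb_\lt\rgt)^2
\\
&\qquad=(t-s)\,\shast\lft[\bhast-\beta^*_\st,\bhast-\beta^*_\st\rgt]-2\sum_{j=s+1}^t\lb X_j,g_j\rb_\lt\lb X_j,\bhast-\beta^*_\st\rb_\lt-2\sum_{j=s+1}^t\eps_j\lb X_j,\bhast-\beta^*_\st\rb_\lt,
\end{align*}
which I abbreviate as $\mathrm{(I)}-2\,\mathrm{(II)}-2\,\mathrm{(III)}$. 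It then suffices to bound each of $\mathrm{(I)},\mathrm{(II)},\mathrm{(III)}$ by $O_p\big((t-s)\delta_{t-s}\log^{1+\xi}(t-s)\big)$, uniformly in $t\in\se$.

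For $\mathrm{(I)}$, I would invoke the excess-risk bound for the penalized RKHS estimator on an interval containing at most one change point, which — together with the accompanying control of $\|\bhast\|_\hk$ — gives $\max_{s<t\le e}\frac{\delta_{t-s}^{-1}}{\log^{1+\xi}(t-s)}\shast[\bhast-\beta^*_\st,\bhast-\beta^*_\st]=O_p(1)$; multiplying by $t-s$ and using that $z\mapsto z\,\delta_z\log^{1+\xi}z$ is increasing yields the claimed rate for $\mathrm{(I)}$ uniformly over $t$ (this is \Cref{lemma: empirical excess risk}). For $\mathrm{(III)}$ I would combine that same rate on $\bhast-\beta^*_\st$ (controlled in both the $\shast$-seminorm and the $\hk$-norm) with a maximal concentration/interpolation bound for the weighted partial sums $\sum_{j\in\st}\eps_j\lb X_j,w\rb_\lt$ over the RKHS ball (\Cref{lemma: consistency flr}), again obtaining $O_p((t-s)\delta_{t-s}\log^{1+\xi}(t-s))$ uniformly; when $t\le\eta_k$ one has $g_j\equiv0$, so $\mathrm{(II)}=0$ and only $\mathrm{(I)},\mathrm{(III)}$ remain.

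For $\mathrm{(II)}$, split the sum at $\eta_k$: on each of $(s,\eta_k]$ and $(\eta_k,t]$ the function $g_j$ is constant and proportional to $\beta^*_{\eta_k}-\beta^*_{\eta_{k+1}}$, and a short calculation shows the two constants are such that the population version of $\mathrm{(II)}$ vanishes exactly; what remains is a centred bilinear empirical process evaluated at the data-dependent direction $\bhast-\beta^*_\st$, which I would bound by a product-process estimate using the rates on $\bhast-\beta^*_\st$ in the $\shast$- and $\hk$-norms together with $\Sigma[g_j,g_j]\lesssim\kappa_k^2$ (\Cref{lemma: unfold empirical bound}), again $O_p((t-s)\delta_{t-s}\log^{1+\xi}(t-s))$ uniformly over $t$. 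Summing the three bounds gives \eqref{eq: bias1}, and its second line follows by the mirror-image argument applied to $(t,e]$ and $\bhate$. Finally, since $|\J|\le 8(n/\Delta)$ by \eqref{rm: card J}, a union bound built on first-moment (Markov) estimates over the at most $8(n/\Delta)$ seeded intervals upgrades the per-interval bound to \eqref{eq: bias2}, which is where the extra factor $n/\Delta$ enters.

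The main obstacle is term $\mathrm{(II)}$: although it is centred in expectation, it couples the change-point drift $\lb X_j,g_j\rb_\lt$ with the data-dependent estimation error $\bhast-\beta^*_\st$, and the bound must hold uniformly over $t$, i.e.\ over all positions of $\eta_k$ within $\st$. Controlling this cleanly needs both the sharp RKHS rate for $\bhast$ and a concentration bound for the corresponding empirical cross-covariance, which is precisely where the polynomial eigen-decay in \Cref{assume: model assumption spectral}~(ii) and the sixth-moment condition in \Cref{assume: model assumption flr CP} are used.
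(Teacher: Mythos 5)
Your proposal is correct and follows essentially the same route as the paper: the identity $(a-b)^2-(a-c)^2=(b-c)^2-2(a-c)(b-c)$ yields exactly the paper's three-term decomposition (empirical excess risk, noise cross-term, drift cross-term), each bounded by \Cref{lemma: empirical excess risk}, \Cref{lemma: consistency flr} and \Cref{lemma: unfold empirical bound} respectively, with \eqref{eq: bias2} obtained from the cardinality bound \eqref{rm: card J} and a union bound. Your additional remarks on why the population version of the drift cross-term vanishes are precisely the content already built into \Cref{lemma: unfold empirical bound}, so nothing further is needed.
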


\
\\
\begin{proof}
    Observe that
    \begin{align*}
        &\sum_{j=s+1}^t \lft(Y_j - \lft\lb X_j, \bhast \rgt\rb_\lt \rgt)^2 - \sum_{j=s+1}^t \lft(Y_j - \lft\lb X_j, \beta^*_\st \rgt\rb_\lt \rgt)^2
        \\
        = & \sum_{j=s+1}^t \lft\lb X_j, \bhast -\beta^*_\st \rgt\rb_\lt^2 + 2\sum_{j=s+1}^t \lft\lb X_j, \beta^*_\st - \bhast \rgt\rb_\lt\eps_j + 2\sum_{j=s+1}^t \lft\lb X_j, \beta^*_\st - \bhast \rgt\rb_\lt \lft\lb X_j, \beta^*_j - \bhast \rgt\rb_\lt.
    \end{align*}
    Following from \Cref{lemma: empirical excess risk}, we have uniformly
    $$
    \lft|\sum_{j=s+1}^t \lft\lb X_j, \bhast -\beta^*_\st \rgt\rb_\lt^2 \rgt| = O_p\lft( (t-s) \delta_{t-s} \log^{1+\xi} (t-s) \rgt).
    $$
    From \Cref{lemma: consistency flr}
    $$
    \lft| \sum_{j=s+1}^t \lft\lb X_j, \beta^*_\st - \bhast \rgt\rb_\lt\eps_j\rgt| = O_p\lft( (t-s) \delta_{t-s} \log^{1+\xi} (t-s) \rgt).
    $$
    From \Cref{lemma: unfold empirical bound}
    $$
    \lft| \sum_{j=s+1}^t \lft\lb X_j, \beta^*_\st - \bhast \rgt\rb_\lt \lft\lb X_j, \beta^*_j - \bhast \rgt\rb_\lt \rgt| = O_p\lft( (t-s) \delta_{t-s} \log^{1+\xi} (t-s) \rgt).
    $$
    The \eqref{eq: bias1} of this lemma follows from these three bounds. Given the cardinality of $\J$ in \eqref{rm: card J}, the expression \eqref{eq: bias2} follows from \eqref{eq: bias1} by the union bound.
\end{proof}

\subsubsection{Population CUSUM of functional data}
All the notation used in this subsection are specific to this subsection only. We use these general results to prove some results earlier. 
\begin{assumption}
    Let $\{\mathfrak{f}_i\}_{i=1}^m \in \lt$. Assume there are $\{\mathfrak{n}_p\}_{p=0}^{K+1} \subset \{0,1,\ldots,m\}$ such that $0 =\mathfrak{n}_0 <\mathfrak{n}_1 < \ldots <\mathfrak{n}_K < \mathfrak{n}_{K+1} = m $ and 
    $$
    \mathfrak{f}_t \neq \mathfrak{f}_{t+1} \qquad \text{if and only if} \qquad t \in \{\mathfrak{n}_1, \ldots,\mathfrak{n}_p\}.
    $$
\end{assumption}

\
\\
Let $\inf_{1\le p \le K} \|\mathfrak{f}_{\mathfrak{n}_p} - \mathfrak{f}_{\mathfrak{n}_{p+1}} \|_\lt^2 = \inf_{1\le p \le K} \mathfrak{K}_p^2 = \mathfrak{K}^2 $.

\
\\
For $0 \le s < t <e\le m $, the CUSUM statistics is
\begin{equation}\label{eq: population functional cusum}
    \widetilde{\mathfrak{f}}_t^{s,e} = \sqrt{\frac{e-t}{(e-s)(t-s)}}\sum_{i=s+1}^{t} \mathfrak{f}_i - \sqrt{\frac{t-s}{(e-s)(e-t)}}\sum_{i=t+1}^{e} \mathfrak{f}_i.
\end{equation}

It can be easily shown that the CUSUM statistics at \eqref{eq: population functional cusum} are translational invariant. Consequently assuming $\sum_{i=1}^m \mathfrak{f}_i = 0$, we may also write 

\begin{equation}\label{eq: population functional cusum2}
    \widetilde{\mathfrak{f}}_t^{s,e} = \left( \sum_{i=s+1}^t \mathfrak{f}_i -\frac{t}{e-s}\sum_{i=s+1}^e \mathfrak{f}_i \right)/\sqrt{\frac{(t-s)(e-t)}{(e-s)}} = \left( \sum_{i=s+1}^t \mathfrak{f}_i \right)/\sqrt{\frac{(t-s)(e-t)}{(e-s)}} .
\end{equation}
The form at \eqref{eq: population functional cusum2} is useful proving many important properties of CUSUM.

\
\\
The \Cref{lemma: pop cusum one cp} below follows directly from the definition of CUSUM statistics.

\begin{lemma}\label{lemma: pop cusum one cp}
    Suppose $(s,e]$ contains only one change point $ \mathfrak{n}_p$, then
    \begin{equation*}
    \|\wtilde{\mathfrak{f}}^{s,e}_t\|^2_\lt =   
        \begin{cases}
         \frac{t-s}{(e-s)(e-t)}(e-\mathfrak{n}_p)^2 \mathfrak{K}_p^2, \quad t\le \mathfrak{n}_p\\
         \frac{e-t}{(e-s)(t-s)}(\mathfrak{n}_p - s)^2 \mathfrak{K}_p^2, \quad t\ge \mathfrak{n}_p.       
    \end{cases}
    \end{equation*}
    Consequently, we may write
    $$
        \max_{s<t\le e} \|\wtilde{\mathfrak{f}}^{s.e}_t\|^2_\lt = \frac{(e-\mathfrak{n}_p)(\mathfrak{n}_p-s)}{(e-s)} \mathfrak{K}_p^2.
    $$
\end{lemma}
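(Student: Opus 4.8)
The plan is to evaluate the CUSUM statistic directly, exploiting the piecewise-constant structure of $\{\mathfrak{f}_i\}$ on $(s,e]$. Since $(s,e]$ contains the single change point $\mathfrak{n}_p$, I would write $a = \mathfrak{f}_{s+1} = \cdots = \mathfrak{f}_{\mathfrak{n}_p}$ and $b = \mathfrak{f}_{\mathfrak{n}_p+1} = \cdots = \mathfrak{f}_e$, so that $\|a-b\|_\lt^2 = \|\mathfrak{f}_{\mathfrak{n}_p} - \mathfrak{f}_{\mathfrak{n}_{p+1}}\|_\lt^2 = \mathfrak{K}_p^2$. Then split into the two cases $t \le \mathfrak{n}_p$ and $t \ge \mathfrak{n}_p$ and plug these values into the definition \eqref{eq: population functional cusum}.

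For $t \le \mathfrak{n}_p$ we have $\sum_{i=s+1}^t \mathfrak{f}_i = (t-s)a$ and $\sum_{i=t+1}^e \mathfrak{f}_i = (\mathfrak{n}_p - t)a + (e-\mathfrak{n}_p)b$. Factoring $\sqrt{(t-s)/[(e-s)(e-t)]}$ out of both terms of \eqref{eq: population functional cusum} and using the identity $(e-t) - (\mathfrak{n}_p - t) = e - \mathfrak{n}_p$, the $a$-contributions telescope and one is left with $\wtilde{\mathfrak{f}}_t^{s,e} = \sqrt{(t-s)/[(e-s)(e-t)]}\,(e-\mathfrak{n}_p)(a-b)$; taking $\|\cdot\|_\lt^2$ yields the first branch. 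The case $t \ge \mathfrak{n}_p$ is symmetric: now $\sum_{i=s+1}^t \mathfrak{f}_i = (\mathfrak{n}_p - s)a + (t-\mathfrak{n}_p)b$ and $\sum_{i=t+1}^e \mathfrak{f}_i = (e-t)b$, and factoring $\sqrt{(e-t)/[(e-s)(t-s)]}$ out gives $\wtilde{\mathfrak{f}}_t^{s,e} = \sqrt{(e-t)/[(e-s)(t-s)]}\,(\mathfrak{n}_p - s)(a-b)$, hence the second branch. The two expressions coincide at $t = \mathfrak{n}_p$, so the piecewise formula is well defined.

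For the maximum, I would observe that on $(s,\mathfrak{n}_p]$ the map $t \mapsto (t-s)/(e-t)$ is strictly increasing, so the first branch is maximised at $t = \mathfrak{n}_p$, with value $\frac{(\mathfrak{n}_p-s)(e-\mathfrak{n}_p)}{e-s}\mathfrak{K}_p^2$; symmetrically, on $[\mathfrak{n}_p, e)$ the map $t \mapsto (e-t)/(t-s)$ is strictly decreasing, so the second branch is also maximised at $t = \mathfrak{n}_p$ with the same value. Combining the two regimes gives $\max_{s<t\le e}\|\wtilde{\mathfrak{f}}_t^{s,e}\|_\lt^2 = \frac{(e-\mathfrak{n}_p)(\mathfrak{n}_p-s)}{e-s}\mathfrak{K}_p^2$.

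This is a purely algebraic verification, so there is no substantive obstacle; the only care needed is the bookkeeping in the telescoping/factoring step and the check that the two piecewise formulas agree at $t=\mathfrak{n}_p$. As an alternative one could first invoke the translation invariance \eqref{eq: population functional cusum2} to replace $\mathfrak{f}_i$ by $\mathfrak{f}_i - a$, making the first segment identically zero and shortening the computation, but the direct route above is already immediate.
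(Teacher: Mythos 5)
Your proof is correct and is exactly the direct computation from the definition that the paper itself invokes: the paper states only that this lemma ``follows directly from the definition of CUSUM statistics'' and omits the algebra, which you have carried out correctly (including the factoring/telescoping step and the monotonicity argument locating the maximum at $t=\mathfrak{n}_p$).
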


\
\\
\begin{lemma}\label{lemma: pop cusum 1st cp projection}
    Let $\se$ be such that 
    $$
    \mathfrak{n}_{p-1} \le s < \mathfrak{n}_{p} < e.
    $$Then for any $s<t\le\mathfrak{n}_p$,
    $$
        \|\widetilde{\mathfrak{f}}_t^{s,e}\|^2_\lt = \frac{(t-s)(e-\mathfrak{n}_p)}{(\mathfrak{n}_p - s)(e -t)} \|\widetilde{f}_{\mathfrak{n}_p}^{s,e}\|^2_\lt.    
    $$
    Consequently, we may write
    \begin{equation}
            \max_{s<t\le e} \|\widetilde{\mathfrak{f}}_t^{s,e}\|^2_\lt = \max_{\mathfrak{n}_p \le t \le e} \|\widetilde{\mathfrak{f}}_t^{s,e}\|^2_\lt.        
    \end{equation}

\end{lemma}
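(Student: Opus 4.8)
The plan is to reduce the statement to a one-line computation built on the translation invariance of the CUSUM statistic in \eqref{eq: population functional cusum}. First I would set $g_i = \mathfrak{f}_i - \mathfrak{f}_{\mathfrak{n}_p}$ for $i=1,\ldots,m$, so that by translation invariance $\widetilde{\mathfrak{f}}_t^{s,e} = \widetilde{g}_t^{s,e}$ for every $t$. The purpose of this shift is that, since $\mathfrak{n}_{p-1}\le s$ and there is no change point of $\{\mathfrak{f}_i\}$ strictly between $\mathfrak{n}_{p-1}$ and $\mathfrak{n}_p$, we have $\mathfrak{f}_i = \mathfrak{f}_{\mathfrak{n}_p}$ for all $s < i \le \mathfrak{n}_p$, i.e.\ $g_i = 0$ on that range. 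Note that this uses only $t \le \mathfrak{n}_p$, so the possible presence of further change points of $\{\mathfrak{f}_i\}$ inside $(\mathfrak{n}_p, e]$ plays no role in the stated range of $t$.

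Next, for $s < t \le \mathfrak{n}_p$ I would substitute into \eqref{eq: population functional cusum}: the left partial sum $\sum_{i=s+1}^t g_i$ vanishes, hence $\widetilde{g}_t^{s,e} = -\sqrt{(t-s)/[(e-s)(e-t)]}\,\sum_{i=t+1}^{e} g_i$, and because $\sum_{i=s+1}^{t} g_i = 0$ the remaining sum equals $G := \sum_{i=s+1}^{e} g_i$, which is independent of $t$. Taking $\lt$-norms gives $\|\widetilde{\mathfrak{f}}_t^{s,e}\|_\lt^2 = \frac{t-s}{(e-s)(e-t)}\|G\|_\lt^2$ for all $s < t \le \mathfrak{n}_p$; specializing to $t=\mathfrak{n}_p$ and dividing the two identities cancels $\|G\|_\lt^2$ and yields precisely $\|\widetilde{\mathfrak{f}}_t^{s,e}\|_\lt^2 = \frac{(t-s)(e-\mathfrak{n}_p)}{(\mathfrak{n}_p-s)(e-t)}\|\widetilde{\mathfrak{f}}_{\mathfrak{n}_p}^{s,e}\|_\lt^2$. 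All four denominators $t-s$, $e-t$, $\mathfrak{n}_p-s$, $e-\mathfrak{n}_p$ are strictly positive under $s<t\le\mathfrak{n}_p<e$, so no degeneracy arises; this is entirely in the spirit of Lemma~\ref{lemma: pop cusum one cp}.

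For the "consequently" part, I would note that $t \mapsto (t-s)/(e-t)$ is strictly increasing on $(s,e)$, so by the identity just established $t \mapsto \|\widetilde{\mathfrak{f}}_t^{s,e}\|_\lt^2$ is nondecreasing on $s < t \le \mathfrak{n}_p$. Therefore $\max_{s < t \le \mathfrak{n}_p}\|\widetilde{\mathfrak{f}}_t^{s,e}\|_\lt^2 = \|\widetilde{\mathfrak{f}}_{\mathfrak{n}_p}^{s,e}\|_\lt^2 \le \max_{\mathfrak{n}_p \le t \le e}\|\widetilde{\mathfrak{f}}_t^{s,e}\|_\lt^2$, and combining with the trivial inequality $\max_{\mathfrak{n}_p < t \le e}\|\widetilde{\mathfrak{f}}_t^{s,e}\|_\lt^2 \le \max_{\mathfrak{n}_p \le t \le e}\|\widetilde{\mathfrak{f}}_t^{s,e}\|_\lt^2$ gives $\max_{s<t\le e}\|\widetilde{\mathfrak{f}}_t^{s,e}\|_\lt^2 = \max_{\mathfrak{n}_p \le t \le e}\|\widetilde{\mathfrak{f}}_t^{s,e}\|_\lt^2$.

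I do not anticipate a real obstacle here: the whole argument is bookkeeping with the normalizing constants. The one point that needs care is the justification that $g_i = 0$ for $s < i \le \mathfrak{n}_p$ — this is where the hypothesis $\mathfrak{n}_{p-1}\le s$ (and not merely $s<\mathfrak{n}_p$) is essential, and it is also the reason the final formula involves only $\mathfrak{n}_p$. Everything downstream is elementary algebra and the monotonicity of $(t-s)/(e-t)$.
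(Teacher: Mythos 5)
Your proof is correct and follows essentially the same route as the paper's: both exploit translation invariance of the CUSUM together with the constancy of $\mathfrak{f}_i$ on $(s,\mathfrak{n}_p]$ to reduce $\widetilde{\mathfrak{f}}_t^{s,e}$, for $s<t\le \mathfrak{n}_p$, to an explicit $t$-dependent multiple of a fixed quantity, and then compare with the value at $t=\mathfrak{n}_p$. Your choice of centering by $\mathfrak{f}_{\mathfrak{n}_p}$ (rather than invoking the globally centered form \eqref{eq: population functional cusum2} as the paper does) is a cosmetic difference, and arguably cleaner since it makes the vanishing of the left partial sum immediate without any assumption on $\sum_{i=s+1}^e \mathfrak{f}_i$.
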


\
\\
\begin{proof}
With the form outlined at \eqref{eq: population functional cusum2}
    \begin{align*}
        \|\widetilde{\mathfrak{f}}_t^{s,e}\|^2_\lt &= \frac{e-s}{(t-s)(e-t)} \lft\|\sum_{i=s+1}^t \mathfrak{f}_i\rgt\|_\lt^2 = \frac{(e-s) (t-s)^2}{(t-s)(e-t)} \|\mathfrak{f}_{\mathfrak{n}_p}\|_\lt^2
        \\
        &=\frac{(t-s)(e-\mathfrak{n}_p)}{(\mathfrak{n}_p-s)(e-t)} \frac{(e-s)}{(\mathfrak{n}_p - s)(e-\mathfrak{n}_p)} (\mathfrak{n}_p - s)^2 \|\mathfrak{f}_{\mathfrak{n}_p}\|_\lt^2 = \frac{(t-s)(e-\mathfrak{n}_p)}{(\mathfrak{n}_p-s)(e-t)} \|\widetilde{\mathfrak{f}}_t^{s,e}\|^2_\lt
    \end{align*}
    
\end{proof}
\
\\
\begin{lemma}\label{lemma: pop cusum two point projection}
    Let $(s,e]$ contains exactly two change points $\mathfrak{n}_p$ and $\mathfrak{n}_{p+1}$. Then
    $$
    \max_{s<t<e} \|\widetilde{\mathfrak{f}}_t^{s,e}\|^2_\lt \le 2(e - \mathfrak{n}_{p+1})\mathfrak{K}_{p+1}^2 + 2(\mathfrak{n}_p - s)\mathfrak{K}_p^2.
    $$
\end{lemma}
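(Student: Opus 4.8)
The plan is to use the translation-invariant form \eqref{eq: population functional cusum2}, namely $\widetilde{\mathfrak{f}}_t^{s,e} = \big(\sum_{i=s+1}^t \mathfrak{f}_i\big)\big/\sqrt{(t-s)(e-t)/(e-s)}$ after recentering so that $\sum_{i=s+1}^e \mathfrak{f}_i = 0$, and then split the analysis according to where $t$ falls relative to the two change points $\mathfrak{n}_p < \mathfrak{n}_{p+1}$. First I would set up notation: after recentering, the partial sums $\sum_{i=s+1}^t \mathfrak{f}_i$ are piecewise linear in $t$ with breakpoints at $\mathfrak{n}_p$ and $\mathfrak{n}_{p+1}$, so it suffices to bound $\|\widetilde{\mathfrak{f}}_t^{s,e}\|_\lt^2$ on each of the three subintervals $(s,\mathfrak{n}_p]$, $(\mathfrak{n}_p,\mathfrak{n}_{p+1}]$, $(\mathfrak{n}_{p+1},e)$; since $\widetilde{\mathfrak{f}}^{s,e}$ is (as a function of $t$) a ratio of a piecewise-linear curve to a concave normalization, its $\lt$-norm squared is maximized near a breakpoint, so the bound at $t = \mathfrak{n}_p$ and $t = \mathfrak{n}_{p+1}$ essentially controls everything — this is the same monotonicity idea used in \Cref{lemma: pop cusum 1st cp projection}.

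Concretely, I would first reduce to the case where $(s,e]$ contains a \emph{single} change point by a projection/interpolation argument: for $t \in (s,\mathfrak{n}_p]$ there is only one change point ($\mathfrak{n}_p$) inside $(s,e]$ that the partial sum "sees" at time $t$, so by \Cref{lemma: pop cusum 1st cp projection} the maximum over $t \le \mathfrak{n}_p$ is attained at $t = \mathfrak{n}_p$; symmetrically the maximum over $t \ge \mathfrak{n}_{p+1}$ is attained at $t = \mathfrak{n}_{p+1}$. For the middle region $t \in [\mathfrak{n}_p, \mathfrak{n}_{p+1}]$, the numerator $\sum_{i=s+1}^t \mathfrak{f}_i$ is linear in $t$, so $\|\widetilde{\mathfrak{f}}_t^{s,e}\|_\lt^2 = \|a + (t-\mathfrak{n}_p) v\|_\lt^2 \cdot (e-s)/((t-s)(e-t))$ for fixed vectors $a,v \in \lt$; one checks the maximum of this over the interval is again attained at one of the two endpoints $\mathfrak{n}_p$ or $\mathfrak{n}_{p+1}$ (the normalization is log-concave and the numerator norm is convex, so the product's max over an interval is at an endpoint). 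Hence
$$
\max_{s<t<e}\|\widetilde{\mathfrak{f}}_t^{s,e}\|_\lt^2 = \max\big\{ \|\widetilde{\mathfrak{f}}_{\mathfrak{n}_p}^{s,e}\|_\lt^2, \; \|\widetilde{\mathfrak{f}}_{\mathfrak{n}_{p+1}}^{s,e}\|_\lt^2 \big\}.
$$

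It then remains to bound each of $\|\widetilde{\mathfrak{f}}_{\mathfrak{n}_p}^{s,e}\|_\lt^2$ and $\|\widetilde{\mathfrak{f}}_{\mathfrak{n}_{p+1}}^{s,e}\|_\lt^2$ by $2(\mathfrak{n}_p-s)\mathfrak{K}_p^2 + 2(e-\mathfrak{n}_{p+1})\mathfrak{K}_{p+1}^2$. At $t = \mathfrak{n}_p$: write $\sum_{i=s+1}^{\mathfrak{n}_p}\mathfrak{f}_i$ in the recentered form, express it in terms of the two jumps $\mathfrak{f}_{\mathfrak{n}_p} - \mathfrak{f}_{\mathfrak{n}_{p}+1}$ and $\mathfrak{f}_{\mathfrak{n}_{p+1}} - \mathfrak{f}_{\mathfrak{n}_{p+1}+1}$ (whose $\lt$-norms are $\mathfrak{K}_p, \mathfrak{K}_{p+1}$), divide by the normalization, apply the elementary inequality $\|x+y\|_\lt^2 \le 2\|x\|_\lt^2 + 2\|y\|_\lt^2$, and simplify using $\min\{\mathfrak{n}_p - s,\, e-\mathfrak{n}_p\}/(e-s) \le 1$ and the analogous fractions; the coefficients of $\mathfrak{K}_p^2$ and $\mathfrak{K}_{p+1}^2$ collapse to at most $2(\mathfrak{n}_p - s)$ and $2(e-\mathfrak{n}_{p+1})$ respectively. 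The case $t = \mathfrak{n}_{p+1}$ is symmetric. The main obstacle, and the only place requiring care, is the interpolation claim that the maximum over the middle subinterval sits at an endpoint — this needs the convexity of $t\mapsto\|a+(t-\mathfrak{n}_p)v\|_\lt^2$ together with the shape of the normalization $(t-s)(e-t)/(e-s)$; everything else is bookkeeping with the two-jump decomposition and the $\|x+y\|^2 \le 2\|x\|^2+2\|y\|^2$ bound.
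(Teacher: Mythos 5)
Your argument is correct, but it takes a genuinely different route from the paper's. The paper introduces an auxiliary sequence $\mathfrak{g}$ that agrees with $\mathfrak{f}$ after $\mathfrak{n}_p$ and is set to the middle-segment value on $(s,\mathfrak{n}_p]$, so that $\mathfrak{g}$ has a single change point at $\mathfrak{n}_{p+1}$; after restricting to $t\ge\mathfrak{n}_p$ via \Cref{lemma: pop cusum 1st cp projection}, it applies $\|x+y\|^2_\lt\le 2\|x\|^2_\lt+2\|y\|^2_\lt$ to $\wtilde{\mathfrak{f}}^{s,e}_t=(\wtilde{\mathfrak{f}}^{s,e}_t-\wtilde{\mathfrak{g}}^{s,e}_t)+\wtilde{\mathfrak{g}}^{s,e}_t$ uniformly in $t$, bounding the difference CUSUM by a direct computation and $\max_t\|\wtilde{\mathfrak{g}}^{s,e}_t\|^2_\lt$ by \Cref{lemma: pop cusum one cp}. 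You instead localize the maximum to $t\in\{\mathfrak{n}_p,\mathfrak{n}_{p+1}\}$ and expand the CUSUM at each change point in the two jumps; the factor $2$ comes from the same elementary inequality. Your extra localization step over the middle segment is the only delicate point, and your stated justification (``log-concave normalization, convex numerator, product maximized at an endpoint'') is loosely phrased: the clean argument is that a nonnegative convex function divided by a positive concave function is quasi-convex (the sublevel set $\{t: u(t)-c\,w(t)\le 0\}$ is convex for every $c\ge 0$), hence maximized at an endpoint of any interval — this is exactly Venkatraman's observation that the population CUSUM is maximized at a change point. Your endpoint bookkeeping also checks out: writing $m=e-\mathfrak{n}_p$ one finds $\wtilde{\mathfrak{f}}^{s,e}_{\mathfrak{n}_p}=\sqrt{(\mathfrak{n}_p-s)/((e-s)m)}\,\bigl\{m(\mathfrak{f}_{\mathfrak{n}_p}-\mathfrak{f}_{\mathfrak{n}_{p+1}})+(e-\mathfrak{n}_{p+1})(\mathfrak{f}_{\mathfrak{n}_{p+1}}-\mathfrak{f}_{e})\bigr\}$, and the resulting coefficients $2(\mathfrak{n}_p-s)m/(e-s)\le 2(\mathfrak{n}_p-s)$ and $2(\mathfrak{n}_p-s)(e-\mathfrak{n}_{p+1})^2/((e-s)m)\le 2(e-\mathfrak{n}_{p+1})$ give exactly the claimed bound, with the symmetric computation at $\mathfrak{n}_{p+1}$. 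What the paper's route buys is that nothing needs to be argued about the interior of the middle segment; what yours buys is avoiding the auxiliary sequence and obtaining, along the way, the stronger and independently useful fact that the maximum is attained at a true change point.
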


\begin{proof}
    Let 
    \begin{align*}
        \mathfrak{g}_t = 
        \begin{cases}
            \mathfrak{f}_{\mathfrak{n}_{p+1}}, &\qquad\qquad\text{if} \quad s\le t \le \mathfrak{n}_p
            \\
            \mathfrak{f}_t, &\qquad\qquad\text{if} \quad \mathfrak{n}_p + 1 \le t \le \mathfrak{n}_{p+1}.
        \end{cases}
    \end{align*}
    Then $\forall t \ge \mathfrak{n}_{r}$
    \begin{align}\nonumber
        &\wtilde{\mathfrak{f}}_t^{s,e} - \wtilde{\mathfrak{g}}_t^{s,e} = \sqrt{\frac{(e-s)}{(e-t)(t-s)}} \lft( \sum_{i = s+1}^{\mathfrak{n}_p} \mathfrak{f}_i - \sum_{i = s+1}^{\mathfrak{n}_p} \mathfrak{g}_i + \sum_{i = \mathfrak{n}_p+1}^{t} \mathfrak{f}_i - \sum_{i = \mathfrak{n}_p+1}^{t} \mathfrak{g}_i  \rgt) 
        \\\nonumber
        &\qquad\qquad= \sqrt{\frac{(e-s)}{(e-t)(t-s)}} (\mathfrak{n}_p - s) (\mathfrak{f}_{\mathfrak{n}_p} - \mathfrak{f}_{\mathfrak{n}_{p+1}}).
        \\\label{eq: f27a}
        \implies& \|\wtilde{\mathfrak{f}}_t^{s,e} - \wtilde{\mathfrak{g}}_t^{s,e}\|^2_\lt = \frac{(e-s)(\mathfrak{n}_p -s)}{(e-t)(t-s)} (\mathfrak{n}_p - s) \mathfrak{K}_p^2 \le (\mathfrak{n}_p - s) \mathfrak{K}_p^2 .
    \end{align}
Observe that
\begin{align}\label{eq: f27b}
   \max_{s<t\le e} \|\wtilde{\mathfrak{g}}^{s,e}_{t}\|^2_\lt = \|\wtilde{\mathfrak{g}}^{s,e}_{\mathfrak{n}_{p+1}}\|^2_\lt = \frac{ (e-\mathfrak{n}_{p+1})(\mathfrak{n}_{p+1}-s)}{(e-s)  }  \mathfrak{K}_{p+1}^2 \le (e-\mathfrak{n}_{p+1})\mathfrak{K}_{p+1}^2
\end{align}
where the equality follows from the fact that $g_t$ just have one change point and \Cref{lemma: pop cusum one cp}. Observe that
\begin{align}\nonumber
    \max_{s<t\le e} \|\wtilde{\mathfrak{f}}^{s,e}_{t}\|^2_\lt = \max_{\mathfrak{n}_p \le t\le e} \|\wtilde{\mathfrak{f}}^{s,e}_{t}\|^2_\lt &\le 2 \max_{\mathfrak{n}_p \le t\le e}\| \wtilde{\mathfrak{f}}^{s,e}_{t} - \wtilde{\mathfrak{g}}^{s,e}_{t}\|^2_\lt + 2 \max_{\mathfrak{n}_p \le t\le e} \|\wtilde{\mathfrak{g}}^{s,e}_{t}\|^2_\lt
    \\\nonumber
    &\le 2(\mathfrak{n}_p - s)\mathfrak{K}_p^2 + 2 (e-\mathfrak{n}_{p+1})\mathfrak{K}_{p+1}^2,
\end{align}
where the first line follows from \Cref{lemma: pop cusum 1st cp projection} and the triangle inequality, and the last line follows from \eqref{eq: f27a} and \eqref{eq: f27b}.    
\end{proof}

\section{Proof of \Cref{thm: Limiting distribution}}
Prior to presenting the proof of the main theorem, we will establish the existence and finiteness of the long-run variance.

\begin{lemma}\label{lemma: Long Run Variance}
    Suppose the \Cref{assume: model assumption flr CP} hold. For $k\in \{1, \ldots, \Ktilde \}$, the long-run variance defined in \eqref{eq: def long-run variance} exists and is finite.
\end{lemma}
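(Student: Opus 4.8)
The plan is to reduce the statement to absolute summability of the autocovariances of the scalar stationary sequence
\[
U_j \;:=\; \kappa_k^{-1}\,\langle X_j,\,\beta^*_{\eta_k}-\beta^*_{\eta_{k+1}}\rangle_{\lt}\,\eps_j ,
\]
and then invoke Davydov's covariance inequality for $\alpha$-mixing sequences. First I would record the structural facts: since $\{(X_j,\eps_j)\}_{j\ge1}$ is stationary and $\alpha$-mixing (\Cref{assume: model assumption flr CP}(iii)) and $U_j$ is a fixed measurable function of $(X_j,\eps_j)$, the sequence $\{U_j\}$ is stationary and $\alpha$-mixing with mixing coefficients no larger than those of $\{(X_j,\eps_j)\}$; moreover $\E[U_j]=\kappa_k^{-1}\E\big[\langle X_j,\beta^*_{\eta_k}-\beta^*_{\eta_{k+1}}\rangle_{\lt}\,\E(\eps_j\mid X_j)\big]=0$ by \Cref{assume: model assumption flr CP}(ii), and $\kappa_k>0$. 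Writing $\gamma(h)=\E[U_0U_h]$, stationarity gives $n^{-1}\var\big(\sum_{j=1}^n U_j\big)=\sum_{|h|<n}(1-|h|/n)\gamma(h)$, so it is enough to prove $\sum_{h\in\mathbb Z}|\gamma(h)|<\infty$: by dominated convergence over $\mathbb Z$ one then gets $\sum_{|h|<n}(1-|h|/n)\gamma(h)\to\sum_{h\in\mathbb Z}\gamma(h)$, a finite limit of which the quantity in \eqref{eq: def long-run variance} is a fixed multiple.

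The second step is a moment bound. Put $g=\beta^*_{\eta_k}-\beta^*_{\eta_{k+1}}\in\lt(\T)$, so that $\kappa_k^2=\E[\langle X_1,g\rangle_{\lt}^2]$. By \Cref{assume: model assumption flr CP}(i),
\[
\E\big[\langle X_1,g\rangle_{\lt}^6\big]\ \le\ c^6\big(\E[\langle X_1,g\rangle_{\lt}^2]\big)^{3}\ =\ c^6\kappa_k^6 ,
\]
while \Cref{assume: model assumption flr CP}(ii) (with stationarity) provides a finite sixth moment for the noise (equivalently one may work with $\esssup_x\E[\eps_1^6\mid X_1=x]$). Cauchy--Schwarz then yields
\[
\E|U_0|^3\ =\ \kappa_k^{-3}\,\E\big[|\langle X_1,g\rangle_{\lt}|^3\,|\eps_1|^3\big]\ \le\ \kappa_k^{-3}\big(\E[\langle X_1,g\rangle_{\lt}^6]\big)^{1/2}\big(\E[\eps_1^6]\big)^{1/2}\ \le\ c^3\big(\E[\eps_1^6]\big)^{1/2}\ =:\ C_U^3<\infty ,
\]
so in particular $\|U_0\|_2\le\|U_0\|_3\le C_U$.

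The third step applies Davydov's covariance inequality: for a strictly stationary $\alpha$-mixing sequence with $\|U_0\|_3<\infty$ there is an absolute constant $C$ with $|\gamma(h)|=|\cov(U_0,U_h)|\le C\,\alpha(|h|)^{1-2/3}\,\|U_0\|_3^2\le C\,C_U^2\,\alpha(|h|)^{1/3}$ for $h\neq0$, and $|\gamma(0)|=\E[U_0^2]\le C_U^2$. Hence, using $h^{1/3}\ge1$,
\[
\sum_{h\in\mathbb Z}|\gamma(h)|\ \le\ C_U^2+2C\,C_U^2\sum_{h\ge1}\alpha(h)^{1/3}\ \le\ C_U^2+2C\,C_U^2\sum_{h\ge1}h^{1/3}\alpha(h)^{1/3}\ <\ \infty ,
\]
the last series being finite by \Cref{assume: model assumption flr CP}(iii). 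This gives the required absolute summability and completes the argument.

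The argument is essentially routine $\alpha$-mixing bookkeeping; the only mildly delicate point I anticipate is the moment accounting in the second step, namely extracting a uniform-in-$j$ third (more generally $(2+\delta)$-th) moment bound on $U_j$ by combining the \emph{conditional} sixth-moment hypothesis on $\eps_j$ with the moment-equivalence bound of \Cref{assume: model assumption flr CP}(i) via Cauchy--Schwarz, so that the exponent $p=3$ demanded by Davydov's inequality is exactly met. Note also that the argument uses only the weaker consequence $\sum_{h\ge1}\alpha(h)^{1/3}<\infty$ of \Cref{assume: model assumption flr CP}(iii), so there is comfortable slack in the mixing condition.
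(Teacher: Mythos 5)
Your proof is correct and follows essentially the same route as the paper: you normalize by $\kappa_k$, obtain a uniform third-moment bound on $Z_j=\kappa_k^{-1}\langle X_j,\beta^*_{\eta_k}-\beta^*_{\eta_{k+1}}\rangle_{\lt}\eps_j$ by combining Assumption 1(i) with the conditional sixth moment of $\eps_j$ via Cauchy--Schwarz, and then use $\sum_{h\ge1}\alpha^{1/3}(h)<\infty$. The only difference is cosmetic: where you make the last step explicit (Davydov's inequality giving $\sum_h|\gamma(h)|<\infty$, then Ces\`aro convergence of $n^{-1}\mathrm{Var}(\sum_j Z_j)$), the paper black-boxes it by citing Theorem 1.7 of \citet{ibragimov1962some}; your version is slightly more self-contained but proves nothing the citation does not already cover.
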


\begin{proof}
Denote 
$$
Z_j =  \frac{\lb X_j, \beta^*_{\eta_{k}} - \beta^*_{\eta_{k+1}}  \rb_\lt \eps_j }{\kappa_k}.
$$
Observe that
$$
\E\lft[ |Z_1|^3 \rgt] \le \sqrt{\E\lft[ \frac{\lb X_j, \beta^*_{\eta_{k}} - \beta^*_{\eta_{k+1}}  \rb_\lt^6 }{\kappa_k^6} \rgt]} \sqrt{\E\lft[ \eps_j^6 \rgt]} = O\lft( \E^{3/2}\lft[ \frac{\lb X_j, \beta^*_{\eta_{k}} - \beta^*_{\eta_{k+1}}  \rb_\lt^2 }{\kappa_k^2} \rgt]\rgt) = O(1),
$$
where the second last equality follows from \Cref{assume: model assumption flr CP}. Given that we have $\sum_{k=1}^\infty \alpha^{1/3}(k) < \infty$ which is implied by $\sum_{k=1}^\infty k^{1/3} \alpha^{1/3}(k) < \infty$ in \Cref{assume: model assumption flr CP}, all the conditions of Theorem 1.7 of \citet{ibragimov1962some}. It follows from therein that $\sigma^2_\infty(k)$ exists and is finite.

\end{proof}

\paragraph{Sketch of the proof of \Cref{thm: Limiting distribution}.}
We refer to A1 and B1 jointly as uniform tightness. Their proof proceeds in multiple steps where we control diverse errors associated with time series functional linear regression modelling uniformly over the seeded intervals. 
Let
\begin{equation}\label{eq: Q-star}
     \mathcal{Q}^*_k(t) = \sum_{j = s_k + 1}^{t} \lft( Y_j - \lb X_j, \beta^*_{\eta_k} \rb_\lt \rgt)^2 + \sum_{j =t+ 1 }^{e_k} \lft( Y_j - \lb X_j, \beta^*_{\eta_{k+1}} \rb_\lt \rgt)^2
\end{equation}
be the population version of the objective function in \eqref{eq: local refinement}. Observe that $\wtilde{\eta}_k$ is the minimiser of $\calQ_k(t)$ and $\eta_k$ is the minimiser of the $\calQ_k^*(t)$. Establishing the limiting distribution in A2, involves understanding the behavior of both $\calQ^*_k(\eta_k + t) - \calQ^*_k(\eta_k)$ and $\calQ_k(\eta_k + t) - \calQ_k(\eta_k)$, for fixed~$t$. 
We show that $\max_{t}\lft|\calQ_k(\eta_k + t) - \calQ_k(\eta_k) -  \calQ^*_k(\eta_k + t) + \calQ^*_k(\eta_k)\rgt| = o_p(1)$, which in turn hinges on the convergence of $\widehat{\beta}_{(s_k, \what{\eta}_k] }$ to $\beta^*_{\eta_k}$ and symmetrically, that of $\widehat{\beta}_{(\what{\eta}_k, e_k]}$ to $\beta^*_{\eta_{k+1}}$ in an appropriate norm. This establishes that $\calQ^*_k(\eta_k + t) - \calQ^*_k(\eta_k)$ and $\calQ_k(\eta_k + t) - \calQ_k(\eta_k)$ have asymptotically same distribution. We then proceed to show that $\calQ^*_k(\eta_k + t) - \calQ^*_k(\eta_k)$ converges strongly to $S_k (t)$, and consequently, $\calQ_k(\eta_k + t) - \calQ_k(\eta_k)$ converges to $S_k (t)$ in distribution.

Finally, we leverage the Argmax continuous mapping theorem (e.g.\ Theorem 3.2.2 of \citealp{van1996weak}) to translate the convergence from the functional to the minimizer of the functional, which leads to A2. In this regime, it is noteworthy that $t$ is only taking discrete values, and we are not invoking any central limit theorems.

In the vanishing regime, additional complexities arise. Since $\kappa_k$ converges to $0$, in the light of tightness demonstrated in B1, 
we invoke the functional CLT and establish that $Q_k^* (\eta_k+ t \kappa_k^{-2} ) \,-\, Q_k^* (\eta_k)$ converges in distribution to a two-sided Brownian motion $\mathbb{W}(t)$, where $1/\kappa_k^{2}$ acts as a local sample size.
The subsequent steps parallels the non-vanishing case but 
additional intricacies arise due to the convergence behavior as $\kappa_k \to 0$.

\
\\
\begin{proof}[Proof of \Cref{thm: Limiting distribution}]
    Let $1 \le k \le  \Ktilde$ be given. By construction and \Cref{lemma: refined interval}, $(s_k, e_k]$ contains only one change point $\eta_k$ and
    $$
    \eta_k - s_k \ge \Delta/5, \qquad \qquad e_k - \eta_k \ge \Delta/5,
    $$
    for large enough $n$. Recall for any $a>0$, $\delta_a \asymp a^{-2r/(2r+1)}$.
    
    \
    \\
    Let $\wtilde{\eta}_k$ denote the minimiser at \eqref{eq: local refinement}. Without loss of generality assume the minimiser $\wtilde{\eta}_k = \eta_k + \gamma$, with $\gamma>0$. The results presented here assume that what we establish in \Cref{thm: Consistency} holds. 

    \
    \\
    \begin{center}
        \underline{\textbf{Uniform tightness:} $\kappa^2_k |\wtilde{\eta}_k - \eta_k | = O_p(1)$}
    \end{center}
    Assume $\gamma\ge \max\{1/\kappa_k^2, 2\}$, if not, the uniform tightness follows directly. Let $\mathcal{Q}_k$ be defined as in \eqref{eq: local refinement}. Since $\mathcal{Q}_k(\eta_k + \gamma)$ is a minimum, we may write
    \begin{align}\nonumber
        0 &\ge \mathcal{Q}_k(\eta_k + \gamma) - \mathcal{Q}_k(\eta_k) = \sum_{j = \eta_k + 1}^{\eta_k + \gamma} \lft( Y_j - \lb X_j,  \widehat{\beta}_{(s_k, \what{\eta}_k] } \rb_\lt \rgt)^2 - \sum_{j = \eta_k + 1}^{\eta_k + \gamma} \lft( Y_j - \lb X_j,  \widehat{\beta}_{(\what{\eta}_k, e_k] } \rb_\lt \rgt)^2. 
    \end{align}
    The preceding inequality is equivalent to
    \begin{align}\nonumber
        0 &\ge \lft( \sum_{j = {\eta_k +1}}^{{\eta_k } + \gamma} \lft( Y_j - \lb X_j,  \widehat{\beta}_{(s_k, \what{\eta}_k] } \rb_\lt \rgt)^2 - \sum_{j = {\eta_k +1}}^{\eta_k + \gamma} \lft( Y_j - \lb X_j,  \beta^*_{\eta_{k}}\rb_\lt \rgt)^2  \rgt)
        \\\nonumber
         & - \lft( \sum_{j = {\eta_k +1}}^{\eta_k + \gamma}  \lft( Y_j - \lb X_j,  \widehat{\beta}_{(\what{\eta}_k, e_k] } \rb_\lt \rgt)^2  - \sum_{j = {\eta_k +1}}^{\eta_k + \gamma} \lft( Y_j - \lb X_j, \beta^*_{\eta_{k+1}}   \rb_\lt \rgt)^2 \rgt)
        \\\nonumber
        &  + \lft( \sum_{j = {\eta_k +1}}^{\eta_k + \gamma} \lft( Y_j - \lb X_j,  \beta^*_{\eta_{k}} \rb_\lt \rgt)^2 - \sum_{j = {\eta_k +1}}^{\eta_k + \gamma} \lft( Y_j - \lb X_j, \beta^*_{\eta_{k+1}}   \rb_\lt \rgt)^2 \rgt),
    \\\label{eq: s3a}
        &= \lft( \sum_{j = {\eta_k +1}}^{\eta_k + \gamma} \lft( Y_j - \lb X_j,  \widehat{\beta}_{(s_k, \what{\eta}_k] } \rb_\lt \rgt)^2 - \sum_{j = {\eta_k +1}}^{\eta_k + \gamma} \lft( Y_j - \lb X_j,  \beta^*_{\eta_k} \rb_\lt \rgt)^2 \rgt) 
        \\\label{eq: s3b}
         &  - \lft( \sum_{j = {\eta_k +1}}^{\eta_k + \gamma}  \lft( Y_j - \lb X_j,  \widehat{\beta}_{(\what{\eta}_k, e_k] } \rb_\lt \rgt)^2  - \sum_{j = {\eta_k +1}}^{\eta_k + \gamma} \lft( Y_j - \lb X_j, \beta^*_{\eta_{k+1}}   \rb_\lt \rgt)^2 \rgt)
        \\\label{eq: s3c}
        &+  \lft(2 \sum_{j = {\eta_k +1}}^{\eta_k + \gamma} \lb X_j, \beta^*_{\eta_{k+1}} - \beta^*_{\eta_{k}}   \rb_\lt\eps_j\rgt) 
        \\\label{eq: s3d}
        &+ \lft( \sum_{j = {\eta_k +1}}^{\eta_k + \gamma} \lb X_j,  \beta^*_{\eta_k} - \beta^*_{\eta_{k+1}} \rb_\lt ^2\rgt). 
    \end{align}
Therefore, we have
$$
 \eqref{eq: s3d} \le \big|\eqref{eq: s3c}\big| + \big|\eqref{eq: s3b}\big| + \big|\eqref{eq: s3a}\big|.
$$

\
\\
Recall $\delta_{\Delta}=\Delta^{-2r/(2r+1)}$. Observe that
\begin{equation}\nonumber
    \begin{aligned}
       &\what{\eta_k} -  s_k \ge \Delta/5 \implies \delta_{\what{\eta}_k -  s_k} = O\lft(\delta_{\Delta}\rgt),
       \\
       &e_k - \what{\eta_k} \ge \Delta/5  \implies \delta_{e_k - \what{\eta}_k} = O\lft(\delta_{\Delta}\rgt).
    \end{aligned}
\end{equation}
Also using 
$\gamma \ge 1/\kappa_k^2$
and $r>1$, we have
\begin{equation}\label{eq: 3quater gamma and half gamma}
    \begin{aligned}
     &\delta_\gamma^{3/4} \sqrt{\log^{1+\xi} \gamma} = O\lft( \gamma^{-1/2} \rgt) = O\lft( \kappa_k \rgt),
    \\
    &\delta_\gamma^{1/2} \sqrt{\log^{1+\xi} \gamma} = O\lft( \gamma^{-1/3} \rgt) = O \lft(  \kappa_k^{2/3} \rgt). 
\end{aligned}
\end{equation}
From \Cref{assume: SNR Signal to Noise ratio}, we get
\begin{equation}
    \begin{aligned}\label{eq: full first and back end}
    &\delta_{\weta - s} \log^{1+2\xi} (\weta -s) = O\lft(\delta_{\Delta} \log^{1+2\xi} (\Delta) \rgt) = o(\kappa_k^2),
    \\
     &\delta_{e - \weta} \log^{1+2\xi} (e - \weta) = O\lft(\delta_{\Delta} \log^{1+2\xi} (\Delta) \rgt) = o(\kappa_k^2).
\end{aligned}    
\end{equation}
With \eqref{eq: 3quater gamma and half gamma} and \eqref{eq: full first and back end} we get
\begin{equation}\label{eq: half and quater lambda}
    \begin{aligned}
     &\delta_\gamma^{3/4} \sqrt{\log^{1+\xi} \gamma} \lft(\delta_{\weta_k - s_k}\rgt)^{1/4} \sqrt{\log^{1+2\xi} (\weta_k - s_k)} = o(\kappa_k^2)
    \\
     &\delta_\gamma^{1/2} \sqrt{\log^{1+\xi} \gamma} \lft(\delta_{\weta_k - s_k}\rgt)^{1/2} \sqrt{\log^{1+2\xi}(\weta_k - s_k)} = o(\kappa_k^2)
    \\
    &\delta_\gamma^{3/4} \sqrt{\log^{1+\xi} \gamma} \lft(\delta_{e_k -\weta_k }\rgt)^{1/4} \sqrt{\log^{1+2\xi}(e_k -\weta_k)} = o(\kappa_k^2)
    \\
     &\delta_\gamma^{1/2} \sqrt{\log^{1+\xi} \gamma} \lft(\delta_{e_k -\weta_k }\rgt)^{1/2} \sqrt{\log^{1+2\xi} (e_k -\weta_k)} = o(\kappa_k^2).
\end{aligned}
\end{equation}
Also, we have from \Cref{thm: Consistency} that
\begin{equation}\label{eq: kappa square term}
    \begin{aligned}
    \lft| \frac{\weta_k - \eta_k}{\weta_k - s} \rgt| \lesssim \lft| \frac{\weta_k - \eta_k}{\Delta} \rgt| = o_p(1),
    \\
    \lft| \frac{\weta_k - \eta_k}{\weta_k - e} \rgt| \lesssim \lft| \frac{\weta_k - \eta_k}{\Delta} \rgt| = o_p(1).
\end{aligned}
\end{equation}

\
\\
{\bf Step 1: the order of magnitude of \eqref{eq: s3a}.}
 Following from \Cref{lemma: tightness}, we have
 \begin{align}\nonumber
     \eqref{eq: s3a} = \;& O_p\lft( \gamma \delta_\gamma^{1/2} \sqrt{\log^{1+\xi} \gamma} \lft(\delta_{\weta_k - s_k}\rgt)^{1/2} \sqrt{\log^{1+2\xi}(\weta_k - s_k)} \rgt)
     \\\nonumber
     &+ O_p\lft(\gamma \delta_\gamma^{3/4} \sqrt{\log^{1+\xi} \gamma} \lft(\delta_{\weta_k - s_k}\rgt)^{1/4} \sqrt{\log^{1+2\xi} (\weta_k - s_k)}\rgt)
     \\\nonumber
     & + O_p\lft( \gamma\kappa_k \lft(\delta_{\weta_k - s_k}\rgt)^{1/2} \sqrt{\log^{1+2\xi}(\weta_k - s_k)}  \rgt) + O_p\lft( \gamma  \delta_{\weta_k - s_k} \log^{1+2\xi} (\weta_k - s_k) \rgt)
     \\\nonumber
     & + O_p\lft( \sqrt{\gamma}\kappa_k \lft\{ \log^{1+\xi} (\gamma\kappa_k^2) + 1 \rgt\} \rgt) + O_p\lft( \gamma \frac{\weta_k - \eta_k}{\weta_k - s_k} \kappa_k^2 \rgt)
     \\\label{eq: I}
     =& \; o_p(\gamma \kappa_k^2) + O_p\lft( \sqrt{\gamma}\kappa_k \lft\{ \log^{1+\xi} (\gamma\kappa_k^2) + 1 \rgt\} \rgt),
 \end{align}
where the last line follows from \eqref{eq: half and quater lambda} and \eqref{eq: kappa square term}.

\
\\
{\bf Step 2: the order of magnitude of \eqref{eq: s3b}.}
 Following from \Cref{lemma: tightness}, we have
 \begin{align}\nonumber
     \eqref{eq: s3b} =& \; O_p\lft( \gamma \delta_\gamma^{1/2} \sqrt{\log^{1+\xi} \gamma} \lft(\delta_{e_k - \weta_k}\rgt)^{1/2} \sqrt{\log^{1+2\xi}(e_k - \weta_k)} \rgt) 
     \\\nonumber
     &+ O_p\lft(\gamma \delta_\gamma^{3/4} \sqrt{\log^{1+\xi} \gamma} \lft(\delta_{e_k - \weta_k}\rgt)^{1/4} \sqrt{\log^{1+2\xi} (e_k - \weta_k)}\rgt)
     \\\nonumber
     & + O_p\lft( \gamma\kappa_k \lft(\delta_{e_k - \weta_k}\rgt)^{1/2} \sqrt{\log^{1+2\xi}(e_k - \weta_k)}  \rgt) + O_p\lft( \gamma  \delta_{\weta - s} \log^{1+2\xi} (e_k - \weta_k) \rgt)
     \\\nonumber
     & + O_p\lft( \sqrt{\gamma}\kappa_k \lft\{ \log^{1+\xi} (\gamma\kappa_k^2) + 1 \rgt\} \rgt) + O_p\lft( \gamma \frac{\weta_k - \eta_k}{e_k - \weta_k} \kappa_k^2 \rgt)
     \\\label{eq: II}
     =& \; o_p(\gamma \kappa_k^2) + O_p\lft( \sqrt{\gamma}\kappa_k \lft\{ \log^{1+\xi} (\gamma\kappa_k^2) + 1 \rgt\} \rgt).
 \end{align}
where the last line follows from \eqref{eq: half and quater lambda} and \eqref{eq: kappa square term}.

\
\\
{\bf Step 3: the order of magnitude of \eqref{eq: s3c}.}
Following from \Cref{lemma: general markov type bound II} and \Cref{lemma: max to Op by peeling}, we have
\begin{equation}\label{eq: III term in thm2 proof}
     \max_{1/\kappa_k^2< \gamma < \eta_{k+1} -\eta_k} \lft|\frac{1}{\sqrt{\gamma}(\log^{1+\xi} \lft( (\gamma) \kappa_k^2 \rgt) + 1) } \frac{1}{\kappa_k} \sum_{j=\eta_k +1}^{ \eta_k + \gamma} \lft\lb X_j, \beta^*_{\eta_{k+1}} - \beta^*_{\eta_k} \rgt\rb_\lt\eps_j\rgt|^2 = O_p\lft(1\rgt).  
\end{equation}
Using \eqref{eq: III term in thm2 proof}, we have
\begin{equation}\label{eq: III}
    \eqref{eq: s3c} = O_p\lft( \sqrt{\gamma}\kappa_k \lft\{ \log^{1+\xi} (\gamma\kappa_k^2) + 1 \rgt\} \rgt).
\end{equation}

\
\\
{\bf Step 4: lower bound of \eqref{eq: s3d}.} Following from \Cref{lemma: general markov type bound I} and \Cref{lemma: max to Op by peeling}, we have
\begin{equation}\label{eq: IV term in thm2 proof}
     \max_{1/\kappa_k^2 < \gamma < \eta_{k+1} -\eta_k } \lft| \frac{1}{\sqrt{\gamma}(\log^{1+\xi} (\gamma\kappa_k^2) + 1) } \frac{1}{\kappa_k} \sum_{j=\eta_k +1}^{\eta_k + \gamma} \lft(\lft\lb X_j, \beta^*_{\eta_{k+1}} - \beta^*_{\eta_k} \rgt\rb_\lt^2  - \kappa_k^2 \rgt)\rgt|^2 = O_p\lft(1\rgt).    
\end{equation}
Using \eqref{eq: IV term in thm2 proof}, we have
\begin{equation}\label{eq: IV}
    \eqref{eq: s3d} \ge \gamma \kappa_k^2 + O_p\lft( \sqrt{\gamma}\kappa_k \lft\{ \log^{1+\xi} (\gamma\kappa_k^2) + 1 \rgt\} \rgt).
\end{equation}

\
\\
Combining \eqref{eq: I}, \eqref{eq: II}, \eqref{eq: III} and \eqref{eq: IV}, we have uniformly for all $\gamma \ge \frac{1}{\kappa_k^2}$ 
$$
\gamma \kappa_k^2 + O_p\lft( \sqrt{\gamma}\kappa_k \lft\{ \log^{1+\xi} (\gamma\kappa_k^2) + 1 \rgt\} \rgt) \le  O_p\lft( \sqrt{\gamma}\kappa_k \lft\{ \log^{1+\xi} (\gamma\kappa_k^2) + 1 \rgt\} \rgt) + o_p(\gamma \kappa_k^2),
$$
which gives us
$$
\kappa_k^2 \lft| \wtilde{\eta}_k - \eta_k \rgt| = O_p(1). 
$$

\
\\
\begin{center}
    {\bf \underline{Limiting distribution:}}    
\end{center}
Recall the definition of $\mathcal{Q}^*_k(\cdot)$ from \eqref{eq: Q-star}. For any given $k \in \{1, \ldots, \Ktilde\}$, given the end points $s_k$ and $e_k$
 and the true coefficients $\beta^*_{\eta_k}$ and $\beta^*_{\eta_{k+1}}$, we have
\begin{align*}
     \eqref{eq: s3c} + \eqref{eq: s3d} &=  \sum_{j = {\eta_k +1}}^{\eta_k + \gamma} \lft( Y_j - \lb X_j,  \beta^*_{\eta_{k}} \rb_\lt \rgt)^2 - \sum_{j = {\eta_k +1}}^{\eta_k + \gamma} \lft( Y_j - \lb X_j, \beta^*_{\eta_{k+1}}   \rb_\lt \rgt)^2 
     \\
     &= Q^*_k(\eta_k + \gamma) - Q^*_k(\eta_k).     
\end{align*}
Following from the proof of uniform tightness, we have uniformly in $\gamma$, as $n\to \infty$, that
$$
\lft| \mathcal{Q}_k(\eta_k + \gamma) - \mathcal{Q}_k(\eta_k) - \lft( \mathcal{Q}_k^*(\eta_k + \gamma) - \mathcal{Q}_k^*(\eta_k) \rgt) \rgt| \le \big|\eqref{eq: s3a}\big| + \big|\eqref{eq: s3b}\big| + \big|\eqref{eq: s3c}\big| + \big|\eqref{eq: s3d}\big|   \xrightarrow{\mathbb{P}} 0.
$$
With Slutsky's theorem, it is sufficient to find the limiting distribution of $\mathcal{Q}_k^*(\eta_k + \gamma) - \mathcal{Q}_k^*(\eta_k)$ when $n\to \infty$.

\
\\
{\bf Non-vanishing regime.} For $\gamma>0$, we have that when $n\to \infty$
\begin{align*}
    &\mathcal{Q}_k^*(\eta_k + \gamma) - \mathcal{Q}_k^*(\eta_k) = \sum_{j = {\eta_k +1}}^{\eta_k + \gamma} \lft( Y_j - \lb X_j,  \beta^*_{\eta_{k}} \rb_\lt \rgt)^2 - \sum_{j = {\eta_k +1}}^{\eta_k + \gamma} \lft( Y_j - \lb X_j, \beta^*_{\eta_{k+1}}   \rb_\lt \rgt)^2
    \\
    = & \sum_{j=\eta_k + 1}^{\eta_k + \gamma} \lft\{ 2 \lft\lb X_j, \beta^*_{\eta_{k+1}} - \beta^*_{\eta_k} \rgt\rb_\lt \eps_j + \lft\lb X_j, \beta^*_{\eta_{k+1}} - \beta^*_{\eta_k} \rgt\rb_\lt^2 \rgt\} \xrightarrow{ \mathcal{D} } \sum_{j=\eta_k + 1}^{\eta_k + \gamma} \lft\{ 2 \varrho_k \lft\lb X_j, \Psi_k \rgt\rb_\lt \eps_j + \varrho_k^2 \lft\lb X_j, \Psi_k \rgt\rb_\lt^2 \rgt\}.
\end{align*}
For $\gamma<0$, we have when $n\to \infty$
\begin{align*}
    &\mathcal{Q}_k^*(\eta_k + \gamma) - \mathcal{Q}_k^*(\eta_k) = \sum_{j = {\eta_k +1}}^{\eta_k + \gamma} \lft( Y_j - \lb X_j,  \beta^*_{\eta_{k}} \rb_\lt \rgt)^2 - \sum_{j = {\eta_k +1}}^{\eta_k + \gamma} \lft( Y_j - \lb X_j, \beta^*_{\eta_{k+1}}   \rb_\lt \rgt)^2
    \\
    = & \sum_{j=\eta_k+ \gamma + 1}^{\eta_k } \lft\{ 2 \lft\lb X_j, \beta^*_{\eta_{k+1}} - \beta^*_{\eta_k} \rgt\rb_\lt \eps_j + \lft\lb X_j, \beta^*_{\eta_k} - \beta^*_{\eta_{k+1}} \rgt\rb_\lt^2 \rgt\}
    \\
    &\qquad\qquad\qquad\xrightarrow{ \mathcal{D} } \sum_{j=\eta_k+ \gamma + 1}^{\eta_k } \lft\{ -2 \varrho_k \lft\lb X_j, \Psi_k \rgt\rb_\lt \eps_j + \varrho_k^2 \lft\lb X_j, \Psi_k \rgt\rb_\lt^2 \rgt\},
\end{align*}
where the last line follows because pointwise convergence implies convergence in $\lb \;, \; \rb_\lt$.

\
\\
From stationarity, the Slutsky's theorem and the Argmax continuous
mapping theorem (e.g. Theorem 3.2.2 of \citet*{van1996weak}), we have
$$
\wtilde{\eta}_k - \eta_k \xrightarrow{\mathcal{D}} \argmin_{\gamma} S_k(\gamma).
$$

\
\\
{\bf Vanishing regime.} Let $m= \kappa_k^{-2}$, and we have that $m \to \infty$ as $n\to \infty$.
For $\gamma>0$, we have that
\begin{align}\nonumber
    &\mathcal{Q}_k^*(\eta_k + \gamma m) - \mathcal{Q}_k^*(\eta_k) = \sum_{j = {\eta_k +1}}^{\eta_k + \gamma m} \lft( Y_j - \lb X_j,  \beta^*_{\eta_{k}} \rb_\lt \rgt)^2 - \sum_{j = {\eta_k +1}}^{\eta_k + \gamma m} \lft( Y_j - \lb X_j, \beta^*_{\eta_{k+1}}   \rb_\lt \rgt)^2
    \\\nonumber
    = & \sum_{j=\eta_k + 1}^{\eta_k + \gamma m} \lft\{ 2 \lft\lb X_j, \beta^*_{\eta_{k+1}} - \beta^*_{\eta_k} \rgt\rb_\lt \eps_j + \lft\lb X_j, \beta^*_{\eta_{k+1}} - \beta^*_{\eta_k} \rgt\rb_\lt^2 \rgt\}
    \\\label{eq: th2}
    = & \frac{2}{\sqrt{m}} \sum_{j=\eta_k + 1}^{\eta_k + \gamma m} \lft\{ \frac{\lft\lb X_j, \beta^*_{\eta_{k+1}} - \beta^*_{\eta_k} \rgt\rb_\lt} {\kappa_k} \eps_j \rgt\} + \frac{1}{m}  \sum_{j=\eta_k + 1}^{\eta_k + \gamma m} \lft\{ \frac{ \lft\lb X_j, \beta^*_{\eta_{k+1}} - \beta^*_{\eta_k} \rgt\rb_\lt^2}{\kappa_k^2} -1  \rgt\} +  \frac{1}{m}  \sum_{j=\eta_k + 1}^{\eta_k + \gamma m} 1.
\end{align}
Following from the definition of the long-run variance and \Cref{lemma: CLT alpha-mixing}, we have 
\begin{equation}\label{eq: th21}
    \frac{1}{\sigma_\infty(k)}\frac{2}{\sqrt{m}} \sum_{j=\eta_k + 1}^{\eta_k + \gamma m} \lft\{ \frac{\lft\lb X_j, \beta^*_{\eta_{k+1}} - \beta^*_{\eta_k} \rgt\rb_\lt} {\kappa_k} \eps_j \rgt\} \xrightarrow{\calD} \mathbb{B}_2(\gamma).
\end{equation}
We also have
\begin{equation}\label{eq: th21a}
    \E\lft[ \lft| \frac{1}{m}  \sum_{j=\eta_k + 1}^{\eta_k + \gamma m} \lft\{ \frac{ \lft\lb X_j, \beta^*_{\eta_{k+1}} - \beta^*_{\eta_k} \rgt\rb_\lt^2}{\kappa_k^2} -1  \rgt\} \rgt|^2 \rgt] = O\lft( \frac{\gamma}{m} \rgt) \to 0, 
\end{equation}
following from \eqref{eq: markov I first equation} in \Cref{lemma: general markov type bound I}. Using \eqref{eq: th21a}, \eqref{eq: th21} and $\frac{1}{m}\sum_{j=\eta_k + 1}^{\eta_k + \gamma m} 1 \to \gamma$ in \eqref{eq: th2}, we write
$$
\mathcal{Q}_k^*(\eta_k + \gamma m) - \mathcal{Q}_k^*(\eta_k)
\xrightarrow{ \mathcal{D} } \sigma_\infty(k) \mathbb{B}_2(\gamma) + \gamma
$$
where $\mathbb{B}_2(\gamma)$ is a standard Brownian motion.


\
\\
Similarly, for $\gamma < 0$, we may have when $n \to \infty$
$$
\mathcal{Q}_k^*(\eta_k + \gamma m) - \mathcal{Q}_k^*(\eta_k) \xrightarrow{\calD} -\gamma + \sigma_\infty(k)\mathbb{B}_1(-\gamma),
$$
where $\mathbb{B}_1(r)$ is a standard Brownian motion. Let $Z_j^* = \frac{\lb X_j, \beta_{\eta_{k+1}} - \beta_{\eta_{k}} \rb_\lt \eps_j}{\kappa_k}$. To see the independence of $\mathbb{B}_1(r)$ and $\mathbb{B}_2(r)$ note that
\begin{align}\nonumber
    \frac{1}{m} \E\lft[ \lft(\sum_{t = -m\gamma}^{-1} Z_t^* \rgt) \lft( \sum_{t = 1}^{m\gamma} Z_t^*\rgt) \rgt] &= \frac{1}{m} \lft\{ \sum_{k=1}^{m\gamma} k \E[Z_1 Z_{1+k}] + \sum_{k= m\gamma+1}^{2m\gamma} (2q-k) \E[Z_1 Z_{1+k}]  \rgt\}
    \\\nonumber
    &\le \frac{1}{m} \sum_{k=1}^{2m\gamma} k \lft| \E\lft[  Z_1 Z_{1+k}  \rgt] \rgt| \le \frac{(2m\gamma)^{2/3}}{m} \sum_{k=1}^{2m\gamma} k^{1/3} \|Z_1\|_3^2 \alpha^{1/3}(k) = O\lft(\frac{1}{m^{1/3}} \rgt) \to 0,
\end{align}
where the second last inequality follows from \Cref{lemma: mixing covariance inequality} and stationarity and the last inequality follows from $\sum_{k=1}^\infty k^{1/3} \alpha^{1/3}(k) < \infty$ and
\begin{equation}\nonumber
    \|Z_1\|_3 = \E^{1/3}\lft[ \frac{1}{\kappa_k^3}\lft\lb X_1, \beta^*_{\eta_k} - \beta^*_{\eta_{k+1}} \rgt\rb^3 \E\lft[\eps_1^3|X_1\rgt]  \rgt] \le O(1) \E^{1/6}\lft[ \frac{1}{\kappa_k^6}\lft\lb X_1, \beta^*_{\eta_k} - \beta^*_{\eta_{k+1}} \rgt\rb^6  \rgt] = O(1),    
\end{equation}
which follows from \Cref{assume: model assumption flr CP}.

\
\\
From the Slutsky's theorem and the Argmax continuous mapping theorem we have
$$
\wtilde{\eta}_k - \eta_k \xrightarrow{\mathcal{D}} \argmin_{\gamma} \lft\{ |\gamma| + \sigma_\infty(k)\mathbb{W}(\gamma) \rgt\}.
$$
\end{proof}

\subsection{Technical result for the proof of \Cref{thm: Limiting distribution}}
Recall for any $a>0$, $\delta_a \asymp a^{-2r/(2r+1)}$.
\begin{lemma}\label{lemma: tightness}
    Let $ \eta_{k-1}< s <  \eta_k < e < \eta_{k+1}$ be fixed. Let $\xi >0$. Then,
    \begin{align*}
        &\max_{\substack{\gamma  \in \lft(1/\kappa_k^2 , \eta_{k+1} - \eta_k \rgt) }} \frac{1}{\mathcal{H}(\weta_k-s,\gamma)}\lft(\sum_{j = \eta_k +1 }^{\eta_k + \gamma}\lft( Y_j - \lb X_j,  \widehat{\beta}_{(s,\weta_k]} \rb_\lt \rgt)^2 - \sum_{j = \eta_k + 1}^{\eta_k + \gamma} \lft( Y_j - \lb X_j,  \beta^*_{\eta_k} \rb_\lt \rgt)^2  \rgt) = O_p(1)
    \end{align*}
    where for any $t \in \se$
    \begin{align*}
            \mathcal{H}(t-s, \gamma)= & \gamma \lft\{ \lft( \delta_{t-s}^{1/4} + \delta_{\gamma}^{1/4} \rgt) \delta_{\gamma}^{1/2} \sqrt{\log^{1+\xi} \gamma} + \lft(\kappa_k + \delta_{t-s}^{1/2}\sqrt{\log^{1+\xi} (t-s)} \rgt) \delta_{t-s}^{1/4} \rgt\} \sqrt{\log^{1+\xi} (t-s)} \delta_{t-s}^{1/4} 
            \\
            + &\kappa_k\sqrt{\gamma}\lft\{ \log^{1+\xi} (\gamma\kappa_k^2) + 1 \rgt\} + \lft|\frac{t-\eta_k}{t-s} \rgt|\gamma \kappa_k^2       
    \end{align*}

\end{lemma}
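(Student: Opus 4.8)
\textbf{Proof proposal for Lemma~\ref{lemma: tightness}.}

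The plan is to expand the difference of squared residuals into a sum of three terms and bound each one uniformly over $\gamma \in (1/\kappa_k^2, \eta_{k+1}-\eta_k)$ and over the interval $(s,\weta_k]$, using the consistency results for the penalized slope estimator established earlier. Writing $(Y_j - \lb X_j, \bhast \rb_\lt)^2 - (Y_j - \lb X_j, \beta^*_{\eta_k}\rb_\lt)^2 = \lb X_j, \beta^*_{\eta_k} - \bhast\rb_\lt^2 + 2\lb X_j, \beta^*_{\eta_k} - \bhast\rb_\lt\eps_j + 2\lb X_j, \beta^*_{\eta_k} - \bhast\rb_\lt\lb X_j, \beta^*_j - \beta^*_{\eta_k}\rb_\lt$, and recalling that on $(\eta_k+1, \eta_k+\gamma]$ (which lies in $(\eta_k,\eta_{k+1})$) we have $\beta^*_j = \beta^*_{\eta_{k+1}}$, I would treat the three pieces separately:

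\emph{First}, the quadratic piece $\sum_{j=\eta_k+1}^{\eta_k+\gamma}\lb X_j, \beta^*_{\eta_k} - \bhast\rb_\lt^2$. Here I would use the prediction-error consistency of $\bhast$ (the analogue of Lemma~\ref{lemma: empirical excess risk}, giving $\|\bhast - \beta^*_{\st}\|$ in the appropriate $\delta$-rate) together with a decomposition $\beta^*_{\eta_k} - \bhast = (\beta^*_{\eta_k} - \beta^*_{(s,\weta_k]}) + (\beta^*_{(s,\weta_k]} - \bhast)$, where the first summand is $O(|{\weta_k - \eta_k}|/|{\weta_k - s}|)\cdot\kappa_k$ in the relevant norm and the second is controlled by the excess-risk bound with rate $\delta_{\weta_k - s}\log^{1+\xi}$. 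Then a maximal-inequality argument over $\gamma$ (Lemma~\ref{lemma: maximal inequality useful} / Lemma~\ref{lemma: max to Op by peeling}) upgrades the pointwise second-moment bound $O(\gamma\cdot(\text{rate}))$ to a uniform $O_p$ bound; this produces the $\gamma\,\delta_{t-s}^{1/2}\delta_{t-s}^{1/2}\log^{1+\xi}$-type term and the $\gamma\,|{(t-\eta_k)/(t-s)}|^2\kappa_k^2$ cross-term inside $\mathcal H$.

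\emph{Second}, the linear-in-noise piece $\sum 2\lb X_j, \beta^*_{\eta_k} - \bhast\rb_\lt\eps_j$. After the same split of $\beta^*_{\eta_k} - \bhast$, the term involving $\beta^*_{\eta_k} - \beta^*_{(s,\weta_k]}$ reduces to a multiple (of size $\lesssim|{\weta_k-\eta_k}|/|{\weta_k-s}|$) of $\sum_{j=\eta_k+1}^{\eta_k+\gamma}\kappa_k^{-1}\lb X_j, \beta^*_{\eta_{k+1}}-\beta^*_{\eta_k}\rb_\lt\eps_j$, which by Lemma~\ref{lemma: general markov type bound II} + Lemma~\ref{lemma: max to Op by peeling} is $O_p(\kappa_k\sqrt\gamma\{\log^{1+\xi}(\gamma\kappa_k^2)+1\})$ uniformly — this supplies the $\kappa_k\sqrt\gamma\{\log^{1+\xi}(\gamma\kappa_k^2)+1\}$ term. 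The term involving $\beta^*_{(s,\weta_k]} - \bhast$ needs a bound on $\sum\lb X_j, \beta^*_{(s,\weta_k]}-\bhast\rb_\lt\eps_j$ over the \emph{disjoint} block $(\eta_k, \eta_k+\gamma]$; conditioning / independence is not available under dependence, so I would control it via a Cauchy–Schwarz-type or chaining argument using $\|\beta^*_{(s,\weta_k]}-\bhast\|$ in $\hk$ and $\lt$ norms from Lemma~\ref{lemma: consistency flr}, yielding the $\gamma\,\delta_\gamma^{1/2}\delta_{t-s}^{1/4}\log^{1+\xi}$-type contributions. \emph{Third}, the bias-interaction piece $\sum 2\lb X_j, \beta^*_{\eta_k} - \bhast\rb_\lt\lb X_j, \beta^*_j - \beta^*_{\eta_k}\rb_\lt = \sum 2\lb X_j, \beta^*_{\eta_k} - \bhast\rb_\lt\lb X_j, \beta^*_{\eta_{k+1}} - \beta^*_{\eta_k}\rb_\lt$ on the block; again split $\beta^*_{\eta_k}-\bhast$, use Lemma~\ref{lemma: unfold empirical bound} for the empirical-process part and the $\delta_\gamma^{3/4}$-refinement (interpolating $\delta_\gamma^{1/4}$ with the $\delta_\gamma^{1/2}$ rate) that shows up in $\mathcal H$. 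Finally I would collect all the uniform $O_p$ bounds, observe each is dominated by the corresponding summand of $\mathcal{H}(\weta_k - s, \gamma)$, and conclude.

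The main obstacle I anticipate is obtaining the \emph{uniform-in-$\gamma$} control of the noise sums $\sum_{j=\eta_k+1}^{\eta_k+\gamma}\lb X_j, \beta^*_{(s,\weta_k]}-\bhast\rb_\lt\eps_j$ (and its quadratic cousin) on the short block $(\eta_k, \eta_k+\gamma]$, because $\bhast$ is estimated from data on $(s,\weta_k]$ which overlaps this block only through the change point, so the usual "independent estimate times independent block" factorization fails under $\alpha$-mixing; I expect to handle it by combining the RKHS norm bound $\|\bhast - \beta^*_{(s,\weta_k]}\|_\hk = O_p(\delta^{1/4}_{\weta_k-s}\log^{1/2+\xi/2})$ with a reproducing-kernel/Cauchy–Schwarz passage to $\sup_f$ over an RKHS ball and a separate maximal inequality for the resulting empirical process indexed by $\gamma$, at the cost of the extra $\delta_\gamma^{1/4}$ and $\delta_{t-s}^{1/4}$ factors that appear in the definition of $\mathcal{H}$. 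The role of the hypothesis $\gamma > 1/\kappa_k^2$ and of \Cref{assume: SNR Signal to Noise ratio} is to ensure that all these auxiliary terms are of smaller order than $\gamma\kappa_k^2$ (cf.\ \eqref{eq: 3quater gamma and half gamma}–\eqref{eq: half and quater lambda}), which is exactly what makes the final collected bound usable in the uniform tightness argument.
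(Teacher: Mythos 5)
Your proposal is correct and follows essentially the same route as the paper: the paper likewise splits the difference via the intermediate slope $\beta^*_{(s,t]}$ (so that $\beta^*_{(s,t]}-\beta^*_{\eta_k}=\tfrac{t-\eta_k}{t-s}(\beta^*_{\eta_{k+1}}-\beta^*_{\eta_k})$ yields the $|\tfrac{t-\eta_k}{t-s}|\gamma\kappa_k^2$ and $\kappa_k\sqrt{\gamma}$ terms via Lemmas~\ref{lemma: general markov type bound I}--\ref{lemma: general markov type bound II} and \ref{lemma: max to Op by peeling}), and handles the cross-interval terms you flag as the main obstacle exactly as you anticipate, through the two-interval deviation bounds (Lemmas~\ref{lemma: empirical excess risk II}, \ref{lemma: consistency II}, \ref{lemma: unfold empirical bound II}) whose operator-norm arguments produce the extra $\delta_\gamma^{1/4}$ and $\delta_{t-s}^{1/4}$ factors in $\mathcal{H}$. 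The only difference is cosmetic ordering of the algebraic expansion.
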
 

\
\\
\begin{proof}
    Let $t>\eta_k$. The case when $t \le \eta_k$ follows similar to the proof outlined below. Observe that
    $$
    \beta^*_\st - \beta_{\eta_k}^* = \lft( \frac{t-\eta_k}{t -s} (\beta^*_{\eta_{k+1}} - \beta^*_{\eta_k}) \rgt).
    $$
    We may write the expression
    \begin{align}\nonumber
        & \lft(\sum_{j = \eta_k +1 }^{\eta_k + \gamma}\lft( Y_j - \lb X_j,  \widehat{\beta}_\st \rb_\lt \rgt)^2 - \sum_{j = \eta_k + 1}^{\eta_k + \gamma} \lft( Y_j - \lb X_j,  \beta^{*}_{\eta_k} \rb_\lt \rgt)^2  \rgt) 
        \\\label{eq: 3a1}
        =& \lft(\sum_{j = \eta_k +1 }^{\eta_k + \gamma}\lft( Y_j - \lb X_j,  \widehat{\beta}_\st \rb_\lt \rgt)^2 - \sum_{j = \eta_k + 1}^{\eta_k + \gamma} \lft( Y_j - \lb X_j,  \beta^*_\st\rb_\lt \rgt)^2  \rgt) 
        \\\label{eq: 3a2}
        &\qquad+ \lft(\sum_{j = \eta_k +1 }^{\eta_k + \gamma}\lft( Y_j - \lb X_j,  \beta^*_\st \rb_\lt \rgt)^2 - \sum_{j = \eta_k + 1}^{\eta_k + \gamma} \lft( Y_j - \lb X_j,  \beta^*_{\eta_k} \rb_\lt \rgt)^2  \rgt).
    \end{align}
    We show in Step~$1$ that
    \begin{align}\label{eq: Step I}
        & \max_{\substack{t \in \se \\ \gamma  \in \lft(1/\kappa_k^2 , \eta_{k+1} - \eta_k \rgt)}} \frac{1}{\mathcal{H}_1(t-s,\gamma)} \lft(\sum_{j = \eta_k + 1}^{\eta_k + \gamma}\lft( Y_j - \lb X_j,  \widehat{\beta}_\st \rb_\lt \rgt)^2 - \sum_{j = \eta_k + 1}^{\eta_k + \gamma} \lft( Y_j - \lb X_j,  \beta^*_{\st} \rb_\lt \rgt)^2  \rgt) = O_p(1),
    \end{align}
    where
    \begin{align*}
        \mathcal{H}_1(t-s,\gamma) = \gamma \bigg\{& \lft( \delta_{t-s}^{1/4} + \delta_{\gamma}^{1/4} \rgt) \delta_{\gamma}^{1/2} \sqrt{\log^{1+\xi} (\gamma)} 
        \\
        &+  \lft(\kappa_k + \delta_{t-s}^{1/2}\sqrt{\log^{1+\xi} (t-s)} \rgt) \delta_{t-s}^{1/4} \bigg\}  \delta_{t-s}^{1/4} \sqrt{\log^{1+\xi} (t-s)}. 
    \end{align*}
We show in Step~$2$ that
    \begin{align}\nonumber
            \max_{\substack{t \in \se \\ \gamma  \in \lft(1/\kappa_k^2 , \eta_{k+1} - \eta_k \rgt)}} \frac{1}{\mathcal{H}_2(\gamma)} \Bigg(\sum_{j = \eta_k +1 }^{ \eta_k + \gamma}\lft( Y_j - \lb X_j,  \beta^*_\st \rb_\lt \rgt)^2 &- \sum_{j = \eta_k +1}^{ \eta_k + \gamma} \lft( Y_j - \lb X_j,  \beta^*_{\eta_k} \rb_\lt \rgt)^2  - \lft\{ \lft(\frac{\eta_k - s}{t - s} \rgt)^2 - 1\rgt\} \gamma\kappa_k^2 \Bigg) 
    \\\label{eq: Step II}
    =& O_p\lft( 1\rgt).
    \end{align}
where
$$
\mathcal{H}_2(\gamma) = \kappa_k\sqrt{\gamma}\lft\{\log^{1+\xi} \lft(\gamma\kappa_k^2 \rgt) + 1\rgt\}.
$$
The bound for \eqref{eq: 3a1} follows from \eqref{eq: Step I} and the bound for \eqref{eq: 3a2} follows from \eqref{eq: Step II} and the realization that $\lft\{1 - \lft(\frac{\eta_k - s}{t - s}  \rgt)^2 \rgt\} \ge \lft(1 - \frac{\eta_k - s}{t-s} \rgt) = \lft(\frac{t - \eta_k}{t-s} \rgt)  $.


\
\\
{\bf Step 1:}
Observe that
        \begin{align}\nonumber
            &\lft| \sum_{j = \eta_k + 1}^{ \eta_k + \gamma}\lft( Y_j - \lb X_j,  \widehat{\beta}_\st \rb_\lt \rgt)^2 - \sum_{j = \eta_k + 1}^{ \eta_k + \gamma} \lft( Y_j - \lb X_j,  \beta^*_{\st} \rb_\lt \rgt)^2 \rgt|
            \\\nonumber
            =& \lft| \sum_{j = \eta_k + 1}^{ \eta_k + \gamma} \lft( \lft\lb X_j, \bhast - \beta^*_\st \rgt\rb_\lt^2 - 2 \lft\lb X_j, \bhast - \beta^*_\st \rgt\rb_\lt \lft\lb X_j, \beta^*_{\eta_{k+1}} - \beta^*_\st \rgt\rb_\lt - 2 \lft\lb X_j, \bhast - \beta^*_\st \rgt\rb_\lt \eps_j \rgt) \rgt|
            \\\label{eq: 3a11}
            \le & \lft| \sum_{j = \eta_k + 1}^{ \eta_k + \gamma}  \lft\lb X_j, \bhast - \beta^*_\st \rgt\rb_\lt^2 \rgt| 
            \\\label{eq: 3a12}
            +&2\lft| \sum_{j = \eta_k + 1}^{ \eta_k + \gamma}   \lft\lb X_j, \bhast - \beta^*_\st \rgt\rb_\lt \lft\lb X_j, \beta^*_{\eta_{k+1}} - \beta^*_\st \rgt\rb_\lt \rgt|
            \\\label{eq: 3a13}
            +& 2\lft| \sum_{j = \eta_k + 1}^{ \eta_k + \gamma}   \lft\lb X_j, \bhast - \beta^*_\st \rgt\rb_\lt \eps_j\rgt|.
        \end{align}
We are going to bound \eqref{eq: 3a11}, \eqref{eq: 3a12} and \eqref{eq: 3a13} in the following three sub-steps. Following from \Cref{lemma: empirical excess risk II} we have that
    \begin{equation}\nonumber\label{eq: 6a5}
        \eqref{eq: 3a11} = O_p\lft( \gamma \lft\{ \lft( \delta_{t-s}^{1/4} + \delta_{\gamma}^{1/4} \rgt) \delta_{\gamma}^{1/2} \sqrt{\log^{1+\xi} (\gamma)} + \delta_{t-s}^{3/4}\sqrt{\log^{1+\xi} (t-s)} \rgt\} \sqrt{\log^{1+\xi} (t-s)} \delta_{t-s}^{1/4}  \rgt).
    \end{equation}
Following from \Cref{lemma: unfold empirical bound II}, we have that
    \begin{equation}\nonumber\label{eq: 6b5}
        \eqref{eq: 3a12} = O_p\lft( \gamma \lft\{ \lft( \delta_{t-s}^{1/4} + \delta_{\gamma}^{1/4} \rgt) \delta_{\gamma}^{1/2} \sqrt{\log^{1+\xi} (\gamma)} + \kappa_k\delta_{t-s}^{1/4}\rgt\} \sqrt{\log^{1+\xi} (t-s)} \delta_{t-s}^{1/4}  \rgt).
    \end{equation}
And following from \Cref{lemma: consistency II}, we have that
\begin{equation*}
    \eqref{eq: 3a13} = O_p\lft(  \lft\{1 + \lft(\frac{\delta_{\gamma}}{\delta_{t-s}} \rgt)^{1/4}\rgt\} \delta_{\gamma}^{1/2} \sqrt{\log^{1+\xi} (\gamma)}  \delta_{t-s}^{1/2} \sqrt{\log^{1+\xi} (t - s)}  \rgt).
\end{equation*}
    The stochastic bound \eqref{eq: Step I} now follows directly from these three bounds on \eqref{eq: 3a11}, \eqref{eq: 3a12} and \eqref{eq: 3a13}. 

\
\\
{\bf Step 2:}
Observe that $\beta^*_\st - \beta^*_{\eta_k} = \frac{t-\eta_k}{t-s} (\beta^*_{\eta_{k+1}} - \beta^*_{\eta_k})$. We may write the expansion
    \begin{align*}
        &\lft(\sum_{j = \eta_k}^{ \eta_k + \gamma}\lft( Y_j - \lb X_j,  \beta^*_\st \rb_\lt \rgt)^2 - \sum_{j = \eta_k}^ { \eta_k + \gamma} \lft( Y_j - \lb X_j,  \beta^*_{\eta_k} \rb_\lt \rgt)^2 - \lft\{ \lft(\frac{\eta_k - s}{t - s} \rgt)^2 - 1\rgt\} \gamma\kappa^2 \rgt)
        \\
        =& \lft\{ \lft(\frac{s - \eta_k}{t - s} \rgt)^2 - 1\rgt\}  \sum_{j=\eta_k +1}^{ \eta_k + \gamma} \lft( \lft\lb X_j, \beta^*_{\eta_{k+1}} - \beta^*_{\eta_k} \rgt\rb_\lt^2 -\kappa_k^2\rgt) - 2\lft( \frac{t-\eta_k}{t-s} \rgt)\sum_{j=\eta_k +1}^{  \eta_k + \gamma} \lft\lb X_j, \beta^*_{\eta_{k+1}} - \beta^*_{\eta_k} \rgt\rb_\lt\eps_j.
    \end{align*}
Consequently, we have that
\begin{align}\nonumber
    &\lft|\sum_{j = \eta_k + 1}^{ \eta_k + \gamma}\lft( Y_j - \lb X_j,  \beta^*_\st \rb_\lt \rgt)^2 - \sum_{j = \eta_k + 1}^{ \eta_k + \gamma} \lft( Y_j - \lb X_j,  \beta^*_{\eta_k +1} \rb_\lt \rgt)^2 - \lft\{ \lft(\frac{\eta_k - s}{t - s} \rgt)^2 - 1\rgt\}  \gamma\kappa_k^2   \rgt|
        \\\nonumber
        \le & \lft| \lft\{ \lft(\frac{\eta_k - s}{t - s} \rgt)^2 - 1\rgt\}  \sum_{j=\eta_k +1}^{ \eta_k + \gamma} \lft( \lft\lb X_j, \beta^*_{\eta_{k+1}} - \beta^*_{\eta_k} \rgt\rb_\lt^2 -\kappa_k^2\rgt)\rgt| + 2\lft| \lft( \frac{t-\eta_k}{t-s} \rgt)\sum_{j=\eta_k +1}^{  \eta_k + \gamma} \lft\lb X_j, \beta^*_{\eta_{k+1}} - \beta^*_{\eta_k} \rgt\rb_\lt\eps_j \rgt| 
        \\\label{eq: 77a}
        \le &  \lft|\sum_{j=\eta_k +1}^{ \eta_k + \gamma} \lft( \lft\lb X_j, \beta^*_{\eta_{k+1}} - \beta^*_{\eta_k} \rgt\rb_\lt^2 - \kappa_k^2 \rgt) \rgt| 
        \\\label{eq: 77b}
        &+ 2\lft| \sum_{j=\eta_k +1}^{ \eta_k + \gamma} \lft\lb X_j, \beta^*_{\eta_{k+1}} - \beta^*_{\eta_k} \rgt\rb_\lt\eps_j \rgt|,
\end{align}
where the last two line follows from $0 \le \lft\{ 1 - \lft(\frac{\eta_k - s}{t - s} \rgt)^2\rgt\} \le 1$ and $\lft(\frac{t -\eta_k}{t-s}\rgt) \le 1$. For the expression \eqref{eq: 77a}, using \eqref{eq: IV term in thm2 proof}, we get that
\begin{equation}\label{eq: 7a11}
    \lft|\sum_{j=\eta_k +1}^{ \eta_k + \gamma} \lft( \lft\lb X_j, \beta^*_{\eta_{k+1}} - \beta^*_{\eta_k} \rgt\rb_\lt^2 - \kappa_k^2 \rgt) \rgt|  = O_p\lft( \sqrt{\gamma}\kappa_k \lft\{ \log^{1+\xi} \lft(\gamma \kappa_k^2\rgt) + 1 \rgt\} \rgt).    
\end{equation}
For the expression \eqref{eq: 77b}, we use \eqref{eq: III term in thm2 proof} to have
\begin{equation}\label{eq: 7a12}
    \lft| \sum_{j=\eta_k +1}^{ \eta_k + \gamma} \lft\lb X_j, \beta^*_{\eta_{k+1}} - \beta^*_{\eta_k} \rgt\rb_\lt\eps_j \rgt| = O_p\lft( \sqrt{\gamma}\kappa_k \lft\{ \log^{1+\xi} \lft(\gamma \kappa_k^2\rgt) + 1 \rgt\} \rgt).
\end{equation}
Bringing \eqref{eq: 7a11} and \eqref{eq: 7a12} together shall establish \eqref{eq: Step II}.
\end{proof}

\
\\
\begin{lemma}\label{lemma: refined interval}
Let $(s_k, e_k]$ be the refined interval constructed in \eqref{eq: def s_k and e_k}. Then, under the event $\mathcal{A}$ defined in \eqref{eq: event A}, $\eta_k$ is the one and only change point lying in $(s_k, e_k]$.
\
Additionally, under the same event $\mathcal{A}$, we have
$$
\min\bigg\{ e_k -\eta_k, s_k - \eta_k \bigg\} \ge \Delta/5 .
$$

Since, event $\mathcal{A}$ is asymptotically almost sure (\Cref{lemma: probability bound first lemma}). These results holds with probability converging to $1$ as $n \to \infty$.
\end{lemma}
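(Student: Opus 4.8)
\textbf{Proof proposal for Lemma~\ref{lemma: refined interval}.}

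The plan is to unpack the construction in \eqref{eq: def s_k and e_k} and combine it with the localization guarantee of \Cref{thm: Consistency}, all on the event $\mathcal{A}$ (which is asymptotically almost sure by \Cref{lemma: probability bound first lemma}). Recall that $(s_k^\1, e_k^\1]$ and $(s_k^\2, e_k^\2]$ are the two seeded intervals in the last layer $\J_M$ that contain $\weta_k$, and $s_k = \min\{s_k^\1, s_k^\2\}$, $e_k = \max\{e_k^\1, e_k^\2\}$. By \Cref{def: Seeded Intervals}, each interval in $\J_M$ has length $\mathfrak{l}_M = n/2^{M-1}$ which, by the choice $M = \lceil \log_2(n/\Delta)\rceil + 1$, satisfies $\mathfrak{l}_M \le \Delta$, and the two such intervals containing any fixed point overlap on a stretch of length $\mathfrak{b}_M = \mathfrak{l}_M/2$; consequently $e_k - s_k \le \tfrac{3}{2}\mathfrak{l}_M \le \tfrac{3}{2}\Delta$, and moreover $\weta_k$ is at distance at least $\mathfrak{b}_M/2 = \mathfrak{l}_M/4$ from each endpoint $s_k, e_k$ (being interior to the overlap of the two seeds, up to the $\lceil\cdot\rceil,\lfloor\cdot\rfloor$ rounding, which only costs an absolute constant). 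I would make this ``$\weta_k$ is well inside $(s_k,e_k]$'' statement precise as: $\min\{\weta_k - s_k,\ e_k - \weta_k\} \ge c\Delta$ for an absolute constant $c$; a clean bookkeeping of the seeded-interval endpoints gives $c$ comfortably larger than $1/5$ for large $n$.

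Next, on the event $\mathcal{A}$, \Cref{thm: Consistency} yields $\Khat = \Ktilde$ together with $|\weta_k - \eta_k| \le \vartheta_k = C_1 (n/\Delta)\Delta^{1/(2r+1)}\log^{1+2\xi}(n)/\kappa_k^2$, and by \Cref{assume: SNR Signal to Noise ratio} we have $\vartheta_k = o(\Delta)$ uniformly in $k$. Combining this with the previous paragraph, for $n$ large enough, $\min\{\eta_k - s_k,\ e_k - \eta_k\} \ge \min\{\weta_k - s_k,\ e_k - \weta_k\} - |\weta_k - \eta_k| \ge c\Delta - o(\Delta) \ge \Delta/5$, which is the second claim. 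For the first claim (exactly one true change point in $(s_k,e_k]$): the neighbouring true change points $\eta_{k-1}$ and $\eta_{k+1}$ are at distance at least $\Delta$ from $\eta_k$, while $\eta_k$ is itself within $\vartheta_k = o(\Delta)$ of $\weta_k$, and every point of $(s_k, e_k]$ is within $e_k - s_k \le \tfrac32\Delta$ of $\weta_k$; a more careful accounting using that $\weta_k$ sits deep in the overlap of the two seeds and that $\eta_k$ is $o(\Delta)$-close to $\weta_k$ shows $\eta_{k-1} < s_k$ and $e_k < \eta_{k+1}$ for large $n$, so $\eta_k$ is the unique true change point in $(s_k,e_k]$. (It also uses $\weta_{k-1} < \weta_k < \weta_{k+1}$ and the fact that $(s_k,e_k]$ is built from seeds around $\weta_k$ only, hence cannot reach past a neighbouring estimator and its nearby true change point.)

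The main obstacle is the geometric bookkeeping of the seeded-interval endpoints in the last layer: one must verify that the two seeds of $\J_M$ containing $\weta_k$ genuinely sandwich it with a buffer that is a fixed fraction of $\Delta$ on both sides --- being careful about the $\lceil (i-1)\mathfrak{b}_M\rceil$ and $\lfloor (i-1)\mathfrak{b}_M + \mathfrak{l}_M\rfloor$ rounding and about the boundary layers where $\mathfrak{l}_M$ may be strictly less than $\Delta$ --- and that $e_k - s_k$ never exceeds a constant multiple of $\Delta$. Once this deterministic fact is nailed down, everything else is a one-line combination with \Cref{thm: Consistency} and \Cref{assume: SNR Signal to Noise ratio}; the probabilistic content is entirely inherited from \Cref{lemma: probability bound first lemma} via the event $\mathcal{A}$. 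I would also remark that the constant $1/5$ is not sharp and is chosen merely for consistency with the statement of \Cref{thm: Consistency} and downstream lemmas.
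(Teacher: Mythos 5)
Your proposal is correct and follows essentially the same route as the paper's proof: identify the two last-layer seeds containing $\weta_k$, observe that $\weta_k$ sits a fixed fraction of $\Delta$ from both endpoints of $(s_k,e_k]$ with $e_k-s_k=\tfrac32\mathfrak{l}_M\le\tfrac32\Delta$, and then absorb the $o(\Delta)$ localization error of $\weta_k$ from \Cref{thm: Consistency} on the event $\mathcal A$. The only quantitative slip is that the buffer from $\weta_k$ to each endpoint of $(s_k,e_k]$ is $\mathfrak{b}_M=\mathfrak{l}_M/2$ (hence at least $\Delta/4$, since $\mathfrak{l}_M>\Delta/2$), not $\mathfrak{b}_M/2=\mathfrak{l}_M/4$ which could fall below $\Delta/5$; the paper's proof normalizes $\mathfrak{l}_M=\Delta$, takes $s_k=0$, $e_k=3\Delta/2$, $\weta_k\in(\Delta/2,\Delta]$ and reads off $\eta_k\in(\Delta/4,5\Delta/4]$ directly, which is exactly the ``clean bookkeeping'' you defer.
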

\begin{proof}
    In the last seeded intervals layer, we have $\mathfrak{l}_k = \Delta$ and $\mathfrak{b}_k = \Delta/2$. Let ${\weta}_k \in (i \Delta/2, (i+1)\Delta/2]$, for some $i$. Without loss of generality, we assume $i=1$; if not, we translate the intervals by $(i-1)\Delta/2$ unit to right. Then $\weta_k$ would be contained in $(0, \Delta]$ and $(\Delta/2, 3\Delta/2]$ in this last seeded intervals layer. By construction, we have $s_k = 0$ and $e_k = 3\Delta/2$. Following from \Cref{thm: Consistency}, under the event $\mathcal{A}$, we have $|\weta_k - \eta_k| \le \Delta/4$. Therefore $\eta_k \in (\Delta/4, 5 \Delta/4]$.
\end{proof}

\section{Proof of \Cref{thm: long run variance estimation}}
\begin{proof}
Let $\calP = \{J_1, J_2, \ldots, J_S \}$. Let $J_1 = \{t_1, t_1 + 1, \ldots, t_1 + (q - 1), \ldots,  t_1 + (2q -1) \}$. Denote $\wtilde{J}_1 = J_1 \setminus (J_1 + q) = \{t_1, t_1, \ldots, t_1 + (q-1)\}$ and $\Bar{J}_1 = J_1 \setminus \wtilde{J_1} =\{ t_1 + q, \ldots, t_1 + (2q-1)\}$ as the two equal partition of the block $J_1$. Recall that $\delta_a \asymp a^{-2r/2r+1}$ for any $a>0$. Denote the population version of the process $\{F^*_{J_v}\}_{v=1}^{S}$ as
    \begin{equation}\nonumber
        F_{J_1}^* = \sqrt{\frac{2}{q}} \lft\{ \sum_{t\in \wtilde{J}_1} \lft( Z_t^*   -  Z_{t+q}^* \rgt\} \rgt) = \sqrt{\frac{2}{q}} \lft\{ \sum_{t\in \wtilde{J}_1} Z_t^*   - \sum_{t \in \Bar{J_1}} Z_t^* \rgt\}.
    \end{equation}
    where $Z_t^* = \frac{1}{\kappa_k}\lft\lb X_t, \beta^*_{\eta_k} - \beta^*_{\eta_{k+1}} \rgt\rb\eps_t$.
    
This proof is further divided into two steps. Firstly, we establish the consistency of the population version of the estimate. Secondly, we conclude the proof by demonstrating that the deviation of the estimate from the estimator is small in probability. The last redundant step is replacing $\kappa_k$ with $\what{\kappa}_k$ and applying \Cref{lemma: consistent kappa} along with the Slutsky's theorem.

\
\\
{\bf Step 1a:} Note that
$$
\E\lft[ (F_{J_1}^*)^2 \rgt] = 2\, \E\lft[ \lft( \frac{1}{\sqrt{q}}\sum_{t\in \wtilde{J}_1} Z_t^* \rgt)^2 \rgt] + 2\, \E\lft[ \lft( \frac{1}{\sqrt{q}}\sum_{t\in \Bar{J}_1} Z_t^* \rgt)^2 \rgt] - \frac{4}{q} \E\lft[ \lft(\sum_{t\in \wtilde{J}_1} Z_t^* \rgt) \lft( \sum_{t\in \Bar{J}_1} Z_t^*\rgt) \rgt].
$$
Following stationarity, we may write
\begin{align}\nonumber
    \frac{1}{q} \E\lft[ \lft(\sum_{t\in \wtilde{J}_1} Z_t^* \rgt) \lft( \sum_{t\in \Bar{J}_1} Z_t^*\rgt) \rgt] &= \frac{1}{q} \lft\{ \sum_{t=1}^q t \E[Z_1 Z_{1+t}] + \sum_{t=q+1}^{2q} (2q-t) \E[Z_1 Z_{1+t}]  \rgt\}
    \\\nonumber
    &\le \frac{1}{q} \sum_{t=1}^{2q} t \lft| \E\lft[  Z_1 Z_{1+t}  \rgt] \rgt| \le \frac{(2q)^{2/3}}{q} \sum_{t=1}^{2q}  t^{1/3} \|Z_1\|_3^2 \alpha^{1/3}(t) = O\lft(\frac{1}{q^{1/3}} \rgt) \to 0,
\end{align}
where the second last inequality follows from \Cref{lemma: mixing covariance inequality} and stationarity and the last inequality follows from $\sum_{k=1}^\infty k^{1/3} \alpha^{1/3}(k) < \infty$ and
\begin{equation}\label{eq: lrv proof1}
    \|Z_1\|_3 = \E^{1/3}\lft[ \frac{1}{\kappa_k^3}\lft\lb X_1, \beta^*_{\eta_k} - \beta^*_{\eta_{k+1}} \rgt\rb^3 \E\lft[\eps_1^3|X_1\rgt]  \rgt] \le O \lft(\E^{1/6}\lft[ \frac{1}{\kappa_k^6}\lft\lb X_1, \beta^*_{\eta_k} - \beta^*_{\eta_{k+1}} \rgt\rb^6  \rgt] \rgt) = O(1),    
\end{equation}
which follows from \Cref{assume: model assumption flr CP}.

\
\\
From this, the definition of the long run variance and stationarity at \eqref{eq: def long-run variance}, we can write
\begin{equation}\label{eq: lrv proof2}
    \E\lft[ (F_{J_v}^*)^2 \rgt] \to  \sigma_\infty^2(k), \qquad\text{as}\qquad q \to \infty,
\end{equation}
for all $J_v \in \calP$.

\
\\
{\bf Step 1b:}
We have from \Cref{assume: model assumption flr CP} that $\sum_{k=1}^\infty (k+1)^{8/3 -1}\alpha^{(4/3)/(8/3+4/3)}(k) < \infty$, $\E\lft[ Z_t^* \rgt] = 0$ and similar to \eqref{eq: lrv proof1} that 
$$
\|Z_1^*\|_{4} =  \E^{1/4}\lft[ \frac{1}{\kappa_k^4}\lft\lb X_1, \beta^*_{\eta_k} - \beta^*_{\eta_{k+1}} \rgt\rb^4 \E\lft[\eps_1^4|X_1\rgt]  \rgt] \le O \lft(\E^{1/6}\lft[ \frac{1}{\kappa_k^6}\lft\lb X_1, \beta^*_{\eta_k} - \beta^*_{\eta_{k+1}} \rgt\rb^6  \rgt] \rgt) < \infty.
$$
All of the conditions of Theorem 1 of \citet{yokoyama1980moment} are satisfied and therefore
$$
\E\lft[ \lft| \sum_{t\in \wtilde{J}_1} Z_t^*\rgt|^{8/3} \rgt] = O(q^{4/3}).
$$
Following stationarity for all $v \in \{1, \ldots, S\}$, it implies 
$$\E\lft[\lft| (F_{J_v}^*)^2 - \E\lft[(F_{J_v}^*)^2 \rgt]\rgt|^{4/3} \rgt] \le 2^{1/3}\E\lft[\lft| (F_{J_v}^*)^2 \rgt|^{4/3}\rgt] \le  4 \lft( \E\lft[\lft| \frac{1}{\sqrt{q}} \sum_{t\in \wtilde{J}_v} Z_t^* \rgt|^{8/3} \rgt] +  \E\lft[\lft| \frac{1}{\sqrt{q}} \sum_{t\in \wtilde{J}_v} Z_t^* \rgt|^{8/3} \rgt]   \rgt) < \infty,
$$
where we used $(a+b)^{4/3} \le 2^{1/3}(a^{4/3} + b^{4/3})$ in the last and the second last inequality. We also have $\alpha(k) = O\lft( \frac{1}{k^4}\rgt) $ which follows from the summability of $\{k^{1/3}\alpha^{1/3}(k)\}_{k=1}^\infty$. With $\rho = 8$ and $p=4/3$, it is what follows that
$$
\sum_{v=1}^S \frac{\E^{2/\rho}\lft[ \lft| (F_{J_v}^*)^2 - \E\lft[(F_{J_v}^*)^2 \rgt]\rgt|^p \rgt]}{v^p} \le O(1) \sum_{v=1}^\infty \frac{1}{v^p} < \infty.
$$
We have all the condition of \Cref{SLLN: alpha mixing} satisfied with $\rho = 8$ and $p=4/3$, therefore, 
$$
\frac{1}{S} \sum_{v=1}^S (F_{J_v}^*)^2 - \E\lft[(F^*_{J_v})^2 \rgt] \to 0 \qquad \text{a.s.}
$$
Combining this with \eqref{eq: lrv proof2} and and the stationarity of $\{F_{J_v}^*\}_{v=1}^S$, we write
\begin{equation}\label{eq: lrv proof3}
    \frac{1}{S} \sum_{v=1}^S (F_{J_v}^*)^2 \to \sigma^2_\infty(k) \qquad \text{a.s.}
\end{equation}

\
\\
{\bf Step 2:}
Let
$$
\what{F}_{J_v} = \sqrt{\frac{2}{q}} \lft\{ \sum_{t\in \wtilde{J}_v} \frac{\lft\lb X_t, \what{\beta}_{k}^\1 - \what{\beta}_{k}^\2 \rgt\rb_\lt}{\kappa_k} \lft( Y_t - \lft\lb X_t, \what{\beta}_{J} \rgt\rb_\lt \rgt) - \sum_{t\in \Bar{J}_1} \frac{\lft\lb X_t, \what{\beta}_{k}^\1 - \what{\beta}_{k}^\2 \rgt\rb_\lt}{\kappa_k} \lft( Y_t - \lft\lb X_t, \what{\beta}_{J} \rgt\rb_\lt \rgt)  \rgt\}.
$$
Observe that
\begin{align}\nonumber
   \lft(\what{F}_{J_v}\rgt)^2  = \lft( A_v + F_{J_v}^* + B_v \rgt)^2 ,
\end{align}
where
\begin{align*}
    A_v = \frac{\sqrt{2}}{\kappa_k\sqrt{q}} \Bigg\{ &  \sum_{t\in \wtilde{J}_v} \lft\lb X_t, \what{\beta}_k^\1 - \beta^*_{\eta_k} \rgt\rb_\lt\eps_t + \sum_{t\in \wtilde{J}_v} \lft\lb X_t, \beta^*_{\eta_{k+1}} - \what{\beta}_{k}^\2  \rgt\rb_\lt\eps_t 
    \\
    &+  \sum_{t\in \wtilde{J}_v} \lft\lb X_t, \what{\beta}_k^\1 - \beta^*_{\eta_k} \rgt\rb_\lt\lft\lb X_t, \beta^*_{J_v} - \what{\beta}_{J_v} \rgt\rb_\lt  
    \\
    &+ \sum_{t\in \wtilde{J}_v} \lft\lb X_t, \beta^*_{\eta_{k+1}} - \what{\beta}_{k}^\2 \rgt\rb_\lt \lft\lb X_t, \beta^*_{J_v} - \what{\beta}_{J_v} \rgt\rb_\lt   
    \\
    & - \sum_{t\in \Bar{J}_v} \lft\lb X_t, \what{\beta}_k^\1 - \beta^*_{\eta_k} \rgt\rb_\lt\eps_t - \sum_{t\in \Bar{J}_1} \lft\lb X_t, \beta^*_{\eta_{k+1}} - \what{\beta}_{k}^\2  \rgt\rb_\lt\eps_t 
    \\
     &- \sum_{t\in \Bar{J}_v} \lft\lb X_t, \what{\beta}_k^\1 - \beta^*_{\eta_k} \rgt\rb_\lt\lft\lb X_t, \beta^*_{J_v} - \what{\beta}_{J_v} \rgt\rb_\lt  
    \\
    &- \sum_{t\in \Bar{J}_v} \lft\lb X_t, \beta^*_{\eta_{k+1}} - \what{\beta}_{k}^\2 \rgt\rb_\lt \lft\lb X_t, \beta^*_{J_v} - \what{\beta}_{J_v} \rgt\rb_\lt  \Bigg\}
\end{align*}
and 
\begin{align*}
    B_v = \frac{\sqrt{2}}{\kappa_k\sqrt{q}} \Bigg\{ &\sum_{t\in \wtilde{J}_v} \lft\lb X_t, \beta^*_{\eta_k} - \beta^*_{\eta_{k+1}} \rgt\rb_\lt \lft\lb X_t, \beta^*_{J_v} - \what{\beta}_{J_v}\rgt\rb_\lt 
    \\
    & - \sum_{t\in \Bar{J}_v} \lft\lb X_t, \beta^*_{\eta_k} - \beta^*_{\eta_{k+1}} \rgt\rb_\lt \lft\lb X_t, \beta^*_{J_v} - \what{\beta}_{J_v} \rgt\rb_\lt   \Bigg\}.
\end{align*}
Such an expansion is possible because, under the event outlined in \Cref{lemma: probability bound first lemma}, for $1\le v \le S,\; J_v$ have no change point. This follows from their construction in \Cref{alg: LRV} and conditions specified in \eqref{eq: long run q conditions}. As a consequence, $\beta^*_{\widetilde{J}_v} = \beta^*_{\Bar{J}_v} = \beta^*_{J_v}$.
Following from \Cref{lemma: empirical excess risk}, \Cref{lemma: consistency flr}, \Cref{lemma: empirical excess risk II}, \Cref{lemma: consistency II} and the choice of the tuning parameter $q$ detailed in \eqref{eq: long run q conditions}, we may write
$$
A_v = O_p\lft( \frac{1}{\kappa_k}\sqrt{q} \delta_q \log^{1+\xi}(q) \rgt) = O_p\lft( \frac{1}{\kappa_k} q^{\frac{1/2-r}{2r+1}} \log^{1+\xi}(q) \rgt) = o_p(1).
$$
For the term $B_v$, we may write
\begin{align*}
    B_v = &\frac{\sqrt{2}}{\kappa_k\sqrt{q}} \Bigg\{ \sum_{t\in \wtilde{J}_v} \lft( \lft\lb X_t, \beta^*_{\eta_k} - \beta^*_{\eta_{k+1}} \rgt\rb_\lt \lft\lb X_t, \beta^*_{J_v} - \what{\beta}_{J_v}\rgt\rb_\lt - \Sigma\lft[\beta^*_{\eta_k} - \beta^*_{\eta_{k+1}},  \beta^*_{J_v} - \what{\beta}_{J_v}\rgt] \rgt)
    \\
    &\qquad \qquad - \sum_{t\in \Bar{J}_v} \lft( \lft\lb X_t, \beta^*_{\eta_k} - \beta^*_{\eta_{k+1}} \rgt\rb_\lt \lft\lb X_t, \beta^*_{J_v} - \what{\beta}_{J_v} \rgt\rb_\lt - \Sigma\lft[\beta^*_{\eta_k} - \beta^*_{\eta_{k+1}},  \beta^*_{J_v} - \what{\beta}_{J_v}\rgt] \rgt)  \Bigg\}
    \\
    & = \frac{\sqrt{2q}}{\kappa_k} \Bigg\{ \lft( \what{\Sigma}_{\widetilde{J}_v} - \Sigma \rgt)\lft[ \beta^*_{\eta_k} - \beta^*_{\eta_{k+1}},  \beta^*_{J_v} - \what{\beta}_{J_v} \rgt] - \lft( \what{\Sigma}_{\Bar{J}_v} - \Sigma \rgt)\lft[ \beta^*_{\eta_k} - \beta^*_{\eta_{k+1}},  \beta^*_{J} - \what{\beta}_{J} \rgt]  \Bigg\} 
    \\
    & = O_p\lft( \frac{\sqrt{q}}{\kappa_k} \sqrt{\frac{\delta_q}{\sqrt{q}}} \log^{1+\xi}(q) \rgt) = O_p\lft( \frac{1}{\kappa_k} \sqrt{q^{\frac{2r-1}{2r+1}}} \log^{1+\xi}(q) \rgt) = o_p(1),
\end{align*}
where the first equality in the last line follows from the Holders inequality $\lft( \Sigma\lft[a,b\rgt] \le \sqrt{\Sigma\lft[a,a\rgt]\Sigma\lft[b,b\rgt]}\rgt)$, \eqref{eq: markov I second equation} of \Cref{lemma: general markov type bound I}, \Cref{lemma: Excess Risk} and \Cref{lemma: empirical excess risk}, the last equality follows from \eqref{eq: long run q conditions}. Since $A_v = o_p(1), B_v = o_P(1)$ and $F_{J_v}^* = O_p(1)$, we can write
\begin{equation}\nonumber
    \lft(\what{F}_{J_v}\rgt)^2 - \lft(F_{J_v}^*\rgt)^2 = \lft( A_v + F_{J_v}^* + B_v \rgt)^2 - \lft(F_{J_v}^*\rgt)^2 = o_p(1).
\end{equation}
Therefore, from \eqref{eq: lrv proof3}
\begin{equation}\nonumber
    \frac{1}{S} \sum_{v=1}^S (\what{F}_{J_v})^2 \xrightarrow{\mathbb{P}} \sigma^2_\infty(k).
\end{equation}
The main result now follows from the Slutsky's theorem because $F_{J_v} = \frac{\kappa_k}{\what{\kappa}_k} \what{F}_{J_v}$, and $\frac{\kappa_k}{\what{\kappa}_k} \xrightarrow{\mathbb P} 1$ by \Cref{lemma: consistent kappa}.

\end{proof}

\subsection{Technical results for the Proof of \Cref{thm: long run variance estimation}}
\begin{lemma}\label{lemma: consistent kappa}
    Suppose that assumptions of \Cref{thm: Consistency} holds. The estimator $\what{\kappa}_k$ defined in \eqref{eq: kappa hat} satisfies
    %
    $$
    \what{\kappa}_k^2 - \kappa_k^2 = O_p\lft( \Delta^{-r/(2r+1)} \log^{1+\xi} \Delta \rgt).
    $$
    Consequently,
    $$
     \frac{\what{\kappa}_k^2}{\kappa_k^2}  \xrightarrow{\mathbb{P}} 1 , \qquad n \to \infty.
    $$
\end{lemma}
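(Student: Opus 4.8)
The plan is to work on the asymptotically almost sure event (guaranteed by \Cref{thm: Consistency} and \Cref{lemma: refined interval}) on which $\Khat=\Ktilde$, the localization bound of \Cref{thm: Consistency} holds, and $(s_k,e_k]$ contains the single change point $\eta_k$ with $\min\{\eta_k-s_k,\,e_k-\eta_k\}\ge \Delta/5$; on this event $e_k-s_k\asymp\Delta$, $s_k>\eta_{k-1}$, and each of the subintervals $(s_k,\weta_k]$, $(\weta_k,e_k]$ has length of order $\Delta$ and contains at most $|\weta_k-\eta_k|=o_{\mathbb P}(\Delta)$ observations from the wrong regime. Writing $g:=\beta^*_{\eta_k}-\beta^*_{\eta_{k+1}}$ so that $\kappa_k^2=\Sigma[g,g]$, I would first record the decomposition
\[
\what\beta_{(s_k,\weta_k]}-\what\beta_{(\weta_k,e_k]}=g+R,\qquad R=R_1-R_2+R_3-R_4,
\]
where $R_1=\what\beta_{(s_k,\weta_k]}-\beta^*_{(s_k,\weta_k]}$ and $R_2=\what\beta_{(\weta_k,e_k]}-\beta^*_{(\weta_k,e_k]}$ are in-sample prediction errors of the penalized estimator, and $R_3=\beta^*_{(s_k,\weta_k]}-\beta^*_{\eta_k}$, $R_4=\beta^*_{(\weta_k,e_k]}-\beta^*_{\eta_{k+1}}$ are averaging biases, each of which is a random scalar multiple $c\cdot g$ of $g$ with $|c|\lesssim |\weta_k-\eta_k|/\Delta$. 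Expanding the bilinear form,
\[
\what\kappa_k^2-\kappa_k^2=\big(\what\Sigma_{(s_k,e_k]}-\Sigma\big)[g,g]+2\,\what\Sigma_{(s_k,e_k]}[g,R]+\what\Sigma_{(s_k,e_k]}[R,R],
\]
so the task reduces to bounding these three summands.

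For the first summand I would write $(\what\Sigma_{(s_k,e_k]}-\Sigma)[g,g]=(e_k-s_k)^{-1}\sum_{j=s_k+1}^{e_k}(\lb X_j,g\rb_\lt^2-\Sigma[g,g])$ and apply \Cref{lemma: general markov type bound I}, which bounds its second moment by $O((e_k-s_k)^{-1}\kappa_k^4)$; Markov's inequality then gives $O_{\mathbb P}(\Delta^{-1/2}\kappa_k^2)$, and in particular $\what\Sigma_{(s_k,e_k]}[g,g]=\kappa_k^2(1+o_{\mathbb P}(1))$. Next I would control the seminorm $\|f\|_*:=\what\Sigma_{(s_k,e_k]}[f,f]^{1/2}$ of $R$ by the triangle inequality: since $R_3,R_4$ are multiples of $g$, $\|R_3\|_*+\|R_4\|_*\lesssim (|\weta_k-\eta_k|/\Delta)\,\|g\|_*=O_{\mathbb P}\big((|\weta_k-\eta_k|/\Delta)\kappa_k\big)$, while for the estimation errors, combining the in-sample excess-risk bound \Cref{lemma: empirical excess risk} with its population counterpart \Cref{lemma: Excess Risk} (to pass from the subinterval on which $\what\beta_{(s_k,\weta_k]}$ is fitted to the larger interval $(s_k,e_k]$) plus a sample-covariance concentration step gives $\|R_1\|_*^2+\|R_2\|_*^2=O_{\mathbb P}(\delta_\Delta\log^{1+\xi}\Delta)=O_{\mathbb P}(\Delta^{-2r/(2r+1)}\log^{1+\xi}\Delta)$. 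Hence $\|R\|_*=O_{\mathbb P}(\Delta^{-r/(2r+1)}(\log^{1+\xi}\Delta)^{1/2})+O_{\mathbb P}\big((|\weta_k-\eta_k|/\Delta)\kappa_k\big)$.

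Assembling, Cauchy–Schwarz gives $|\what\Sigma_{(s_k,e_k]}[g,R]|\le\|g\|_*\|R\|_*=O_{\mathbb P}(\kappa_k)\|R\|_*$ and $\what\Sigma_{(s_k,e_k]}[R,R]=\|R\|_*^2$, so
\[
\what\kappa_k^2-\kappa_k^2=O_{\mathbb P}(\Delta^{-1/2}\kappa_k^2)+O_{\mathbb P}\!\big(\kappa_k\Delta^{-r/(2r+1)}(\log^{1+\xi}\Delta)^{1/2}\big)+O_{\mathbb P}\!\Big(\tfrac{\kappa_k^2|\weta_k-\eta_k|}{\Delta}\Big)+\|R\|_*^2 .
\]
Since $r>1$ forces $r/(2r+1)<1/2$, the $\Delta^{-1/2}$-term is $o_{\mathbb P}(\Delta^{-r/(2r+1)})$; since $\kappa_k$ stays bounded in both regimes of \Cref{thm: Limiting distribution}, the second term is $O_{\mathbb P}(\Delta^{-r/(2r+1)}\log^{1+\xi}\Delta)$; and, using $\kappa_k^2|\weta_k-\eta_k|=O_{\mathbb P}((n/\Delta)\Delta^{1/(2r+1)}\log^{1+2\xi}n)$ from \Cref{thm: Consistency} together with \Cref{assume: SNR Signal to Noise ratio}, the averaging-bias term is $o_{\mathbb P}(\kappa_k^2)$ and is dominated by the stated rate. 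This yields $\what\kappa_k^2-\kappa_k^2=O_{\mathbb P}(\Delta^{-r/(2r+1)}\log^{1+\xi}\Delta)$, and dividing by $\kappa_k^2$ and invoking the second half of \Cref{assume: SNR Signal to Noise ratio} (i.e.\ $\kappa^2\Delta^{r/(2r+1)}/\log^{1+2\xi}\Delta\to\infty$) gives $\what\kappa_k^2/\kappa_k^2\xrightarrow{\mathbb P}1$.

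The main obstacle I anticipate is the control of $\|R_1\|_*$ and $\|R_2\|_*$: one must transfer the in-sample excess-risk bound for the penalized slope estimator fitted on $(s_k,\weta_k]$ to a bound valid on the larger interval $(s_k,e_k]$ under the empirical covariance $\what\Sigma_{(s_k,e_k]}$. Because $\weta_k$ is itself data-dependent, the estimator and the covariates on $(\weta_k,e_k]$ are not independent, so a naive out-of-sample argument fails and this step needs the uniform-over-seeded-intervals concentration machinery already developed for \Cref{thm: Consistency}; a secondary, essentially routine, point is the bookkeeping confirming that the averaging-bias term is negligible, which leans on \Cref{assume: SNR Signal to Noise ratio}.
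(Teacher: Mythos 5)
Your proposal is correct and follows essentially the same route as the paper: expanding $\what\Sigma_{(s_k,e_k]}[\,\cdot\,,\cdot\,]$ around $g=\beta^*_{\eta_k}-\beta^*_{\eta_{k+1}}$, controlling the covariance fluctuation via \Cref{lemma: general markov type bound I}, the estimation errors via the cross-interval excess-risk bound (\Cref{lemma: empirical excess risk II}, which is precisely the uniform-over-intervals machinery you flag as the main obstacle), and the averaging bias via the localization rate of \Cref{thm: Consistency}. The only cosmetic difference is that you group the seven terms of the paper's expansion into three and handle the cross terms by Cauchy--Schwarz rather than by the dedicated \Cref{lemma: unfold empirical bound II}; both yield the same dominant rate $\kappa_k\sqrt{\delta_\Delta\log^{1+\xi}\Delta}=O_p(\Delta^{-r/(2r+1)}\log^{1+\xi}\Delta)$.
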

\
\\
\begin{proof}
    WLOG let $\weta_k \ge \eta_k$. Observe that
    $$
    \beta^*_{(s_k, \weta_k]} - \beta^*_{(\weta_k, e_k]} = \lft(\frac{\eta_k - s_k}{\weta_k - s_k} \rgt) \lft[ \beta^*_{\eta_k} - \beta^*_{\eta_{k+1}} \rgt]
    $$
    and because $\Delta/5 \le \weta - s_k \le \Delta$ from \Cref{lemma: refined interval}, following from \Cref{thm: Consistency} we have that
    $$
    1 - \frac{\weta_k -\eta_k}{\weta_k - s_k} =  \lft(\frac{\eta_k - s_k}{\weta_k - s_k} \rgt)  \xrightarrow{\mathbb{P}} 1, \qquad n \to \infty.
    $$
We may write the expansion 
\begin{align}\nonumber
    \what{\kappa}_k^2 - \kappa_k^2 &= \what{\Sigma}_{(s_k,e_k]} \lft[ \what{\beta}_{(s_k, \weta_k]} - \what{\beta}_{(\weta_k, e_k]}, \what{\beta}_{(s_k, \weta_k]} - \what{\beta}_{(\weta_k, e_k]} \rgt] -\Sigma\lft[ \beta^*_{\eta_k} - \beta^*_{\eta_{k+1}}, \beta^*_{\eta_k} - \beta^*_{\eta_{k+1}} \rgt]
    \\\nonumber
    &= \underbrace{\what{\Sigma}_{(s_k,e_k]} \lft[ \what{\beta}_{(s_k, \weta_k]} - \beta^*_{(s_k, \weta_k]}, \what{\beta}_{(s_k, \weta_k]} - \beta^*_{(s_k, \weta_k]} \rgt]}_{A_1} + \underbrace{\what{\Sigma}_{(s_k,e_k]} \lft[ \beta^*_{(\weta_k, e_k]} - \what{\beta}_{(\weta_k, e_k]}, \beta^*_{(\weta_k, e_k]} - \what{\beta}_{(\weta_k, e_k]} \rgt]}_{A_2}
    \\\nonumber
    &\quad + \underbrace{2\what{\Sigma}_{(s_k,e_k]} \lft[ \beta^*_{(s_k, \weta_k]} - \beta^*_{(\weta_k, e_k]}, \what{\beta}_{(s_k, \weta_k]} - \beta^*_{(s_k, \weta_k]}\rgt]}_{A_3} + \underbrace{2\what{\Sigma}_{(s_k,e_k]} \lft[ \what{\beta}_{(s_k, \weta_k]} - \beta^*_{(s_k, \weta_k]}, \beta^*_{(\weta_k, e_k]} - \what{\beta}_{(\weta_k, e_k]} \rgt]}_{A_4}
    \\\nonumber
    & \quad + \underbrace{2\what{\Sigma}_{(s_k,e_k]} \lft[ \beta^*_{(\weta_k, e_k]} - \what{\beta}_{(\weta_k, e_k]}, \beta^*_{(s_k, \weta_k]} - \beta^*_{(\weta_k, e_k]} \rgt]}_{A_5} + \underbrace{\lft( \what{\Sigma}_{(s_k,e_k]} - \Sigma  \rgt)\lft[ \beta^*_{(s_k, \weta_k]} - \beta^*_{(\weta_k, e_k]}, \beta^*_{(s_k, \weta_k]} - \beta^*_{(\weta_k, e_k]} \rgt]}_{A_6}
    \\\nonumber
    & \qquad + \underbrace{\lft[  \lft(\frac{\eta_k - s_k}{\weta_k - s_k} \rgt)^2 - 1  \rgt] \Sigma\lft[ \beta^*_{\eta_k} - \beta^*_{\eta_{k+1}}, \beta^*_{\eta_k} - \beta^*_{\eta_{k+1}} \rgt]}_{A_7}.
\end{align}
Observing $ \weta_k - s_k = O(\Delta) $, it follows from \Cref{lemma: empirical excess risk II} that for $j=1,2,4 $ we have
$$
\big|A_j\big| = O_p\lft( \delta_{\Delta} \log^{1+\xi} \Delta \rgt).
$$
For the expression $A_3$, it follows from \Cref{lemma: unfold empirical bound II} that
\begin{align*}
    \big|A_3 \big| &= 2\what{\Sigma}_{(s_k,e_k]} \lft[ \beta^*_{(s_k, \weta_k]} - \beta^*_{(\weta_k, e_k]}, \what{\beta}_{(s_k, \weta_k]} - \beta^*_{(s_k, \weta_k]}\rgt] 
    \\
    &\le 2\what{\Sigma}_{(s_k,e_k]} \lft[ \beta^*_{\eta_k} - \beta^*_{\eta_{k+1}}, \what{\beta}_{(s_k, \weta_k]} - \beta^*_{(s_k, \weta_k]}\rgt] = O_p\lft( \sqrt{\delta_{\Delta}} \log^{1+\xi} \Delta\rgt),    
\end{align*}
and for the fifth expression we have $ \big|A_5 \big| = O_p\lft( \sqrt{\delta_{\Delta}} \log^{1+\xi} \Delta\rgt)$ following the same argument.
For the expression $A_6$, we have
\begin{align*}
    \big|A_6 \big| &=  \lft(\frac{\eta_k - s_k}{\weta_k - s_k} \rgt)^2  \lft( \what{\Sigma}_{(s_k,e_k]} - \Sigma  \rgt) \lft[ \beta^*_{\eta_k} - \beta^*_{\eta_{k+1}}, \beta^*_{\eta_k} - \beta^*_{\eta_{k+1}} \rgt] 
    \\
    &\le \lft( \what{\Sigma}_{(s_k,e_k]} - \Sigma  \rgt) \lft[ \beta^*_{\eta_k} - \beta^*_{\eta_{k+1}}, \beta^*_{\eta_k} - \beta^*_{\eta_{k+1}} \rgt] = O_p\lft( \frac{1}{\sqrt{\Delta}} \kappa_k^2\rgt), 
\end{align*}
where the last equality follows from \Cref{lemma: general markov type bound I} and $e_k - s_k = O(\Delta)$. The deviation for the last expression
$$
|A_7| =  \lft( 1 + \frac{\eta_k -s_k}{\weta_k - s_k} \rgt) \lft(\frac{\weta_k - \eta_k}{\weta_k - s_k} \rgt)  \kappa_k^2 \le 6 \frac{5}{\Delta} \kappa_k^2 (\weta_k - \eta_k) = O_p \lft( \delta_{\Delta} \log^{1+\xi} (\Delta) \rgt)
$$ follows from the earlier observation $ \Delta/5 \le \eta_k - s_k \le \weta_k -s_k \le \Delta$ and \eqref{eq: consistency localization}. The first part this current lemma $
    \what{\kappa}_k^2 - \kappa_k^2 = O_p\lft( \Delta^{-r/(2r+1)} \log^{1+\xi} \Delta \rgt)
    $ follows by combining this seven deviation bounds.

\
\\
The deviation from the first part lead us to
$$
 \lft| \frac{\what{\kappa}_k^2 - \kappa_k^2}{\kappa_k^2}\rgt| = O_p\lft( \frac{1}{\kappa^2} \Delta^{-r/(2r+1)} \log^{1+\xi} (\Delta) \rgt) = o_p(1),
$$
where the last equality follows from \Cref{assume: SNR Signal to Noise ratio}.

\end{proof}

\newpage
\section{Proof of \Cref{thm: CI vanishing}}
\begin{proof}
Let $k \in [1, \ldots, \Khat] $ be given. For notational simplicity, we denote $\what{u} = \what{u}^{(b)}_k$ and $z_j = z_j^{(b)}$. 

The proof follows a similar pattern as the proof of \Cref{thm: Limiting distribution}. In the first step, we establish the uniform tightness of the minimizer. In the second step, we demonstrate the convergence of the objective function on a compact domain and use the Argmax continuous mapping theorem.

\
\\
{\bf Step 1.} Let $\what{u}$ be a minimizer. Without loss of generality, assume $\what{u}\ge 0$. 
Since $\what{\sigma}_\infty^2(k) = O_p(1)$, we may write
\begin{align}\nonumber
     \what{u} \le - \what{\sigma}_\infty^2(k) \frac{1}{\sqrt{n}} \sum_{j=1}^{\lfloor n\what{u} \rfloor} z_j = O_p\lft( \sqrt{\what{u} \log^{1+\xi}(\what{u}}) \rgt),
\end{align}
where the stochastic bound follows from the uniform result \Cref{lemma: maximal inequality useful}. Therefore, $\what{u} = O_p(1)$.

\
\\
{\bf Step 2.} Let $M>0$. We have $\what{\sigma}_\infty^2(k) \xrightarrow{\mathbb{P}} \sigma_\infty^2(k)$ from \Cref{thm: long run variance estimation}. From functional CLT, we have
$$
\frac{1}{\sqrt{n}} \sum_{j=1}^{\lfloor nr \rfloor} z_j \xrightarrow{\calD} \mathbb{B}_1(r),
$$
uniformly for all $0 \le r\le M$. Therefore, with the Argmax continuous mapping theorem ( e.g. Theorem 3.2.2 of \citet*{van1996weak}), we have

$$
\what{u} \xrightarrow{\calD} \argmin_{r \in \mathbb R} \big\{ |r| + \sigma_\infty(k) \mathbb W(r) \big\}, \qquad n \to \infty.
$$
The main result now follows from the Slutsky's theorem.
\end{proof}

\newpage
\section{Deviation bounds in functional linear regression}\label{sec: FLR}
\subsection{Notations}\label{sec: notation E1}
For any $a>0$, we denote $\delta_a \asymp a^{-2r/(2r+1)}$. Also, $\lambda_a \asymp a^{-2r/(2r+1)}$. This is used in the observation \eqref{def: fhast} to denote $\fhast$ which is estimator of $f_\st^*$ from \eqref{def: bhast to fhast}. The operator $T$ is defined in \eqref{eq: def T} and its plug-in estimate $T_\I$ is defined in \eqref{eq: sample operator action}. We use ${\bf I}$ to denote the identity operator. The expression for $g_\st$ and $H_\st$ is defined in \Cref{pro: rough estimator} and \Cref{lemma: bias expansion} respectively.
\subsection{Kernel tools}
Following Riesz representation theorem, the norm associated with $\hk$ from \eqref{eq: RKHS defintion} can be equivalently defined through,
$$
\lb f, L_K(g) \rb_\hk := \lb f, g \rb_\lt.
$$
One may note that
\begin{equation}\label{Expt square}
    \E \left[ \lb X, f \rb^2 \right] = \int f(s) \sx(s,t) f(t) \; ds \; dt = \lb L_\sx (f), f\rb_\lt = \Sigma[f, f].
\end{equation}
\\
Moving forward, the main operator of our interest is the linear operator corresponding to the bi-linear function $\kh\sx\kh$ and the eigenvalues and eigenfunctions from its expansion.
\\
The linear operator on $\lt$ corresponding to $K^{1/2}\sx K^{1/2}$ is given by
$$
L_{K^{1/2}\sx K^{1/2}} (f) (*) = \lb K^{1/2}\sx K^{1/2}(\cdot, *), f(\cdot) \rb_\lt.
$$
We denote the linear operator
\begin{equation}\label{eq: def T}
    T = L_{K^{1/2}\sx K^{1/2}},
\end{equation}
and by \Cref{assume: model assumption spectral}
$$
T(\phi_l) = \mathfrak{s}_l \phi_l.
$$
Following this, for any $a\in \mathbb{R}$, the operator $T^a$ is defined through the operation $T^a(\phi_l) = \mathfrak{s}_l^a \phi_l$. Also for any $\beta \in \hk$ such that $f = L_\kmh (\beta)$,
\begin{equation}\label{eq: sigma to T}
    \begin{aligned}
        \Sigma[\beta, \beta] = \Sigma[L_\kh(f), L_\kh(f)] &= \lb L_\Sigma L_\kh (f), L_\kh (f)\rb_\lt  
        \\
        &= \lb L_{\kh\Sigma\kh}(f), f \rb_\lt = \lb T(f), f\rb_\lt = \|T^{1/2} (f)\|^2_\lt.
    \end{aligned}    
\end{equation}

\
\\
The estimator of covariance function based on the sub-sample $\I \subset (0,n]$ is given by

$$
\shati(u,v) = \frac{1}{\I} \sum_{j\in \I} X_j (u) X_j(v).
$$
\
\\
The empirical version of $T$ is $T_\I := L_{\kh\shati\kh} $ and its action can be viewed as
\begin{equation}\label{eq: sample operator action}
    T_\I(h) = L_\kh\circ L_{\shati} \circ L_\kh (h) = L_\kh \left( \frac{1}{|\I|} \sum_{j\in \I} \lb X_j , L_\kh(h) \rb X_j \right) = \frac{1}{|\I|} \sum_{j\in \I} \lb X_j, L_\kh (h) \rb L_\kh(X_j).
\end{equation}


Since, $\lt$ is bijectively mapped to $\hk$, we may have $f^*_\st$ and $\fhast$ defined as
\begin{equation}\label{def: bhast to fhast}
 \frac{1}{t-s}\sum_{j=s+1}^t f_j^*= f^*_\st = L_\kmh \beta^*_\st \qquad \text{and} \qquad\fhast = L_\kmh(\bhast). 
\end{equation}

\
\\
We may also observe that 

\begin{equation}\label{def: fhast}
    \fhast = \argmin_{f\in \lt} \left\{ \frac{1}{(t-s)}\sum_{j = s+1}^t \left( y_i - \lb X_i, L_\kh(f) \rb_\lt \right)^2  + \lambda_\st \|f\|^2_\lt \right\}
\end{equation}
Given $(y^\star, X^\star)$ a copy of $(y,X)$ independent of the training data, the excess risk based on $\st$ is defined as 
\begin{align}\label{def: excess risk st}
\E [ \lb X^\star, \bhast - \beta^*_\st \rb^2 ] &= \int \int (\bhast(x) - \beta^*_\st(x) ) \sx(x,y) (\bhast(y) - \beta^*_\st(y) ) dxdy\\\nonumber
&= \sx[\bhast - \beta^*_\st, \bhast - \beta^*_\st] =  \left\|T^{1/2} (\fhast - f^*_\st)  \right\|^2_\lt   
\end{align}
the last form can be obtained using \eqref{Expt square}, \eqref{eq: sigma to T} and \eqref{def: bhast to fhast}.

\subsection{Roughness regularized estimator and its properties}\label{sec: estimator properties}
In order to evaluate the quality of estimation, we rely on the following lemmas. They help us control various deviation terms in the main result presented in this paper. All the proofs of the lemmas stated below are in the next section.

\begin{lemma}\label{lemma: Excess Risk}
    Let $\xi > 0$. Suppose $(s,e] \subset (0,n]$. Then
    $$
    \max_{s<t\le e} \frac{\delta_{t-s}^{-1}}{\log^{1+\xi} (t-s)} \Sigma\lft[ \bhast - \beta^*_\st, \bhast - \beta^*_\st \rgt] = O_p(1).
    $$
\end{lemma}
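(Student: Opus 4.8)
\textbf{Proof proposal for \Cref{lemma: Excess Risk}.}
The plan is to adapt the resolvent analysis of the roughness‑regularised estimator in \citet{caiyuan2012minimax} to the temporally dependent setting (using the $\alpha$‑mixing covariance bounds of \Cref{lemma: mixing covariance inequality}, \Cref{lemma: general markov type bound I}, \Cref{lemma: general markov type bound II}) and then make it uniform in $t$ by dyadic peeling. Write $\I = (s,t]$, $m = t-s$, and recall $\lambda_\st \asymp \delta_m \asymp m^{-2r/(2r+1)}$ and that, by \eqref{def: excess risk st}, the target equals $\|T^{1/2}(\fhast - f^*_\st)\|_\lt^2$. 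The first‑order conditions for \eqref{def: fhast} give the normal equation
\[
(T_\I + \lambda_\st\boldI)(\fhast - f^*_\st) = W_\I - \lambda_\st f^*_\st , \qquad
W_\I := \frac{1}{m}\sum_{j\in\I}\Bigl(\lb X_j, L_\kh(f_j^* - f^*_\st)\rb_\lt + \eps_j\Bigr)L_\kh X_j ,
\]
with $T_\I$ as in \eqref{eq: sample operator action}. Since $\sum_{j\in\I}(f_j^* - f^*_\st) = 0$ by definition of $f^*_\st$, and $\E[\eps_j|X_j]=0$, the summands of $W_\I$ have mean zero; split $W_\I = W_\I^{\1} + W_\I^{\2}$ with $W_\I^{\1} = m^{-1}\sum_{j\in\I}\eps_j L_\kh X_j$ and $W_\I^{\2} = m^{-1}\sum_{j\in\I}\bigl(\lb X_j, L_\kh(f_j^*-f^*_\st)\rb_\lt L_\kh X_j - T(f_j^*-f^*_\st)\bigr)$, both sums of centred vectors in $\lt$.

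On the event $\mathcal{E}_\I = \{\|\Delta_\I\|_{\op}\le 1/2\}$ with $\Delta_\I := (T+\lambda_\st\boldI)^{-1/2}(T_\I - T)(T+\lambda_\st\boldI)^{-1/2}$, the identity $(T+\lambda_\st\boldI)^{1/2}(T_\I+\lambda_\st\boldI)^{-1}(T+\lambda_\st\boldI)^{1/2} = (\boldI+\Delta_\I)^{-1}$ gives operator norm at most $2$, and composing with $\|T^{1/2}(T+\lambda_\st\boldI)^{-1/2}\|_{\op}\le 1$ yields
\[
\|T^{1/2}(\fhast - f^*_\st)\|_\lt \le 2\,\|(T+\lambda_\st\boldI)^{-1/2}W_\I^{\1}\|_\lt + 2\,\|(T+\lambda_\st\boldI)^{-1/2}W_\I^{\2}\|_\lt + 2\lambda_\st\,\|(T+\lambda_\st\boldI)^{-1/2}f^*_\st\|_\lt .
\]
The bias term is deterministic: $\lambda_\st^2\|(T+\lambda_\st\boldI)^{-1/2}f^*_\st\|_\lt^2 \le \lambda_\st\|f^*_\st\|_\lt^2 = \lambda_\st\|\beta^*_\st\|_\hk^2 \lesssim \delta_m$, using the uniform bound on $\max_j\|\beta_j^*\|_\hk$. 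For the stochastic terms, the key fact is that by \Cref{assume: model assumption spectral}(ii) the effective dimension is $\mathrm{tr}\bigl((T+\lambda_\st\boldI)^{-1}T\bigr) = \sum_{l\ge1}\mathfrak{s}_l/(\mathfrak{s}_l+\lambda_\st) \asymp \lambda_\st^{-1/(2r)}\asymp m^{1/(2r+1)}$; combining this with the moment control from \Cref{assume: model assumption flr CP}(i)--(ii) (which yields $\E[\eps_j^2\lb X_j,f\rb_\lt^2]\lesssim\Sigma[f,f]$ and $\E[\lb X_j,f\rb_\lt^2\lb X_j,g\rb_\lt^2]\lesssim\Sigma[f,f]\,\Sigma[g,g]$), the $\alpha$‑mixing covariance inequalities for the cross terms (summable by \Cref{assume: model assumption flr CP}(iii)), and the bound $\sum_{j\in\I}\Sigma[\beta_j^*-\beta^*_\st,\beta_j^*-\beta^*_\st]\lesssim m$ coming from the piecewise‑constant structure, one obtains
\[
\E\bigl[\|(T+\lambda_\st\boldI)^{-1/2}W_\I^{\1}\|_\lt^2\bigr] + \E\bigl[\|(T+\lambda_\st\boldI)^{-1/2}W_\I^{\2}\|_\lt^2\bigr] \;\lesssim\; \frac{m^{1/(2r+1)}}{m} \;=\; \delta_m ,
\]
uniformly over $s<t\le e$.

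To upgrade these pointwise bounds to the uniform statement, I would peel dyadically over the scales $2^i\le t-s<2^{i+1}$: on each block $\lambda_\st$ is comparable to $\lambda_{2^i}$, so $(T+\lambda_{2^i}\boldI)^{-1/2}\sum_{j=s+1}^t\eps_j L_\kh X_j$ and the analogous object for $W_\I^{\2}$ are running sums of centred mixing vectors to which the maximal inequalities of \Cref{lemma: maximal inequality useful} and \Cref{lemma: max to Op by peeling} apply; summing the tail bounds over the $O(\log(e-s))$ blocks, the logarithmic slack is absorbed into the $\log^{1+\xi}(t-s)$ normalisation, so that $\max_{s<t\le e}\tfrac{\delta_{t-s}^{-1}}{\log^{1+\xi}(t-s)}\bigl(\|(T+\lambda_\st\boldI)^{-1/2}W_\I^{\1}\|_\lt^2 + \|(T+\lambda_\st\boldI)^{-1/2}W_\I^{\2}\|_\lt^2\bigr) = O_{\mathbb P}(1)$. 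The same peeling applied to the random operators $T_\I - T$, together with an operator concentration inequality for $\alpha$‑mixing sequences, shows $\mathbb P\bigl(\bigcap_{s<t\le e}\mathcal{E}_\I\bigr)\to1$. Combining the operator‑concentration event, the uniform stochastic bound, and the deterministic bias bound proves the claim; alternatively one can follow the two‑stage route (rough estimator, then bias expansion) underlying \Cref{pro: rough estimator} and \Cref{lemma: bias expansion}. The main obstacle is the uniform operator‑concentration step: controlling $\|\Delta_\I\|_{\op}$ simultaneously over all $O(m^2)$ sub‑intervals under only sixth‑moment and polynomially‑mixing assumptions requires a careful blocking (and possibly truncation) argument to obtain the effective‑dimension‑dependent deviation rate with enough room for the union bound, whereas the remaining estimates reduce to second‑moment computations plus the maximal inequalities already established in the paper.
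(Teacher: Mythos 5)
Your overall strategy (bias--variance split, the effective-dimension computation $\sum_{l}\mathfrak{s}_l/(\mathfrak{s}_l+\lambda_{t-s})\asymp (t-s)^{1/(2r+1)}$, mixing covariance bounds for second moments, and maximal inequalities with peeling for uniformity in $t$) matches the ingredients the paper actually uses, and the route you mention only in passing as an ``alternative'' is the paper's route: the lemma is proved as an immediate corollary of \Cref{lemma: general excess risk type} with $a=1/2$, $b=1$ (since $\Sigma[\bhast-\beta^*_\st,\bhast-\beta^*_\st]=\|T^{1/2}(\fhast-f^*_\st)\|_\lt^2$), which in turn rests on the decomposition of \Cref{lemma: bias expansion}, the deterministic bias bound of \Cref{lemma: det bound}, and the coordinate-wise maximal inequalities behind Lemmas~\ref{lemma: gt bound}--\ref{lemma: Ht bound}.

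The one genuine gap in your primary route is exactly where you flag it. You invert $T_\I+\lambda_{t-s}\boldI$ against $T+\lambda_{t-s}\boldI$ on the event $\mathcal E_\I=\{\|\Delta_\I\|_{\mathrm{op}}\le 1/2\}$ and need $\mathbb P(\bigcap_{\I}\mathcal E_\I)\to 1$ over all $O(n^2)$ sub-intervals; under only sixth moments and polynomially decaying $\alpha$-mixing coefficients, none of the tools in the paper yields a tail bound for $\|\Delta_\I\|_{\mathrm{op}}$ strong enough to close this union bound, and the blocking/truncation argument you allude to is precisely the missing (and hard) piece. The paper circumvents it entirely: rather than a whitened resolvent event, it controls the weighted operator $\|T^{\nu}(T+\lambda_{t-s}\boldI)^{-1}(T-T_\st)T^{-\nu}\|_{\mathrm{op}}$ with $0<\nu<1/2-1/(4r)$ (so that $\sum_m\mathfrak{s}_m^{1-2\nu}<\infty$) by a Hilbert--Schmidt-type sum of squared entries $\lb\phi_j,(T-T_\st)\phi_m\rb_\lt^2$, each of which requires only the second-moment bound of \Cref{lemma: general markov type bound I} together with the running-sum maximal inequality of \Cref{lemma: maximal inequality useful}; the resulting norm is $O_p(\delta_{t-s}^{\nu}\sqrt{\log^{1+\xi}(t-s)})=o_p(1)$ uniformly in $t$, so the term involving $(T-T_\st)(\fhast-f^*_\st)$ is absorbed into the left-hand side by the self-bounding step in \Cref{lemma: mu Excess risk}, and no event of the form $\|\Delta_\I\|_{\mathrm{op}}\le 1/2$ is ever needed. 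To complete your write-up, either replace $\mathcal E_\I$ by this weighted Hilbert--Schmidt bound plus the bootstrap, or simply reduce to \Cref{lemma: general excess risk type} as the paper does.
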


\
\\
\begin{lemma}\label{lemma: empirical excess risk}
    Let $\xi >0$. Suppose $(s,e] \subset (0,n]$. Then
    $$
    \max_{s < t\le e }\lft( \frac{\delta_{t-s}^{-1}}{\log^{1+\xi} (t-s)} \rgt) \shast \left[\bhast - \beta^*_\st,\bhast - \beta^*_\st \right] = O_p(1).
    $$
\end{lemma}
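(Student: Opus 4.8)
The plan is to deduce the empirical excess risk bound from its population counterpart, \Cref{lemma: Excess Risk}, by quantifying how far the empirical covariance operator $\shast$ sits from the population operator $\Sigma$ when tested against the data-dependent direction $\bhast-\beta^*_\st$. Write $u=\fhast-f^*_\st\in\lt$, where $f^*_\st=L_\kmh(\beta^*_\st)$ and $\fhast=L_\kmh(\bhast)$ as in \eqref{def: bhast to fhast}; let $T$ be as in \eqref{eq: def T} and $T_\st:=L_{\kh\shast\kh}$ its empirical version as in \eqref{eq: sample operator action}. Then \eqref{eq: sigma to T} and \eqref{eq: sample operator action} give $\shast[\bhast-\beta^*_\st,\bhast-\beta^*_\st]=\lb T_\st u,u\rb_\lt$ and $\Sigma[\bhast-\beta^*_\st,\bhast-\beta^*_\st]=\lb Tu,u\rb_\lt$, so
\[
\shast[\bhast-\beta^*_\st,\bhast-\beta^*_\st]=\Sigma[\bhast-\beta^*_\st,\bhast-\beta^*_\st]+\lb (T_\st-T)u,u\rb_\lt .
\]
After multiplying by $\delta_{t-s}^{-1}/\log^{1+\xi}(t-s)$ and maximising over $s<t\le e$, the first term is $O_p(1)$ by \Cref{lemma: Excess Risk}, so it suffices to prove that $\max_{s<t\le e}\big(\delta_{t-s}^{-1}/\log^{1+\xi}(t-s)\big)\,|\lb (T_\st-T)u,u\rb_\lt|=o_p(1)$.

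Next I would precondition by $(T+\lambda_\st\boldI)$ (recall $\lambda_\st\asymp\delta_{t-s}$). By Cauchy--Schwarz in $\lt$,
\[
|\lb (T_\st-T)u,u\rb_\lt|\le\big\|(T+\lambda_\st\boldI)^{-1/2}(T_\st-T)(T+\lambda_\st\boldI)^{-1/2}\big\|_{\op}\,\big\|(T+\lambda_\st\boldI)^{1/2}u\big\|_\lt^2 .
\]
For the energy factor, $\|(T+\lambda_\st\boldI)^{1/2}u\|_\lt^2=\lb Tu,u\rb_\lt+\lambda_\st\|u\|_\lt^2=\Sigma[\bhast-\beta^*_\st,\bhast-\beta^*_\st]+\lambda_\st\|\bhast-\beta^*_\st\|_\hk^2$, using that $L_\kh$ is an isometry from $\lt$ onto $\hk$. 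The first summand is $O_p(\delta_{t-s}\log^{1+\xi}(t-s))$ uniformly in $t$ by \Cref{lemma: Excess Risk}; the second is of the same order because the optimality of $\bhast$ in \eqref{def: bhast} together with $\beta^*_j\in\hk$ (\Cref{assume: model assumption spectral}(i)) and the spectral decay (\Cref{assume: model assumption spectral}(ii)) yields $\|\bhast-\beta^*_\st\|_\hk^2=O_p(\log^{1+\xi}(t-s))$ uniformly, and $\lambda_\st\asymp\delta_{t-s}$. Hence $\max_{s<t\le e}\big(\delta_{t-s}^{-1}/\log^{1+\xi}(t-s)\big)\|(T+\lambda_\st\boldI)^{1/2}u\|_\lt^2=O_p(1)$, and the claim reduces to the operator-norm estimate $\max_{s<t\le e}\|(T+\lambda_\st\boldI)^{-1/2}(T_\st-T)(T+\lambda_\st\boldI)^{-1/2}\|_{\op}=o_p(1)$.

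The remaining, and main, obstacle is this uniform preconditioned concentration. For a fixed interval length $\ell=t-s$ the effective dimension of $T$ at level $\lambda_\st$ is $\asymp\lambda_\st^{-1/(2r)}\asymp\ell^{1/(2r+1)}$, so a Bernstein-type bound for sums of the self-adjoint rank-one operators $X_j\otimes X_j$ (valid under the sixth-moment condition \Cref{assume: model assumption flr CP}(i) and the polynomially decaying $\alpha$-mixing \Cref{assume: model assumption flr CP}(iii)) gives a norm of order $\sqrt{\ell^{1/(2r+1)}\log\ell\,/\,\ell}=\sqrt{\delta_\ell\log\ell}$, which is $o(1)$ since $r>1$; a peeling argument over dyadic ranges of $\ell$, with the surplus $\log^{\xi}\ell$ in the normalisation absorbing the union bound, upgrades this to the uniform statement. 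It is worth noting that the cruder route through $\|T_\st-T\|_{\op}\|u\|_\lt^2$ fails: $\|T_\st-T\|_{\op}$ only concentrates at rate $\ell^{-1/2}$, which exceeds $\delta_\ell=\ell^{-2r/(2r+1)}$ when $r>1$, so the preconditioning (and its uniform version) is genuinely needed. Combining the three displays gives $\max_{s<t\le e}\big(\delta_{t-s}^{-1}/\log^{1+\xi}(t-s)\big)\shast[\bhast-\beta^*_\st,\bhast-\beta^*_\st]=O_p(1)$.
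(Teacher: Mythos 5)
Your top-level decomposition is the same as the paper's: write $\shast[\bhast-\beta^*_\st,\bhast-\beta^*_\st]=\lb T_\st u,u\rb_\lt$ with $u=\fhast-f^*_\st$, split $T_\st=(T_\st-T)+T$, and dispose of the $T$ part by \Cref{lemma: Excess Risk}. Your energy bound is also fine: $\|(T+\lambda_{t-s}\boldI)^{1/2}u\|_\lt^2=O_p(\delta_{t-s}\log^{1+\xi}(t-s))$ uniformly is exactly the paper's \Cref{lemma: general excess risk type} with $a=0$, $b=1/2$ (your justification via optimality of $\bhast$ is only sketched, but the conclusion is available). The divergence, and the genuine gap, is in how the cross term $\lb(T_\st-T)u,u\rb_\lt$ is controlled.

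You reduce everything to the uniform symmetric preconditioned concentration
$\max_{s<t\le e}\|(T+\lambda_{t-s}\boldI)^{-1/2}(T_\st-T)(T+\lambda_{t-s}\boldI)^{-1/2}\|_{\mathrm{op}}=o_p(1)$,
and assert it via "a Bernstein-type bound for sums of self-adjoint rank-one operators valid under the sixth-moment condition and polynomially decaying $\alpha$-mixing." That is the step that does not go through as stated. Operator-Bernstein inequalities of the Tropp/Minsker or Caponnetto--De Vito type require boundedness or sub-exponential tails of $\|X_j\|$ (and typically independence); under \Cref{assume: model assumption flr CP} one only has sixth moments and $\alpha$-mixing, so exponential concentration in operator norm is not available, and a second-moment (Markov/maximal-inequality) argument — the only tool compatible with these assumptions — cannot reach the operator norm of the fully symmetrized object. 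This is visible in the paper's own \Cref{lemma: operator bound}: it controls $\|T^{a}(T+\lambda_{t-s}\boldI)^{-b}(T_\st-T)T^{-p}\|_{\mathrm{op}}$ only for $p\in\{0,\nu\}$ with $\nu<\tfrac12-\tfrac1{4r}$, because its proof needs $\sum_m \mathfrak{s}_m^{1-2p}<\infty$; taking $p=1/2$ (which your symmetric preconditioning effectively requires) makes that series diverge. The paper sidesteps this precisely by never bounding the symmetrized operator norm: it writes $\lb(T_\st-T)u,u\rb_\lt\le\|T^{\nu}u\|_\lt\,\|T^{-\nu}(T_\st-T)u\|_\lt$ with a small $\nu>0$, controls $\|T^{\nu}u\|_\lt=o_p(1)$ by \Cref{lemma: mu Excess risk}, and controls $\|T^{-\nu}(T_\st-T)u\|_\lt=O_p(\delta_{t-s}\log^{1+\xi}(t-s))$ by \Cref{lemma: general emp/consistency type}, which factors through the one-sidedly preconditioned operator bound applied to the specific random direction $u$. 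To repair your argument you would either need to prove the symmetric operator-norm bound under the paper's weak assumptions (which I do not believe is possible by the methods available here), or retreat to an asymmetric weighting of the two copies of $u$ as the paper does.
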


\
\\
\begin{lemma}\label{lemma: consistency flr}
    Let $\xi>0$. Suppose $(s,e] \subset (0,n]$. Then
    $$
    \max_{s < t\le e } \lft( \frac{\delta_{t-s}^{-1}}{\log^{1+\xi} (t-s)} \rgt) \frac{1}{t-s} \sum_{j=s+1}^t \lb X_j, \bhast - \beta^*_\st \rb_\lt \eps_j = O_p(1). 
    $$
\end{lemma}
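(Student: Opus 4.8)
The plan is to pass to the $\lt$-geometry determined by $K$ and exploit the spectral alignment of $\kh\sx\kh$ from \Cref{assume: model assumption spectral}. Put $\fhast = L_\kmh(\bhast)$ and $f^*_\st = L_\kmh(\beta^*_\st)$ as in \eqref{def: bhast to fhast}, so that $\bhast - \beta^*_\st = L_\kh(\fhast - f^*_\st)$ and hence $\lb X_j, \bhast - \beta^*_\st\rb_\lt = \lb L_\kh X_j, \fhast - f^*_\st\rb_\lt$. Let $\{\phi_l\}_{l\ge1}$ and $\{\mathfrak{s}_l\}_{l\ge1}$ be the eigenfunctions and eigenvalues of $T = L_{\kh\sx\kh}$, and expand $\fhast - f^*_\st = \sum_{l\ge1}\widehat a_l^\st\phi_l$ with $\widehat a_l^\st = \lb\fhast - f^*_\st,\phi_l\rb_\lt$. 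Then the quantity to bound is $\sum_{l\ge1}\widehat a_l^\st W_l^\st$, where $W_l^\st = (t-s)^{-1}\sum_{j=s+1}^t\lb L_\kh X_j,\phi_l\rb_\lt\eps_j$. A crude Cauchy--Schwarz against $\sum_j\eps_j^2$ only gives the rate $\sqrt{\delta_{t-s}\log^{1+\xi}(t-s)}$, which is too weak; the point is a \emph{weighted} split that captures the eigen-alignment.

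The key step is the inequality
\[
\Big|\sum_{l\ge1}\widehat a_l^\st W_l^\st\Big|\;\le\;\Big(\sum_{l\ge1}(\widehat a_l^\st)^2(\mathfrak{s}_l+\lambda_{t-s})\Big)^{1/2}\Big(\sum_{l\ge1}\frac{(W_l^\st)^2}{\mathfrak{s}_l+\lambda_{t-s}}\Big)^{1/2}.
\]
By \eqref{eq: sigma to T} the first factor equals $\Sigma[\bhast-\beta^*_\st,\bhast-\beta^*_\st]+\lambda_{t-s}\|\bhast-\beta^*_\st\|_\hk^2$: the first summand is $O_p(\delta_{t-s}\log^{1+\xi}(t-s))$ uniformly in $t\in\se$ by \Cref{lemma: Excess Risk}, and the penalized summand is of the same order uniformly, obtained by comparing the objective in \eqref{def: bhast} at $\bhast$ with its value at $\beta^*_\st$ together with $\max_{1\le j\le n}\|\beta^*_j\|_\hk=O(1)$ (the penalty-term analogue of \Cref{lemma: Excess Risk}, proved the same way). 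For the second factor, $\E[\lb L_\kh X_j,\phi_l\rb_\lt^2]=\lb T\phi_l,\phi_l\rb_\lt=\mathfrak{s}_l$, so by the moment and $\alpha$-mixing hypotheses of \Cref{assume: model assumption flr CP}, used exactly as in \Cref{lemma: general markov type bound II}, $\E[(W_l^\st)^2]=O(\mathfrak{s}_l/(t-s))$; then with $\mathfrak{s}_l\asymp l^{-2r}$ and $\lambda_{t-s}\asymp(t-s)^{-2r/(2r+1)}$ one gets $\E\big[\sum_{l}(W_l^\st)^2/(\mathfrak{s}_l+\lambda_{t-s})\big]=O\big((t-s)^{-1}\sum_l\mathfrak{s}_l/(\mathfrak{s}_l+\lambda_{t-s})\big)=O\big((t-s)^{-1}\lambda_{t-s}^{-1/(2r)}\big)=O(\delta_{t-s})$, since the effective dimension is $\lambda_{t-s}^{-1/(2r)}\asymp(t-s)^{1/(2r+1)}$. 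A maximal inequality over the dyadic ranges of $t\in\se$ (as in \Cref{lemma: maximal inequality useful} and \Cref{lemma: max to Op by peeling}) upgrades this in-expectation bound to $\sup_{s<t\le e}\sum_l(W_l^\st)^2/(\mathfrak{s}_l+\lambda_{t-s})=O_p(\delta_{t-s}\log^{1+\xi}(t-s))$; the extra $\log^{1+\xi}$ is precisely the price of uniformity and matches the rate claimed. Multiplying the two $O_p(\sqrt{\delta_{t-s}\log^{1+\xi}(t-s)})$ factors finishes the proof. (An alternative is to differentiate the objective in \eqref{def: bhast}, which expresses the target as $\shast[\bhast-\beta^*_\st,\bhast-\beta^*_\st]$ plus a penalty term plus a bias cross-term; these are controlled by \Cref{lemma: empirical excess risk} and \Cref{lemma: unfold empirical bound} respectively, but the spectral split is cleaner and self-contained.)

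The main obstacle is the second factor: $\{W_l^\st\}$ is an infinite family of partial sums that must be controlled \emph{simultaneously} over all $l\ge1$ and all $t\in\se$, so coordinatewise bounds do not suffice. The resolution is the weighting by $(\mathfrak{s}_l+\lambda_{t-s})^{-1}$, which forces the $l$-sum of variances to converge at rate $\delta_{t-s}$ through the effective-dimension computation; one must separately note that the tail $l\gg(t-s)^{1/(2r+1)}$, where $\mathfrak{s}_l\ll\lambda_{t-s}$, contributes at the same order, after which a single maximal inequality over the finitely many (post-peeling) scales of $t-s$ does the rest. A secondary technical point is establishing the uniform RKHS-norm bound $\lambda_{t-s}\|\bhast-\beta^*_\st\|_\hk^2=O_p(\delta_{t-s}\log^{1+\xi}(t-s))$, which parallels \Cref{lemma: Excess Risk} but must be extracted from the variational characterization of $\bhast$.
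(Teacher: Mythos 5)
Your argument is, at its core, the same as the paper's: both reduce the target to $\lb g_\st, \fhast - f^*_\st\rb_\lt$ with $g_\st = (t-s)^{-1}\sum_{j=s+1}^t \eps_j L_\kh(X_j)$ and then split it by a $\lambda_{t-s}$-weighted Cauchy--Schwarz in the eigenbasis of $T$. The only difference is cosmetic: the paper pairs $T^{-1/4}(T+\lambda_{t-s}\boldI)^{-1/4}g_\st$ with $T^{1/4}(T+\lambda_{t-s}\boldI)^{1/4}(\fhast-f^*_\st)$ (controlled by \Cref{lemma: gt bound} and \Cref{lemma: general excess risk type}), whereas you pair $(T+\lambda_{t-s}\boldI)^{-1/2}g_\st$ with $(T+\lambda_{t-s}\boldI)^{1/2}(\fhast-f^*_\st)$; both weightings give each factor the rate $O_p(\sqrt{\delta_{t-s}\log^{1+\xi}(t-s)})$ via the same effective-dimension computation $\sum_{l}\mathfrak{s}_l^{\varphi}/(\mathfrak{s}_l+\lambda_{t-s})^{\varphi}=O(\lambda_{t-s}^{-1/2r})$ and the same maximal-inequality upgrade to uniformity in $t$.

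The one step you should not derive the way you sketched is the penalty bound $\lambda_{t-s}\|\bhast-\beta^*_\st\|_\hk^2=O_p(\delta_{t-s}\log^{1+\xi}(t-s))$. Comparing the objective in \eqref{def: bhast} at $\bhast$ with its value at $\beta^*_\st$ gives
\begin{align*}
\lambda_{t-s}\|\bhast\|_\hk^2 \le{}& \lambda_{t-s}\|\beta^*_\st\|_\hk^2 - \shast\lft[\bhast-\beta^*_\st,\bhast-\beta^*_\st\rgt] \\
&+ \frac{2}{t-s}\sum_{j=s+1}^t\lb X_j,\bhast-\beta^*_\st\rb_\lt\eps_j + \frac{2}{t-s}\sum_{j=s+1}^t\lb X_j,\bhast-\beta^*_\st\rb_\lt\lb X_j,\beta^*_j-\beta^*_\st\rb_\lt,
\end{align*}
and the third term on the right-hand side is precisely the quantity this lemma asserts a bound for, so that derivation is circular. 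The fix is already available in the paper and does not pass through the basic inequality at all: \Cref{lemma: general excess risk type} with $(a,b)=(0,1/2)$ yields $\|(T+\lambda_{t-s}\boldI)^{1/2}(\fhast-f^*_\st)\|_\lt = O_p(\sqrt{\delta_{t-s}\log^{1+\xi}(t-s)})$ directly from the expansion of \Cref{lemma: bias expansion} and the operator/noise bounds, and $\|(T+\lambda_{t-s}\boldI)^{1/2}(\fhast-f^*_\st)\|_\lt^2 = \Sigma[\bhast-\beta^*_\st,\bhast-\beta^*_\st]+\lambda_{t-s}\|\bhast-\beta^*_\st\|_\hk^2$ is exactly your first factor. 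With that substitution your proof closes.
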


\
\\
\begin{lemma}\label{lemma: unfold empirical bound}
    Let $\xi>0$. Suppose $(s,e] \subset (0,n]$. Then
    $$
        \max_{s<t\le e} \lft( \frac{\delta_{t-s}^{-1}}{\log^{1+\xi} (t-s)} \rgt) \frac{1}{t-s}\sum_{j = s+1}^t \lft\lb X_j, \bhast - \beta^*_\st \rgt\rb_\lt \lft\lb X_j, \beta_j^* - \beta_\st^* \rgt\rb_\lt = O_p(1).
    $$
\end{lemma}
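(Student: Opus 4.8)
The plan is to pass to the $\lt$--picture of the RKHS and then exploit the normal equations that define $\bhast$. Writing $h_j = L_\kh(X_j)$, $\fhast = L_\kmh(\bhast)$, $f^*_\st = L_\kmh(\beta^*_\st)$, $f^*_j = L_\kmh(\beta^*_j)$ and $\delta_j = f^*_j - f^*_\st$, and using that $L_\kmh$ is an isometry from $\hk$ onto $\lt$ together with $\lb X_j,\beta\rb_\lt = \lb h_j, L_\kmh(\beta)\rb_\lt$ for $\beta\in\hk$, the quantity to be controlled equals $\lb \fhast - f^*_\st,\Omega_\st\rb_\lt$, where
\[
\Omega_\st \;=\; \frac{1}{t-s}\sum_{j=s+1}^t \lb h_j,\delta_j\rb_\lt\, h_j
\]
is the ``heterogeneity bias'' vector. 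Thus it suffices to show $\max_{s<t\le e}\frac{\delta_{t-s}^{-1}}{\log^{1+\xi}(t-s)}\,\bigl|\lb \fhast - f^*_\st,\Omega_\st\rb_\lt\bigr| = O_p(1)$; note that on intervals $\st$ without a change point $\delta_j\equiv 0$, so the content is in the change-point case.

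The key observation is that $\Omega_\st$ itself appears in the first-order condition of \eqref{def: fhast} (valid by the representer theorem): with $T_\st = L_{\kh\shast\kh}$ as in \eqref{eq: sample operator action}, substituting $y_j = \lb h_j, f^*_j\rb_\lt + \eps_j$ and using $(t-s)^{-1}\sum_j \lb h_j,f^*_j\rb_\lt h_j = T_\st f^*_\st + \Omega_\st$ I would obtain
\[
(T_\st + \lambda_\st\boldI)\lft(\fhast - f^*_\st\rgt) \;=\; -\lambda_\st f^*_\st + \Omega_\st + \frac{1}{t-s}\sum_{j=s+1}^t \eps_j h_j .
\]
Solving for $\Omega_\st$, pairing with $\fhast - f^*_\st$, and using $\lb \fhast-f^*_\st,\, T_\st(\fhast-f^*_\st)\rb_\lt = \shast[\bhast-\beta^*_\st,\bhast-\beta^*_\st]$, $\;\|\fhast-f^*_\st\|_\lt = \|\bhast-\beta^*_\st\|_\hk$, and $\lb h_j,\fhast-f^*_\st\rb_\lt = \lb X_j,\bhast-\beta^*_\st\rb_\lt$, this yields
\begin{align*}
\lb \fhast - f^*_\st,\Omega_\st\rb_\lt = {}& \shast[\bhast-\beta^*_\st,\bhast-\beta^*_\st] + \lambda_\st\|\bhast-\beta^*_\st\|^2_\hk \\
& {}+ \lambda_\st\lb \fhast-f^*_\st,f^*_\st\rb_\lt - \frac{1}{t-s}\sum_{j=s+1}^t\eps_j\lb X_j,\bhast-\beta^*_\st\rb_\lt .
\end{align*}

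It then remains to bound the four terms uniformly over $s<t\le e$. The last term is exactly $O_p(\delta_{t-s}\log^{1+\xi}(t-s))$ by \Cref{lemma: consistency flr}, and the first is $O_p(\delta_{t-s}\log^{1+\xi}(t-s))$ by \Cref{lemma: empirical excess risk}. For the penalty term I would use the basic inequality for \eqref{def: fhast} (comparing $\bhast$ with $\beta^*_\st$ and bounding $(t-s)^{-1}\sum_j(y_j-\lb X_j,\beta^*_\st\rb_\lt)^2 \le 2(t-s)^{-1}\sum_j\eps_j^2 + 2(t-s)^{-1}\sum_j\lb X_j,\beta^*_j-\beta^*_\st\rb^2_\lt$ by a maximal inequality) to get $\|\bhast\|_\hk = O_p(1)$ uniformly; combined with $\|\beta^*_\st\|_\hk\le\max_{1\le j\le n}\|\beta^*_j\|_\hk = O(1)$ and $\lambda_\st\asymp\delta_{t-s}$, this gives $\lambda_\st\|\bhast-\beta^*_\st\|^2_\hk = O_p(\delta_{t-s}\log^{1+\xi}(t-s))$. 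Finally, by Cauchy--Schwarz and $\|f^*_\st\|_\lt = \|\beta^*_\st\|_\hk = O(1)$,
\[
\bigl|\lambda_\st\lb \fhast-f^*_\st,f^*_\st\rb_\lt\bigr| \le \sqrt{\lambda_\st}\,\sqrt{\lambda_\st\|\bhast-\beta^*_\st\|^2_\hk}\;O(1) = O_p\!\lft(\delta_{t-s}\sqrt{\log^{1+\xi}(t-s)}\rgt),
\]
again uniformly. Summing the four contributions and rescaling by $\delta_{t-s}^{-1}/\log^{1+\xi}(t-s)$ finishes the proof.

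The main obstacle is the penalty contribution $\lambda_\st\|\bhast-\beta^*_\st\|^2_\hk$: it is not part of the statement of \Cref{lemma: empirical excess risk}, so it has to be produced separately, and one must check that the auxiliary estimate $\|\bhast\|_\hk = O_p(1)$ does not itself rely on the present lemma --- it does not, since only maximal inequalities for $\sum_j\eps_j^2$ and for $\sum_j\lb X_j,\beta^*_j-\beta^*_\st\rb_\lt^2$ enter. A secondary point is to confirm that each $O_p$ bound is genuinely uniform in $t$ with the prescribed $\log^{1+\xi}(t-s)$ normalization, which is already built into \Cref{lemma: empirical excess risk} and \Cref{lemma: consistency flr} through the maximal-inequality lemmas they invoke.
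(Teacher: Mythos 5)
Your reduction is a genuinely different route from the paper's: you use the normal equations for \eqref{def: fhast} to write the target exactly as
$\shast[\bhast-\beta^*_\st,\bhast-\beta^*_\st]+\lambda_{t-s}\|\bhast-\beta^*_\st\|^2_\hk+\lambda_{t-s}\lb \fhast-f^*_\st,f^*_\st\rb_\lt-(t-s)^{-1}\sum_j\eps_j\lb X_j,\bhast-\beta^*_\st\rb_\lt$, whereas the paper first centers the sum (using $\sum_j\Sigma[\cdot,\beta^*_j-\beta^*_\st]=0$), writes it as $\lb G_\st,\fhast-f^*_\st\rb_\lt$ for a mean-zero empirical process $G_\st$, and splits by Cauchy--Schwarz with the weights $T^{\mp 1/4}(T+\lambda_{t-s}\boldI)^{\mp 1/4}$, invoking \Cref{lemma: Ht bound} and \Cref{lemma: general excess risk type}. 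Your identity is correct, and the first and fourth terms are indeed exactly \Cref{lemma: empirical excess risk} and \Cref{lemma: consistency flr}.

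The gap is in the penalty term. The basic inequality you invoke only yields $\lambda_{t-s}\|\bhast\|^2_\hk\le (t-s)^{-1}\sum_j(y_j-\lb X_j,\beta^*_\st\rb_\lt)^2+\lambda_{t-s}\|\beta^*_\st\|^2_\hk=O_p(1)$, i.e.\ $\|\bhast\|_\hk=O_p(\lambda_{t-s}^{-1/2})=O_p(\delta_{t-s}^{-1/2})$, which diverges; it does not give $\|\bhast\|_\hk=O_p(1)$. Consequently your argument only shows $\lambda_{t-s}\|\bhast-\beta^*_\st\|^2_\hk=O_p(1)$, whereas you need $O_p(\delta_{t-s}\log^{1+\xi}(t-s))=o_p(1)$ — the bound is off by a factor of $\delta_{t-s}^{-1}$, and the third term inherits the same deficit through its dependence on $\lambda_{t-s}\|\bhast-\beta^*_\st\|^2_\hk$. (Trying to sharpen the basic inequality by keeping the empirical loss at $\bhast$ reintroduces the very cross term the lemma is about, so that route is circular.) The fix is to replace the basic-inequality step by the refined operator analysis already in the paper: \Cref{lemma: general excess risk type} with $a=0$, $b=1$ gives $\|\bhast-\beta^*_\st\|_\hk=\|\fhast-f^*_\st\|_\lt=O_p\lft(\sqrt{\log^{1+\xi}(t-s)}\rgt)$ uniformly in $t$, hence $\lambda_{t-s}\|\bhast-\beta^*_\st\|^2_\hk=O_p(\delta_{t-s}\log^{1+\xi}(t-s))$ and, via your Cauchy--Schwarz step, $|\lambda_{t-s}\lb\fhast-f^*_\st,f^*_\st\rb_\lt|=O_p(\delta_{t-s}\sqrt{\log^{1+\xi}(t-s)})$. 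With that substitution your proof closes; without it, it does not.
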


\
\\
\begin{lemma}\label{lemma: empirical excess risk II}
    Let $\xi>0$. Suppose $(s',e']$ and $\se$ are the subsets of $(0,n]$. Then
    $$
    \max_{\substack{s<t\le e \\ s'<t'\le e'}} \; \frac{1}{\mathfrak{J}(t-s,t'-s')} \; \shat_{(s',t']}\lft[\bhast - \beta^*_\st, \bhast - \beta^*_\st \rgt] = O_p(1),
    $$
    where 
    $$
    \mathfrak{J}(t-s,t'-s') = \lft( 1 + \lft( \frac{\delta_{t'-s'}}{\delta_{t-s}}   \rgt)^{1/4} \rgt) \lft\{ \delta_{t-s}^{1/2}  \delta_{t'-s'}^{1/2} \sqrt{\log^{1+\xi}(t'-s' )} \sqrt{\log^{1+\xi}(t-s )} \rgt\} + \delta_{t-s}\log^{1+\xi} (t-s).
    $$
\end{lemma}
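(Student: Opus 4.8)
The plan is to split the quadratic form into its population value and a sampling fluctuation,
\[
\widehat{\Sigma}_{(s',t']}\lft[\bhast-\beta^*_\st,\bhast-\beta^*_\st\rgt]
=\Sigma\lft[\bhast-\beta^*_\st,\bhast-\beta^*_\st\rgt]
+\lft(\widehat{\Sigma}_{(s',t']}-\Sigma\rgt)\lft[\bhast-\beta^*_\st,\bhast-\beta^*_\st\rgt].
\]
The first summand is precisely the excess risk of the penalised estimator on $\st$, so \Cref{lemma: Excess Risk} gives it is $O_p\bigl(\delta_{t-s}\log^{1+\xi}(t-s)\bigr)$ uniformly over $t\in\se$, which is exactly the additive term $\delta_{t-s}\log^{1+\xi}(t-s)$ in $\mathfrak{J}(t-s,t'-s')$; no dependence on $(s',e']$ enters here. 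All the remaining work is therefore devoted to the deviation term, for which the crude bound using $\|\widehat{\Sigma}_{(s',t']}\|_{\mathrm{op}}=O_p(1)$ and $\|\bhast-\beta^*_\st\|^2_\lt=O_p(\delta_{t-s}^{-1})$ is hopelessly lossy and a resolution-dependent analysis is needed.

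For the deviation term I would pass to the $\lt$-picture. Writing $g=\fhast-f^*_\st=L_{K^{-1/2}}(\bhast-\beta^*_\st)=\sum_{l\ge1}g_l\phi_l$ in the eigenbasis $\{\phi_l\}$ of $T=L_{K^{1/2}\Sigma K^{1/2}}$ from \eqref{eq: def T} and \Cref{assume: model assumption spectral}, one has
\[
\lft(\widehat{\Sigma}_{(s',t']}-\Sigma\rgt)\lft[\bhast-\beta^*_\st,\bhast-\beta^*_\st\rgt]
=\lft\langle\lft(T_{(s',t']}-T\rgt)g,g\rgt\rangle_\lt
=\sum_{l,m\ge1}g_lg_m\,\Xi^{\,s',t'}_{lm},
\qquad \Xi^{\,s',t'}_{lm}=\lft\langle\lft(T_{(s',t']}-T\rgt)\phi_l,\phi_m\rgt\rangle_\lt .
\]
The convergence of the $g$-sum is driven by two controls on the estimation error that I would have in hand: the \emph{fine} bound $\sum_l\mathfrak{s}_lg_l^2=\Sigma[\bhast-\beta^*_\st,\bhast-\beta^*_\st]=O_p(\delta_{t-s}\log^{1+\xi}(t-s))$ from \Cref{lemma: Excess Risk}, and the \emph{crude} bound $\sum_lg_l^2=\|\bhast-\beta^*_\st\|^2_\hk=O_p(\delta_{t-s}^{-1})$, obtained by comparing the penalised objective \eqref{def: bhast} at $\bhast$ against $\beta^*_\st$ (using $\lambda_{t-s}\asymp\delta_{t-s}$ and $\|\beta^*_\st\|_\hk=O(1)$), both uniform in $t$ via \Cref{lemma: maximal inequality useful}. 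Equivalently one may decompose $\bhast-\beta^*_\st$ into its explicit stochastic eigen-expansion plus the bias term, as in \Cref{sec: estimator properties}, and treat $\widehat{\Sigma}_{(s',t']}$ of each piece; the point is the same.

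For the entries, \Cref{assume: model assumption flr CP}(i) gives $\E[\langle X_1,L_{K^{1/2}}\phi_l\rangle_\lt^4]\lesssim\mathfrak{s}_l^2$, so together with \Cref{lemma: general markov type bound I} (its cross-term variant) $\E[(\Xi^{\,s',t'}_{lm})^2]\lesssim(t'-s')^{-1}\mathfrak{s}_l\mathfrak{s}_m$; a peeling argument as in \Cref{lemma: max to Op by peeling} then controls $\sup_{l,m\le L}|\Xi^{\,s',t'}_{lm}|/\sqrt{\mathfrak{s}_l\mathfrak{s}_m}$ and $\sqrt{\sum_{l,m>L}(\Xi^{\,s',t'}_{lm})^2}$ uniformly over $t'\in(s',e']$, up to $\log^{1+\xi}$ factors. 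I would then truncate at a level $L=L(t-s,t'-s')$ and estimate the three blocks of the double sum separately: on $\{l,m\le L\}$ use $\sum_{l\le L}|g_l|\mathfrak{s}_l^{1/2}\le\|T^{1/4}g\|_\lt(\sum_l\mathfrak{s}_l^{1/2})^{1/2}$ (the series converging since $r>1$) together with the interpolation inequality $\|T^{1/4}g\|_\lt^2\le\|g\|_\lt\|T^{1/2}g\|_\lt$ (Hölder in the eigenbasis); on the blocks touching $\{>L\}$ use the tail estimate $\sum_{l>L}|g_l|\mathfrak{s}_l^{1/2}\le\|g\|_\lt(\sum_{l>L}\mathfrak{s}_l)^{1/2}\lesssim\|g\|_\lt L^{-(2r-1)/2}$ and Cauchy–Schwarz against $\sqrt{\sum_{l,m>L}(\Xi^{\,s',t'}_{lm})^2}$. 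Feeding in $\|g\|_\lt^2=O_p(\delta_{t-s}^{-1})$ and $\|T^{1/2}g\|_\lt^2=O_p(\delta_{t-s}\log^{1+\xi}(t-s))$, optimising over $L$, and rewriting $(t'-s')^{-1/2}\asymp\delta_{t'-s'}^{(2r+1)/(4r)}$ and $\delta_{t-s}^{-1/(2r)}\asymp(t-s)^{1/(2r+1)}$, produces exactly the two contributions $\delta_{t-s}^{1/2}\delta_{t'-s'}^{1/2}$ and $\delta_{t-s}^{1/4}\delta_{t'-s'}^{3/4}$ (with the stated logarithmic factors) that make up $(1+(\delta_{t'-s'}/\delta_{t-s})^{1/4})\delta_{t-s}^{1/2}\delta_{t'-s'}^{1/2}$ in $\mathfrak{J}$. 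A final two-parameter peeling over dyadic ranges of $t-s$ and $t'-s'$ gives the claimed uniform $O_p(1)$ after dividing by $\mathfrak{J}$, the extra log factors being already absorbed.

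The step I expect to be the main obstacle is precisely the deviation term: the quadratic form is evaluated at the \emph{random, infinite-dimensional} function $\bhast-\beta^*_\st$ whose defining sample $\st$ may overlap with $(s',t']$, so one cannot condition and use independence, and a naive operator-norm bound is far too crude. The workable route — passing to $T$, using simultaneously the fine ($T^{1/2}$, excess-risk) and crude ($\hk$, roughness) norm controls on the error, truncating the eigen-expansion at a level that balances the two local sample sizes $t-s$ and $t'-s'$, interpolating to get the intermediate $T^{1/4}$ norm, and running the moment/peeling machinery of \Cref{lemma: general markov type bound I}, \Cref{lemma: general markov type bound II}, \Cref{lemma: max to Op by peeling} and \Cref{lemma: maximal inequality useful} — is exactly where the bookkeeping is delicate, and it is this balance that dictates the somewhat unusual shape of $\mathfrak{J}$.
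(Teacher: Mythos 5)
Your overall skeleton agrees with the paper's: split $\shat_{(s',t']}[\cdot,\cdot]$ into $\Sigma[\cdot,\cdot]$ plus the fluctuation $(\shat_{(s',t']}-\Sigma)[\cdot,\cdot]$, dispose of the population piece by \Cref{lemma: Excess Risk} (this is exactly the additive $\delta_{t-s}\log^{1+\xi}(t-s)$ term), and recognise that the fluctuation must be controlled through weighted norms of $g=\fhast-f^*_\st$ because $(s,t]$ and $(s',t']$ may overlap. However, your treatment of the fluctuation has a concrete gap: the crude control $\|g\|_\lt^2=\|\bhast-\beta^*_\st\|_\hk^2=O_p(\delta_{t-s}^{-1})$ obtained by comparing the penalised objective at $\bhast$ and at $\beta^*_\st$ is too weak for your truncation to close. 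Feeding it into your interpolation gives only $\|T^{1/4}g\|_\lt^2\le\|g\|_\lt\|T^{1/2}g\|_\lt=O_p\bigl(\delta_{t-s}^{-1/2}\cdot\delta_{t-s}^{1/2}\sqrt{\log^{1+\xi}(t-s)}\bigr)=O_p\bigl(\sqrt{\log^{1+\xi}(t-s)}\bigr)$, and then the head block $\{l,m\le L\}$ is of order $L\,(t'-s')^{-1/2}\sqrt{\log^{1+\xi}(t-s)}$; matching this to $\delta_{t-s}^{1/2}\delta_{t'-s'}^{1/2}$ forces $L\lesssim\delta_{t-s}^{1/2}(t'-s')^{1/(2(2r+1))}$, which is $<1$ when $t-s\asymp t'-s'$ and $r>1/2$, so there is no admissible truncation level and the "optimisation over $L$" cannot produce the stated rate. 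What is actually needed is $\|g\|_\lt=O_p\bigl(\sqrt{\log^{1+\xi}(t-s)}\bigr)$ uniformly in $t$, i.e.\ the case $a=0$, $b=1$ of \Cref{lemma: general excess risk type}; this is not a one-line objective comparison but requires the bias--variance decomposition of \Cref{lemma: bias expansion} together with the operator perturbation bounds, so your proposal implicitly presupposes the hardest ingredient.

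For comparison, the paper's proof avoids any eigen-truncation. It writes the fluctuation as $\lb T^{-\nu}(T_{(s',t']}-T)g,\,T^{\nu}g\rb_\lt$ with $0<\nu<1/2-1/(4r)$, bounds $\|T^{\nu}g\|_\lt=O_p(1)$, and then factorises
$\|T^{-\nu}(T_{(s',t']}-T)g\|_\lt\le\bigl(1+(\delta_{t'-s'}/\delta_{t-s})^{1/4}\bigr)\,\|T^{-\nu}(T_{(s',t']}-T)(T+\lambda_{t'-s'}\boldI)^{-1/4}T^{-1/4}\|_{\mathrm{op}}\cdot\|T^{1/4}(T+\lambda_{t-s}\boldI)^{1/4}g\|_\lt$
via \Cref{lemma:lambda transfer useful}; \Cref{lemma: operator bound} gives the operator norm as $O_p(\delta_{t'-s'}^{1/2}\sqrt{\log^{1+\xi}(t'-s')})$ and \Cref{lemma: general excess risk type} gives the weighted norm of $g$ as $O_p(\delta_{t-s}^{1/2}\sqrt{\log^{1+\xi}(t-s)})$, which is precisely where the product term and the prefactor $(1+(\delta_{t'-s'}/\delta_{t-s})^{1/4})$ in $\mathfrak{J}$ come from. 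If you want to salvage your route, replace the crude $\hk$-bound by the sharp $\|g\|_\lt$ bound and verify the cross blocks and the regime $t'-s'<t-s$ explicitly; but at that point you are re-deriving the weighted operator-norm estimates that make the paper's two-line Cauchy--Schwarz work.
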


\
\\
\begin{lemma}\label{lemma: consistency II}
    Let $\xi>0$. Suppose $(s',e']$ and $\se$ are the subsets of $(0,n]$. Then
    $$
    \max_{\substack{s<t\le e \\ s'<t'\le e'}} \; \frac{1}{\mathfrak{H}(t-s,t'-s')} \;  \frac{1}{t'-s'} \sum_{j=s'+1}^{t'} \lb X_j, \bhast - \beta^*_\st \rb_\lt \eps_j = O_p(1).
    $$
    where 
    $$
    \mathfrak{H}(t-s,t'-s') = \lft( 1 + \lft(\frac{\delta_{t'-s'}}{\delta_{t-s}}   \rgt)^{1/4} \rgt)  \delta_{t-s}^{1/2}  \delta_{t'-s'}^{1/2} \sqrt{\log^{1+\xi}(t'-s' )} \sqrt{\log^{1+\xi}(t-s )}.
    $$
\end{lemma}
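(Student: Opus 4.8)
The plan is to pass to reproducing-kernel coordinates and expand $\bhast-\beta^*_\st$ in the orthonormal eigenbasis $\{\phi_l\}_{l\ge1}$ of $T=L_{\kh\sx\kh}$ from \Cref{assume: model assumption spectral}. With $f^*_\st=L_\kmh(\beta^*_\st)$ and $\fhast=L_\kmh(\bhast)$ as in \eqref{def: bhast to fhast}, self-adjointness of $L_\kh$ gives $\lb X_j,\bhast-\beta^*_\st\rb_\lt=\lb L_\kh(X_j),\fhast-f^*_\st\rb_\lt$, so writing $\fhast-f^*_\st=\sum_{l\ge1}a_l\phi_l$ (with $a_l=a_l(s,t)$ random) we get
\[
\frac{1}{t'-s'}\sum_{j=s'+1}^{t'}\lb X_j,\bhast-\beta^*_\st\rb_\lt\eps_j=\sum_{l\ge1}a_l\,G_l(s',t'),\qquad G_l(s',t'):=\frac{1}{t'-s'}\sum_{j=s'+1}^{t'}\lb L_\kh(X_j),\phi_l\rb_\lt\eps_j .
\]
The first factor depends on the data only through $(s,t]$, while $G_l(s',t')$ is a mean-zero partial sum of the stationary $\alpha$-mixing sequence $\{\lb L_\kh(X_j),\phi_l\rb_\lt\eps_j\}_j$; the point is that the ensuing Cauchy--Schwarz steps treat the two factors separately, so no conditioning of one interval on the other is needed and the overlap of $(s,t]$ with $(s',t']$ causes no difficulty.

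First I would collect the deterministic input about the coefficients. The identity $\sum_l a_l^2\mathfrak{s}_l=\Sigma[\bhast-\beta^*_\st,\bhast-\beta^*_\st]=\|T^{1/2}(\fhast-f^*_\st)\|_\lt^2$ (see \eqref{eq: sigma to T}--\eqref{def: excess risk st}) is controlled uniformly over $s<t\le e$ at rate $\delta_{t-s}\log^{1+\xi}(t-s)$ by \Cref{lemma: Excess Risk}, and together with the $\lt$-estimation bound on $\fhast-f^*_\st$ proved in this appendix (the same bound underlying \Cref{lemma: empirical excess risk II}) this yields, by interpolation, a uniform control of every intermediate norm $\sum_l a_l^2\mathfrak{s}_l^{\theta}=\|T^{\theta/2}(\fhast-f^*_\st)\|_\lt^2$, $\theta\in[0,1]$. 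On the noise side, $\lb L_\kh(X_1),\phi_l\rb_\lt=\lb X_1,L_\kh\phi_l\rb_\lt$ is a linear functional of $X_1$ with $\E[\lb L_\kh(X_1),\phi_l\rb_\lt^2]=\mathfrak{s}_l$, so by the moment assumptions in \Cref{assume: model assumption flr CP} and the $\alpha$-mixing moment bound (\Cref{lemma: general markov type bound II}) one has $\E[G_l(s',t')^2]=O(\mathfrak{s}_l/(t'-s'))$; feeding this into the peeling/maximal-inequality apparatus (\Cref{lemma: maximal inequality useful}, \Cref{lemma: max to Op by peeling}) I would obtain, uniformly over $s'<t'\le e'$, $\max_{l\le N}(t'-s')\log^{-(1+\xi)}(t'-s')\,G_l(s',t')^2/\mathfrak{s}_l=O_p(N)$ and $(t'-s')\log^{-(1+\xi)}(t'-s')\sum_{l>N}G_l(s',t')^2=O_p\big(\sum_{l>N}\mathfrak{s}_l\big)=O_p(N^{1-2r})$, using $\mathfrak{s}_l\asymp l^{-2r}$ and $r>1$.

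With these uniform ingredients I would split $\sum_l a_lG_l=\sum_{l\le N}+\sum_{l>N}$ at a frequency level $N=N(t-s,t'-s')$ and apply Cauchy--Schwarz with a weight matched to each piece. For $l\le N$, weighting by $\mathfrak{s}_l$ gives $|\sum_{l\le N}a_lG_l|\le(\sum_l a_l^2\mathfrak{s}_l)^{1/2}(\sum_{l\le N}G_l^2\mathfrak{s}_l^{-1})^{1/2}$; for $l>N$, weighting by a small admissible exponent $\theta$ (admissible precisely because $r>1$ makes $\sum_l\mathfrak{s}_l^{1-\theta}<\infty$) gives $|\sum_{l>N}a_lG_l|\le(\sum_l a_l^2\mathfrak{s}_l^{\theta})^{1/2}(\sum_{l>N}G_l^2\mathfrak{s}_l^{-\theta})^{1/2}$. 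Plugging in the bounds above turns the right-hand side into an explicit function of $N$, $\delta_{t-s}$, $\delta_{t'-s'}$ and logarithmic factors; choosing $N$ proportional to an appropriate power of $t'-s'$ (the power, and $\theta$, depending only mildly on the ratio $\delta_{t'-s'}/\delta_{t-s}$) balances the two pieces and reproduces exactly the majorant $\mathfrak{H}(t-s,t'-s')$ — in particular the factor $1+(\delta_{t'-s'}/\delta_{t-s})^{1/4}$ is the trace of this balancing. A final union bound over the (dyadically many) relevant interval pairs, again through \Cref{lemma: max to Op by peeling}, upgrades the estimate to the claimed uniform $O_p(1)$; when $(s,t]=(s',t']$, $\mathfrak{H}$ collapses to $\delta_{t-s}\log^{1+\xi}(t-s)$, consistently with the diagonal statement \Cref{lemma: consistency flr}.

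The main obstacle is the one pervading this appendix: obtaining the maximal bounds on $G_l(s',t')$ simultaneously over all intervals $(s',t']$ and all frequencies $l\le N$ under only polynomially decaying mixing coefficients and sixth moments, which forces one to combine the mixing covariance/Rosenthal-type inequalities with the peeling argument while keeping the $N$-dependence sharp. A secondary difficulty is calibrating $\theta$ and $N$ so that the two Cauchy--Schwarz pieces recombine to $\mathfrak{H}$ rather than a lossier rate — the crude alternative of a single Cauchy--Schwarz against $(\shat_{(s',t']}[\bhast-\beta^*_\st,\bhast-\beta^*_\st])^{1/2}$ via \Cref{lemma: empirical excess risk II} gives only order $\mathfrak{J}^{1/2}$, which is much larger than $\mathfrak{H}$ because it discards the mean-zero cancellation in the $\eps_j$. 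Once these building blocks are in place the assembly is routine.
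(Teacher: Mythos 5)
Your proposal is correct and is essentially the paper's argument: both rewrite the target sum as the inner product $\lb g_{(s',t']},\fhast-f^*_\st\rb_\lt$ with $g_{(s',t']}=\frac{1}{t'-s'}\sum_{j=s'+1}^{t'}\eps_j L_\kh(X_j)$, and then apply a spectrally weighted Cauchy--Schwarz whose weights interpolate between the two bandwidths $\lambda_{t-s}$ and $\lambda_{t'-s'}$ — this is exactly what decouples the two (possibly overlapping) intervals and produces the factor $1+(\delta_{t'-s'}/\delta_{t-s})^{1/4}$. Your hard spectral truncation at $N\asymp\delta_{t'-s'}^{-1/2r}$ with exponents $1$ (low frequencies) and $\theta=1/2$ (high frequencies) is precisely the discretization of the paper's smooth weight split $T^{-1/4}(T+\lambda_{t'-s'}\boldI)^{-1/4}$ versus $T^{1/4}(T+\lambda_{t-s}\boldI)^{1/4}$ obtained from \Cref{lemma:lambda transfer useful}, whose two factors are then controlled uniformly by the pre-established maximal bounds in \Cref{lemma: gt bound} and \Cref{lemma: general excess risk type}.
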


\
\\
\begin{lemma}\label{lemma: unfold empirical bound II}
    Let $\xi>0$. Suppose $(s',e']$ and $\se$ are the subsets of $(0,n]$. Then
    $$
    \max_{\substack{s<t\le e \\ s'<t'\le e'}} \; \frac{1}{\mathfrak{G}(t-s,t'-s')} \;  \frac{1}{t'-s'} \sum_{j=s'+1}^{t'} \lb X_j, \bhast - \beta^*_\st \rb_\lt \lb X_j, \beta^*_{\eta_{k+1}} - \beta^*_{\eta_k} \rb_\lt = O_p(1).
    $$
    where 
    $$
    \mathfrak{G}(t-s,t'-s') = \lft( 1 + \lft(\frac{\delta_{t'-s'}}{\delta_{t-s}}   \rgt)^{1/4} \rgt)  \delta_{t-s}^{1/2}  \delta_{t'-s'}^{1/2} \sqrt{\log^{1+\xi}(t'-s' )} \sqrt{\log^{1+\xi}(t-s )} + \kappa_k \sqrt{\delta_{t-s}\log^{1+\xi} (t-s)}.
    $$
\end{lemma}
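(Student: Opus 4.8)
The plan is to split the bilinear quantity into a population piece along the (non-random) change direction and a centered empirical piece, and to bound each uniformly: the former by Cauchy--Schwarz together with the excess-risk bound of \Cref{lemma: Excess Risk}, the latter by the same machinery that proves \Cref{lemma: consistency II}.

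Write $u_{\st} = \bhast - \beta^*_\st$ and $v = \beta^*_{\eta_{k+1}} - \beta^*_{\eta_k}$, so that $\Sigma[v,v] = \kappa_k^2$ and the quantity inside the maximum equals $\shat_{(s',t']}[u_{\st},v]$. Decompose
\[
\shat_{(s',t']}[u_{\st},v] = \Sigma[u_{\st},v] + \Big(\shat_{(s',t']}[u_{\st},v] - \Sigma[u_{\st},v]\Big).
\]
For the first term, Cauchy--Schwarz with respect to the inner product induced by $\Sigma$ gives $|\Sigma[u_{\st},v]| \le \sqrt{\Sigma[u_{\st},u_{\st}]}\,\sqrt{\Sigma[v,v]} = \kappa_k\sqrt{\Sigma[u_{\st},u_{\st}]}$, and \Cref{lemma: Excess Risk} makes $\Sigma[u_{\st},u_{\st}] = O_p\big(\delta_{t-s}\log^{1+\xi}(t-s)\big)$ hold uniformly over $s<t\le e$; this accounts for the summand $\kappa_k\sqrt{\delta_{t-s}\log^{1+\xi}(t-s)}$ of $\mathfrak{G}$. (Unlike in \Cref{lemma: unfold empirical bound}, where $\sum_j(\beta^*_j-\beta^*_\st)=0$ annihilates the population part, this term genuinely survives here because $v$ is the same function for every $j$.)

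The substance is the centered term, which I would treat exactly as in the proof of \Cref{lemma: consistency II} with the noise $\eps_j$ replaced by $\lb X_j,v\rb_\lt$. For fixed $f\in\lt$, the sequence $\lb X_j,f\rb_\lt\lb X_j,v\rb_\lt - \Sigma[f,v]$ is stationary, mean zero and $\alpha$-mixing, and by \Cref{assume: model assumption flr CP}(i) together with Cauchy--Schwarz its second moment is $O\big(\Sigma[f,f]\,\Sigma[v,v]\big) = O\big(\kappa_k^2\,\Sigma[f,f]\big)$ --- the same, up to the bounded factor $\kappa_k$, as for $\lb X_j,f\rb_\lt\eps_j$; thus $\lb X_j,v\rb_\lt$ inherits the moment and mixing properties of $\eps_j$ that the argument uses. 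One then localizes $u_{\st}$, on a high-probability event, to the ball around the origin supplied by the estimator analysis of \S\ref{sec: estimator properties} (whose $\Sigma$-radius is $\asymp \delta_{t-s}^{1/2}\log^{(1+\xi)/2}(t-s)$), and combines the Markov-type second-moment bounds (\Cref{lemma: general markov type bound I}, \Cref{lemma: general markov type bound II}), the maximal inequality \Cref{lemma: maximal inequality useful} over $t'$, and the peeling argument \Cref{lemma: max to Op by peeling} over both interval lengths $t-s$ and $t'-s'$, to obtain $\shat_{(s',t']}[u_{\st},v] - \Sigma[u_{\st},v] = O_p\big(\mathfrak{H}(t-s,t'-s')\big)$ uniformly --- i.e.\ the first summand of $\mathfrak{G}$. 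Adding the two bounds and dividing by $\mathfrak{G}(t-s,t'-s')$ yields the claim. The delicate point, as in the companion lemmas of \S\ref{sec: FLR}, is that $u_{\st}$ is statistically dependent on the $X_j$ entering the sum when $\st$ and $(s',t']$ overlap, so one cannot condition it out and must run the empirical-process bound simultaneously over the estimator ball and over all relevant pairs of intervals; the only new ingredients relative to \Cref{lemma: consistency II} are the routine verification of the moment and mixing properties of $\lb X_j,v\rb_\lt$ and the separate elementary bound on $\Sigma[u_{\st},v]$.
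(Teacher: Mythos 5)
Your first-level split and your treatment of the population piece coincide with the paper's proof: it also writes the sum as $\what{\Sigma}_{(s',t']}\lft[\bhast-\beta^*_\st,\,\beta^*_{\eta_{k+1}}-\beta^*_{\eta_k}\rgt]$, separates off $\Sigma\lft[\bhast-\beta^*_\st,\,\beta^*_{\eta_{k+1}}-\beta^*_{\eta_k}\rgt]$, and bounds that by Cauchy--Schwarz plus \Cref{lemma: Excess Risk}, which is exactly where the extra summand $\kappa_k\sqrt{\delta_{t-s}\log^{1+\xi}(t-s)}$ of $\mathfrak{G}$ comes from. Where you diverge is the centered term. The paper does not rerun \Cref{lemma: consistency II} with $\lb X_j,v\rb_\lt$ as a noise surrogate; it writes the centered term as $\lb (T_{(s',t']}-T)(\fhast-f^*_\st),\, f^*_{\eta_{k+1}}-f^*_{\eta_k}\rb_\lt$, pulls out the bounded factor $\|f^*_{\eta_{k+1}}-f^*_{\eta_k}\|_\lt$ by plain $\lt$ Cauchy--Schwarz, and controls $\|(T_{(s',t']}-T)(\fhast-f^*_\st)\|_\lt$ via the $\lambda$-transfer inequality (\Cref{lemma:lambda transfer useful}) together with the operator-norm bound \Cref{lemma: operator bound} and the weighted excess-risk bound \Cref{lemma: general excess risk type}; this recycles already-proved operator estimates and needs no new moment computation. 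Your route---treating $\frac{1}{t'-s'}\sum_j\lft(\lb X_j,v\rb_\lt L_\kh(X_j)-T L_\kmh(v)\rgt)$ as the analogue of $g_{(s',t']}$---is also implementable with the paper's toolbox: the vector bound you need is precisely \Cref{lemma: Ht bound} with constant $h_j=L_\kmh(v)$ (admissible since $\Sigma[v,v]=\kappa_k^2$ is bounded), and combined with \Cref{lemma:lambda transfer useful} and \Cref{lemma: general excess risk type} it reproduces the $\mathfrak{H}$ summand; your moment and mixing verification for $\lb X_j,v\rb_\lt$ is the content of \Cref{lemma: general markov type bound I}. So the two approaches differ only in whether the fixed change direction is absorbed into the empirical ``noise'' vector (yours) or pulled out by $\lt$ Cauchy--Schwarz so that only an operator norm remains (the paper's); the paper's version is slightly leaner because it avoids proving anything new about $\lb X_j,v\rb_\lt$.

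Two corrections to your sketch. First, the localization of $\bhast-\beta^*_\st$ must be in the weighted norm $\|T^{1/4}(T+\lambda_{t-s}\boldI)^{1/4}(\fhast-f^*_\st)\|_\lt$ supplied by \Cref{lemma: general excess risk type}, not merely in a $\Sigma$-ball of radius $\asymp\delta_{t-s}^{1/2}\log^{(1+\xi)/2}(t-s)$ as your parenthetical suggests: since the eigenvalues of $T$ decay to zero, a $\Sigma$-ball gives no control of the linear pairing with the centered empirical functional, and it is exactly the $\lambda$-transfer step (implicit in your appeal to the proof of \Cref{lemma: consistency II}) that converts the available weighted-norm control into the stated rate. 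Second, the uniformity over the pair $(t,t')$ follows because the two norm factors are separately uniform in $t'$ and $t$ (through the maximal inequalities inside \Cref{lemma: Ht bound} and \Cref{lemma: general excess risk type}); the peeling lemma \Cref{lemma: max to Op by peeling} is not needed here.
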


\newpage
\subsection{Markov type probability bounds}

\begin{lemma}\label{lemma: general markov type bound I}
    Let $\{\mathfrak{f}_j\}_{j=1}^t$ and $h$ be non-random function in $\lt$. Suppose $\Sigma[\mathfrak{f}_j, \mathfrak{f}_j] \le M < \infty$, for all $1\le j \le t$, where $M$ is some absolute constant. Then
    \begin{equation}\label{eq: markov I first equation}
        \E\lft[ \lft| \sum_{j=1}^t \lb X_j, \mathfrak{f}_j\rb_\lt \lb X_j, h \rb_\lt - \Sigma[\mathfrak{f}_j, h] \rgt|^2\rgt] = O(t) \Sigma[h,h].
    \end{equation}
    When $\mathfrak{f}_1 = \ldots = \mathfrak{f}_t = \mathfrak{f}$, Then
    \begin{equation}\label{eq: markov I second equation}
    \E\lft[ \lft| \sum_{j=1}^t \lb X_j, \mathfrak{f}\rb_\lt \lb X_j, h \rb_\lt - \Sigma[\mathfrak{f}, h] \rgt|^2\rgt] = O(t) \Sigma[h,h]\Sigma[\mathfrak{f},\mathfrak{f}].
    \end{equation}
\end{lemma}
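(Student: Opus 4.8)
The plan is to write $\xi_j = \lb X_j, \mathfrak{f}_j\rb_\lt \lb X_j, h\rb_\lt - \Sigma[\mathfrak{f}_j, h]$ and expand
$$
\E\Bigl[\Bigl|\sum_{j=1}^t \xi_j\Bigr|^2\Bigr] = \sum_{j=1}^t \E[\xi_j^2] + 2\sum_{1\le i<j\le t}\E[\xi_i\xi_j],
$$
noting that $\E[\xi_j]=0$: by stationarity every $X_j$ has covariance operator $\Sigma$, so $\E[\lb X_j,\mathfrak{f}_j\rb_\lt\lb X_j,h\rb_\lt]=\iint \mathfrak{f}_j(u_1)\Sigma(u_1,u_2)h(u_2)\,du_1\,du_2=\Sigma[\mathfrak{f}_j,h]$. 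The two terms in the decomposition are handled separately, and the whole argument rests on a single moment estimate derived from \Cref{assume: model assumption flr CP}(i): by the power-mean inequality and that assumption, $\E[\lb X_j,f\rb_\lt^4]\le(\E[\lb X_j,f\rb_\lt^6])^{2/3}\le c^4(\Sigma[f,f])^2$ and $\E[\lb X_j,f\rb_\lt^6]\le c^6(\Sigma[f,f])^3$ for every $f\in\lt$.

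For the diagonal terms, Cauchy–Schwarz gives $\E[\xi_j^2]\le \E[\lb X_j,\mathfrak{f}_j\rb_\lt^2\lb X_j,h\rb_\lt^2]\le (\E[\lb X_j,\mathfrak{f}_j\rb_\lt^4])^{1/2}(\E[\lb X_j,h\rb_\lt^4])^{1/2}\le c^4\,\Sigma[\mathfrak{f}_j,\mathfrak{f}_j]\,\Sigma[h,h]$. Summing over $j$ contributes $c^4\,t\,(\max_j\Sigma[\mathfrak{f}_j,\mathfrak{f}_j])\,\Sigma[h,h]\le c^4 M\,t\,\Sigma[h,h]$ for the first claim, and $c^4\,t\,\Sigma[\mathfrak{f},\mathfrak{f}]\Sigma[h,h]$ in the case $\mathfrak{f}_1=\dots=\mathfrak{f}_t=\mathfrak{f}$. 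Both are $O(t)$ times the advertised factor.

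For the off-diagonal terms, observe that $\xi_i$ is $\sigma(X_i)$-measurable and $\xi_j$ is $\sigma(X_j)$-measurable, hence for $j>i$ the $\alpha$-mixing coefficient between the two generating $\sigma$-fields is at most $\alpha(j-i)$. I will apply the $\alpha$-mixing covariance inequality (\Cref{lemma: mixing covariance inequality}) with exponent $p=3$: $|\E[\xi_i\xi_j]|=|\mathrm{Cov}(\xi_i,\xi_j)|\le C\,\alpha^{1/3}(j-i)\,\|\xi_i\|_3\|\xi_j\|_3$. The third moment is controlled by the same sixth-moment bound together with the Cauchy–Schwarz inequality for the nonnegative-definite bilinear form $\Sigma[\cdot,\cdot]$: $\|\xi_j\|_3\le(\E[\lb X_j,\mathfrak{f}_j\rb_\lt^6])^{1/6}(\E[\lb X_j,h\rb_\lt^6])^{1/6}+|\Sigma[\mathfrak{f}_j,h]|\le (c^2+1)(\Sigma[\mathfrak{f}_j,\mathfrak{f}_j]\,\Sigma[h,h])^{1/2}$, which is finite since sixth moments of $\lb X_j,f\rb_\lt$ exist, so $p=3$ is admissible. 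Therefore $|\E[\xi_i\xi_j]|\le C\,\alpha^{1/3}(j-i)\,\Sigma[h,h]\,(\max_\ell\Sigma[\mathfrak{f}_\ell,\mathfrak{f}_\ell])$ (respectively $C\,\alpha^{1/3}(j-i)\,\Sigma[h,h]\Sigma[\mathfrak{f},\mathfrak{f}]$), and reindexing by $k=j-i$ bounds $\sum_{1\le i<j\le t}|\E[\xi_i\xi_j]|$ by $C\,t\,\Sigma[h,h]\,(\cdots)\sum_{k\ge1}\alpha^{1/3}(k)$, where $\sum_{k\ge1}\alpha^{1/3}(k)<\infty$ follows from $\sum_k k^{1/3}\alpha^{1/3}(k)<\infty$ in \Cref{assume: model assumption flr CP}(iii). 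Adding the diagonal and off-diagonal contributions yields both \eqref{eq: markov I first equation} and \eqref{eq: markov I second equation}.

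The one place that genuinely needs care — rather than routine bookkeeping — is making the $\Sigma[\cdot,\cdot]$ factors propagate correctly: one has to pass from sixth moments of $\lb X_j,f\rb_\lt$ to $\Sigma[f,f]$ via \Cref{assume: model assumption flr CP}(i), apply Cauchy–Schwarz once for the product $\lb X_j,\mathfrak{f}_j\rb_\lt\lb X_j,h\rb_\lt$ and once for the bilinear form $\Sigma$, and check the exponent $p=3$ is feasible in the mixing covariance inequality. Everything else (the variance expansion, summing a convergent mixing series, collapsing $\max_j\Sigma[\mathfrak{f}_j,\mathfrak{f}_j]\le M$) is mechanical.
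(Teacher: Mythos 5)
Your proposal is correct and follows essentially the same route as the paper's proof: the same variance expansion, the same Cauchy--Schwarz plus sixth-to-second moment bound for the diagonal terms, and the same application of the $\alpha$-mixing covariance inequality with exponents $(3,3,3)$ together with summability of $\alpha^{1/3}(k)$ for the cross terms. The only (minor, and if anything slightly cleaner) difference is that you bound each $|\E[\xi_i\xi_j]|$ uniformly and reindex by $k=j-i$ rather than invoking the stationarity-based variance identity, which sidesteps the fact that the $\xi_j$ need not be stationary when the $\mathfrak{f}_j$ vary.
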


\
\\
\begin{proof}
    Given any sequence of  stationary random variables $\{W_j\}_{j=1}^t$ with finite second moment it holds
    \begin{equation}\label{eq: Variance formula}
        \text{Var}\lft( \sum_{j=1}^t W_j \rgt)^2 =  \sum_{j=1}^t Var\lft(  W_j^2 \rgt) + 2\sum_{j=1}^{t-1} (t - j) Cov\lft(  W_1, W_{1+j} \rgt).  
    \end{equation}
    We are going estblish \eqref{eq: markov I first equation}. Let $z_j = \lb X_j, \mathfrak{f}_j\rb_\lt \lb X_j, h \rb_\lt - \Sigma[\mathfrak{f}_j, h] $. Then
    \begin{align}\nonumber
        \E[z_j^2] &\le \E\lft[ \lb X_j, \mathfrak{f}_i\rb_\lt^2 \lb X_j, h \rb_\lt^2 \rgt]
        \\\nonumber
        & \le \sqrt{\E\lft[ \lb X_j, \mathfrak{f}_j\rb_\lt^4 \rgt]} \sqrt{\E\lft[  \lb X_j, h \rb_\lt^4 \rgt]}
        \\\label{eq: f18a}
        &\le c^2\E\lft[ \lb X_j, \mathfrak{f}_j\rb_\lt^2 \rgt] c^2 \E\lft[ \lb X_j, h \rb_\lt^2 \rgt]
        \\\nonumber
        &= c^4 \Sigma[\mathfrak{f}_j,\mathfrak{f}_j] \Sigma[h,h] \le c^4 M \Sigma[h,h].
    \end{align}
The \eqref{eq: f18a} follows from the \Cref{assume: model assumption flr CP}, where we have the sixth moment bounded by the second moment up to a constant factor $c$.

\
\\
We have
\begin{align}\nonumber
    \E[|z_j z_{j+k}|] = Cov(|z_j|, |z_{j+k}|) \le \|z_j\|_3 \|z_{j+k}\|_{3} \alpha^{1/3} (k),
\end{align}
following from \Cref{lemma: mixing covariance inequality}. Following from
$$
\|z_j\|_3 = \|\lb X_j, \mathfrak{f}_j\rb_\lt \lb X_j, h \rb_\lt - \Sigma[\mathfrak{f}_j, h]\|_3 \le \|\lb X_j, \mathfrak{f}_j\rb_\lt \lb X_j, h \rb_\lt \|_3 +  \|\Sigma[\mathfrak{f}_j, h]\|_3 \le 2 \|\lb X_j, \mathfrak{f}_i\rb_\lt \lb X_j, h \rb_\lt \|_3,
$$
one may write
\begin{align}\nonumber
   \|z_j\|_3 &\le 2\|\lb X_j, \mathfrak{f}_j\rb_\lt \lb X_j, h \rb_\lt \|_3
   \\\nonumber
   & \le 2\|\lb X_j, \mathfrak{f}_j\rb_\lt \|_6 \|\lb X_j, h \rb_\lt\|_6
   \\\nonumber
   & \le 2 c\|\lb X_j, \mathfrak{f}_j\rb_\lt \|_2 c\|\lb X_j, h \rb_\lt\|_2 = 2c^2\sqrt{\Sigma[\mathfrak{f}_i,\mathfrak{f}_j] \Sigma[h,h]} \le 2c^2\sqrt{M} \sqrt{\Sigma[h,h]}.
\end{align}
The last line here follows the same argument as \eqref{eq: f18a}. Similarly
$$
\|z_{1+j}\|_3 \le 2c\sqrt{M} \sqrt{\Sigma[h,h]}.
$$
Therefore 
$$
\E[|z_1 z_{1+j}|] \le 4c^4 M \Sigma[h,h] \alpha^{1/3}(j).
$$
Following \eqref{eq: Variance formula}, one may have the expansion 
\begin{align}\nonumber
    \E\left[ \left| \sum_{j=1}^t z_j \right|^2 \right] &= \E\left[ \sum_{j=1}^t z_j^2 \right] + 2 \sum_{j=1}^{t-1} (t-j) \E\left[  z_1 z_{1+j} \right]
    \\\label{eq: f18b}
    & \le \E\left[ \sum_{j=1}^t z_j^2 \right] + 2 \sum_{k=1}^{t-1} (t-k) \E\left[  |z_i z_{i+k}| \right].
\end{align}
Using \eqref{eq: f18b}, we may write 
\begin{align*}
    \E\lft[\lft|\sum_{j=1}^t z_i\rgt|^2\rgt] &\le \E\left[ \sum_{j=1}^t z_j^2 \right] + 2 \sum_{j=1}^{t-1} (t-j) \E\left[  |z_1 z_{1+j}| \right]
    \\
    &\le \sum_{i=1}^t c^4 M \Sigma[h,h] + 2 \sum_{j=1}^{t-1} (t-k) 4c^4 M \Sigma[h,h] \alpha^{1/3}(j)
    \\
    &\le t c^4 M \Sigma[h,h] + 8 c^4 M t \Sigma[h,h]\sum_{j=1}^{t-1} \alpha^{1/3}(j)
    \\
    &\le t c^4 M \Sigma[h,h] + 8 c^4 M t \Sigma[h,h]\sum_{j=1}^{\infty} \alpha^{1/3}(j) = \lft(t \Sigma[h,h] \rgt) O(1).
\end{align*}
The last line follows from $\sum_{j\ge 1} \alpha^{1/3}(j) <\infty$.

\
\\
The proof for \eqref{eq: markov I second equation} is very similar and therefore omitted.
\end{proof}

\
\\
\begin{lemma}\label{lemma: general markov type bound II}
     Let $h$ be non-random function in $\lt$. Then
     $$
     \E\lft[ \lft| \sum_{j=1}^t \lb X_j, h\rb_\lt\eps_j \rgt|^2\rgt] = O\lft( t \Sigma[h,h] \rgt).
     $$
\end{lemma}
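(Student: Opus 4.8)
The plan is to mirror the argument of Lemma~\ref{lemma: general markov type bound I} almost verbatim, with $z_j := \lb X_j, h\rb_\lt \eps_j$ playing the role of the centered summand. First I would check that $\E[z_j] = 0$: conditioning on $X_j$ and using $\E[\eps_j \mid X_j] = 0$ from \Cref{assume: model assumption flr CP}~(ii) gives $\E[z_j] = \E[\lb X_j, h\rb_\lt \,\E[\eps_j\mid X_j]] = 0$. Since $\{(X_j,\eps_j)\}$ is stationary, $\E[z_j]=0$ and all moments of $z_j$ are constant in $j$. Then I would invoke the same variance-decomposition identity used there, namely for any stationary mean-zero sequence $\{W_j\}$ with finite second moment,
\begin{equation*}
    \E\lft[\lft|\sum_{j=1}^t W_j\rgt|^2\rgt] = \sum_{j=1}^t \E[W_j^2] + 2\sum_{k=1}^{t-1}(t-k)\,\E[W_1 W_{1+k}],
\end{equation*}
so that it suffices to bound $\E[z_1^2]$ and $|\E[z_1 z_{1+k}]|$ appropriately.

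For the diagonal term, I would use Cauchy--Schwarz, $\E[z_1^2] = \E[\lb X_1,h\rb_\lt^2\,\eps_1^2] \le \sqrt{\E[\lb X_1,h\rb_\lt^4]}\,\sqrt{\E[\eps_1^4]}$, then apply Jensen's inequality ($\E[Y^4]\le (\E[Y^6])^{2/3}$) together with \Cref{assume: model assumption flr CP}~(i) to get $\E[\lb X_1,h\rb_\lt^4]\le (\E[\lb X_1,h\rb_\lt^6])^{2/3}\le c^4\,\Sigma[h,h]^2$, and similarly $\E[\eps_1^4]\le (\E[\eps_1^6])^{2/3} = O(1)$ by \Cref{assume: model assumption flr CP}~(ii) and stationarity. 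This yields $\E[z_1^2] = O(\Sigma[h,h])$.

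For the off-diagonal terms, since $\E[z_j]=0$ we have $\E[z_1 z_{1+k}] = \mathrm{Cov}(z_1,z_{1+k})$, and I would apply the $\alpha$-mixing covariance inequality (\Cref{lemma: mixing covariance inequality}, in its signed/Davydov form) to obtain $|\E[z_1 z_{1+k}]| \le C\,\|z_1\|_3\,\|z_{1+k}\|_3\,\alpha^{1/3}(k)$. To control $\|z_1\|_3 = (\E[|\lb X_1,h\rb_\lt|^3|\eps_1|^3])^{1/3}$, I would again use Cauchy--Schwarz, $\E[|\lb X_1,h\rb_\lt|^3|\eps_1|^3]\le \sqrt{\E[\lb X_1,h\rb_\lt^6]}\sqrt{\E[\eps_1^6]} \le c^3\,\Sigma[h,h]^{3/2}\cdot O(1)$, so $\|z_1\|_3 = O(1)\,\sqrt{\Sigma[h,h]}$; by stationarity the same holds for $\|z_{1+k}\|_3$. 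Hence $|\E[z_1 z_{1+k}]| \le C\,\Sigma[h,h]\,\alpha^{1/3}(k)$.

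Combining the pieces,
\begin{equation*}
    \E\lft[\lft|\sum_{j=1}^t z_j\rgt|^2\rgt] \le t\,O(\Sigma[h,h]) + 2t\,C\,\Sigma[h,h]\sum_{k=1}^{\infty}\alpha^{1/3}(k) = O(t\,\Sigma[h,h]),
\end{equation*}
where finiteness of $\sum_{k\ge1}\alpha^{1/3}(k)$ follows from \Cref{assume: model assumption flr CP}~(iii) (it is implied by $\sum_k k^{1/3}\alpha^{1/3}(k)<\infty$). I do not anticipate a genuine obstacle here: the lemma is a direct analogue of \Cref{lemma: general markov type bound I}, and the only mild new point is that the factor $\eps_j$ must be handled via the conditional-moment bound of \Cref{assume: model assumption flr CP}~(ii) plus stationarity to turn "$\E[\eps_j^6\mid X_j]<\infty$" into a uniform-in-$j$ constant, and via one extra Cauchy--Schwarz split in both the second-moment and the $L^3$-norm estimates.
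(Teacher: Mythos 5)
Your proposal is correct and follows essentially the same route as the paper's proof: the same decomposition of the second moment into diagonal and covariance terms, the same $\alpha$-mixing covariance inequality with $p=q=r=3$, and the same use of the sixth-moment assumptions to control $\E[z_1^2]$ and $\|z_1\|_3$. The only cosmetic difference is that you bound the moments of $z_j=\lb X_j,h\rb_\lt\eps_j$ by an unconditional Cauchy--Schwarz split of $X_j$ and $\eps_j$, whereas the paper conditions on $X_j$ and uses $\E[\eps_j^k\mid X_j]=O(1)$; both rest on the same assumptions and give the same bounds.
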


\
\\
\begin{proof}
The proof here closely follows the proof of the \Cref{lemma: general markov type bound I}.
    Let $z_j =  \lb X_j, h\rb_\lt\eps_j $. We can see $\E \lft[z_j\rgt] =0$.

\
\\
Observe that
\begin{equation}\label{eq: cite once}
    \E\lft[ z_j^2 \rgt] = \E\lft[ \lb X_j, h\rb_\lt^2 \E \lft[ \eps_j^2| X_i \rgt]\rgt] \le O(1) \|\lb X_j, h\rb_\lt\|_2^2 = O(1) \Sigma[h, h],  
\end{equation}
here we use the moment assumption outlined \Cref{assume: model assumption flr CP}.

\
\\
Following from
$$
\E\lft[ \lft| z_1 z_{1+j} \rgt| \rgt] \le \|z_1\|_3 \|z_{1+j}\|_3 \alpha^{1/3}(j),
$$
and 
$$
\|z_j\|_3 \le  \lft( \E\lft[ \lb X_j, h\rb_\lt^3 \E \lft[ \eps_j^3| X_j \rgt]\rgt] \rgt)^{1/3} \le O(1) \|\lb X_j, h \rb_\lt\|_3  = O(1) \sqrt{\Sigma[h, h]},
$$
we may have
\begin{equation}\label{eq: proof of long run variance}
\E\lft[ \lft| \lb X_j,h\rb \eps_j  \lb X_{j+k},h\rb \eps_{j+k}  \rgt| \rgt] = \E\lft[ \lft| z_j z_{j+k} \rgt| \rgt] = O( \Sigma[h,h]) \alpha^{1/3}(k).   
\end{equation}

The rest of proof follows from the exactly same arguments as the proof of \Cref{lemma: general markov type bound I} and therefore omitted.

\end{proof}

\subsection{Proofs of Lemmas from \Cref{sec: estimator properties}}
All the proofs in this section used the notations from \Cref{sec: notation E1}.
\subsubsection{Proof of \Cref{lemma: Excess Risk}}

The proof of \Cref{lemma: Excess Risk} follows from \Cref{lemma: general excess risk type} with $a=1/2$ and $b=1$.

\subsubsection{Proof of \Cref{lemma: empirical excess risk}}
\begin{proof}
Let $ 0 < \nu < \frac{1}{2} - \frac{1}{4r}$.
    Observe that
    \begin{align*}
        &\lft( \frac{\delta_{t-s}^{-1}}{\log^{1+\xi} (t-s)} \rgt) \shast \left[\bhast - \beta^*_\st,\bhast - \beta^*_\st \right]
        \\
        =&\lft( \frac{\delta_{t-s}^{-1}}{\log^{1+\xi} (t-s)} \rgt) \lft\lb \fhast - f^*_\st, T_\st (\fhast - f^*_\st) \rgt\rb_\lt
        \\
        =& \lft( \frac{\delta_{t-s}^{-1}}{\log^{1+\xi} (t-s)} \rgt) \lft\{   \lft\lb \fhast - f^*_\st, (T_\st - T) (\fhast - f^*_\st) \rgt\rb_\lt +   \lft\lb \fhast - f^*_\st, T (\fhast - f^*_\st) \rgt\rb_\lt\rgt\}
        \\
        \le & \lft( \frac{\delta_{t-s}^{-1}}{\log^{1+\xi} (t-s)} \rgt) \lft\{ \lft\|T^{\nu} ( \fhast - f^*_\st) \rgt\|_\lt \lft\|T^{-\nu} (T_\st - T)( \fhast - f^*_\st) \rgt\|_\lt +  \lft\|T^{1/2} ( \fhast - f^*_\st) \rgt\|_\lt \rgt\}.
    \end{align*}
    The term on the right is bounded by using \Cref{lemma: mu Excess risk} and \Cref{lemma: general emp/consistency type}. The term on the left is bounded by using \Cref{lemma: Excess Risk}.
\end{proof}

\
\\
\subsubsection{Proof of \Cref{lemma: consistency flr}}
\begin{proof}
    Observe that

\begin{align}\nonumber
        \frac{1}{(t-s)} \sum_{j=s+1}^t \lb X_j, \bhast - \beta^*_\st \rb_\lt \eps_j &= \lft\lb \frac{1}{(t-s)} \sum_{j=s+1}^t X_j \eps_j,  \bhast - \beta^*_\st   \rgt\rb_\lt 
        \\\nonumber
        &= \lft\lb \frac{1}{(t-s)} \sum_{j=s+1}^t L_\kh(X_j) \eps_j,  \fhast - f^*_\st   \rgt\rb_\lt    
        \\\nonumber
        &= \lft\lb g_\st, \fhast - f^*_\st \rgt\rb_\lt
        \\\nonumber
        \nonumber
        &= \lft\lb T^{-1/4} \lft( T + \lambda_{t-s} \rgt)^{-1/4} g_\st, T^{1/4} \lft( T + \lambda_{t-s} \rgt)^{1/4} \fhast - f^*_\st \rgt\rb_\lt
        \\\nonumber
        &\le \lft\| T^{-1/4} \lft( T + \lambda_{t-s} \rgt)^{-1/4} g_\st \rgt\|_\lt \lft\| T^{1/4} \lft( T + \lambda_{t-s} \rgt)^{1/4} \fhast - f^*_\st \rgt\|_\lt
\end{align}
where the last line follows from Cauchy-Schwarz inequality.

\
\\
From \Cref{lemma: gt bound}, we have
$$
     \max_{s < t\le e } \sqrt{ \frac{\delta_{t-s}^{-1}}{\log^{1+\xi}(t-s)}}\lft\| T^{-1/4} \lft( T + \lambda_{t-s} \rgt)^{-1/4} g_\st \rgt\|_\lt = O_p(1), 
$$
and from \Cref{lemma: general excess risk type}
$$
     \max_{s < t\le e } \sqrt{ \frac{\delta_{t-s}^{-1}}{\log^{1+\xi}(t-s)}}\lft\| T^{1/4} \lft( T + \lambda_{t-s} \rgt)^{1/4} \fhast - f^*_\st \rgt\|_\lt = O_p(1).
$$
The above two bounds establish the result.
\end{proof}

\subsubsection{Proof of \Cref{lemma: unfold empirical bound}}
\begin{proof}
    Observe that
    $$
    \frac{1}{t-s}\sum_{j=s+1}^t\Sigma\lft[\bhast - \beta^*_\st, \beta_j^* - \beta_\st^* \rgt] = 0,
    $$
because $\beta^*_\st = \sum_{j=s+1}^t \beta^*_j/(t-s)$.

\
\\
We may write
\begin{align*}
    &\frac{1}{t-s}\sum_{j = s+1}^t \lft\lb X_j, \bhast - \beta^*_\st \rgt\rb_\lt \lft\lb X_j, \beta_j^* - \beta_\st^* \rgt\rb_\lt
    \\
    =& \frac{1}{t-s}\sum_{j = s+1}^t \lft( \lft\lb X_j, \bhast - \beta^*_\st \rgt\rb_\lt \lft\lb X_j, \beta_j^* - \beta_\st^* \rgt\rb_\lt - \Sigma\lft[\bhast - \beta^*_\st, \beta_j^* - \beta_\st^* \rgt] \rgt)
    \\
    =& \frac{1}{t-s}\sum_{j = s+1}^t \lft( \lft\lb X_j, \bhast - \beta^*_\st \rgt\rb_\lt \lft\lb X_j, \beta_j^* - \beta_\st^* \rgt\rb_\lt - \lft\lb\fhast - f^*_\st, T(f_j^* - f_\st^*)\rgt\rb_\lt \rgt)
    \\
    =& \lft\lb \frac{1}{t-s}\sum_{j = s+1}^t \lft(\lft\lb L_\kh(X_j), f_j^* - f_\st^* \rgt\rb_\lt L_\kh(X_j) - T(f_j^* - f_\st^*) \rgt) , \fhast - f^*_\st \rgt\rb
    \\
    =&\lft\lb G_\st , \fhast - f^*_\st \rgt\rb_\lt
    \\
    =& \lft\lb T^{-1/4} \lft( T + \lambda_{t-s} \boldI \rgt)^{-1/4} G_\st , T^{1/4} \lft( T + \lambda_{t-s} \boldI \rgt)^{1/4}  (\fhast - f^*_\st) \rgt\rb_\lt
    \\
    \le& \lft\| T^{-1/4} \lft( T + \lambda_{t-s} \boldI \rgt)^{-1/4} G_\st \rgt\|_\lt \lft\| T^{1/4} \lft( T + \lambda_{t-s} \boldI \rgt)^{1/4}  (\fhast - f^*_\st) \rgt\|_\lt,
\end{align*}
    where $G_\st = \frac{1}{t-s}\sum_{j = s+1}^t \lft(\lft\lb L_\kh(X_j), f_j^* - f_\st^* \rgt\rb_\lt L_\kh(X_j) - T(f_j^* - f_\st^*) \rgt) $.

\
\\
From \Cref{lemma: Ht bound}, we have
$$
     \max_{s < t\le e }  \sqrt{\frac{\delta_{t-s}^{-1}}{\log^{1+\xi}(t-s)}}\lft\| T^{-1/4} \lft( T + \lambda_{t-s} \boldI \rgt)^{-1/4} G_\st \rgt\|_\lt = O_p(1), 
$$
and from \Cref{lemma: general excess risk type}
$$
     \max_{s < t\le e }  \sqrt{\frac{\delta_{t-s}^{-1}}{\log^{1+\xi}(t-s)}}\lft\| T^{1/4} \lft( T + \lambda_{t-s} \boldI \rgt)^{1/4}  (\fhast - f^*_\st) \rgt\|_\lt = O_p(1).
$$
The above two bounds establish the result.
\end{proof}

\subsubsection{Proof of \Cref{lemma: empirical excess risk II}}
\begin{proof}
Let $\nu < 1/2 - 1/4r$. We may write
    \begin{align}\nonumber
        & \shat_{(s',t']}\lft[\bhast - \beta^*_\st, \bhast - \beta^*_\st \rgt]
        \\\nonumber
        \le &  \lft| \lft( \what{\Sigma}_{(s', t']} - \Sigma \rgt) \lft[ \bhast - \beta^*_\st, \bhast - \beta^*_\st \rgt] \rgt| + \lft| \Sigma  \lft[ \bhast - \beta^*_\st, \bhast - \beta^*_\st \rgt] \rgt| 
        \\\nonumber
        =& \lft| \lft\lb (T_{(s', t']} - T) (\fhast - f^*_\st), \fhast - f^*_\st \rgt\rb_\lt \rgt|  + \lft| \Sigma  \lft[ \bhast - \beta^*_\st, \bhast - \beta^*_\st \rgt] \rgt| 
        \\\nonumber
        =&\lft| \lft\lb T^{-\nu} (T_{(s', t']} - T) (\fhast - f^*_\st), T^{\nu}(\fhast - f^*_\st) \rgt\rb_\lt \rgt| + \lft| \Sigma  \lft[ \bhast - \beta^*_\st, \bhast - \beta^*_\st \rgt] \rgt|
        \\\label{eq: t111}
        \le& \lft\| T^{-\nu}(T_{(s', t']} - T) (\fhast - f^*_\st) \rgt\|_\lt \lft\| T^{\nu} (\fhast - f^*_\st) \rgt\|_\lt + \lft| \Sigma  \lft[ \bhast - \beta^*_\st, \bhast - \beta^*_\st \rgt] \rgt|,
    \end{align}
where the second line follows from the triangle inequality and the last line follows from the Cauchy-Schwarz inequality. Observe that
    \begin{align}\nonumber
        &\lft\| T^{-\nu}(T_{(s', t']} - T) (\fhast - f^*_\st) \rgt\|_\lt
        \\\nonumber
        \le & \lft(1 + \lft( \frac{\delta_{t'-s'}}{\delta_{t-s}}   \rgt)^{1/4}\rgt) \Bigg[\lft\| T^{-\nu}(T_{(s', t']} - T) (T + \lambda_{t'-s'} \boldI)^{-1/4} T^{-1/4}\rgt\|_{\mathrm{op}} \cdot \lft\| T^{1/4} (T + \lambda_{t-s} \boldI)^{1/4} (\fhast - f^*_\st) \rgt\|_\lt \Bigg]
        \\\label{eq: t111a}
        = & O_p\lft(  \lft[1 + \lft( \frac{\delta_{t'-s'}}{\delta_{t-s}}   \rgt)^{1/4}\rgt] \delta_{t'-s'}^{1/2} \sqrt{\log^{1+\xi} (t' -s')} \cdot  \delta_{t-s}^{1/2} \sqrt{\log^{1+\xi} (t - s)}  \rgt),
    \end{align}
    where the first line follows from \Cref{lemma:lambda transfer useful} and the last line follows from \Cref{lemma: operator bound} and \Cref{lemma: general excess risk type}. Following from \Cref{lemma: general excess risk type}, we have 
    \begin{equation}\label{eq: t111b}
            \lft\| T^{-\nu} (\fhast - f^*_\st) \rgt\|_\lt = O_p\lft( \delta_{t-s}^\nu \sqrt{\log^{1+\xi} (t-s)} \rgt) = O_p(1),
    \end{equation}
    and from \Cref{lemma: Excess Risk} we have
    \begin{equation}\label{eq: t111c}
        \lft| \Sigma  \lft[ \bhast - \beta^*_\st, \bhast - \beta^*_\st \rgt] \rgt| = O_p\lft( \delta_{t-s} \log^{1+\xi} (t-s)  \rgt).
    \end{equation}
    The result now follows using \eqref{eq: t111a}, \eqref{eq: t111b} and \eqref{eq: t111c} to bound \eqref{eq: t111}.

\end{proof}
    
\subsubsection{Proof of \Cref{lemma: consistency II}}
\begin{proof}
    Observe that
     \begin{align}\nonumber
        &\lft| \frac{1}{t'-s'} \sum_{j = s' + 1}^{ t'}   \lft\lb X_j, \bhast - \beta^*_\st \rgt\rb_\lt \eps_j\rgt|
        \\\nonumber
        = & \lft|  \lft\lb \frac{1}{t' - s'} \sum_{j = s' + 1}^{ t'}  L_\kh(X_j) \eps_j, \fhast - f^*_\st \rgt\rb_\lt\rgt|
        \\\nonumber
        = & \lft| \lft\lb g_{(s', t']}, \fhast - f^*_\st \rgt\rb_\lt\rgt|
        \\\nonumber
        \le & \lft(1 + \lft( \frac{\delta_{t'-s'}}{\delta_{t-s}}   \rgt)^{1/4}\rgt) \lft\|T^{-1/4}  (T + \lambda_{t'-s'} \boldI)^{-1/4}  g_{(s', t']}\rgt\|_\lt \lft\| T^{1/4} (T + \lambda_{t-s} \boldI)^{1/4} (\fhast - f^*_\st) \rgt\|_\lt
        \\\nonumber
        = & O_p\lft(  \lft[ 1 + \lft( \frac{\delta_{t'-s'}}{\delta_{t-s}}   \rgt)^{1/4} \rgt] \delta_{t'-s'}^{1/2} \sqrt{\log^ {1+\xi} (t' -s')} \cdot  \delta_{t-s}^{1/2} \sqrt{\log^{1+\xi} (t - s)}  \rgt),
    \end{align}
    where the second last line follows from \Cref{lemma:lambda transfer useful} and the last line follows from \Cref{lemma: gt bound} and \Cref{lemma: Excess Risk}.
    
\end{proof}

\subsubsection{Proof of \Cref{lemma: unfold empirical bound II}}
\begin{proof}
    Observe that
    \begin{align}\nonumber
        & \lft| \frac{1}{t'-s'} \sum_{j = s' + 1}^{ t'}   \lft\lb X_j, \bhast - \beta^*_\st \rgt\rb_\lt \lft\lb X_j, \beta^*_{\eta_{k+1}} - \beta^*_{\eta_k} \rgt\rb_\lt \rgt| 
        \\\nonumber
        \le &  \lft| \lft( \what{\Sigma}_{(s', t']} - \Sigma \rgt) \lft[ \bhast - \beta^*_\st, \beta^*_{\eta_{k+1}} - \beta^*_{\eta_k} \rgt] \rgt| + \lft| \Sigma  \lft[ \bhast - \beta^*_\st, \beta^*_{\eta_{k+1}} - \beta^*_{\eta_k} \rgt] \rgt| 
        \\\nonumber
        =& \lft| \lft\lb (T_{(s', t']} - T) (\fhast - f^*_\st), f^*_{\eta_{k+1}} - f^*_{\eta_{k}} \rgt\rb_\lt \rgt|  + \lft| \Sigma  \lft[ \bhast - \beta^*_\st, \bhast - \beta^*_\st \rgt] \rgt| 
        \\\nonumber
        =&\lft| \lft\lb  (T_{(s', t']} - T) (\fhast - f^*_\st), (f^*_{\eta_{k+1}} - f^*_{\eta_{k}}) \rgt\rb_\lt \rgt| + \lft| \Sigma  \lft[ \bhast - \beta^*_\st, \beta^*_{\eta_{k+1}} - \beta^*_{\eta_k} \rgt] \rgt|
        \\\label{eq: t113}
        \le& \lft\|(T_{(s', t']} - T) (\fhast - f^*_\st) \rgt\|_\lt \lft\|  f^*_{\eta_{k+1}} - f^*_{\eta_{k}} \rgt\|_\lt + \lft| \Sigma  \lft[ \bhast - \beta^*_\st, \beta^*_{\eta_{k+1}} - \beta^*_{\eta_k} \rgt] \rgt|,
    \end{align}
where the second line follows from the triangle inequality and the last line follows from the Cauchy-Schwarz inequality. Observe that

    \begin{align}\nonumber
        &\lft\| (T_{(s', t']} - T) (\fhast - f^*_\st) \rgt\|_\lt
        \\\nonumber
        \le & \lft(1 + \lft( \frac{\delta_{t'-s'}}{\delta_{t-s}}   \rgt)^{1/4}\rgt) \Bigg[\lft\| (T_{(s', t']} - T) (T + \lambda_{t'-s'} \boldI)^{-1/4} T^{-1/4}\rgt\|_{\mathrm{op}} \cdot  \lft\| T^{1/4} (T + \lambda_{t-s} \boldI)^{1/4} (\fhast - f^*_\st) \rgt\|_\lt \Bigg]
        \\\label{eq: t113a}
        = & O_p\lft(  \lft[ 1 + \lft( \frac{\delta_{t'-s'}}{\delta_{t-s}}   \rgt)^{1/4} \rgt] \delta_{t'-s'}^{1/2} \sqrt{\log^{1+\xi} (t' -s')} \cdot  \delta_{t-s}^{1/2} \sqrt{\log^{1+\xi} (t - s)}  \rgt),
    \end{align}
    where the first line follows from \Cref{lemma:lambda transfer useful} and the last line follows from \Cref{lemma: operator bound} and \Cref{lemma: general excess risk type}. Following from \Cref{lemma: Excess Risk} we have that
    \begin{align}\nonumber
        &\lft| \Sigma  \lft[ \bhast - \beta^*_\st, \bhast - \beta^*_\st \rgt] \rgt| 
        \\\nonumber
        \le &\sqrt{ \Sigma  \lft[ \bhast - \beta^*_\st, \bhast - \beta^*_\st \rgt] } \sqrt{ \Sigma  \lft[ \beta^*_{\eta_{k+1}} - \beta^*_{\eta_k}, \beta^*_{\eta_{k+1}} - \beta^*_{\eta_k} \rgt]}
        \\\label{eq: t113b}
        =&  O_p\lft( \kappa_k \sqrt{\delta_{t-s} \log^{1+\xi} (t-s)}  \rgt).
    \end{align}
The result now follows using \eqref{eq: t113a} and \eqref{eq: t113b}  to bound \eqref{eq: t113}.
\end{proof}

\newpage
\subsubsection{Technical results for this section}
\begin{proposition}\label{pro: rough estimator}
    The analytical expression for the estimator in \eqref{def: fhast} is given by
    $$
    \fhast = (T_\st + \lambda_{t-s} {\bf I} )^{-1} \left( \frac{1}{t-s} \sum_{j=s+1}^t \lb L_\kh(X_j), f_j^* \rb_\lt L_\kh(X_j)  + g_\st \right)
    $$
    where $g_\st = \frac{1}{(t-s)} \sum_{j=s+1}^t \eps_j L_\kh(X_j) $.
\end{proposition}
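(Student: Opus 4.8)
The plan is to identify $\fhast$ through the first-order optimality condition of the strictly convex quadratic functional in \eqref{def: fhast}. First I would note that the objective $f\mapsto \frac{1}{t-s}\sum_{j=s+1}^t\big(y_j-\lb X_j,L_\kh(f)\rb_\lt\big)^2+\lambda_{t-s}\|f\|_\lt^2$ is a continuous, coercive (since $\lambda_{t-s}>0$) and strictly convex functional on the Hilbert space $\lt$, hence it possesses a unique minimizer, which is characterized by the vanishing of its Fréchet derivative. This is the Hilbert-space analogue of the ridge-regression normal equations and is where the representer structure enters.

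Next I would compute that derivative. Since $K^{1/2}$ is a symmetric kernel, $L_\kh$ is self-adjoint on $\lt$, so $\lb X_j,L_\kh(f)\rb_\lt=\lb L_\kh(X_j),f\rb_\lt$ and the residual equals $y_j-\lb L_\kh(X_j),f\rb_\lt$. The gradient is therefore $-\frac{2}{t-s}\sum_{j=s+1}^t\big(y_j-\lb L_\kh(X_j),f\rb_\lt\big)L_\kh(X_j)+2\lambda_{t-s}f$. Setting it to zero and invoking the definition of the sample operator in \eqref{eq: sample operator action}, namely $T_\st(f)=\frac{1}{t-s}\sum_{j=s+1}^t\lb L_\kh(X_j),f\rb_\lt L_\kh(X_j)$, the optimality condition becomes $(T_\st+\lambda_{t-s}\boldI)\fhast=\frac{1}{t-s}\sum_{j=s+1}^t y_j L_\kh(X_j)$.

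Then I would plug in the model \eqref{eq: model} together with the identity $\beta_j^*=L_\kh(f_j^*)$ (equivalently $f_j^*=L_\kmh(\beta_j^*)$ from \eqref{def: bhast to fhast}), so that $y_j=\lb\beta_j^*,X_j\rb_\lt+\eps_j=\lb L_\kh(X_j),f_j^*\rb_\lt+\eps_j$; summing gives $\frac{1}{t-s}\sum_{j=s+1}^t y_j L_\kh(X_j)=\frac{1}{t-s}\sum_{j=s+1}^t\lb L_\kh(X_j),f_j^*\rb_\lt L_\kh(X_j)+g_\st$ with $g_\st=\frac{1}{t-s}\sum_{j=s+1}^t\eps_j L_\kh(X_j)$. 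Finally, since $T_\st$ is a bounded, self-adjoint, nonnegative (in fact finite-rank) operator and $\lambda_{t-s}>0$, the operator $T_\st+\lambda_{t-s}\boldI$ is boundedly invertible, and applying its inverse to both sides of the normal equation yields exactly the claimed formula.

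There is no substantial obstacle here: the only steps needing care are (i) justifying existence and uniqueness of the minimizer and the differentiation of the finite sum, which is routine for coercive strictly convex quadratics on Hilbert spaces, and (ii) the book-keeping with the self-adjointness of $L_\kh$ and with the identification of $T_\st$ as the empirical design operator of \eqref{eq: sample operator action}.
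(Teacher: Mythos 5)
Your proposal is correct and follows essentially the same route as the paper: differentiate the quadratic objective, set the gradient to zero, substitute the model $y_j=\lb L_\kh(X_j),f_j^*\rb_\lt+\eps_j$, identify the empirical operator $T_\st$, and invert $T_\st+\lambda_{t-s}\boldI$. The extra care you take with existence/uniqueness of the minimizer and bounded invertibility of $T_\st+\lambda_{t-s}\boldI$ is sound and only makes explicit what the paper leaves implicit.
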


\
\\
\begin{proof} 

Observe that
$$
\frac{\dau \lb f, f \rb_\lt}{\dau f} = 2f \qquad \text{ and } \qquad \frac{\dau \lb f, g \rb_\lt}{\dau f} = g.
$$

\
\\
Since the objective function is a quadratic form, we just need to differentiate and make it zero to find the minima. We may have
\begin{align*}
0 = &\frac{\dau}{\dau f} \left(\frac{1}{t-s}\sum_{j = s+1}^t \left( y_j - \lb X_j, L_\kh(f) \rb_\lt \right)^2 + \lambda\|f\|^2_\lt \right) \Bigg|_{f = \fhast}
\\
= &\frac{\dau}{\dau f} \left(\frac{1}{t-s}\sum_{j = s+1}^t\left(  \lb L_\kh( X_j), f \rb^2_\lt + y_j^2 - 2y_j\lb L_\kh( X_j), f \rb_\lt \right) + \lambda\|f\|^2_\lt \right) \Bigg|_{f = \fhast}
\\
= & \frac{1}{t-s}\sum_{j = s+1}^t \left(  2\lb L_\kh( X_j), \fhast \rb_\lt L_\kh( X_j) - 2y_j L_\kh( X_j) \right) + 2\lambda \fhast.
\end{align*}

And it lead us to 
\begin{align*}
    0 = & \frac{1}{t-s}\sum_{j = s+1}^t \left(  \lb L_\kh( X_j), \fhast \rb_\lt L_\kh( X_j) - y_j L_\kh( X_j) \right) + \lambda \fhast ,
    \\
     = & \frac{1}{t-s}\sum_{j = s+1}^t \lb L_\kh( X_j), \fhast \rb_\lt L_\kh( X_j) - \frac{1}{t-s} \sum_{j = s+1}^t \lb L_\kh(X_j), f^*_j \rb_\lt L_\kh( X_j)
    \\
    & \qquad- \frac{1}{t-s}\sum_{j = s+1}^t  \eps_j L_\kh( X_j) + \lambda \fhast
    \\
    = & T_\st\fhast  - \frac{1}{t-s} \sum_{j = s+1}^t \lb L_\kh(X_j), f^*_j \rb_\lt L_\kh( X_j) - g_\st + \lambda \fhast.
\end{align*}
The last equation follows from the action of $T_\I$ illustrated at \eqref{eq: sample operator action} in the previous subsection and the result follows.

\end{proof}

\
\\
One key component is the expansion of variance term in the error bound. The next lemma is to structure this variance term. Define
\begin{equation}\label{eq: flst}
    \flst = \left(T + \lambda_{t-s} {\bf I} \right)^{-1} T f^*_\st. 
\end{equation}

\
\\
\begin{lemma}\label{lemma: bias expansion}
    Given \eqref{eq: flst} and the form of the estimator in \Cref{pro: rough estimator}, the following holds
    $$
     \lft(\fhast - \flst \rgt)  = \lft( T + \lambda_{t-s} \boldI \rgt)^{-1} \bigg( (T - T_\st) \lft( \fhast - f^*_\st \rgt) + g_\st + (T - T_\st) f^*_\st + H_\st \bigg)
    $$
    where $H_\st = (t -s)^{-1} \sum_{j=s+1}^t \lft( \lb L_\kh(X_j), f_j^* \rb_\lt L_\kh(X_j) - T f^*_j \rgt)$ and $g_\st$ defined in \Cref{pro: rough estimator}.
\end{lemma}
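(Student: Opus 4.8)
The plan is to prove this by a direct algebraic identity, exploiting the closed-form expression for $\fhast$ in Proposition~\ref{pro: rough estimator} together with the defining equation \eqref{eq: flst} of $\flst$. Write $\lambda = \lambda_{t-s}$ for brevity and set $S_\st = (t-s)^{-1}\sum_{j=s+1}^t \lb L_\kh(X_j), f_j^* \rb_\lt L_\kh(X_j)$, so that Proposition~\ref{pro: rough estimator} reads $(T_\st + \lambda\boldI)\fhast = S_\st + g_\st$.

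First I would record the elementary observation that, since $T$ is linear and $f^*_\st = (t-s)^{-1}\sum_{j=s+1}^t f_j^*$ by \eqref{def: bhast to fhast}, one has $(t-s)^{-1}\sum_{j=s+1}^t T f_j^* = T f^*_\st$; consequently the definition of $H_\st$ gives $S_\st = H_\st + T f^*_\st$. Substituting this into the identity from Proposition~\ref{pro: rough estimator} and writing $T_\st = T - (T - T_\st)$ yields
$$(T + \lambda\boldI)\fhast = (T - T_\st)\fhast + H_\st + T f^*_\st + g_\st.$$

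Next I would subtract the defining relation $(T + \lambda\boldI)\flst = T f^*_\st$ coming from \eqref{eq: flst}, obtaining $(T + \lambda\boldI)(\fhast - \flst) = (T - T_\st)\fhast + H_\st + g_\st$, and then split $(T - T_\st)\fhast = (T - T_\st)(\fhast - f^*_\st) + (T - T_\st)f^*_\st$ to match the claimed right-hand side. Since $T$ is a nonnegative self-adjoint operator on $\lt$ (by the eigen-decomposition in Assumption~\ref{assume: model assumption spectral}) and $\lambda_{t-s} > 0$, the operator $T + \lambda_{t-s}\boldI$ is boundedly invertible; applying its inverse to both sides gives the stated formula.

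There is no genuine obstacle here — the argument is essentially a one-line manipulation of the normal equations — so the only points requiring (minimal) care are the linearity reduction $S_\st = H_\st + T f^*_\st$ and the invertibility of $T + \lambda_{t-s}\boldI$, both of which are immediate.
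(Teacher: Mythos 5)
Your argument is correct and is essentially the same computation as the paper's proof: both start from the normal equation $(T_\st + \lambda_{t-s}\boldI)\fhast = S_\st + g_\st$, insert $T_\st = T - (T - T_\st)$, subtract the defining relation $(T+\lambda_{t-s}\boldI)\flst = Tf^*_\st$, and use linearity of $T$ to identify $S_\st - Tf^*_\st$ (equivalently your $S_\st = H_\st + Tf^*_\st$) with $H_\st$. The only difference is cosmetic ordering of the steps, so nothing further is needed.
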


\
\\
\begin{proof}

    \begin{align}\nonumber
        &\fhast - \flst 
        \\\nonumber
        =& \lft( T + \lambda_{t-s} \boldI \rgt)^{-1} \lft( (T - T_\st) \fhast  + (T_\st + \lambda_{t-s} \boldI ) \fhast - (T + \lambda_{t-s} \boldI)\flst \rgt)
        \\\nonumber
        =& \lft( T + \lambda_{t-s} \boldI \rgt)^{-1} \bigg( (T - T_\st) \fhast + g_\st + \frac{1}{t-s} \sum_{j=s+1}^t \lb L_\kh(X_j), f_j^* \rb_\lt L_\kh(X_j)  - T f^*_\st \bigg)
        \\\nonumber
        =& \lft( T + \lambda_{t-s} \boldI \rgt)^{-1} \bigg( (T - T_\st) \lft( \fhast - f^*_\st \rgt) + g_\st +\frac{1}{t-s} \sum_{j=s+1}^t \lb L_\kh(X_j), f_j^* \rb_\lt L_\kh(X_j) - T_\st f^*_\st \bigg)
        \\\nonumber
        =& \lft( T + \lambda_{t-s} \boldI \rgt)^{-1} \bigg( (T - T_\st) \lft( \fhast - f^*_\st \rgt) + g_\st + (T - T_\st) f^*_\st
        \\\nonumber
        & \qquad \qquad \qquad \qquad+\frac{1}{t-s} \sum_{j=s+1}^t \lb L_\kh(X_j), f_j^* \rb_\lt L_\kh(X_j) - \frac{1}{t-s} \sum_{j=s+1}^t T f^*_j  \bigg)
    \end{align}
    In the last line, we use the fact that $f^*_\st = \sum_{j=s+1}^t f^*_j/(t-s)$ and linearity of the operator $T$.
    The changes at the third last line follows from \Cref{pro: rough estimator} and \eqref{eq: flst}.
\end{proof}

\
\\
\begin{lemma}\label{lemma: det bound}
    Let $\xi > 0$ and $1 \ge b > a  > 0$.
    $$
    \max_{s<t\le e} \delta_{t-s}^{2(b-a-1)} \left\| T^a (T + \lambda_{t-s} {\bf I})^{1-b} (\flst - f^*_\st)  \right\|^2_\lt = O(1).
    $$
\end{lemma}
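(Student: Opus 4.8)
The plan is to first reduce everything to a one-line algebraic identity for the bias term $\flst-f^*_\st$, then bound an operator norm via the spectral decomposition of $T$, and finally use that $\lambda_{t-s}\asymp\delta_{t-s}$ to see the two powers cancel uniformly in $t$. Starting from the definition $\flst=(T+\lambda_{t-s}\boldI)^{-1}Tf^*_\st$ in \eqref{eq: flst} and noting that $T$ and $(T+\lambda_{t-s}\boldI)^{-1}$ commute (both are functions of $T$),
\[
\flst-f^*_\st=(T+\lambda_{t-s}\boldI)^{-1}\big(T-(T+\lambda_{t-s}\boldI)\big)f^*_\st=-\lambda_{t-s}(T+\lambda_{t-s}\boldI)^{-1}f^*_\st,
\]
so that
\[
T^a(T+\lambda_{t-s}\boldI)^{1-b}(\flst-f^*_\st)=-\lambda_{t-s}\,T^a(T+\lambda_{t-s}\boldI)^{-b}f^*_\st .
\]
Hence it is enough to bound $\big\|\lambda_{t-s}T^a(T+\lambda_{t-s}\boldI)^{-b}\big\|_{\op}$ and $\|f^*_\st\|_\lt$ separately.

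For the operator norm I would invoke \Cref{assume: model assumption spectral}(ii): since $T\phi_l=\mathfrak{s}_l\phi_l$ with $\mathfrak{s}_l>0$, functional calculus gives, for any $\lambda>0$,
\[
\big\|T^a(T+\lambda\boldI)^{-b}\big\|_{\op}=\sup_{l\ge1}\frac{\mathfrak{s}_l^a}{(\mathfrak{s}_l+\lambda)^b}=\sup_{l\ge1}\Big(\frac{\mathfrak{s}_l}{\mathfrak{s}_l+\lambda}\Big)^a\frac{1}{(\mathfrak{s}_l+\lambda)^{b-a}}\le\lambda^{-(b-a)},
\]
where I used $0<a<b\le1$ so that the first factor is at most $1$ and $(\mathfrak{s}_l+\lambda)^{b-a}\ge\lambda^{b-a}$. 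Therefore $\big\|\lambda_{t-s}T^a(T+\lambda_{t-s}\boldI)^{-b}\big\|_{\op}\le\lambda_{t-s}^{1-(b-a)}$. For the function norm I would use that $L_{K^{1/2}}:\lt(\T)\to\hk$ is distance-preserving and $\beta^*_\st=L_{K^{1/2}}(f^*_\st)$, whence $\|f^*_\st\|_\lt=\|\beta^*_\st\|_\hk$; by \Cref{assume: model assumption spectral}(i), convexity of $\|\cdot\|_\hk$ and $\beta^*_\st=(t-s)^{-1}\sum_{j=s+1}^t\beta^*_j$, this is at most $\max_{1\le j\le n}\|\beta^*_j\|_\hk=O(1)$, uniformly in $(s,t]$.

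Combining the two estimates,
\[
\big\|T^a(T+\lambda_{t-s}\boldI)^{1-b}(\flst-f^*_\st)\big\|_\lt^2\le\lambda_{t-s}^{2(1-(b-a))}\,\|f^*_\st\|_\lt^2=O\big(\lambda_{t-s}^{2(1-(b-a))}\big)=O\big(\lambda_{t-s}^{-2(b-a-1)}\big),
\]
and since the notation section fixes $\lambda_{t-s}\asymp(t-s)^{-2r/(2r+1)}\asymp\delta_{t-s}$, we get $\delta_{t-s}^{2(b-a-1)}\lambda_{t-s}^{-2(b-a-1)}\asymp1$ with a constant independent of $t$, so taking $\max_{s<t\le e}$ yields the claimed $O(1)$ bound. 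There is no serious obstacle here; the only points needing care are making the functional-calculus identity for $\|T^a(T+\lambda\boldI)^{-b}\|_{\op}$ rigorous from the Mercer eigen-expansion, and stating explicitly that $\|\beta^*_j\|_\hk$ is bounded uniformly in $j$ (as is implicit for a fixed change-point configuration) so that $\|f^*_\st\|_\lt=O(1)$ uniformly in $(s,t]$.
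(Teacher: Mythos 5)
Your proposal is correct and follows essentially the same route as the paper: both reduce to the identity $\flst - f^*_\st = -\lambda_{t-s}(T+\lambda_{t-s}\boldI)^{-1}f^*_\st$ (the paper does this coefficient-wise in the Mercer eigenbasis via $\tfrac{\mathfrak{s}_l}{\mathfrak{s}_l+\lambda}-1=-\tfrac{\lambda}{\mathfrak{s}_l+\lambda}$) and then bound $\sup_l \mathfrak{s}_l^{2a}\lambda^2/(\mathfrak{s}_l+\lambda)^{2b}\lesssim \lambda^{2(1+a-b)}$, using the uniform boundedness of $\|f^*_\st\|_\lt$. The only difference is cosmetic: where the paper invokes Young's inequality to lower-bound $(\mathfrak{s}_l+\lambda)^{2b}$, you split the exponent as $(\mathfrak{s}_l+\lambda)^a(\mathfrak{s}_l+\lambda)^{b-a}$ and bound each factor directly, which is a marginally more elementary way to obtain the same estimate.
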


\
\\
\begin{proof}
Note that because $f_j$ is bounded, the population average $f^*_\st$ is also bounded. Precisely, if $f^*_\st = \sum_{l\ge1} a_l^{s,t} \phi_l $, then $\sum_{l\ge 1} (a_l^{s,t})^2 \le M < \infty$, for some absolute constant $M>0$, where $\{\phi_l\}_{l\ge1}$ is the $\lt$ basis coming from the spectral decomposition of $\kh\Sigma\kh$.

    \begin{align}\nonumber
  \| T^a (T + \lambda_{t-s} {\bf I})^{1-b}(\flst - f^*_\st) \|^2_\lt &= \| T^a (T + \lambda_{t-s} {\bf I})^{1-b} \left( ( T + \lambda_{t-s} {\bf I})^{-1}Tf^*_\st - f^*_\st \right) \|^2_\lt \\\nonumber
  &= \sum_{l\ge1} \mathfrak{s}_l^{2a} (\mathfrak{s}_l + \lambda_{t-s})^{2-2b}\left( \frac{\mathfrak{s}_l}{\mathfrak{s}_l + \lambda_{t-s}} - 1 \right)^2 (a_l^{s,t})^2
  \\\nonumber
  &= \sum_{l\ge1} \mathfrak{s}_l^{2a} \frac{\lambda_{t-s}^2}{(\mathfrak{s}_l + \lambda_{t-s})^{2b}}  (a_l^{s,t})^2
  \\\label{eq: mis det 1}
  &\le \left\{\max_{l\ge1} \mathfrak{s}_l^{2a}\frac{\lambda_{t-s}^2}{(\mathfrak{s}_l + \lambda_{t-s})^{2b}} \right\} \sum_{l\ge1} (a_l^{s,t})^2
 \\\label{eq: mis det 2}
  &\le \left\{\max_{l\ge1} \mathfrak{s}_l^{2a} \frac{\lambda_{t-s}^2}{ \lft[ (1-a/b)^{-(1-a/b)}\lambda_{t-s}^{1-a/b} (a/b)^{-a/b} \mathfrak{s}_l^{a/b} \rgt]^{2b} } \right\} \sum_{l\ge1} (a_l^{s,t})^2
  \\\nonumber
  &= (1-a/b)^{2(1-a/b)} (a/b)^{2a/b} \lambda_{t-s}^{2(1+a-b)} \sum_{l\ge1} (a_l^{s,t})^2
  \\\nonumber
  &= O(1) \lambda_{t-s}^{2(1+a-b)} M =  O\lft(\lambda_{t-s}^{2(1+a-b)}\rgt) = O\lft(\delta_{t-s}^{2(1+a-b)}\rgt),
\end{align}
where the inequality in \eqref{eq: mis det 1} is from Holder's inequality and \eqref{eq: mis det 2} follows from Young's inequality in the following form
$$
a + b \le (pa)^{1/p} (qb)^{1/q},
$$
where $a, b, p, q$ are positive real numbers and $p^{-1} + q^{-1} = 1$.
\end{proof}



\
\\
\begin{lemma}\label{lemma: mu Excess risk}
    Let $\xi >0$. Let $ 0 < \nu < \frac{1}{2} - \frac{1}{4r}$. Then
    $$
    \max_{s<t\le e} \lft\|T^\nu \lft( \fhast - f^*_\st \rgt) \rgt\| \frac{\delta_{t-s}^{-\nu}}{\sqrt{\log^{1+\xi} (t-s)}} = O_p(1).
    $$
\end{lemma}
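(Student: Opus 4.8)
The most direct route is to read this off the general excess–risk estimate: \Cref{lemma: mu Excess risk} is simply \Cref{lemma: general excess risk type} applied with the smoothing power $a=\nu$ and $b=1$, in complete analogy with the one–line derivation of \Cref{lemma: Excess Risk} (which is that same lemma with $a=\tfrac12$, $b=1$). In this reading the role of the hypothesis $\nu<\tfrac12-\tfrac1{4r}$ is precisely that it is the admissible range of the pure $T$–power: with the cutoff $(T+\lambda_{t-s}\boldI)^{1/4}$ removed, the $T^{\nu}$–weighted effective–dimension sums that appear in the variance bound remain finite only when $2\nu<1-\tfrac1{2r}$. So, taking \Cref{lemma: general excess risk type} as given, there is essentially nothing to do.

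If a self–contained argument is wanted, I would run the standard bias–variance decomposition. Write
$$\fhast - f^*_\st \;=\; \bigl(\fhast - \flst\bigr) \;+\; \bigl(\flst - f^*_\st\bigr),$$
with $\flst=(T+\lambda_{t-s}\boldI)^{-1}Tf^*_\st$ as in \eqref{eq: flst}. The deterministic approximation error is handled by \Cref{lemma: det bound} with $a=\nu$, $b=1$ (admissible since $0<\nu<\tfrac12<1$), giving $\max_{s<t\le e}\delta_{t-s}^{-2\nu}\bigl\|T^\nu(\flst-f^*_\st)\bigr\|_\lt^2=O(1)$, i.e. this part is $O(\delta_{t-s}^\nu)$ uniformly in $t$, comfortably below the claimed order. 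For the stochastic part I would use the resolvent identity of \Cref{lemma: bias expansion}, pre-multiply by $T^\nu$, and get
$$T^\nu\bigl(\fhast-\flst\bigr)=T^\nu(T+\lambda_{t-s}\boldI)^{-1}\Bigl[(T-T_\st)(\fhast-f^*_\st)+g_\st+(T-T_\st)f^*_\st+H_\st\Bigr],$$
then treat each of the four summands by inserting matched spectral–calculus factors $T^{a}(T+\lambda_{t-s}\boldI)^{b}$ and their inverses: the leftmost operator becomes $T^{\nu+1/4}(T+\lambda_{t-s}\boldI)^{-3/4}$ (or an analogue), whose operator norm is $O(\delta_{t-s}^{\,\nu-1/2})$ by a one–line maximization over the spectrum using $\mathfrak{s}_l\asymp l^{-2r}$ and $\nu<\tfrac12$; the empirical perturbation $T-T_\st$ is put into the sandwiched form controlled by the operator–perturbation bound (\Cref{lemma: operator bound}); the noise term $g_\st$ and the martingale–type term $H_\st$ appear in the $T^{-1/4}(T+\lambda_{t-s}\boldI)^{-1/4}$–weighted forms bounded in \Cref{lemma: gt bound} and \Cref{lemma: Ht bound}; and the residual weighted copy of $\fhast-f^*_\st$ is absorbed into the master estimate $\bigl\|T^{1/4}(T+\lambda_{t-s}\boldI)^{1/4}(\fhast-f^*_\st)\bigr\|_\lt=O_p(\delta_{t-s}^{1/2}\sqrt{\log^{1+\xi}(t-s)})$ from \Cref{lemma: general excess risk type}. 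Multiplying the resulting factors makes every summand $O_p(\delta_{t-s}^\nu\sqrt{\log^{1+\xi}(t-s)})$ uniformly in $t$, and combining with the bias part and taking $\max_{s<t\le e}$ (all invoked lemmas are already uniform in $t$) gives the claim.

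The one genuinely delicate point is the self–referential summand $(T-T_\st)(\fhast-f^*_\st)$, which re-introduces the quantity under study. The clean way around it is not to try to extract a $T^\nu$–norm of $\fhast-\flst$ from it, but to feed the already–established master bound on $\bigl\|T^{1/4}(T+\lambda_{t-s}\boldI)^{1/4}(\fhast-f^*_\st)\bigr\|_\lt$ straight into the sandwiching above; then this term contributes only $O_p(\delta_{t-s}^{\nu+1/2}\log^{1+\xi}(t-s))$, which is $o(1)$ times the target since $\delta_{t-s}^{1/2}\sqrt{\log^{1+\xi}(t-s)}$ is uniformly bounded (indeed $\to0$ at both ends of the range of $t-s$). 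The remaining care is purely in the spectral bookkeeping near the boundary exponent $\nu=\tfrac12-\tfrac1{4r}$: one must check that each inserted factor $T^{a}(T+\lambda_{t-s}\boldI)^{b}$ either has uniformly bounded operator norm or one that grows only like a fixed negative power of $\delta_{t-s}$, so that the product telescopes to the stated rate. This is where the strict inequality $\nu<\tfrac12-\tfrac1{4r}$ (rather than $\nu\le\tfrac12$, which would suffice with the $(T+\lambda)^{1/4}$ cutoff in place) is used, and verifying its tightness is the main thing to get right.
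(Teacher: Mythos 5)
Your plan has a circular dependency that needs to be flagged. Both of your routes lean on \Cref{lemma: general excess risk type}: the first invokes it directly with $a=\nu$, $b=1$, and the second invokes it (with $a=1/4$, $b=3/4$) as the ``already-established master bound'' used to dispose of the self-referential summand $(T-T_\st)(\fhast-f^*_\st)$. But in the paper the proof of \Cref{lemma: general excess risk type} itself bounds that very summand by $\|T^{a}(T+\lambda_{t-s}\boldI)^{-b}(T-T_\st)T^{-\nu}\|_{\mathrm{op}}\,\|T^{\nu}(\fhast-f^*_\st)\|_\lt$ and cites \Cref{lemma: mu Excess risk} for the second factor --- i.e., the master estimate sits downstream of the lemma you are proving, not upstream of it. So, as written, neither route closes.

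The paper resolves the self-referential term by exactly the move you reject: it splits $(T-T_\st)(\fhast-f^*_\st)=(T-T_\st)(\fhast-\flst)+(T-T_\st)(\flst-f^*_\st)$, bounds the first piece by $\|T^{\nu}(T+\lambda_{t-s}\boldI)^{-1}(T-T_\st)T^{-\nu}\|_{\mathrm{op}}\,\|T^{\nu}(\fhast-\flst)\|_\lt$, observes via \Cref{lemma: operator bound} that the operator norm is $O_p(\delta_{t-s}^{\nu}\sqrt{\log^{1+\xi}(t-s)})=o_p(1)$ uniformly in $t$, and absorbs the resulting $o_p(1)\,\|T^{\nu}(\fhast-\flst)\|_\lt$ into the left-hand side; the second piece is a product of that same small operator norm with the deterministic bound from \Cref{lemma: det bound}. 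The rest of your self-contained sketch is consistent with the paper: the bias term via \Cref{lemma: det bound} with $a=\nu$, $b=1$; the terms $g_\st$, $(T-T_\st)f^*_\st$ and $H_\st$ via \Cref{lemma: gt bound}, \Cref{lemma: operator bound} and \Cref{lemma: Ht bound}, applied directly with $a=\nu$, $b=1$ (no need for your extra $T^{\nu+1/4}(T+\lambda_{t-s}\boldI)^{-3/4}$ factorization); and your diagnosis that $\nu<\tfrac12-\tfrac1{4r}$ is what makes $\sum_m \mathfrak{s}_m^{1-2\nu}$ converge in the operator-perturbation bound is correct. To repair the proposal, either adopt the absorption argument, or restructure the whole section so that the $a=1/4$, $b=3/4$ instance of \Cref{lemma: general excess risk type} is proved first by its own absorption step; as it stands, the proposal assumes what it is meant to help establish.
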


\
\\

\begin{proof}
    Following from triangle inequality and the decomposition at \Cref{lemma: bias expansion} we have,
\begin{align} \nonumber
    \|T^{\nu} (\fhast - f^*_\st) \|_\lt \le & \|T^{\nu} (\flst - f^*_\st) \|_\lt + \|T^{\nu} (\fhast - \flst) \|_\lt
    \\\label{eq: f10aaa}
    \le & \quad  \|T^{\nu} (\flst - f^*_\st) \|_\lt
    \\\label{eq: f10a}
    & +  \|T^{\nu}(T + \lambda_{t-s} {\bf I})^{-1} (T - T_\st)T^{-\nu}\|_{\mathrm{op}} \|T^\nu (\fhast - \flst)\|_\lt
    \\\label{eq: f10aa}
    & + \|T^{\nu}(T + \lambda_{t-s} {\bf I})^{-1} (T - T_\st)T^{-\nu}\|_{\mathrm{op}} \|T^\nu (\flst - f^*_\st)\|_\lt
    \\\label{eq: f10b}
    &+ \|T^{\nu}(T + \lambda_{t-s} {\bf I})^{-1}g_\st\|_\lt 
    \\\label{eq: f10c}
    &+ \|T^{\nu}(T + \lambda_{t-s} {\bf I})^{-1}(T_\st -T)\|_{\mathrm{op}} \|f^*_\st\|_\lt
    \\\label{eq: f10d}
    &+ \|T^{\nu}(T + \lambda_{t-s} {\bf I})^{-1} H_\st\|_\lt 
\end{align}
We are going to bound each of the four term uniformly to have result.

\
\\
For \eqref{eq: f10aaa}, from \Cref{lemma: det bound}, we write that with high probability
$$
\forall t \in \se, \qquad  \|T^{\nu} (\flst - f^*_\st) \|_\lt \lesssim \delta_{t-s}^\nu.
$$

\
\\
For \eqref{eq: f10a}, from \Cref{lemma: operator bound}, we write that in high probability
$$
\forall t \in \se, \qquad \|T^{\nu}(T + \lambda_{t-s} {\bf I})^{-1} (T - T_\st)T^{-\nu}\|_{\mathrm{op}} \lesssim \delta_{t-s}^\nu \sqrt{\log^{1+\xi} (t-s)},
$$
which would give us 
$$
\eqref{eq: f10a} \le o(1) \|T^\nu (\fhast - \flst)\|_\lt
$$
in probability, in uniform sense.

\
\\
Similarly for \eqref{eq: f10aa}, from \Cref{lemma: operator bound} and \Cref{lemma: det bound}, we write that with high probability
$$
\forall t \in \se, \qquad \eqref{eq: f10aa}\lesssim \delta_{t-s}^{2\nu} \sqrt{\log^{1+\xi} (t-s)}.
$$

\
\\
For \eqref{eq: f10b}, from \Cref{lemma: gt bound}, we write that with high probability
$$
\forall t \in \se, \qquad \|T^{\nu}(T + \lambda_{t-s} {\bf I})^{-1}g_\st\|_\lt \lesssim \delta_{t-s}^\nu \sqrt{\log^{1+\xi} (t-s)}.
$$

\
\\
For \eqref{eq: f10c}, from \Cref{lemma: operator bound}, we write that with high probability
$$
\forall t \in \se, \qquad \|T^{\nu}(T + \lambda_{t-s} {\bf I})^{-1}(T_\st -T)\|_{\mathrm{op}} \|f^*_\st\|_\lt \lesssim \|T^{\nu}(T + \lambda_{t-s} {\bf I})^{-1}(T_\st -T)\|_{\mathrm{op}} \lesssim \delta_{t-s}^\nu \sqrt{\log^{1+\xi} (t-s)},
$$
here we used the fact that $\|f^*_\st\|_\lt < \infty.$

\
\\
For \eqref{eq: f10d}, from \Cref{lemma: Ht bound}, we write that with high probability
$$
\forall t \in \se, \qquad \|T^{\nu}(T + \lambda_{t-s} {\bf I})^{-1} H_\st\|_\lt \lesssim \delta_{t-s}^\nu \sqrt{\log^{1+\xi} (t-s)}.
$$

\
\\
This six individual bounds come together to give us the required result.
\end{proof}

\begin{lemma}\label{lemma: general excess risk type}
Let $\xi > 0$ and $1 \ge b \ge a + 1/2 > 0$.
$$
\max_{s<t\le e} \frac{\delta_{t-s}^{(b-a-1)}}{\sqrt{\log^{1+\xi}(t-s)}} \left\| T^a (T + \lambda_{t-s} {\bf I})^{1-b} (\fhast - f^*_\st)  \right\|_\lt = O_p(1).
$$  
\end{lemma}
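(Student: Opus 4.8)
The plan is to recognise the statement as a weighted, two-parameter generalisation of \Cref{lemma: mu Excess risk} and to prove it along the same route. Writing $\ell=b$, $p=a$, set
\[
R(p,\ell)=\max_{s<t\le e}\frac{\delta_{t-s}^{\,\ell-p-1}}{\sqrt{\log^{1+\xi}(t-s)}}\,\bigl\|T^{p}(T+\lambda_{t-s}\boldI)^{1-\ell}(\fhast-f^*_\st)\bigr\|_{\lt},
\]
so that the target is $R(p,\ell)=O_{\mathbb P}(1)$ for all $1\ge\ell\ge p+1/2>0$; note $(p,\ell)=(1/2,1)$ is exactly \Cref{lemma: Excess Risk} (via $\Sigma[\bhast-\beta^*_\st,\bhast-\beta^*_\st]=\|T^{1/2}(\fhast-f^*_\st)\|_\lt^2$), while $(p,\ell)=(\nu,1)$ with small $\nu$ is \Cref{lemma: mu Excess risk}. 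First I would split $\fhast-f^*_\st=(\fhast-\flst)+(\flst-f^*_\st)$ with $\flst$ as in \eqref{eq: flst}, and substitute the decomposition of \Cref{lemma: bias expansion} for $\fhast-\flst$. This writes $T^{p}(T+\lambda_{t-s}\boldI)^{1-\ell}(\fhast-f^*_\st)$ as the sum of the deterministic term $T^{p}(T+\lambda_{t-s}\boldI)^{1-\ell}(\flst-f^*_\st)$ and, each premultiplied by $T^{p}(T+\lambda_{t-s}\boldI)^{-\ell}$, the four stochastic terms $(T-T_\st)(\fhast-f^*_\st)$, $g_\st$, $(T-T_\st)f^*_\st$ and $H_\st$ appearing in \Cref{pro: rough estimator} and \Cref{lemma: bias expansion}.

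Then I would bound the five terms after multiplying by $\delta_{t-s}^{\,\ell-p-1}/\sqrt{\log^{1+\xi}(t-s)}$ and taking $\max_{s<t\le e}$. The deterministic term is dispatched by \Cref{lemma: det bound}, which already carries the right $\delta_{t-s}$-power and in fact contributes $o_{\mathbb P}(1)$. For the noise terms $g_\st$ and $H_\st$ I would insert intermediate exponents, e.g.\ $T^{p}(T+\lambda_{t-s}\boldI)^{-\ell}g_\st=\bigl[T^{p+1/4}(T+\lambda_{t-s}\boldI)^{1/4-\ell}\bigr]\bigl[T^{-1/4}(T+\lambda_{t-s}\boldI)^{-1/4}g_\st\bigr]$, bounding the first bracket in operator norm by $\lambda_{t-s}^{\,p+1/2-\ell}\asymp\delta_{t-s}^{\,p+1/2-\ell}$ via the spectral inequality $\|T^{c}(T+\lambda_{t-s}\boldI)^{-d}\|_{\op}\le\lambda_{t-s}^{\,c-d}$ for $0\le c\le d$ (here $\ell\ge p+1/2$ is precisely the admissibility condition) and the second bracket by \Cref{lemma: gt bound} (resp.\ \Cref{lemma: Ht bound}); the $\delta_{t-s}$-powers then cancel against the normalisation, leaving $O_{\mathbb P}(1)$. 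The covariance-concentration term $(T-T_\st)f^*_\st$ and the self-referential term $(T-T_\st)(\fhast-f^*_\st)$ are treated as in the proof of \Cref{lemma: mu Excess risk}: fixing a small $0<\nu<\tfrac12-\tfrac1{4r}$, one peels off a factor $T^{\nu}(T+\lambda_{t-s}\boldI)^{-1}(T-T_\st)T^{-\nu}$, whose operator norm is $O_{\mathbb P}(\delta_{t-s}^{\,\nu}\sqrt{\log^{1+\xi}(t-s)})$ by \Cref{lemma: operator bound}, and is left with $T^{\nu}f^*_\st$ (bounded, since $\sup_t\|f^*_\st\|_\lt<\infty$) respectively $T^{\nu}(\fhast-f^*_\st)$ (controlled by \Cref{lemma: mu Excess risk}), the remaining purely spectral operator factors being again absorbed by the inequality above. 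Summing the five $O_{\mathbb P}(1)$ contributions gives $R(p,\ell)=O_{\mathbb P}(1)$.

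The maximum over $s<t\le e$ is not a separate difficulty, since \Cref{lemma: det bound}, \Cref{lemma: gt bound}, \Cref{lemma: Ht bound}, \Cref{lemma: operator bound} and \Cref{lemma: mu Excess risk} are all stated uniformly over the interval (the short sub-intervals, where $\lambda_{t-s}\asymp1$, being covered by a union bound with the $\log^{1+\xi}$ factor providing summability). The step I expect to be most delicate is the operator-calculus bookkeeping: choosing the intermediate exponents and $\nu$ so that every operator-norm factor is finite, so that the accumulated powers of $\delta_{t-s}$ match the normalising $\delta_{t-s}^{\,\ell-p-1}$ exactly across the whole admissible range $p+1/2\le\ell\le1$, and so that the self-referential term reduces cleanly to $T^{\nu}(\fhast-f^*_\st)$ without reintroducing the unscaled object; the hypothesis $\ell\ge p+1/2$ is exactly what supplies the slack needed for these cancellations.
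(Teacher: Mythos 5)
Your overall architecture coincides with the paper's: the same split into $(\flst-f^*_\st)$ plus the four stochastic terms of \Cref{lemma: bias expansion}, with \Cref{lemma: det bound} for the bias and \Cref{lemma: gt bound}, \Cref{lemma: Ht bound}, \Cref{lemma: operator bound}, \Cref{lemma: mu Excess risk} for the rest. Your treatment of the bias, $g_\st$ and $H_\st$ terms is sound (for $g_\st$ your factorization through $T^{-1/4}(T+\lambda_{t-s}\boldI)^{-1/4}g_\st$ times $T^{a+1/4}(T+\lambda_{t-s}\boldI)^{1/4-b}$ does use $b\ge a+1/2$ correctly, and is equivalent to the paper's direct application of \Cref{lemma: gt bound} at the exponents $(a,b)$).

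The gap is in the two terms involving $T-T_\st$. After peeling off $T^{\nu}(T+\lambda_{t-s}\boldI)^{-1}(T-T_\st)T^{-\nu}$, the leftover spectral factor is $T^{a-\nu}(T+\lambda_{t-s}\boldI)^{1-b}$, and this is \emph{not} absorbed by the inequality $\|T^{c}(T+\lambda_{t-s}\boldI)^{-d}\|_{\mathrm{op}}\le\lambda_{t-s}^{c-d}$: that inequality needs a nonpositive exponent on $(T+\lambda_{t-s}\boldI)$ dominating the exponent on $T$, whereas here the exponent $1-b$ is nonnegative, so the operator norm is bounded below by its value on $\phi_1$, namely $\mathfrak{s}_1^{a-\nu}(\mathfrak{s}_1+\lambda_{t-s})^{1-b}\gtrsim 1$, and contributes no decay in $\lambda_{t-s}$ (and is even unbounded if $a<\nu$, which the hypotheses allow). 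Your route therefore yields only $O_{\mathbb P}(\delta_{t-s}^{\nu}\sqrt{\log^{1+\xi}(t-s)})$ for these terms, which fails to match the required $\delta_{t-s}^{1+a-b}\sqrt{\log^{1+\xi}(t-s)}$ precisely in the extremal case $b=a+1/2$, where $1+a-b=1/2>\nu$ (recall $\nu<1/2-1/(4r)$). The repair is what the paper does: apply \Cref{lemma: operator bound} directly at the exponents $(a,b)$ with $p=\nu$ (for the self-referential term) and $p=0$ (for the $(T-T_\st)f^*_\st$ term), which gives $\|T^{a}(T+\lambda_{t-s}\boldI)^{-b}(T-T_\st)T^{-p}\|_{\mathrm{op}}=O_{\mathbb P}(\delta_{t-s}^{1+a-b}\sqrt{\log^{1+\xi}(t-s)})$ in one stroke; the factor $\|T^{\nu}(\fhast-f^*_\st)\|_{\lt}=O_{\mathbb P}(\delta_{t-s}^{\nu}\sqrt{\log^{1+\xi}(t-s)})=o_{\mathbb P}(1)$ from \Cref{lemma: mu Excess risk} then closes the self-referential term without any further spectral bookkeeping.
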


\begin{proof}
    Using triangle inequality we may write,
    \begin{equation}\label{eq: f8a}
        \left\| T^a (T + \lambda_{t-s} {\bf I})^{1-b} (\fhast - f^*_\st)  \right\|_\lt \le \left\| T^a (T + \lambda_{t-s} {\bf I})^{1-b} (\fhast - \flst)  \right\|_\lt + \left\| T^a (T + \lambda_{t-s} {\bf I})^{1-b} (\flst - f^*_\st)  \right\|_\lt
    \end{equation}
    where $\flst$ is defined at \eqref{eq: flst}.
The second term on the right of \eqref{eq: f8a} can be bounded using \Cref{lemma: det bound}, which gives us
\begin{align*}
   &\max_{s<t\le e} \frac{\delta_{t-s}^{(b-a-1)}}{\sqrt{\log^{1+\xi}(t-s)}} \left\| T^a (T + \lambda_{t-s} {\bf I})^{1-b} (\flst - f^*_\st)  \right\|_\lt
   \\
   \le & \max_{s<t\le e}  \delta_{t-s}^{(b-a-1)}\left\| T^a (T + \lambda_{t-s} {\bf I})^{1-b} (\flst - f^*_\st)  \right\|_\lt = O(1).
\end{align*}
Now, it is suffice is to bound the first term on the right of \eqref{eq: f8a}. Let $0 < \nu< \frac{1}{2} - \frac{1}{4r}$. Following the decomposition at \Cref{lemma: bias expansion}, we may write

\begin{align}\label{eq: f8b}
    \|T^a (T + \lambda_{t-s} {\bf I})^{1-b} (\fhast - \flst) \|_\lt \le& \quad \|T^{a}(T + \lambda_{t-s} {\bf I})^{-b} (T - T_\st)T^{-\nu}\|_{\mathrm{op}} \|T^\nu (\fhast - f^*_\st)\|_\lt
    \\\label{eq: f8c}
    &+ \|T^{a}(T + \lambda_{t-s} {\bf I})^{-b} g_\st\|_\lt 
    \\\label{eq: f8d}
    &+ \|T^{a}(T + \lambda_{t-s} {\bf I})^{-b}(T_\st -T)\|_{\mathrm{op}} \|f^*_\st\|_\lt
    \\\label{eq: f8e}
    &+ \|T^{a}(T + \lambda_{t-s} {\bf I})^{-b} H_\st\|_\lt 
\end{align}
We are going to bound each of the four terms \eqref{eq: f8b}, \eqref{eq: f8c}, \eqref{eq: f8d}  and \eqref{eq: f8e} uniformly over $s < t \le e$ to have the required result.

\
\\
For \eqref{eq: f8b}, using \Cref{lemma: operator bound} and \Cref{lemma: mu Excess risk}, we write that with high probability
\begin{align*}
    \forall t \in \se, \qquad &\|T^{a}(T + \lambda_{t-s} {\bf I})^{-b} (T - T_\st)T^{-\nu}\|_{\mathrm{op}} \|T^\nu (\fhast - f^*_\st)\|_\lt
    \\
    \lesssim & \delta_{t-s}^{1+a-b} \sqrt{\log^{1+\xi} (t-s)} \delta_{t-s}^\nu \sqrt{\log^{1+\xi} (t-s)}
    \\
    \le & \delta_{t-s}^{1+a-b} \sqrt{\log^{1+\xi} (t-s)}.    
\end{align*}

\
\\
For \eqref{eq: f8c}, from \Cref{lemma: gt bound}, we write that with high probability
$$
\forall t \in \se, \qquad \|T^{a}(T + \lambda_{t-s} {\bf I})^{-b} g_\st\|_\lt \lesssim \delta_{t-s}^{1+a-b} \sqrt{\log^{1+\xi} (t-s)}.
$$

\
\\
For \eqref{eq: f8d}, from \Cref{lemma: operator bound}, we write that with high probability
\begin{align*}
    \forall t \in \se, \qquad &\|T^{a}(T + \lambda_{t-s} {\bf I})^{-b}(T_\st -T)\|_{\mathrm{op}} \|f^*_\st\|_\lt
    \\
    \lesssim &  \|T^{a}(T + \lambda_{t-s} {\bf I})^{-b}(T_\st -T)\|_{\mathrm{op}}
    \\
    \lesssim & \delta_{t-s}^{1+a-b} \sqrt{\log^{1+\xi} (t-s)}.    
\end{align*}
here we used the fact that $\|f^*_\st\|_\lt < \infty.$

\
\\
For \eqref{eq: f8e}, from \Cref{lemma: Ht bound}, we write that with high probability
$$
\forall t \in \se, \qquad \|T^{a}(T + \lambda_{t-s} {\bf I})^{-b} H_\st\|_\lt  \lesssim \delta_{t-s}^{1+a-b} \sqrt{\log^{1+\xi} (t-s)}.
$$

\
\\
This four individual bounds come together to give us the required bound for the first term on the right of \eqref{eq: f8a}.

\end{proof}

\
\\
\begin{lemma}\label{lemma: general emp/consistency type}
    Let $ 0 < \nu < \frac{1}{2} - \frac{1}{4r}$. Let $p\in \{0,\nu\}$ and $\xi>0$. Then
    $$
    \max_{s<t\le e} \frac{\delta_{t-s}^{-1}}{\log^{1+\xi} (t-s)} \|T^{-p}(T_\st - T)(\fhast - f^*_\st)\|_\lt  = O_p (1).
    $$
\end{lemma}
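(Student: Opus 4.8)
The plan is to insert the regularised operator $(T+\lambda_{t-s}\boldI)$ between the two awkward pieces, so that one side becomes an operator of controllable norm and the other becomes exactly the quantity estimated by \Cref{lemma: general excess risk type}. Concretely, since all the operators involved are functions of $T$ and hence commute, for every $s<t\le e$ I would write
$$
T^{-p}(T_\st-T)(\fhast-f^*_\st)
= \Big[T^{-p}(T_\st-T)(T+\lambda_{t-s}\boldI)^{-1/4}T^{-1/4}\Big]\,\Big[T^{1/4}(T+\lambda_{t-s}\boldI)^{1/4}(\fhast-f^*_\st)\Big],
$$
using that $(T+\lambda_{t-s}\boldI)^{-1/4}T^{-1/4}\,T^{1/4}(T+\lambda_{t-s}\boldI)^{1/4}=\boldI$. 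Taking $\lt$-norms and using $\|Af\|_\lt\le\|A\|_{\op}\|f\|_\lt$ gives
$$
\|T^{-p}(T_\st-T)(\fhast-f^*_\st)\|_\lt
\le \big\|T^{-p}(T_\st-T)(T+\lambda_{t-s}\boldI)^{-1/4}T^{-1/4}\big\|_{\op}\cdot\big\|T^{1/4}(T+\lambda_{t-s}\boldI)^{1/4}(\fhast-f^*_\st)\big\|_\lt .
$$

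Next I would bound the two factors separately, uniformly over $s<t\le e$. For the second factor I would apply \Cref{lemma: general excess risk type} with $a=1/4$ and $b=3/4$; the admissibility condition $1\ge b\ge a+1/2>0$ holds with equality, and $b-a-1=-1/2$, so the lemma yields
$$
\max_{s<t\le e}\frac{\delta_{t-s}^{-1/2}}{\sqrt{\log^{1+\xi}(t-s)}}\,\big\|T^{1/4}(T+\lambda_{t-s}\boldI)^{1/4}(\fhast-f^*_\st)\big\|_\lt = O_p(1).
$$
For the first factor, since $p\in\{0,\nu\}$ with $0<\nu<\tfrac12-\tfrac1{4r}$, the exponent conditions of \Cref{lemma: operator bound} are met (this is exactly the operator-norm estimate already invoked in the proof of \Cref{lemma: empirical excess risk II}, here specialised to the case where the index interval on both sides coincides, so there is no $\lambda$-transfer factor), giving
$$
\max_{s<t\le e}\frac{\delta_{t-s}^{-1/2}}{\sqrt{\log^{1+\xi}(t-s)}}\,\big\|T^{-p}(T_\st-T)(T+\lambda_{t-s}\boldI)^{-1/4}T^{-1/4}\big\|_{\op} = O_p(1).
$$

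Finally I would combine the two bounds: using $\max_t(a_tb_t)\le(\max_t a_t)(\max_t b_t)$ for nonnegative sequences, the product of the two displays gives $\|T^{-p}(T_\st-T)(\fhast-f^*_\st)\|_\lt = O_p\big(\delta_{t-s}\log^{1+\xi}(t-s)\big)$ uniformly in $t$, which is precisely
$$
\max_{s<t\le e}\frac{\delta_{t-s}^{-1}}{\log^{1+\xi}(t-s)}\,\|T^{-p}(T_\st-T)(\fhast-f^*_\st)\|_\lt = O_p(1),
$$
as claimed. The routine verifications are the commutativity used to split the product and the exponent arithmetic; the genuine analytic content—uniform control of $\shast$ around $\Sigma$ in the $T$-weighted operator norm over the dyadic collection of intervals under $\alpha$-mixing and sixth-moment assumptions—is entirely packaged in \Cref{lemma: operator bound}, so the main (and only real) obstacle here is making sure the regularisation exponents chosen in the factorisation match those for which \Cref{lemma: operator bound} and \Cref{lemma: general excess risk type} are available, which the choice $1/4$–$1/4$ above does.
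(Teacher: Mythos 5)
Your proposal is correct and is essentially the paper's own proof: the same insertion of $(T+\lambda_{t-s}\boldI)^{-1/4}T^{-1/4}$ and its inverse, the same operator-norm bound from \Cref{lemma: operator bound} on the first factor, and the same application of \Cref{lemma: general excess risk type} (with $a=1/4$, $b=3/4$) on the second, each contributing $\delta_{t-s}^{1/2}\sqrt{\log^{1+\xi}(t-s)}$ uniformly in $t$. The exponent arithmetic checks out, so nothing further is needed.
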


\
\\
\begin{proof}
    Using the linear operator norm inequality, we may have
 \begin{align}\nonumber
      &\|T^{-p}(T_\st - T)(\fhast - f^*_\st)\|_\lt 
      \\\nonumber
      \le& \lft\|T^{-p}(T_\st - T)(T + \lambda_{t-s} \boldI )^{-1/4} T^{-1/4} \rgt\|_{\mathrm{op}} \lft\| T^{1/4} (T + \lambda_{t-s} \boldI)^{1/4}\lft(\fhast - f^*_\st \rgt) \rgt\|_\lt.
 \end{align}
We are going to bound each of the two terms here. For the first one, using \Cref{lemma: operator bound}, we write that with high probability 
$$
\forall t \in \se \qquad \lft\|T^{-p}(T_\st - T)(T + \lambda_{t-s} \boldI )^{-1/4} T^{-1/4} \rgt\|_{\mathrm{op}} \lesssim \delta_{t-s}^{1/2} \sqrt{\log^{1+\xi}(t-s)}.
$$
And for the second term, we use \Cref{lemma: general excess risk type} to have
$$
\forall t \in \se  \qquad \lft\| T^{1/4} (T + \lambda_{t-s} \boldI)^{1/4}\lft(\fhast - f^*_\st \rgt) \rgt\|_\lt \lesssim \delta_{t-s}^{1/2} \sqrt{\log^{1+\xi}(t-s)}.
$$

\
\\
The two bounds come together to have the required result.
\end{proof}

\
\\
\begin{lemma}\label{lemma: gt bound}
Let $\xi > 0$ and $1 \ge b \ge a + 1/2 \ge 1/4$. Then we have
$$
\E\left[ \max_{s<t\le e} \frac{\delta_{t-s}^{2(b-a-1)}}{\log^{1+\xi} (t-s)} \|T^{a}(T + \lambda_{t-s} {\bf I})^{-b} g_\st \|_\lt^2 \right] = O(1),
$$
where $g_\st = \frac{1}{(t-s)} \sum_{j=s+1}^t \eps_j L_\kh(X_j) $ defined in \Cref{pro: rough estimator}.
\end{lemma}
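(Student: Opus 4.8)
The plan is to reduce the statement to a per-$t$ second moment estimate via the spectral decomposition of $T$, and then upgrade it to a bound on the maximum over $t$ by a dyadic peeling argument in $t-s$. For the first step, fix $s<t\le e$ and write $g_\st=\sum_{l\ge1}g_{\st,l}\phi_l$ with $g_{\st,l}=\langle g_\st,\phi_l\rangle_\lt=(t-s)^{-1}\sum_{j=s+1}^t\eps_j\langle X_j,L_\kh(\phi_l)\rangle_\lt$, using that $L_\kh$ is self-adjoint. Since $T^a(T+\lambda_{t-s}\boldI)^{-b}$ is diagonal in $\{\phi_l\}$, Parseval gives
\[
\bigl\|T^a(T+\lambda_{t-s}\boldI)^{-b}g_\st\bigr\|_\lt^2=\sum_{l\ge1}\frac{\mathfrak{s}_l^{2a}}{(\mathfrak{s}_l+\lambda_{t-s})^{2b}}\,g_{\st,l}^2 .
\]
Applying \Cref{lemma: general markov type bound II} with $h=L_\kh(\phi_l)$, together with $\Sigma[L_\kh(\phi_l),L_\kh(\phi_l)]=\|T^{1/2}\phi_l\|_\lt^2=\mathfrak{s}_l$ (from \eqref{eq: sigma to T}) and stationarity, yields $\E[g_{\st,l}^2]=O(\mathfrak{s}_l/(t-s))$, whence
\[
\E\bigl\|T^a(T+\lambda_{t-s}\boldI)^{-b}g_\st\bigr\|_\lt^2\le\frac{O(1)}{t-s}\sum_{l\ge1}\frac{\mathfrak{s}_l^{2a+1}}{(\mathfrak{s}_l+\lambda_{t-s})^{2b}} .
\]
The series is a standard effective-dimension quantity: splitting at $l_0\asymp\lambda_{t-s}^{-1/(2r)}$ and using $\mathfrak{s}_l\asymp l^{-2r}$ and the constraint $b\ge a+\tfrac12$ (which forces $2a+1-2b\le0$, so head and tail sums are geometric up to the boundary), one obtains $\sum_l\mathfrak{s}_l^{2a+1}/(\mathfrak{s}_l+\lambda_{t-s})^{2b}\asymp\lambda_{t-s}^{2a+1-2b-1/(2r)}$, where tail convergence uses $2r(2a+1)>1$, valid since $a\ge-\tfrac14$ and $r>1$. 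As $\lambda_{t-s}\asymp\delta_{t-s}\asymp(t-s)^{-2r/(2r+1)}$, a short exponent computation gives $\frac{1}{t-s}\lambda_{t-s}^{2a+1-2b-1/(2r)}\asymp\delta_{t-s}^{2(1+a-b)}$, so $\E\bigl[\delta_{t-s}^{2(b-a-1)}\|T^a(T+\lambda_{t-s}\boldI)^{-b}g_\st\|_\lt^2\bigr]=O(1)$ for each fixed $t$.

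For the second step, decompose $\{s+1,\dots,e\}$ into dyadic blocks $\mathcal B_k=\{t:\,2^{k-1}<t-s\le2^k\}$, $k=1,\dots,K$ with $K=O(\log n)$, and write $V_m=\sum_{j=s+1}^{s+m}\eps_j L_\kh(X_j)$, so $g_{(s,s+m]}=V_m/m$. On $\mathcal B_k$ one has $\lambda_{t-s}\asymp\lambda_{2^k}$, hence $\|T^a(T+\lambda_{t-s}\boldI)^{-b}(T+\lambda_{2^k}\boldI)^{b}T^{-a}\|_{\mathrm{op}}=O(1)$ uniformly in $t\in\mathcal B_k$; combining this with $t-s\asymp2^k$ and $\log(t-s)\asymp k$ (the finitely many small values of $t-s$ being handled directly by Markov's inequality under the usual convention), $\max_{t\in\mathcal B_k}\frac{\delta_{t-s}^{2(b-a-1)}}{\log^{1+\xi}(t-s)}\|T^a(T+\lambda_{t-s}\boldI)^{-b}g_\st\|_\lt^2\lesssim\frac{\delta_{2^k}^{2(b-a-1)}}{k^{1+\xi}2^{2k}}\max_{m\le2^k}\|A_kV_m\|_\lt^2$, where $A_k:=T^a(T+\lambda_{2^k}\boldI)^{-b}$ is a fixed bounded operator. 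Applying the coordinatewise computation of Step 1 to $A_kV_m$ together with the paper's maximal inequality (\Cref{lemma: maximal inequality useful}) for the partial sums of the $\alpha$-mixing $\lt$-valued sequence $\{\eps_j A_kL_\kh(X_j)\}$ gives $\E[\max_{m\le2^k}\|A_kV_m\|_\lt^2]=O(2^k\lambda_{2^k}^{2a+1-2b-1/(2r)})$. The exponent bookkeeping of Step 1 shows $\delta_{2^k}^{2(b-a-1)}\lambda_{2^k}^{2a+1-2b-1/(2r)}\,2^{-k}=O(1)$ uniformly in $k$, so the $k$-th block contributes $O(k^{-(1+\xi)})$ in expectation; summing, $\E[\max_{s<t\le e}(\cdots)]\le\sum_{k\ge1}O(k^{-(1+\xi)})=O(1)$, which is the claim.

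The main obstacle is the simultaneous dependence of the quantity on $t$ through both the partial sum $V_{t-s}$ and the regularization $\lambda_{t-s}$ inside the operator; this prevents a direct application of the paper's maximal inequality (used elsewhere only for $t$-independent normalizations) and forces the two-stage reduction—first freezing $\lambda_{t-s}$ on dyadic blocks at $O(1)$ cost in operator norm, then invoking a genuine partial-sum maximal inequality for a Hilbert-space-valued mixing sequence. A secondary point requiring care is verifying that the effective-dimension series has exactly the stated polynomial rate across the whole parameter range $a+\tfrac12\le b\le1$, and that the resulting exponents conspire (via $\delta_{2^k}^{2(b-a-1)}\lambda_{2^k}^{2a+1-2b-1/(2r)}\asymp2^k$) so that only a summable $\sum_k k^{-(1+\xi)}$ remains after the logarithmic penalty.
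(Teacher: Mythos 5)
Your proof is correct, and it reaches the same per\nobreakdash-$t$ second-moment estimate as the paper (Parseval in the eigenbasis of $T$, \Cref{lemma: general markov type bound II} with $h=L_\kh(\phi_l)$, and the effective-dimension sum $\sum_l \mathfrak{s}_l^{2a+1}(\mathfrak{s}_l+\lambda_{t-s})^{-2b}\asymp\lambda_{t-s}^{2a+1-2b-1/(2r)}$), but it organizes the passage to the maximum over $t$ differently. The paper (via \Cref{lemma: TL first general}) keeps the $t$-dependent normalization inside the maximal inequality: it exploits that \Cref{lemma: maximal inequality useful} holds for arbitrary monotone weight sequences $\gamma_t$, checks by differentiation that $t\delta_t^{1/2}(\log t)^{(1+\xi)/2}$ and $t\lambda_t^{a+1/2}\delta_t^{1/2}(\log t)^{(1+\xi)/2}$ are admissible weights, and for each eigenindex $m$ splits the time range at a coordinate-dependent cutoff $f_m=\lfloor m^{2r+1}\rfloor\wedge T$ before summing over $m$. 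You instead freeze $\lambda_{t-s}$ on dyadic blocks in $t-s$ at an $O(1)$ operator-norm cost (the same comparison as \Cref{lemma:lambda transfer useful}) and apply the unweighted maximal inequality coordinatewise on each block, recovering summability from the $k^{-(1+\xi)}$ factor. Your route trades the paper's somewhat delicate weight-monotonicity checks and per-coordinate cutoffs for a more modular block argument; the paper's route avoids the extra union over dyadic scales and keeps everything in a single application of the weighted maximal inequality per coordinate. All the exponent bookkeeping you carry out ($\delta_{2^k}^{2(b-a-1)}\lambda_{2^k}^{2a+1-2b-1/(2r)}2^{-k}=O(1)$, tail convergence from $2r(2a+1)>1$) checks out across the stated range $1\ge b\ge a+\tfrac12\ge\tfrac14$, so I see no gap.
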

\
\\
\begin{proof}
    We may write
\begin{align}\nonumber
    \|T^a (T + \lambda_{t-s} {\bf I})^{-b}g_\st\|^2_\lt & = \sum_{l\ge1} \lb T^a (T + \lambda_{t-s} {\bf I})^{-b}g_\st, \phi_l \rb^2_\lt
    \\\nonumber
    & = \sum_{l\ge1} \lb g_\st, T^a (T + \lambda_{t-s} {\bf I})^{-b} \phi_l \rb^2_\lt
    \\\nonumber
    & = \sum_{l\ge1} \frac{\mathfrak{s}_l^{2a}}{(\mathfrak{s}_l + \lambda_{t-s})^{2b}} \lb g_\st, \phi_l \rb^2_\lt
    \\\nonumber
    & = \sum_{l\ge1} \frac{\mathfrak{s}_l^{2a}}{(\mathfrak{s}_l + \lambda_{t-s})^{2b}} \lb \frac{1}{t-s}\sum_{i=s+1}^t \eps_i L_\kh(X_i), \phi_l \rb^2_\lt
    \\\nonumber
    & = \sum_{l\ge1} \frac{\mathfrak{s}_l^{2a}}{(\mathfrak{s}_l + \lambda_{t-s})^{2b}} \left( \frac{1}{t-s} \sum_{j=s+1}^t \eps_j \lb  X_j, L_\kh \phi_l \rb_\lt \right)^2
\end{align}
By linearity of the expectation it lead us to
\begin{align}\nonumber
    &\E\left[ \max_{s < t\le e} \|T^a (T+ \lambda_{t-s} {\bf I})^{-b} g_\st \|^2 \frac{\delta_t^{2(b-a-1)}}{\log^{1+\xi} t} \right] 
    \\\nonumber
    \le& \sum_{l\ge1} \E\left[ \max_{s < t\le e} \left( \frac{\delta_{t-s}^{(b-a-1)}}{(\log (t-s))^{(1+\xi)/2}} \frac{\mathfrak{s}_l^{a}}{(t-s)(\mathfrak{s}_l + \lambda_{t-s})^{b}} \sum_{j=1}^t \eps_j \lb  X_j, L_\kh \phi_l \rb_\lt \right)^2 \right]
    \\\label{eq: TL gt1}
    = & O(1).
\end{align}
Here \eqref{eq: TL gt1} follows from \Cref{lemma: TL first general}.
\end{proof}

\
\\
\begin{lemma}\label{lemma: operator bound}
Let $\xi > 0$ and $1 \ge b \ge a + 1/2 \ge 1/4$. Let $ 0 < \nu < \frac{1}{2} - \frac{1}{4r}$. Suppose $p\in \{0,\nu\}$. Then
$$
\E\left[ \max_{s<t\le e} \frac{\delta_{t-s}^{2(b-a-1)}}{\log^{1+\xi} (t-s)} \|T^{a}(T + \lambda_{t-s} {\bf I})^{-b} (T_\st - T) T^{-p}\|_{\mathrm{op}}^2 \right] = O(1).
$$
\end{lemma}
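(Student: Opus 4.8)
The plan is to pass from the operator norm to the Hilbert--Schmidt norm and reduce everything to scalar centered partial sums indexed by the eigensystem $\{\phi_l\}_{l\ge1}$ of $T$ from \Cref{assume: model assumption spectral}. Since $\|B\|_{\op}\le\|B\|_{\mathrm{HS}}$ for every operator $B$, it suffices to bound $\|B_t\|_{\mathrm{HS}}^2$ for $B_t:=T^{a}(T+\lambda_{t-s}\boldI)^{-b}(T_\st-T)T^{-p}$. Expanding in $\{\phi_l\}$ and using $T\phi_l=\mathfrak{s}_l\phi_l$,
\[
\|B_t\|_{\mathrm{HS}}^2=\sum_{l\ge1}\sum_{m\ge1}\frac{\mathfrak{s}_l^{2a}}{(\mathfrak{s}_l+\lambda_{t-s})^{2b}}\,\mathfrak{s}_m^{-2p}\,\big\langle\phi_l,(T_\st-T)\phi_m\big\rangle_\lt^{2}.
\]
By the action of the sample operator in \eqref{eq: sample operator action}, $\langle\phi_l,(T_\st-T)\phi_m\rangle_\lt=\frac{1}{t-s}\sum_{j=s+1}^{t}\big(\langle X_j,L_\kh\phi_l\rangle_\lt\langle X_j,L_\kh\phi_m\rangle_\lt-\Sigma[L_\kh\phi_l,L_\kh\phi_m]\big)$, a centered sum over the interval. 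By \eqref{eq: sigma to T}, $\Sigma[L_\kh\phi_l,L_\kh\phi_l]=\|T^{1/2}\phi_l\|_\lt^2=\mathfrak{s}_l$, so \eqref{eq: markov I second equation} of \Cref{lemma: general markov type bound I} gives, for each fixed interval, $\E\big[\langle\phi_l,(T_\st-T)\phi_m\rangle_\lt^{2}\big]=O\!\big(\mathfrak{s}_l\mathfrak{s}_m/(t-s)\big)$.

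To control the maximum over $s<t\le e$ I would move $\max_{t}$ inside the double sum (legitimate by Tonelli, since all summands are nonnegative) and apply the maximal-inequality device used in the proof of \Cref{lemma: gt bound} (via \Cref{lemma: maximal inequality useful}): it absorbs the $t$-dependent weights $\delta_{t-s}^{\,b-a-1}\mathfrak{s}_l^{a}(\mathfrak{s}_l+\lambda_{t-s})^{-b}(t-s)^{-1}$ together with the $\log^{1+\xi}(t-s)$ normalization, leaving the per-$t$ second-moment bound above. This yields
\[
\E\Big[\max_{s<t\le e}\frac{\delta_{t-s}^{2(b-a-1)}}{\log^{1+\xi}(t-s)}\,\|B_t\|_{\mathrm{HS}}^2\Big]\ \lesssim\ \Big(\sum_{m\ge1}\mathfrak{s}_m^{1-2p}\Big)\cdot\sup_{s<t\le e}\Big\{\frac{\delta_{t-s}^{2(b-a-1)}}{t-s}\sum_{l\ge1}\frac{\mathfrak{s}_l^{2a+1}}{(\mathfrak{s}_l+\lambda_{t-s})^{2b}}\Big\},
\]
and it remains to show both factors on the right are $O(1)$.

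For the $m$-sum, $\mathfrak{s}_m\asymp m^{-2r}$ gives $\sum_m\mathfrak{s}_m^{1-2p}\asymp\sum_m m^{-2r(1-2p)}<\infty$ exactly because $p\in\{0,\nu\}$ with $\nu<\tfrac12-\tfrac1{4r}$ (so $1-2p>\tfrac1{2r}$; the case $p=0$ uses $r>1$). For the $l$-sum I would split at the resolution level $N\asymp\lambda_{t-s}^{-1/(2r)}\asymp(t-s)^{1/(2r+1)}$, where $\mathfrak{s}_N\asymp\lambda_{t-s}$: for $l\le N$ one has $\mathfrak{s}_l+\lambda_{t-s}\asymp\mathfrak{s}_l$ and the terms are $\asymp\mathfrak{s}_l^{-(2b-2a-1)}\asymp l^{\,2r(2b-2a-1)}$, summing to $\asymp N^{2r(2b-2a-1)+1}$; for $l>N$ one has $\mathfrak{s}_l+\lambda_{t-s}\asymp\lambda_{t-s}$ and the terms are $\asymp\lambda_{t-s}^{-2b}l^{-2r(2a+1)}$, and since $2r(2a+1)>1$ (which follows from $a\ge-\tfrac14$ and $r>1$) the tail sums to $\asymp\lambda_{t-s}^{-2b}N^{\,1-2r(2a+1)}$. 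Using $\lambda_{t-s}\asymp(t-s)^{-2r/(2r+1)}$, both pieces equal $\asymp(t-s)^{(1+2r(2b-2a-1))/(2r+1)}$, hence $\frac{1}{t-s}\sum_l\frac{\mathfrak{s}_l^{2a+1}}{(\mathfrak{s}_l+\lambda_{t-s})^{2b}}\asymp(t-s)^{4r(b-a-1)/(2r+1)}$, which is precisely the reciprocal of $\delta_{t-s}^{2(b-a-1)}\asymp(t-s)^{-4r(b-a-1)/(2r+1)}$; the supremum is therefore $O(1)$ uniformly in $t$ (and in $s,e$), and combining the two factors finishes the proof.

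The step I expect to be the main obstacle is the maximal-inequality passage: the weight $\mathfrak{s}_l^{a}(\mathfrak{s}_l+\lambda_{t-s})^{-b}$ itself varies with $t$, so one cannot simply invoke a maximal inequality for a fixed process; one must argue, as in \Cref{lemma: gt bound} (via \Cref{lemma: TL first general} / \Cref{lemma: maximal inequality useful}), that these $t$-dependent weights are monotone enough in $t-s$ to be pulled through the maximum at the cost of the $\log^{1+\xi}(t-s)$ factor, and that this can be done uniformly over the infinitely many index pairs $(l,m)$ after the Tonelli interchange. Everything else is bookkeeping on the polynomially decaying spectrum, lining up the exponents of $(t-s)$ so that the deterministic factors cancel.
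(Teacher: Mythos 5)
Your proposal is correct and follows essentially the same route as the paper: the paper's own proof bounds the operator norm by the identical eigenbasis double sum (its Cauchy--Schwarz step in \eqref{eq: operator expansion term} is exactly the Hilbert--Schmidt bound you invoke), applies \Cref{lemma: general markov type bound I} for the per-interval second moments, and delegates the maximum over $t$ together with the spectral bookkeeping to \Cref{lemma: TL second general}, which repeats the peeling/monotone-weight argument of \Cref{lemma: TL first general} that you correctly identify as the crux. The only cosmetic difference is that your displayed factorization into a sum over $m$ times a supremum over $t$ is a heuristic for what the maximal inequality actually delivers (a sum over $t$ of squared weights, made convergent by the $1/(t\log^{1+\xi}t)$ structure rather than a uniform-in-$t$ bound), but your exponent cancellation $\delta_{t-s}^{2(b-a-1)}\,(t-s)^{-1}\sum_l \mathfrak{s}_l^{2a+1}(\mathfrak{s}_l+\lambda_{t-s})^{-2b}=O(1)$ is the same computation the paper performs inside those lemmas.
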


\
\\
\begin{proof}
    Using the definition of operator norm, we may write
$$
\| T^a (T + \lambda_{t-s} {\bf I})^{-b} (T - T_\st) T^{-p} \|_{\mathrm{op}} := \sup_{\substack{ \|h\|_\lt = 1}} \left| \lb h, T^a (T + \lambda_{t-s} {\bf I})^{-b} (T - T_\st) T^{-p} h \rb_\lt \right|.
$$
Let $h \in \lt$ such that $ \|h\|_\lt = 1 $. This means $h = \sum_{j\ge1} h_j \phi_j$ and $\sum_{j\ge1} h_j^2 =1$. Then 

 \begin{align}\nonumber
     \lb h,T^a (T + \lambda_{t-s} {\bf I})^{-b} (T - T_\st) T^{-p} h \rb_\lt &= \sum_{j \ge 1} \sum_{m \ge 1} h_j h_m \lb \phi_j , T^a(T + \lambda_{t-s} {\bf I})^{-b} (T - T_\st) T^{-p} \phi_m \rb_\lt
     \\\nonumber
     & = \sum_{j \ge 1} \sum_{m \ge 1} h_j h_m \lb T^a(T + \lambda_{t-s} {\bf I})^{-b}\phi_j , (T - T_\st) T^{-p} \phi_m \rb_\lt
     \\\nonumber
     & = \sum_{j \ge 1} \sum_{m \ge 1} h_j h_m \left\lb \frac{\mathfrak{s}_j^a}{(\mathfrak{s}_j + \lambda_{t-s})^b}\phi_j , (T - T_\st) \mathfrak{s}_m^{-p} \phi_m \right\rb_\lt
     \\\nonumber
     & = \sum_{j \ge 1} \sum_{m \ge 1} h_j h_m \frac{\mathfrak{s}_j^a}{(\mathfrak{s}_j + \lambda_{t-s})^b} \mathfrak{s}_m^{-p} \left\lb \phi_j , (T - T_\st)  \phi_m \right\rb_\lt
     \\\label{eq: ot1}
     &\le \sqrt{\sum_{j \ge 1} \sum_{m \ge 1}h_j^2 h_m^2} \sqrt{\sum_{j \ge 1} \sum_{m \ge 1}  \frac{\mathfrak{s}_j^{2a}}{(\mathfrak{s}_j + \lambda_{t-s})^{2b}} \mathfrak{s}_m^{-2p} \left\lb \phi_j , (T - T_\st)  \phi_m \right \rb^2_\lt  }
     \\\label{eq: ot2}
     & = \sqrt{\sum_{j \ge 1} \sum_{m \ge 1}  \frac{\mathfrak{s}_j^{2a}}{(\mathfrak{s}_j + \lambda_{t-s})^{2b}} \mathfrak{s}_m^{-2\nu} \left\lb \phi_j , (T - T_\st)  \phi_m \right \rb^2_\lt  }
 \end{align}
 The second last inequality \eqref{eq: ot1} follows from Cauchy-Schwarz, where one may think $\lb A, B \rb = \sum_{j \ge 1} \sum_{m \ge 1} A_{jm} B_{jm} $. The last equality \eqref{eq: ot2} follows from 
 $$
 \sum_{j \ge 1} \sum_{m \ge 1}h_j^2 h_m^2 =  \sum_{j \ge 1}h_j^2\sum_{m \ge 1} h_m^2 = 1,
 $$
 by definition of $h$.
\\ 
We have,
\begin{equation}\label{eq: operator expansion term}
     \| T^a (T + \lambda_{t-s} {\bf I})^{-b} (T - T_\st) T^{-p} \|_{\mathrm{op}}^2 \le \sum_{j \ge 1} \sum_{m \ge 1}  \frac{\mathfrak{s}_j^{2a}}{(\mathfrak{s}_j + \lambda_{t-s})^{2b}} \mathfrak{s}_m^{-2p} \left\lb \phi_j , (T - T_\st)  \phi_m \right \rb^2_\lt  
\end{equation}
\\
By linearity of expectation
\begin{align}\nonumber
    & \E\left[ \max_{s<t\le e} \|T^{a}(T + \lambda_{t-s} {\bf I})^{-b} (T_\st - T) T^{-p}\|_{\mathrm{op}}^2  \frac{\delta_{t-s}^{2(b-a-1)}}{\log^{1+\xi} t} \right]
    \\\nonumber
    \le & \sum_{m \ge 1} \mathfrak{s}_m^{-2p} \E \left[  \max_{s< t \le e} \sum_{j\ge1} \frac{\mathfrak{s}_j^{2a}}{(\mathfrak{s}_j + \lambda_{t-s})^{2b} } \frac{\delta_t^{2(b-a -1)} }{\log^{1+\xi}t }\left| \lb \phi_k, (T_\st - T)\phi_j \rb_\lt \right|^2 \right]
    \\\nonumber
    \lesssim &  \sum_{m \ge 1} \mathfrak{s}_m^{1-2p} < \infty,
\end{align}
where the last line follows from \Cref{lemma: TL second general}, and $ \sum_{m \ge 1} \mathfrak{s}_m^{1-2p} \asymp  \sum_{m \ge 1} m^{ -(1-2p)2r}$ is summable given we have $(1-2p)2r > 1$.
\end{proof}

\
\\
\begin{lemma}\label{lemma: Ht bound}
Let $\xi > 0$ and $1 \ge b \ge a + 1/2 \ge 1/4$. Suppose $\{h_j\}$ be some $\lt$ sequence that satisfies $\Sigma\lft[ L_\kh(h_j), L_\kh(h_j) \rgt]\le M < \infty$. Then we have
\begin{equation}\label{eq: 22  Ht general}
    \E\left[ \max_{s<t\le e} \frac{\delta_{t-s}^{2(b-a-1)}}{\log^{1+\xi} (t-s)} \lft\|T^{a} \lft(T + \lambda_{t-s} {\bf I} \rgt)^{-b} \lft( \frac{1}{t-s} \sum_{j=s+1}^t \lft( \lb L_\kh(X_j), h_j \rb_\lt L_\kh(X_j) - T h_j \rgt) \rgt)\rgt\|_{\mathrm{op}}^2  \right] = O(1).
\end{equation}

\
\\
Consequently, it holds that
\begin{equation}\label{eq: 22 Ht specific}
    \E\left[ \max_{s<t\le e} \frac{\delta_{t-s}^{2(b-a-1)}}{\log^{1+\xi} (t-s)} \|T^{a} \lft(T + \lambda_{t-s} {\bf I} \rgt)^{-b} H_\st \|_{\mathrm{op}}^2 \right] = O(1),
\end{equation}
where $H_\st = \frac{1}{t-s} \sum_{j=s+1}^t \lft( \lb L_\kh(X_j), f_j^* \rb_\lt L_\kh(X_j) - T f^*_j \rgt)$  defined in \Cref{lemma: bias expansion}, and that

\begin{equation}\label{eq: 22 Gt bound}
    \E\left[ \max_{s<t\le e} \frac{\delta_{t-s}^{2(b-a-1)}}{\log^{1+\xi} (t-s)} \|T^{a} \lft(T + \lambda_{t-s} {\bf I} \rgt)^{-b} G_\st \|_{\mathrm{op}}^2 \right] = O(1),    
\end{equation}

where $G_\st = \frac{1}{t-s} \sum_{j=s+1}^t \lft( \lb L_\kh(X_j), f_j^* - f^*_\st \rb_\lt L_\kh(X_j) - T (f^*_j - f^*_\st) \rgt)$.

\end{lemma}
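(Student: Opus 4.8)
The plan is to first establish the general bound~\eqref{eq: 22  Ht general} by expanding the $\lt$-norm in the eigenbasis of $T$, and then read off~\eqref{eq: 22 Ht specific} and~\eqref{eq: 22 Gt bound} as special cases. Write $\Delta_j := \lb L_\kh(X_j), h_j\rb_\lt L_\kh(X_j) - T h_j$. The first point is that $\{\Delta_j\}$ is centred: since $\E[\lb X, f\rb_\lt X(\cdot)] = \int f(u)\Sigma(u,\cdot)\,du = L_\Sigma f(\cdot)$ by definition of $\Sigma$, we get $\E[\lb L_\kh(X_j), h_j\rb_\lt L_\kh(X_j)] = L_\kh L_\Sigma L_\kh h_j = T h_j$ by~\eqref{eq: def T}, hence $\E[\Delta_j]=0$. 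Expanding in the orthonormal eigenbasis $\{\phi_l\}_{l\ge1}$ of $T$, exactly as in the proofs of \Cref{lemma: gt bound} and \Cref{lemma: operator bound}, one has for any $s<t\le e$
$$
\lft\| T^{a}(T+\lambda_{t-s}\boldI)^{-b}\tfrac{1}{t-s}\sum_{j=s+1}^{t}\Delta_j \rgt\|_\lt^{2} = \sum_{l\ge1}\frac{\mathfrak{s}_l^{2a}}{(\mathfrak{s}_l+\lambda_{t-s})^{2b}}\lft(\tfrac{1}{t-s}\sum_{j=s+1}^{t}\lb\Delta_j,\phi_l\rb_\lt\rgt)^{2},
$$
where $\lb\Delta_j,\phi_l\rb_\lt = \lb X_j, L_\kh h_j\rb_\lt\lb X_j, L_\kh\phi_l\rb_\lt - \Sigma[L_\kh h_j, L_\kh\phi_l]$.

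Next I would control the per-component partial sums. Because $\Sigma[L_\kh\phi_l, L_\kh\phi_l] = \lb T\phi_l,\phi_l\rb_\lt = \mathfrak{s}_l$ and $\Sigma[L_\kh h_j, L_\kh h_j]\le M$ by hypothesis, \Cref{lemma: general markov type bound I} applied with $\mathfrak{f}_j = L_\kh h_j$ and $h = L_\kh\phi_l$, together with stationarity, yields $\E[|\sum_{j=s+1}^{t}\lb\Delta_j,\phi_l\rb_\lt|^{2}] = O((t-s)\mathfrak{s}_l)$, uniformly in $l$ and in the interval. This is the exact analogue of the per-component second-moment estimate used for $g_\st$ in \Cref{lemma: gt bound} (which came there from \Cref{lemma: general markov type bound II}), with $\mathfrak{s}_l$ now playing the role of $\Sigma[L_\kh\phi_l, L_\kh\phi_l]$. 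From here the argument is the same as in \Cref{lemma: gt bound}: by monotone convergence exchange $\E\max_{s<t\le e}$ with $\sum_{l\ge1}$, apply the maximal inequality of \Cref{lemma: maximal inequality useful} to each partial-sum process $t\mapsto\sum_{j=s+1}^{t}\lb\Delta_j,\phi_l\rb_\lt$, and use that the weights $\mathfrak{s}_l^{2a}/(\mathfrak{s}_l+\lambda_{t-s})^{2b}$, the normalisation $\delta_{t-s}^{2(b-a-1)}/\log^{1+\xi}(t-s)$ and the decay $\mathfrak{s}_l\asymp l^{-2r}$ with $r>1$, $1\ge b\ge a+1/2\ge1/4$, make the resulting series over $l$ summable. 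This establishes~\eqref{eq: 22  Ht general}.

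For~\eqref{eq: 22 Ht specific} I take $h_j = f_j^*$: since $\{\beta_j^*\}_{j=1}^n$ takes only finitely many values and $\Sigma[L_\kh f_j^*, L_\kh f_j^*] = \Sigma[\beta_j^*,\beta_j^*] = \E[\lb X,\beta_j^*\rb_\lt^2]<\infty$, the hypothesis of~\eqref{eq: 22  Ht general} holds with $M=\max_{1\le j\le n}\Sigma[\beta_j^*,\beta_j^*]$, and $H_\st$ is precisely the sum in~\eqref{eq: 22  Ht general}. For~\eqref{eq: 22 Gt bound} I would use the decomposition $G_\st = H_\st - (T_\st - T)f^*_\st$, which follows on recognising $\frac{1}{t-s}\sum_{j=s+1}^{t}\lb L_\kh(X_j),f^*_\st\rb_\lt L_\kh(X_j) = T_\st f^*_\st$ (cf.~\eqref{eq: sample operator action}) and $\frac{1}{t-s}\sum_{j=s+1}^{t} T f^*_j = T f^*_\st$; then
$$
\lft\|T^{a}(T+\lambda_{t-s}\boldI)^{-b}G_\st\rgt\|_\lt \le \lft\|T^{a}(T+\lambda_{t-s}\boldI)^{-b}H_\st\rgt\|_\lt + \lft\|T^{a}(T+\lambda_{t-s}\boldI)^{-b}(T_\st - T)\rgt\|_{\op}\|f^*_\st\|_\lt,
$$
and apply~\eqref{eq: 22 Ht specific} to the first term, \Cref{lemma: operator bound} (with $p=0$) to the second, and $\|f^*_\st\|_\lt\le\max_{1\le j\le n}\|f_j^*\|_\lt<\infty$.

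The step I expect to be the main obstacle is the joint control over the time index $t$ and the spectral index $l$ of the reweighted partial sums --- i.e.\ verifying that, after the $\delta_{t-s}^{2(b-a-1)}/\log^{1+\xi}(t-s)$ renormalisation, the series $\sum_l$ of the per-component maximal bounds converges. However, since the present object has the same structure and the same per-component second-moment order as $g_\st$ in \Cref{lemma: gt bound}, this bookkeeping can be imported essentially verbatim, and the only genuinely new ingredients are the centering identity $\E[\Delta_j]=0$ and the substitution of \Cref{lemma: general markov type bound I} for \Cref{lemma: general markov type bound II}.
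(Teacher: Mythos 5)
Your proof of the main bound \eqref{eq: 22  Ht general} follows the paper's route essentially verbatim: expand in the eigenbasis of $T$, observe the centering $\E[\Delta_j]=0$, obtain the per-component second moment $O((t-s)\mathfrak{s}_l)$ from \Cref{lemma: general markov type bound I} with $\mathfrak{f}_j = L_\kh(h_j)$ and $h = L_\kh(\phi_l)$, and then import the maximal-inequality and summability bookkeeping from \Cref{lemma: gt bound}; this is exactly what the paper does (packaged as \Cref{lemma: TL third general}, whose proof is explicitly stated to follow the same steps as \Cref{lemma: TL first general}). The deduction of \eqref{eq: 22 Ht specific} with $h_j=f_j^*$ is also identical.

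For \eqref{eq: 22 Gt bound} you deviate slightly: you decompose $G_\st = H_\st - (T_\st - T)f^*_\st$ and bound the second piece by $\|T^{a}(T+\lambda_{t-s}\boldI)^{-b}(T_\st-T)\|_{\mathrm{op}}\,\|f^*_\st\|_\lt$ via \Cref{lemma: operator bound} with $p=0$, whereas the paper invokes \eqref{eq: 22  Ht general} directly with $h_j = f_j^* - f^*_\st$. Your route is, if anything, cleaner: the sequence $f_j^* - f^*_\st$ depends on $t$, so the paper's direct application of the general bound (which is stated for a fixed sequence $\{h_j\}$) requires a small additional argument that your decomposition sidesteps, at the cost of one extra application of \Cref{lemma: operator bound} and the elementary inequality $(A+B)^2 \le 2A^2+2B^2$ inside the expected maximum. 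Both yield the same order, so the proposal is correct.
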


\
\\
\begin{proof}
    We may write
\begin{align}\nonumber
    &\lft\|T^{a} \lft(T + \lambda_{t-s} {\bf I} \rgt)^{-b} \lft( \frac{1}{t-s} \sum_{j=s+1}^t \lft( \lb L_\kh(X_j), h_j \rb_\lt L_\kh(X_j) - T h_j \rgt) \rgt)\rgt\|_{\mathrm{op}}^2  
    \\\nonumber
    & = \sum_{m\ge1} \lft\lb T^a (T + \lambda_{t-s} {\bf I})^{-b} \lft( \frac{1}{t-s} \sum_{j=s+1}^t \lft( \lb L_\kh(X_j), h_j \rb_\lt L_\kh(X_j) - T h_j \rgt) \rgt), \phi_m \rgt\rb^2_\lt
    \\\nonumber
    & = \sum_{m\ge1} \lft\lb \lft( \frac{1}{t-s} \sum_{j=s+1}^t \lft( \lb L_\kh(X_j), h_j \rb_\lt L_\kh(X_j) - T h_j \rgt) \rgt), T^a (T + \lambda_{t-s} {\bf I})^{-b} \phi_m \rgt\rb^2_\lt
    \\\nonumber
    & = \sum_{m\ge1} \frac{\mathfrak{s}_m^{2a}}{(\mathfrak{s}_m + \lambda_{t-s})^{2b}} \lft\lb \lft( \frac{1}{t-s} \sum_{j=s+1}^t \lft( \lb L_\kh(X_j), h_j \rb_\lt L_\kh(X_j) - T h_j \rgt) \rgt), \phi_m \rgt\rb^2_\lt
    \\\nonumber
    & = \sum_{m\ge1} \frac{\mathfrak{s}_m^{2a}}{(\mathfrak{s}_m + \lambda_{t-s})^{2b}} \left( \frac{1}{t-s} \sum_{j=s+1}^t \lb X_i, L_\kh(h_j \rb_\lt \lb  X_i, L_\kh \phi_m \rb_\lt - \lb h_j, T\phi_m \rb_\lt \right)^2
\end{align}
By linearity of the expectation it lead us to
\begin{align}\nonumber
    &\E\left[ \max_{1 < t\le n} \lft\|T^a (T+ \lambda_{t-s} {\bf I})^{-b} \lft( \frac{1}{t-s} \sum_{j=s+1}^t \lft( \lb L_\kh(X_j), h_j \rb_\lt L_\kh(X_j) - T h_j \rgt) \rgt) \rgt\|^2 \frac{\delta_t^{2(b-a-1)}}{\log^{1+\xi} t} \right] 
    \\\nonumber
    \le& \sum_{m\ge1} \E\left[ \max_{1 < t\le n} \left( \frac{\delta_{t-s}^{(b-a-1)}}{(\log (t-s))^{(1+\xi)/2}} \frac{\mathfrak{s}_m^{a}}{(t-s)(\mathfrak{s}_m + \lambda_{t-s})^{b}} \sum_{j=s+1}^t \lb X_i, L_\kh(h_j \rb_\lt \lb  X_i, L_\kh \phi_m \rb_\lt - \lb h_j, T\phi_m \rb_\lt \right)^2 \right]
    \\\label{eq: TL ht1}
    = & O(1).
\end{align}
Here \eqref{eq: TL ht1} follows from \Cref{lemma: TL third general} because we have  $\Sigma \lft[L_\kh(h_j), L_\kh(h_j) \rgt] < \infty$.

\
\\
The result \eqref{eq: 22 Ht specific} follows from \eqref{eq: 22  Ht general} because $\|\beta^*_j\|_\hk < \infty$. For \eqref{eq: 22 Gt bound}, we can again use \eqref{eq: 22  Ht general} as we have $\Sigma\lft[ \beta^*_j - \beta^*_\st, \beta^*_j - \beta^*_\st\rgt] \le O(1) \max_{1\le k \le \Ktilde} \|\beta^*_{\eta_k}\|_\hk < \infty$.
\end{proof}




\
\\
\begin{lemma}\label{lemma: TL first general}
Let $\xi > 0$ and $1 \ge b \ge a + 1/2 \ge 1/4$. Then we have
$$
\E \left[  \max_{s< t \le e} \sum_{m\ge1} \frac{\mathfrak{s}_m^{2a}}{(\mathfrak{s}_m + \lambda_{t-s})^{2b} } \frac{\delta_{t-s}^{2(b-a -1)} }{\log^{1+\xi} (t-s) }\left| \frac{1}{ t-s }  \sum_{j=s+1}^t \lb X_j, L_\kh(\phi_m) \rb_\lt \eps_j \right|^2 \right] = O(1).
$$
\end{lemma}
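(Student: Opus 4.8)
The plan is to bound the sum over $m$ term by term after exchanging $\max$ and $\sum_m$ (legitimate since all summands are nonnegative), and for each fixed $m$ to combine the precise polynomial tuning of the deterministic weights with a blockwise maximal inequality. Write $q = t-s$ and $\zeta^{(m)}_j = \lb X_j, L_\kh(\phi_m)\rb_\lt \eps_j$, $R_{q,m} = \sum_{j=s+1}^{s+q}\zeta^{(m)}_j$. By \Cref{assume: model assumption spectral} and \eqref{eq: sigma to T}, $\Sigma[L_\kh\phi_m, L_\kh\phi_m] = \lb T\phi_m,\phi_m\rb_\lt = \mathfrak{s}_m$, so \Cref{assume: model assumption flr CP} and the argument of \Cref{lemma: general markov type bound II} give $\E[\zeta^{(m)2}_1] = O(\mathfrak{s}_m)$ and, via \Cref{lemma: mixing covariance inequality}, $|\mathrm{Cov}(|\zeta^{(m)}_1|,|\zeta^{(m)}_{1+k}|)| = O(\mathfrak{s}_m\,\alpha^{1/3}(k))$. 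Since $\sum_k \alpha^{1/3}(k) < \infty$, a Doob/Rio-type maximal inequality for the $\alpha$-mixing partial sums yields, for every $N$,
$$
\E\Big[\max_{1\le q\le N} R_{q,m}^2\Big] = O(N\,\mathfrak{s}_m),
$$
which is essentially log-free; this is the key technical input (it can be obtained by a standard blocking argument or by invoking the appropriate version of \Cref{lemma: maximal inequality useful}).

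\textbf{Reduction to a dyadic block estimate.} After exchanging $\max$ and $\sum_m$, it suffices to prove
$$
\sum_{m\ge1}\E\Big[\max_{1\le q\le e-s} D_m(q)\,\frac{R_{q,m}^2}{q\log^{1+\xi}q}\Big] = O(1),\qquad D_m(q) := \frac{\mathfrak{s}_m^{2a}}{(\mathfrak{s}_m+\lambda_q)^{2b}}\cdot\frac{\delta_q^{2(b-a-1)}}{q}.
$$
Using $\lambda_q\asymp\delta_q\asymp q^{-2r/(2r+1)}$ (hence $q^{-1}\asymp\lambda_q^{1+1/(2r)}$) one checks that $\sup_q D_m(q)\asymp \mathfrak{s}_m^{-1+1/(2r)}$, attained exactly when $\lambda_q\asymp\mathfrak{s}_m$, i.e. $q\asymp m^{2r+1}$, and that $D_m$ decays geometrically on the dyadic scale on either side of that value (the exponent of $\lambda_q$ in $D_m$ is negative for $\lambda_q\gg\mathfrak{s}_m$ and positive for $\lambda_q\ll\mathfrak{s}_m$ because $1\ge b\ge a+\tfrac12$). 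Partition $\{1,\dots,e-s\}$ into dyadic blocks $B_k = [2^k,2^{k+1})$, on which $\lambda_q,\delta_q$ are comparable to their values at $2^k$ and $\log^{1+\xi}q\asymp k^{1+\xi}$. Bounding $\max_q(\cdot)\le\sum_k\max_{q\in B_k}(\cdot)$ and using the block maximal inequality above,
$$
\E\Big[\max_{q\in B_k} D_m(q)\frac{R_{q,m}^2}{q\log^{1+\xi}q}\Big] \;\lesssim\; D_m(2^k)\cdot\frac{1}{2^k k^{1+\xi}}\cdot\E\Big[\max_{q< 2^{k+1}} R_{q,m}^2\Big] \;\lesssim\; \frac{D_m(2^k)\,\mathfrak{s}_m}{k^{1+\xi}}.
$$

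\textbf{Summation.} Summing over $k$, the geometric decay of $k\mapsto D_m(2^k)$ around its peak at $k_m\asymp(2r+1)\log_2 m$ gives $\sum_k D_m(2^k)/k^{1+\xi}\asymp D_m(2^{k_m})/k_m^{1+\xi}\asymp \mathfrak{s}_m^{-1+1/(2r)}/(\log m)^{1+\xi}$. Hence the $m$-th term of the outer sum is $O\big(\mathfrak{s}_m^{1/(2r)}(\log m)^{-(1+\xi)}\big) = O\big(m^{-1}(\log m)^{-(1+\xi)}\big)$ since $\mathfrak{s}_m\asymp m^{-2r}$, and $\sum_{m\ge2} m^{-1}(\log m)^{-(1+\xi)} < \infty$ because $\xi>0$. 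This gives the claimed $O(1)$ bound.

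\textbf{Main obstacle.} The delicate point is precisely this last factor. A single global application of the maximal inequality over the whole range would only produce $\sup_q D_m(q)\cdot O(\mathfrak{s}_m) = O(m^{-1})$, whose sum over $m$ diverges. The $\log^{1+\xi}q$ already present in the statement is exactly what upgrades $\sum_m m^{-1}$ to a convergent series — but this gain materialises only at the critical scale $q\asymp m^{2r+1}$, where $\log^{1+\xi}q\asymp(\log m)^{1+\xi}$. Recovering it therefore forces one to localise in $q$ via the dyadic blocking and to use a (nearly log-free) blockwise maximal inequality rather than one crude global estimate; proving that blockwise maximal bound for the $\alpha$-mixing sequence $\{\zeta^{(m)}_j\}$ under the weak sixth-moment and polynomial mixing assumptions is the real content.
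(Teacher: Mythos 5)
Your proposal is correct, and it reaches the bound by a genuinely different bookkeeping route than the paper, while resting on the same two probabilistic inputs (the second–moment bound of \Cref{lemma: general markov type bound II} applied to $h=L_\kh(\phi_m)$ with $\Sigma[h,h]=\mathfrak{s}_m$, and the H\'ajek--R\'enyi-type maximal inequality of \Cref{lemma: maximal inequality useful}, which with $\gamma_t\equiv 1$ gives exactly your blockwise bound $\E[\max_{q\le N}R_{q,m}^2]=O(N\mathfrak{s}_m)$ — the paper uses it in precisely this form in \Cref{lemma: max to Op by peeling}). The paper also localizes at the critical scale $q\asymp m^{2r+1}$ (its cutoff $f_m=\lfloor m^{2r+1}\rfloor\wedge T$ is your $k_m$), but it splits the range of $t$ into just two monotone pieces, applies the \emph{weighted} maximal inequality on each with weights $t\delta_t^{1/2}(\log t)^{(1+\xi)/2}$ and $t\lambda_t^{a+1/2}\delta_t^{1/2}(\log t)^{(1+\xi)/2}$ (verifying monotonicity by differentiation), then swaps the $m$- and $t$-sums and closes with $\sum_t \delta_t^{-1-1/(2r)}t^{-2}\log^{-(1+\xi)}t=\sum_t t^{-1}\log^{-(1+\xi)}t<\infty$. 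You instead use constant weights on dyadic blocks, let the geometric decay of $D_m(2^k)$ away from its peak (guaranteed on both sides by $b\ge a+\tfrac12$ and $a\ge-\tfrac14$, with exponents $2(b-a-\tfrac12)+\tfrac1{2r}>0$ and $-2a-1+\tfrac1{2r}<0$) collapse the $k$-sum onto the peak block, and extract convergence from $\sum_m m^{-1}(\log m)^{-(1+\xi)}$; the two arguments are dual in where the $\log^{1+\xi}$ factor is cashed in. Your version avoids the monotonicity checks and makes the role of the critical scale more transparent; the paper's version avoids the (minor) edge cases your peak-domination step needs (the $m=1$ term, where $k_m^{1+\xi}$ degenerates and one should fall back on $\sum_k k^{-(1+\xi)}<\infty$, and the case $e-s<m^{2r+1}$ where the peak block is out of range, handled by monotonicity of $D_m$ below the peak). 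Neither is a gap.
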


\
\\
\begin{proof}
    For simplicity, denote $Y_{j,m} = \lb X_j, L_\kh(\phi_m) \rb_\lt \eps_j $. Observe that
    $$
    \Sigma\lft[L_\kh(\phi_m), L_\kh(\phi_m) \rgt] = \lb T \phi_m, \phi_m \rb_\lt = \mathfrak{s}_m.
    $$
\
\\
{ We are going to prove the result for a general interval $\{1,\ldots, T\}$, the result for the $\se$ follows from translation and stationarity.}

\
\\
    Using \Cref{lemma: general markov type bound II}, we may write 
    $$
    \E\lft[ \lft|\sum_{j=1}^t Y_{j,m} \rgt|^2 \rgt] \le O(t) \mathfrak{s}_m.
    $$
    We apply \Cref{lemma: maximal inequality useful} and to have this result: for any non-decreasing sequence $ \{\gamma_{t}\}_{t = 1}^T$
    \begin{equation}\label{eq: max ineq g term}
        \E\lft[ \max_{1<t\le T} \lft|\frac{1}{\gamma_{t}}\sum_{j=1}^t Y_{j,m} \rgt|^2 \rgt] = C \sum_{t=1}^T \frac{1}{\gamma_{t}^2} \mathfrak{s}_m,
    \end{equation}
for some constant $C>0$.

\
\\
Observe that
$$
\frac{\delta_{t}^{2 (b-a -1)} }{(\mathfrak{s}_m + \lambda_{t})^{2b - (1+2a) + (1 +2a)} } \le \frac{\delta_{t}^{2( b-a -1)} }{\lambda_{t}^{2b - (1+2a)}(\mathfrak{s}_m + \lambda_{t})^{ (1 +2a)} } \lesssim \frac{\delta_{t}^{-1}}{(\mathfrak{s}_m +\lambda_{t})^{1+2a}}.
$$
It led us to
\begin{align}\nonumber
    &\frac{\mathfrak{s}_m^{2a}}{(\mathfrak{s}_m + \lambda_{t})^{2b} } \frac{\delta_{t}^{2(b-a -1)} }{\log^{1+\xi}t }\left| \frac{1}{t}  \sum_{j=1}^t Y_{j,m} \right|^2 \lesssim \frac{\mathfrak{s}_m^{2a}}{(\mathfrak{s}_m + \lambda_{t})^{1+2a} } \frac{\delta_{t}^{ -1} }{\log^{1+\xi}t }\left| \frac{1}{t}  \sum_{j=1}^t Y_{j,m} \right|^2 
    \\\nonumber
    &\le \left\{ \frac{\mathfrak{s}_m^{2a}}{\mathfrak{s}_m^{1+2a}} \wedge \frac{\mathfrak{s}_m^{2a}}{\lambda_{t}^{1+2a}} \right\}\frac{\delta_{t}^{ -1} }{\log^{1+\xi}t }\left| \frac{1}{t}  \sum_{j=1}^t Y_{j,m} \right|^2
    \\\label{eq: f15a}
    \implies & \sum_{m\ge1} \frac{\mathfrak{s}_m^{2a}}{(\mathfrak{s}_m + \lambda_{t})^{2b} } \frac{\delta_{t}^{2(b-a -1)} } {\log^{1+\xi}t }\left| \frac{1}{t}  \sum_{j=1}^t Y_{j,m}\right|^2 \lesssim \sum_{m\ge1}  \left\{ \frac{\mathfrak{s}_m^{2a}}{\mathfrak{s}_m^{1+2a}} \wedge \frac{\mathfrak{s}_m^{2a}}{\lambda_{t}^{1+2a}} \right\}\frac{\delta_{t}^{ -1} }{\log^{1+\xi}t }\left| \frac{1}{t}  \sum_{j=1}^t Y_{j,m} \right|^2
\end{align}

\
\\
{\bf Case I: $a\le0$}

\
\\
Let $f_m = \lfloor{m^{(2r +1)}} \rfloor \wedge T$. Using \eqref{eq: f15a}, we write
\begin{align}\nonumber
    &\sum_{m\ge1} \frac{\mathfrak{s}_m^{2a}}{(\mathfrak{s}_m + \lambda_t)^{2b} } \frac{\delta_t^{2(b-a -1)} }{\log^{1+\xi}t }\left| \frac{1}{ t }  \sum_{j=1}^t Y_{j,m}\right|^2 
    \\\label{eq: tln 2}
    \le & \sum_{m\ge 1} {\bf I}\left\{ t \le f_m \right\}  \mathfrak{s}_m^{2a}\left|\frac{\delta_t^{ -1/2} }{(\log t)^{(1+\xi)/2} } \frac{1}{ \lambda_t^{a + 1/2} t }  \sum_{j=1}^t Y_{j,m} \right|^2 + \sum_{m\ge 1} {\bf I}\left\{ t > f_m \right\}
    \mathfrak{s}_m^{-1}\left|\frac{\delta_t^{ -1/2} }{(\log t)^{(1+\xi)/2} } \frac{1}{ t }  \sum_{j=1}^t Y_{j,m} \right|^2 .
\end{align}
Observe that for $ 2 \le t \le T$,
$$
\frac{d}{dt} \left( t \delta_t^{1/2} (\log t)^{(1+\xi)/2} \right) = t^{(1+r)/(2r+1)} (\log t)^{(\xi - 1)/2} \left( \frac{1 +r}{2r +1} + \frac{1+\xi}{2} \log t \right) >0  
$$
and 
$$
\frac{d}{dt} \left( t \lambda_t^{a + 1/2} \delta_t^{1/2} (\log t)^{(1+\xi)/2} \right) = \lambda_t^{a + 1/2} \delta_t^{1/2}(\log t)^{(\xi - 1)/2} \left( \frac{1+r - (a+1/2)2r}{2r+1} \log t
+\frac{1+\xi}{2}\right) > 0.
$$
This says that $\left\{t\delta_t^{1/2} (\log t)^{(1+\xi)/2} \right\}$ and $\left\{t \lambda_t^{a + 1/2} \delta_t^{1/2} (\log t)^{(1+\xi)/2} \right\}$ satisfies the criteria for $\left\{ \gamma_t \right\}$ in 
\eqref{eq: max ineq g term}.

\
\\
This observation on derivatives and \eqref{eq: tln 2} helps us to write
\begin{align}\nonumber
    &\E \left[ \max_{1<t\le T} \sum_{m\ge1} \frac{\mathfrak{s}_m^{2a}}{(\mathfrak{s}_m + \lambda_t)^{2b} } \frac{\delta_t^{2(b-a -1)} }{\log^{1+\xi}t }\left| \frac{1}{ t }  \sum_{j=1}^t Y_{j,m}\right|^2 \right]
    \\\nonumber
    \le & \sum_{m\ge 1} \mathfrak{s}_m^{2a}  \E \left[ \max_{1<t\le f_m} \left|\frac{\delta_t^{ -1/2} }{(\log t)^{(1+\xi)/2} } \frac{1}{ \lambda_t^{a + 1/2} t }  \sum_{j=1}^t Y_{j,m} \right|^2 \right] + \sum_{m\ge 1} \mathfrak{s}_m^{-1} \E \left[\max_{f_m <t\le T}\left|\frac{\delta_t^{ -1/2} }{(\log t)^{(1+\xi)/2} } \frac{1}{ t }  \sum_{j=1}^t Y_{j,m} \right|^2   \right]
    \\\label{eq: tln 3}
    \le & \sum_{m\ge 1} \mathfrak{s}_m^{2a} c\sum_{t\le f_m } \frac{\delta_t^{-1}}{\log^{1+\xi} t \lambda_t^{1+2a} t^2} \mathfrak{s}_m + \sum_{m\ge 1}\mathfrak{s}_m^{-1} c\sum_{t>f_m} \frac{\delta_t^{-1}}{t^2 \log^{1+\xi} t} \mathfrak{s}_m
    \\\nonumber
    = &    c \sum_{1<t\le T} \frac{\delta_t^{-1}}{t^2\log^{1+\xi} t } \sum_{m \ge \delta_t^{-1/2r}} \frac{\mathfrak{s}_m^{1+2a}}{\lambda_t^{1+2a}} + c \sum_{1<t\le T}  \frac{\delta_t^{-1}}{t^2 \log^{1+\xi} t} \sum_{m < \delta_t^{-1/2r}}  1
    \\\label{eq: tln 4}
    = & c \sum_{1<t\le T}  \frac{\delta_t^{-1}}{t^2 \log^{1+\xi} t} O(\delta_t^{-1/2r}) + c\sum_{1<t\le T}  \frac{\delta_t^{-1}}{t^2 \log^{1+\xi} t} O(\delta_t^{-1/2r})
    \\\nonumber
    =  &  \sum_{1<t\le T}  \frac{1}{t \log^{1+\xi} t} O\left(\frac{\delta_t^{-1-1/2r}}{t} \right) = \sum_{1<t\le T}  \frac{1}{t \log^{1+\xi} t} O(1) < \infty.
\end{align}
The \eqref{eq: tln 3} follows from \eqref{eq: max ineq g term} and \eqref{eq: tln 4} follows from \eqref{test 1} with the observation
$$
\sum_{m \ge \delta_t^{-1/2r}} \frac{\mathfrak{s}_m^{1+2a}}{\lambda_t^{1+2a}} \lesssim \sum_{m \ge \delta_t^{-1/2r}} \frac{1}{\lft( m^{2r} \delta_t \rgt)^{1+2a}} \le O(1) \int_{\delta_t^{-1/2r}}^\infty  \frac{1}{\lft( x^{2r} \delta_t \rgt)^{1+2a}} dx.
$$

\vspace{2.0cm}
\
\\
{\bf Case II: $a>0$}

\
\\
Let $f_m = \lfloor{m^{(2r +1)}} \rfloor \wedge T$. Using \eqref{eq: f15a}, we write
\begin{align}\nonumber
    &\sum_{m\ge1} \frac{\mathfrak{s}_m^{2a}}{(\mathfrak{s}_m + \lambda_t)^{2b} } \frac{\delta_t^{2(b-a -1)} }{\log^{1+\xi}t }\left| \frac{1}{ t }  \sum_{j=1}^t Y_{j,m}\right|^2 
    \\\nonumber
    \le & \sum_{m\ge 1} {\bf I}\left\{ t < f_m \right\}  \mathfrak{s}_m^{2a}\left|\frac{\delta_t^{ -1/2} }{(\log t)^{(1+\xi)/2} } \frac{1}{ \lambda_t^{a + 1/2} t }  \sum_{j=1}^t Y_{j,m} \right|^2 + \sum_{m\ge 1} {\bf I}\left\{ t \ge f_m \right\}
    \mathfrak{s}_m^{-1}\left|\frac{\delta_t^{ -1/2} }{(\log t)^{(1+\xi)/2} } \frac{1}{ t }  \sum_{j=1}^t Y_{j,m} \right|^2
    \\\label{eq: tln 5}
    \le & \sum_{m\ge 1} {\bf I}\left\{ t < f_m \right\}  \frac{\mathfrak{s}_m^{2a}}{\lambda_{f_m}^{2a}}\left|\frac{\delta_t^{ -1/2} }{(\log t)^{(1+\xi)/2} } \frac{1}{ \lambda_t^{ 1/2} t }  \sum_{j=1}^t Y_{j,m} \right|^2 + \sum_{m\ge 1} {\bf I}\left\{ t \ge f_m \right\}
    \mathfrak{s}_m^{-1}\left|\frac{\delta_t^{ -1/2} }{(\log t)^{(1+\xi)/2} } \frac{1}{ t }  \sum_{j=1}^t Y_{j,m} \right|^2
\end{align}
We have $t <f_m \Rightarrow \lambda_t > \lambda_{f_m}   $ which gives us \eqref{eq: tln 5}.
\\
Observe that for $ 2 \le t \le n$,
$$
\frac{d}{dt} \left( t \delta_t^{1/2} (\log t)^{(1+\xi)/2} \right) = t^{(1+r)/(2r+1)} (\log t)^{(\xi - 1)/2} \left( \frac{1 +r}{2r +1} + \frac{1+\xi}{2} \log t \right) >0  
$$
and 
$$
\frac{d}{dt} \left( t \lambda_t^{ 1/2} \delta_t^{1/2} (\log t)^{(1+\xi)/2} \right) = \lambda_t^{1/2} \delta_t^{1/2}(\log t)^{(\xi - 1)/2} \left( \frac{1}{2r+1}\log t
+\frac{1+\xi}{2} \right) > 0.
$$
This says that $\left\{t\delta_t^{1/2} (\log t)^{(1+\xi)/2} \right\}$ and $\left\{ t\lambda_t^{ 1/2} \delta_t^{1/2} (\log t)^{(1+\xi)/2} \right\}$ satisfies the criteria for $\left\{ \gamma_t \right\}$ in 
\eqref{eq: max ineq g term}.

\
\\
This observation on derivatives and \eqref{eq: tln 5} helps us to write
\begin{align}\nonumber
    &\E \left[ \max_{1<t\le T} \sum_{m\ge1} \frac{\mathfrak{s}_m^{2a}}{(\mathfrak{s}_m + \lambda_t)^{2b} } \frac{\delta_t^{2(b-a -1)} }{\log^{1+\xi}t }\left| \frac{1}{ t }  \sum_{j=1}^t Y_{j,m}\right|^2 \right]
    \\\nonumber
    \le & \sum_{m\ge 1} \frac{\mathfrak{s}_m^{2a}}{\lambda_{f_m}^{2a}}  \E \left[ \max_{1<t < f_m} \left|\frac{\delta_t^{ -1/2} }{(\log t)^{(1+\xi)/2} } \frac{1}{ \lambda_t^{1/2} t }  \sum_{j=1}^t Y_{j,m} \right|^2 \right] + \sum_{m\ge 1} \mathfrak{s}_m^{-1} \E \left[\max_{f_m \le t\le T}\left|\frac{\delta_t^{ -1/2} }{(\log t)^{(1+\xi)/2} } \frac{1}{ t }  \sum_{j=1}^t Y_{j,m} \right|^2   \right]
    \\\label{eq: tln 6}
    \le & \sum_{m\ge 1} \frac{\mathfrak{s}_m^{2a}}{\lambda_{f_m}^{2a}} c\sum_{t< f_m } \frac{\delta_t^{-1}}{\log^{1+\xi} t \lambda_t t^2} \mathfrak{s}_m + \sum_{m\ge 1}\mathfrak{s}_m^{-1} c\sum_{t\ge f_m} \frac{\delta_t^{-1}}{t^2 \log^{1+\xi} t} \mathfrak{s}_m
    \\\nonumber
    = &    c \sum_{1<t\le T} \frac{\delta_t^{-1}}{t^2\log^{1+\xi} t } \sum_{m > \delta_t^{-1/2r}} \frac{\mathfrak{s}_m^{1+2a}}{\lambda_t\lambda_{f_m}^{2a}} + c \sum_{1<t\le T}  \frac{\delta_t^{-1}}{t^2 \log^{1+\xi} t} \sum_{m \le \delta_t^{-1/2r}} 1
    \\\label{eq: tln 7}
    = & c \sum_{1<t\le T}  \frac{\delta_t^{-1}}{t^2 \log^{1+\xi} t} O(\delta_t^{-1/2r}) + c\sum_{1<t\le T}  \frac{\delta_t^{-1}}{t^2 \log^{1+\xi} t} O(\delta_t^{-1/2r})
    \\\nonumber
    =  &  \sum_{1<t\le T}  \frac{1}{t \log^{1+\xi} t} O\left(\frac{\delta_t^{-1-1/2r}}{t} \right) = \sum_{1<t\le T}  \frac{1}{t \log^{1+\xi} t} O(1) < \infty.
\end{align}
The \eqref{eq: tln 6} follows from \eqref{eq: max ineq g term}. For \eqref{eq: tln 7}, with the realization $\lambda_{f_m} \asymp \mathfrak{s}_m$ we may write
$$
\frac{\mathfrak{s}_m^{1+2a}}{\lambda_t \lambda_{f_m}^{2a} } \le c_2 \frac{1}{m^{2r}\lambda_t} \Longrightarrow \sum_{m> \delta_r^{-1/2r}} \frac{\mathfrak{s}_m^{1+2a}}{\lambda_t \lambda_{f_m}^{2a} } \le c_2 \sum_{m> \delta_t^{-1/2r}} \frac{1}{m^{2r} \lambda_t} \le c_2 \int_{\delta_t^{-1/2r}}^\infty \frac{1}{x^{2r} \lambda_t} dx = O(\delta_t^{-1/2r}),
$$
similar idea is outlined at \eqref{test 1} which comes as a consequence from \Cref{lemma: sum-integration bound}.
    
\end{proof}

\
\\
\begin{lemma}\label{lemma: TL second general}
Let $\xi > 0$ and $1 \ge b \ge a + 1/2 \ge 1/4$. Then for any $k\ge1$, we have
$$
\E \left[  \max_{s< t \le e} \sum_{j\ge1} \frac{\mathfrak{s}_j^{2a}}{(\mathfrak{s}_j + \lambda_{t-s})^{2b} } \frac{\delta_{t-s}^{2(b-a -1)} }{\log^{1+\xi}(t-s) }\left| \lb \phi_k, (T_\st - T)\phi_j \rb_\lt \right|^2 \right] = O(\mathfrak{s}_k).
$$

\end{lemma}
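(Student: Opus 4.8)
The plan is to reduce this to the bookkeeping already carried out in the proof of \Cref{lemma: TL first general}; the only genuinely new point is that the relevant second moment now carries an extra multiplicative factor $\mathfrak{s}_k$, which is a fixed constant (independent of $t$ and $j$) that factors straight through the whole argument and produces the $O(\mathfrak{s}_k)$ on the right-hand side.

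First I would rewrite the quadratic form. Using the expression \eqref{eq: sample operator action} for $T_\st = L_{\kh\shast\kh}$ together with the identity $\lb \phi_k, T\phi_j\rb_\lt = \Sigma[L_\kh(\phi_k),L_\kh(\phi_j)] = \E\big[\lb L_\kh(X_1),\phi_k\rb_\lt\lb L_\kh(X_1),\phi_j\rb_\lt\big]$ (the last step by stationarity), one obtains
\begin{equation*}
\lb \phi_k,(T_\st - T)\phi_j\rb_\lt = \frac{1}{t-s}\sum_{i=s+1}^t\Big( \lb L_\kh(X_i),\phi_k\rb_\lt\lb L_\kh(X_i),\phi_j\rb_\lt - \Sigma[L_\kh(\phi_k),L_\kh(\phi_j)]\Big),
\end{equation*}
that is, a centered average of a stationary $\alpha$-mixing sequence; by translation invariance it suffices to treat the interval $\{1,\dots,t\}$ and transfer back to $\st$. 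Then I would apply \eqref{eq: markov I second equation} of \Cref{lemma: general markov type bound I} with $\mathfrak f = L_\kh(\phi_k)$ and $h = L_\kh(\phi_j)$, which is legitimate because $\Sigma[L_\kh(\phi_k),L_\kh(\phi_k)] = \lb\phi_k,T\phi_k\rb_\lt = \mathfrak{s}_k \le \mathfrak{s}_1 < \infty$ is bounded by an absolute constant uniformly in $k$. This yields $\E\big[|\lb\phi_k,(T_\st-T)\phi_j\rb_\lt|^2\big] = O\big(\mathfrak{s}_k\mathfrak{s}_j/(t-s)\big)$, so that $\lb\phi_k,(T_\st-T)\phi_j\rb_\lt$ plays exactly the role that $(t-s)^{-1}\sum_{i=s+1}^t\lb X_i,L_\kh(\phi_m)\rb_\lt\eps_i$ played in \Cref{lemma: TL first general} (whose second moment there was $O(\mathfrak{s}_m/(t-s))$), with $\mathfrak{s}_j$ in place of $\mathfrak{s}_m$ and the additional fixed factor $\mathfrak{s}_k$.

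Third, I would run the maximal-inequality argument of \Cref{lemma: TL first general} verbatim: pass from the pointwise second moment to the maximum over $s< t\le e$ via \Cref{lemma: maximal inequality useful} with the same non-decreasing weight sequences $\gamma_{t-s}$ used there (whose monotonicity in $t$ is already verified in that proof), split the sum over $j$ according to whether $\mathfrak{s}_j\gtrsim\lambda_{t-s}$, equivalently $j\lessgtr\delta_{t-s}^{-1/2r}$, in the two cases $a\le 0$ and $a>0$, interchange $\max_t$ and $\sum_j$ by subadditivity of the maximum and monotone convergence, and finish with the summation‑by‑integration estimates and $\sum_t (t\log^{1+\xi}t)^{-1}<\infty$; the eigenvalue decay $\mathfrak{s}_l\asymp l^{-2r}$ with $r>1$ from \Cref{assume: model assumption spectral} makes all the series converge. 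Since $\mathfrak{s}_k$ is constant in $t$ and $j$, it simply multiplies the $O(1)$ bound obtained there, giving the claimed $O(\mathfrak{s}_k)$.

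The main obstacle here is purely organisational rather than conceptual: one must carry the extra factor $\mathfrak{s}_k$ cleanly through the two-case split and verify that the hypothesis of \Cref{lemma: general markov type bound I} — second moments of the test functions bounded by an absolute constant — holds uniformly in $k$ through $\mathfrak{s}_k\le\mathfrak{s}_1$. No new probabilistic input beyond \Cref{lemma: general markov type bound I} and \Cref{lemma: maximal inequality useful} is needed, and in particular no central limit theorem; the estimate for $\|z_j\|_3$-type quantities is handled implicitly by the moment bound built into \eqref{eq: markov I second equation}.
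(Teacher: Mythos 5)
Your proposal is correct and follows essentially the same route as the paper's own proof: both rewrite $\lb \phi_k,(T_\st-T)\phi_j\rb_\lt$ as a centered average of the stationary sequence $\lb L_\kh(X_i),\phi_k\rb_\lt\lb L_\kh(X_i),\phi_j\rb_\lt-\lb\phi_k,T\phi_j\rb_\lt$, invoke \eqref{eq: markov I second equation} of \Cref{lemma: general markov type bound I} to obtain the second-moment bound $O(t)\,\mathfrak{s}_k\mathfrak{s}_j$, and then repeat the maximal-inequality and eigenvalue-splitting bookkeeping of \Cref{lemma: TL first general} with the constant factor $\mathfrak{s}_k$ carried through. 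The paper likewise reduces to a generic interval $\{1,\dots,T\}$ by stationarity and omits the repeated computation, exactly as you describe.
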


\
\\
\begin{proof}
    Denote $u_{j,k} = \lb X_j, L_\kh (\phi_k) \rb_\lt$ and $u_{j,m} = \lb X_j, L_\kh (\phi_m) \rb_\lt$. Let $Y_{j,m}^{k} = u_{j,k} u_{j,m} - \E[u_{j,k} u_{j,m}] =  \lb X_j, L_\kh (\phi_k) \rb_\lt\lb X_j, L_\kh (\phi_m) \rb_\lt - \lb\phi_k, T\phi_m\rb_\lt$. Observe that
    $$
    \lb \phi_k, (T_\st - T)\phi_j \rb_\lt  = \frac{1}{t-s} \sum_{i = s+1}^t Y_{j,m}^{k}.
    $$
    
\
\\
{ Again, We are going to prove the result for a general interval $\{1,\ldots, T\}$, the result for the $\se$ follows from translation and stationarity.}

\
\\
Using \Cref{lemma: general markov type bound I}, we may write
$$
 \E\lft[ \lft| \sum_{j = 1}^t Y_{j,m}^{k} \rgt|^2  \rgt] = O(t) \mathfrak{s}_k \mathfrak{s}_m.
$$
We use \Cref{lemma: maximal inequality useful} to establish for any non-decreasing sequence $\{\gamma_t\}$ 

\begin{equation}\label{eq: max ineq TT term}
        \E\lft[ \max_{1<t\le T} \lft|\frac{1}{\gamma_{t}}\sum_{j=1}^t Y_{j,m}^{k} \rgt|^2 \rgt] = C \sum_{t=1}^T \frac{1}{\gamma_{t}^2} \mathfrak{s}_k \mathfrak{s}_m,
\end{equation}
for some constant $C>0$.

The rest of proof follows exactly as the proof of \Cref{lemma: TL first general}, just by replacing $Y_{j,m}^{k}$ with $Y_{j,m}$ and therefore omitted.

\end{proof}

\
\\
\begin{lemma}\label{lemma: TL third general}
    Let $\xi > 0$ and $1 \ge b \ge a + 1/2 \ge 1/4$. Let $\{h_i\}$ be sequence of $\lt$ functions such that $\Sigma[L_\kh(h_i), L_\kh(h_i)] \le  M < \infty$. Then we have 
\begin{align*}
        &\E \left[  \max_{s< t \le e} \sum_{m\ge1} \frac{\mathfrak{s}_m^{2a}}{(\mathfrak{s}_m + \lambda_{t-s})^{2b} } \frac{\delta_{t-s}^{2(b-a -1)} }{\log^{1+\xi} (t-s) } \lft|\frac{1}{t-s} \sum_{j=s+1}^t \lft( \lb X_j, L_\kh(h_j) \rb_\lt \lb X_j, L_\kh(\phi_m) \rb_\lt - \lb  h_j, T \phi_m \rb_\lt \rgt) \rgt|^2 \right] 
        \\
        = & O(1).    
\end{align*}

\end{lemma}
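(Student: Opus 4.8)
The plan is to run the proof of \Cref{lemma: TL first general} essentially verbatim, the only new ingredient being a second–moment bound for the relevant partial sums, which \Cref{lemma: general markov type bound I} supplies directly. First, by translation invariance and stationarity of $\{(X_j,\eps_j)\}$, it suffices to establish the bound on a generic interval $\{1,\dots,T\}$ in place of $\se$. For fixed $m$ I would set
$$
Y_{j,m} = \lb X_j, L_\kh(h_j)\rb_\lt \lb X_j, L_\kh(\phi_m)\rb_\lt - \lb h_j, T\phi_m\rb_\lt ,
$$
and observe that, via the identity $\E[\lb X,L_\kh f\rb_\lt \lb X, L_\kh g\rb_\lt]=\Sigma[L_\kh f, L_\kh g]=\lb f, Tg\rb_\lt$, each $Y_{j,m}$ is centered and $\frac1t\sum_{j=1}^t Y_{j,m}$ is exactly the quantity appearing in the statement. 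Since $\Sigma[L_\kh(\phi_m),L_\kh(\phi_m)]=\lb T\phi_m,\phi_m\rb_\lt=\mathfrak{s}_m$ and $\Sigma[L_\kh(h_j),L_\kh(h_j)]\le M$ by hypothesis, \Cref{lemma: general markov type bound I} (its first display, applied with the \emph{varying} sequence $\mathfrak{f}_j=L_\kh(h_j)$ and with $h=L_\kh(\phi_m)$) gives $\E[\,|\sum_{j=1}^t Y_{j,m}|^2\,]=O(t)\,\mathfrak{s}_m$.

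Feeding this into \Cref{lemma: maximal inequality useful} then yields, for any non-decreasing positive sequence $\{\gamma_t\}_{t\le T}$,
$$
\E\lft[ \max_{1<t\le T} \lft| \frac{1}{\gamma_t}\sum_{j=1}^t Y_{j,m} \rgt|^2 \rgt] \;\le\; C\,\mathfrak{s}_m \sum_{t=1}^T \frac{1}{\gamma_t^2}
$$
for an absolute constant $C$, which is precisely the analogue of \eqref{eq: max ineq g term} used in the proof of \Cref{lemma: TL first general}. From here the argument is identical to that proof: one bounds
$$
\frac{\mathfrak{s}_m^{2a}}{(\mathfrak{s}_m+\lambda_t)^{2b}}\,\frac{\delta_t^{2(b-a-1)}}{\log^{1+\xi}t}\lft|\frac1t\sum_{j=1}^t Y_{j,m}\rgt|^2 \;\lesssim\; \lft\{\frac{\mathfrak{s}_m^{2a}}{\mathfrak{s}_m^{1+2a}}\wedge \frac{\mathfrak{s}_m^{2a}}{\lambda_t^{1+2a}}\rgt\}\frac{\delta_t^{-1}}{\log^{1+\xi}t}\lft|\frac1t\sum_{j=1}^t Y_{j,m}\rgt|^2 ,
$$
splits the sum over $m$ at $f_m=\lfloor m^{2r+1}\rfloor\wedge T$, treats $a\le 0$ and $a>0$ separately, and in each case checks (by the same derivative computations) that the weights $\{t\lambda_t^{a+1/2}\delta_t^{1/2}(\log t)^{(1+\xi)/2}\}$, $\{t\lambda_t^{1/2}\delta_t^{1/2}(\log t)^{(1+\xi)/2}\}$, $\{t\delta_t^{1/2}(\log t)^{(1+\xi)/2}\}$ are non-decreasing, so that the displayed maximal inequality applies with $\gamma_t$ equal to each. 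Summing the resulting series in $m$ through the integral comparison of \Cref{lemma: sum-integration bound} (exactly as in \eqref{test 1}) collapses everything to $\sum_{1<t\le T}(t\log^{1+\xi}t)^{-1}O(1)<\infty$, giving the claim.

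The only point needing genuine care — and the main, if mild, obstacle — is the very first step: confirming that \Cref{lemma: general markov type bound I} is indeed formulated for a \emph{sequence} $\{L_\kh(h_j)\}_j$ rather than a single function, and that the uniform bound $\Sigma[L_\kh(h_j),L_\kh(h_j)]\le M$ is exactly the boundedness hypothesis that lemma requires (it is). Everything after that is bookkeeping copied from \Cref{lemma: TL first general}, since $Y_{j,m}$ plays the role of $\lb X_j,L_\kh\phi_m\rb_\lt\eps_j$ there and enjoys the same $O(t)\mathfrak{s}_m$ second-moment control.
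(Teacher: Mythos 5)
Your proposal matches the paper's own proof: the paper likewise defines $Y_{j,m}$, reduces to a generic interval $\{1,\dots,T\}$ by stationarity, invokes \Cref{lemma: general markov type bound I} with the varying sequence $\mathfrak{f}_j = L_\kh(h_j)$ (using the hypothesis $\Sigma[L_\kh(h_j),L_\kh(h_j)]\le M$) to get $\E[|\sum_{j=1}^t Y_{j,m}|^2]=O(t)\,\mathfrak{s}_m$, feeds this into \Cref{lemma: maximal inequality useful}, and then declares the remaining bookkeeping identical to \Cref{lemma: TL first general}. Your write-up is correct and, if anything, slightly more explicit about that final bookkeeping than the paper itself.
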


\
\\
\begin{proof}
    Let 
    $$
    Y_{j, m}  =  \lb X_j, L_\kh(h_j) \rb_\lt \lb X_j, L_\kh(\phi_m) \rb_\lt - \lb  h_j, T \phi_m \rb_\lt
    $$
\
\\
Similar to the last two proofs, we are going to establish the result on a generic interval $\{1,\ldots, T\}$, the case in the lemma follows from translation and stationarity.

\
\\
    Observe that 
    $$
    \E\lft[ \lb X_j, L_\kh(h_j) \rb_\lt \lb X_j, L_\kh(\phi_m) \rb_\lt\rgt] = \lb  h_j, T \phi_m \rb_\lt,
    $$
    and
    $$
    \Sigma[L_\kh(\phi_m), L_\kh(\phi_m)] = \lb T\phi_m, \phi_m\rb_\lt = \mathfrak{s}_m.
    $$
    Using this, from \Cref{lemma: general markov type bound I}, we may establish
    $$
    \E\lft[ \lft|\sum_{j=1}^t Y_{j,m} \rgt|^2 \rgt] \le O(1) \sum_{j=1}^t \mathfrak{s}_m.
    $$

Now, similar to \eqref{eq: max ineq g term}, we apply \Cref{lemma: maximal inequality useful} to have: for any non-decreasing sequence $\{\gamma_{t}\}_{t = 1}^T$
    \begin{equation}
        \E\lft[ \max_{1<t\le T} \lft|\frac{1}{\gamma_{t}}\sum_{j=1}^t Y_{j,m} \rgt|^2 \rgt] = C \sum_{t=1}^T \frac{1}{\gamma_{t}^2} \mathfrak{s}_m,
    \end{equation}
for some constant $C>0$.

\
\\
The rest of proof follows exact same steps as the proof of \Cref{lemma: TL first general} and therefore omitted.

\end{proof}

\
\\
\begin{lemma}\label{lemma:lambda transfer useful}
    Let $a,b,q >0$. Let $p,r$ be some constant. Suppose $D: \lt \to \lt$ be some linear operator. Suppose $f, h \in \lt$.
    Then we have
    \begin{align}\label{eq: l27a}
         &\|T^p D f \|_\lt \le \lft( 1 +  \lft( \frac{\lambda_b}{\lambda_a} \rgt)^q \rgt) \|T^p D (T + \lambda_b \boldI )^{-q} T^{-r} \|_{\mathrm{op}} \| T^r (T + \lambda_a \boldI )^{q} f \|_\lt
         \\\label{eq: l27b}
         &|\lb h, f \rb_\lt | \le \lft( 1 +  \lft( \frac{\lambda_b}{\lambda_a} \rgt)^q \rgt) \|T^{-p} (T + \lambda_b \boldI )^{-q} h \|_\lt \|T^{p} (T + \lambda_a \boldI )^{q} h \|_\lt.
    \end{align}
        
\end{lemma}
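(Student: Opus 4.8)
The plan is to derive both \eqref{eq: l27a} and \eqref{eq: l27b} from a single operator-norm comparison between the resolvents $(T+\lambda_a\boldI)^{q}$ and $(T+\lambda_b\boldI)^{q}$, after inserting an appropriate power of $(T+\lambda_b\boldI)$ and using the elementary bound $\|Av\|_\lt\le\|A\|_{\mathrm{op}}\|v\|_\lt$ (for \eqref{eq: l27a}) or the Cauchy--Schwarz inequality (for \eqref{eq: l27b}). The structural fact I would use throughout is that $T$ is self-adjoint with the eigendecomposition of \Cref{assume: model assumption spectral}, so that $T^{\rho}$, $(T+\lambda_a\boldI)^{q}$ and $(T+\lambda_b\boldI)^{q}$, for any real exponent $\rho$ and any $q>0$, are all functions of $T$ — hence mutually commuting, self-adjoint, and simultaneously diagonalised by $\{\phi_l\}$. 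Unbounded factors such as $T^{-r}$ or $T^{-p}$ do not cause difficulty: when the right-hand side of either inequality is $+\infty$ there is nothing to prove, and otherwise the vectors produced by the manipulations below remain in the relevant domains, so the computations are legitimate.

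The core estimate I would establish first is
\begin{equation*}
  \big\| (T+\lambda_b\boldI)^{q} (T+\lambda_a\boldI)^{-q} \big\|_{\mathrm{op}} \;\le\; 1 + \Big( \tfrac{\lambda_b}{\lambda_a} \Big)^{q}.
\end{equation*}
By the spectral theorem this norm equals $\sup_{s\ge 0}\big(\tfrac{s+\lambda_b}{s+\lambda_a}\big)^{q}$, and the supremum is handled by two cases: if $\lambda_b\le\lambda_a$ the ratio never exceeds $1$; if $\lambda_b>\lambda_a$ the map $s\mapsto\tfrac{s+\lambda_b}{s+\lambda_a}$ is strictly decreasing on $[0,\infty)$, so its supremum is the value at $s=0$, namely $\lambda_b/\lambda_a$. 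In both cases $\sup_{s\ge 0}\big(\tfrac{s+\lambda_b}{s+\lambda_a}\big)^{q}\le\max\{1,(\lambda_b/\lambda_a)^{q}\}\le 1+(\lambda_b/\lambda_a)^{q}$.

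Given this, \eqref{eq: l27a} follows by writing $T^{p}Df=\big[T^{p}D(T+\lambda_b\boldI)^{-q}T^{-r}\big]\big[T^{r}(T+\lambda_b\boldI)^{q}f\big]$, so that $\|T^{p}Df\|_\lt\le\|T^{p}D(T+\lambda_b\boldI)^{-q}T^{-r}\|_{\mathrm{op}}\,\|T^{r}(T+\lambda_b\boldI)^{q}f\|_\lt$; then, using commutativity, $T^{r}(T+\lambda_b\boldI)^{q}f=(T+\lambda_b\boldI)^{q}(T+\lambda_a\boldI)^{-q}\,T^{r}(T+\lambda_a\boldI)^{q}f$, and the core estimate bounds the leading factor in $\mathrm{op}$-norm. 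For \eqref{eq: l27b}, since $T^{-p}(T+\lambda_b\boldI)^{-q}$ is self-adjoint and inverts $T^{p}(T+\lambda_b\boldI)^{q}$, one has $\lb h,f\rb_\lt=\lb T^{-p}(T+\lambda_b\boldI)^{-q}h,\,T^{p}(T+\lambda_b\boldI)^{q}f\rb_\lt$, whence Cauchy--Schwarz gives $|\lb h,f\rb_\lt|\le\|T^{-p}(T+\lambda_b\boldI)^{-q}h\|_\lt\,\|T^{p}(T+\lambda_b\boldI)^{q}f\|_\lt$; finally $T^{p}(T+\lambda_b\boldI)^{q}f=(T+\lambda_b\boldI)^{q}(T+\lambda_a\boldI)^{-q}\,T^{p}(T+\lambda_a\boldI)^{q}f$ and the core estimate again controls the leading factor (the second norm on the right-hand side of \eqref{eq: l27b} being the one built from $f$).

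The argument is essentially routine; the only points deserving a sentence of care are the appeal to the functional calculus that makes all powers of $T$ commute and self-adjoint, and the remark that unbounded negative powers of $T$ are harmless because the stated bounds are trivially true whenever their right-hand sides are infinite. I do not anticipate a genuine obstacle.
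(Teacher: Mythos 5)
Your proof is correct and follows essentially the same route as the paper's: both rest on the simultaneous diagonalization of all powers of $T$ in the basis $\{\phi_l\}$ and the elementary ratio bound $\sup_{s\ge 0}\bigl((s+\lambda_b)/(s+\lambda_a)\bigr)^q\le\max\{1,(\lambda_b/\lambda_a)^q\}$, which the paper implements as a two-case comparison (its Steps 1--3 splitting on $b\ge a$ versus $b\le a$) and you package a little more cleanly as the single operator-norm estimate $\|(T+\lambda_b\boldI)^q(T+\lambda_a\boldI)^{-q}\|_{\mathrm{op}}\le 1+(\lambda_b/\lambda_a)^q$. You also correctly read the second factor on the right-hand side of \eqref{eq: l27b} as $\|T^{p}(T+\lambda_a\boldI)^{q}f\|_\lt$ (the $h$ appearing there in the statement is a typo), which is consistent with how the bound is invoked elsewhere in the paper.
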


\
\\
\begin{proof}
We are going to establish \eqref{eq: l27a} and the proof for \eqref{eq: l27b} follows similarly. The proof is divided in three steps. We establish some necessary result in Step 1 and Step 2 and complete the proof in Step 3 by using them.

\
\\
{\bf Step 1:} For $d\ge c$, we are going to establish the following in this step.
\begin{equation}\label{eq: D5 1}
    \|T^p (T + \lambda_d \boldI)^{q} f \|_\lt \le \|T^p (T + \lambda_c \boldI)^{q} f \|_\lt \le \lft( \frac{\lambda_c}{\lambda_d}\rgt)^q \|T^p (T + \lambda_d \boldI)^{q} f \|_\lt,
\end{equation}

\
\\
Let $f = \sum_{j \ge 1}a_j \phi_j$.
    Observe that $j^{-2r} \asymp \mathfrak{s}_j > 0$ and
    $$
    d \ge c \iff (\mathfrak{s}_j + \lambda_c) \ge (\mathfrak{s}_j + \lambda_d) \iff \frac{1}{(\mathfrak{s}_j + \lambda_d)} \ge \frac{1}{(\mathfrak{s}_j + \lambda_c)} \iff \frac{\lambda_c}{ \lambda_d}  \ge \frac{\mathfrak{s}_j + \lambda_c}{\mathfrak{s}_j + \lambda_d}.
    $$
    It lead us to
    $$
    \|T^p (T + \lambda_d \boldI)^{q} f \|^2_\lt = \sum_{j \ge 1} \mathfrak{s}_j^{2p} (\mathfrak{s}_j + \lambda_d )^{2q} a_j^2 \le \sum_{j \ge 1} \mathfrak{s}_j^{2p} (\mathfrak{s}_j + \lambda_c )^{2q} a_j^2 = \|T^p (T + \lambda_c \boldI)^{q} f \|^2_\lt
    $$
and
   \begin{align*}
   \|T^p (T + \lambda_c \boldI)^{q} f \|^2_\lt =   \sum_{j \ge 1} \mathfrak{s}_j^{2p}(\mathfrak{s}_j + \lambda_c )^{2q} a_j^2 =&  \sum_{j \ge 1}  \lft( \frac{\mathfrak{s}_j + \lambda_c}{\mathfrak{s}_j + \lambda_d}\rgt)^{2b} \mathfrak{s}_j^{2p}(\mathfrak{s}_j + \lambda_d )^{2q} a_j^2 
   \\
   \le& \lft( \frac{ \lambda_c}{\lambda_d}\rgt)^{2b} \sum_{j \ge 1}   \frac{\mathfrak{s}_j^{2p}}{(\mathfrak{s}_j + \lambda_d )^{2q}} a_j^2  = \lft( \frac{ \lambda_c}{\lambda_d}\rgt)^{2b} \|T^p (T + \lambda_d \boldI)^{-q} f \|^2_\lt.
    \end{align*}

\
\\
{\bf Step 2:} For $d\ge c$, we are going to establish the following in this step.
\begin{equation}\label{eq: D5 2}
    \| T^{p} D (T + \lambda_c \boldI)^{-q} T^{r}\|_{\mathrm{op}} \le \| T^{p} D (T + \lambda_d \boldI)^{-q} T^{r}\|_{\mathrm{op}}.
\end{equation}
 Observe that 
    $$
    b \ge a \iff (\mathfrak{s}_j + \lambda_c) \ge (\mathfrak{s}_j + \lambda_b) \iff \frac{1}{(\mathfrak{s}_j + \lambda_b)} \ge \frac{1}{(\mathfrak{s}_j + \lambda_c)}.
    $$
    It lead us to
    \begin{align*}
        &\| T^{p} D (T + \lambda_c \boldI)^{-q} T^{r}\|_{\mathrm{op}}
        \\
        =&\sup_{\substack{ h\in \lt \\ \|h\|_\lt = 1 }} \lft| \lft\lb h, T^{p} D (T + \lambda_c \boldI)^{-q} T^{r} h \rgt\rb_\lt \rgt|
        \\
        = & \sup_{\substack{ h\in \lt \\ \|h\|_\lt = 1 }} \lft| \sum_{j\ge 1} \sum_{m\ge 1} h_j h_m \lft\lb \phi_j, T^{p} D (T + \lambda_c \boldI)^{-q} T^{r} \phi_m \rgt\rb_\lt \rgt|
        \\
        = & \sup_{\substack{ h\in \lt \\ \|h\|_\lt = 1 }} \lft| \sum_{j\ge 1} \sum_{m\ge 1} h_j h_m \lft\lb T^{p} \phi_j, D (T + \lambda_c \boldI)^{-q} T^{r} \phi_m \rgt\rb_\lt \rgt|
        \\
        = & \sup_{\substack{ h\in \lt \\ \|h\|_\lt = 1 }} \lft| \sum_{j\ge 1} \sum_{m\ge 1} h_j h_m \frac{\mathfrak{s}_j^{2p} \mathfrak{s}_m^{2r}}{(\mathfrak{s}_m+\lambda_c)^{2q}} \lft\lb  \phi_j, D \phi_m \rgt\rb_\lt \rgt|
        \\
        \le & \sup_{\substack{ h\in \lt \\ \|h\|_\lt = 1 }} \lft| \sum_{j\ge 1} \sum_{m\ge 1} h_j h_m \frac{\mathfrak{s}_j^{2p}\mathfrak{s}_m^{2r}}{(\mathfrak{s}_m+\lambda_d)^{2q}} \lft\lb  \phi_j, D \phi_m \rgt\rb_\lt \rgt|
        \\
        = & \sup_{\substack{ h\in \lt \\ \|h\|_\lt = 1 }} \lft| \sum_{j\ge 1} \sum_{m\ge 1} h_j h_m \lft\lb T^{p} \phi_j, D (T + \lambda_d \boldI)^{-q} T^{r} \phi_m \rgt\rb_\lt \rgt|
        \\
        = & \sup_{\substack{ h\in \lt \\ \|h\|_\lt = 1 }} \lft| \sum_{j\ge 1} \sum_{m\ge 1} h_j h_m \lft\lb \phi_j, T^{p} D (T + \lambda_d \boldI)^{-q} T^{r} \phi_m \rgt\rb_\lt \rgt|
        \\
        =&\sup_{\substack{ h\in \lt \\ \|h\|_\lt = 1 }} \lft| \lft\lb h, T^{p} D (T + \lambda_d \boldI)^{-q} T^{r} h \rgt\rb_\lt \rgt|
        \\
        = & \| T^{p} D (T + \lambda_d \boldI)^{-q} T^{r}\|_{\mathrm{op}}.
    \end{align*}

\
\\
{\bf Step 3:}
Using \eqref{eq: D5 1} and \eqref{eq: D5 2}, we may write
    \begin{align*}
        &\|T^p D f \|_\lt 
        \\
        =& \boldI\{b \ge a\} \|T^p D (T + \lambda_b \boldI )^{-q} T^{-r} T^r (T + \lambda_b \boldI )^{q} f \|_\lt  + \boldI\{b \le  a\} \|T^p D (T + \lambda_b \boldI )^{-q} T^{-r} T^r (T + \lambda_b \boldI )^{q} f \|_\lt 
        \\
        \le & \boldI\{b \ge a\} \|T^p D (T + \lambda_b \boldI )^{-q} T^{-r} \|_{\mathrm{op}} \| T^r (T + \lambda_a \boldI )^{q} f \|_\lt  
        \\
        &\qquad + \boldI\{b \le  a\} \lft( \frac{\lambda_b}{\lambda_a} \rgt)^q \|T^p D (T + \lambda_b \boldI )^{-q} T^{-r} \|_{\mathrm{op}} \| T^r (T + \lambda_a \boldI )^{q} f \|_\lt 
        \\
        \le & \lft( 1 +  \lft( \frac{\lambda_b}{\lambda_a} \rgt)^q \rgt) \|T^p D (T + \lambda_b \boldI )^{-q} T^{-r} \|_{\mathrm{op}} \| T^r (T + \lambda_a \boldI )^{q} f \|_\lt .
    \end{align*}
\end{proof}

\newpage

\section{Lower bound}

\begin{proof}[Proof of \Cref{lemma: lower bound}]
    We prove a more general result and the required result follows as a special case.

    For $Z_j = (Y_j, X_j)$, let $P_0^n$ be the joint distribution of $\{Z_j\}_{j=1}^n$ following
    \begin{equation}\nonumber
        \begin{aligned}
            y_j &= \lb X_j, \beta \rb_\lt + \eps_j, \qquad\qquad\text{for}\; 1 \le j \le \Delta,
            \\
            y_j &=  \eps_j, \quad\qquad\qquad\qquad\qquad\text{for}\; \Delta < j \le n,
        \end{aligned}
    \end{equation}
    where $\{X_j\}_{j=1}^n$ is independent standard Brownian motion and $\{\eps_j\}_{j=1}^n \overset{iid}{\sim} N(0,1)$.
    Let $P_1^n$ be the joint distribution of $\{Z'_j  \}_{j=1}^n$ with $Z'_j = (Y'_j, X'_j)$ which follows
    \begin{equation}\nonumber
        \begin{aligned}
            y_j' &= \lb X_j', \beta \rb_\lt + \eps_j', \qquad\qquad\text{for}\; 1 \le j \le \Delta + \delta,
            \\
            y_j' &=  \eps_j', \quad\qquad\qquad\qquad\qquad\text{for}\; \Delta + \delta < j \le n.
        \end{aligned}
    \end{equation}
    where $\{X'_j\}_{j=1}^n$ is independent standard Brownian motion and $\{\eps'_j\}_{j=1}^n \overset{iid}{\sim} N(0,1)$. We assume that the two datasets are independent. Observe that
    $$
    KL\lft(P_0^n;P_1^n \rgt) = \sum_{j=\Delta+1}^{\Delta+\delta} KL\lft(P_0^{j,n};P_1^{j,n} \rgt), 
    $$
    where $P_0^{j,n}(y,x)$ and $P_1^{j,n}(y,x)$ are distributions of $(y_j, X_j)$ and $(y_j',X_j')$ respectively. For $\Delta < j \le \Delta + \delta$, one may write
    \begin{align}\nonumber
        KL\lft(P_0^{j,n};P_1^{j,n} \rgt) &= \iint \log \lft\{ \frac{p_0^{j,n}(y|x)}{p_1^{j,n}(y|x)} \rgt\} p^{j,n}_0(y|x) p(x) \; dy\;dx 
        \\\nonumber
        &= \iint \frac{1}{2} \lft( \lb x, \beta\rb_\lt^2 - 2y  \lb x, \beta\rb_\lt \rgt) p^{j,n}_0(y|x) p(x) \; dy\;dx 
        \\\nonumber
        &= \frac{1}{2}\int \lb x, \beta\rb_\lt^2 p(x) dx = \frac{\kappa^2}{2},
    \end{align}
    where in the first line we used the conditional density $p_0^{j,n}(y|x)$, $p_1^{j,n}(y|x)$, and $p(x)$ as the density of $X_j$; in the second and the last line we use the fact that $y_j|X_j \sim N\lft(0, 1\rgt)$ under $p_0^{j,n}(y|x)$. This lead us to $KL\lft(P_0^n;P_1^n \rgt) = \delta\kappa^2/2$ and we already have $\eta(P_1^n) - \eta(P_0^n) = \delta$. Following from LeCam's lemma (see e.g.\ \citealp{Binyu1997assouad} and Theorem 2.2 of \citealp{tsybakov_book}), we may write
    $$
        \inf_{\weta}\; \sup_{P \in \mathfrak{P}} \E\lft[ |\weta - \eta(P)| \rgt] \ge \frac{\delta}{4} e^{-\delta \kappa^2/2}.
    $$
    The result now follows by putting $\delta = \frac{4}{\kappa^2}$ with the realization that, for large $n$, $\frac{4}{\kappa^2} \ll \Delta< n/2$.

\end{proof}

\newpage

\section{$\alpha$-mixing}
The strong mixing or $\alpha$-mixing coefficient between two $\sigma-$fields $\mathcal{A}$ and $\mathcal{B}$ is defined as 
$$
\alpha(\mathcal{A}, \mathcal{B}) = \sup_{A\in \mathcal{A}, B \in \mathcal{B}} \left| P(A\cap B) - P(A)P(B) \right|.
$$

\begin{lemma}\label{lemma: mixing covariance inequality}
Let $X$ and $Y$ be random variables. Then for any positive numbers $p,q,r$ satisfying $\frac{1}{p} + \frac{1}{q} + \frac{1}{r} = 1 $, we have
$$
|Cov(X,Y)| \le 4 \|X\|_p \|Y\|_q \left\{ \alpha(\sigma(X), \sigma(Y))\right\}^{1/r}.
$$

\end{lemma}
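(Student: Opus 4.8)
This is the classical covariance inequality for strong mixing (a Davydov/Rio-type bound), and the plan is the standard two-part argument: first the bounded case directly from the definition of $\alpha$, then the general $L^p$--$L^q$ case by a quantile argument with an optimal choice of cut-off. I would begin with the bounded case: if $U$ is $\sigma(X)$-measurable with $\|U\|_\infty\le 1$ and $V$ is $\sigma(Y)$-measurable with $\|V\|_\infty\le 1$, then $|\cov(U,V)|\le 4\alpha(\sigma(X),\sigma(Y))$. To see this, write $\cov(U,V)=\E\big[U\,(\E[V\mid\sigma(X)]-\E V)\big]$, so $|\cov(U,V)|\le\E\big|\E[V\mid\sigma(X)]-\E V\big|$; taking $U_1=\sign(\E[V\mid\sigma(X)]-\E V)$, which is $\sigma(X)$-measurable with $\|U_1\|_\infty\le 1$, and using the tower property turns this into $\cov(U_1,V)$. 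Repeating the same step on the $\sigma(Y)$ side gives $V_1=\sign(\E[U_1\mid\sigma(Y)]-\E U_1)$ and $|\cov(U,V)|\le\cov(U_1,V_1)$ with $U_1,V_1$ now $\{-1,0,1\}$-valued. Writing $U_1=\mathbbm{1}_{A_1}-\mathbbm{1}_{A_2}$ and $V_1=\mathbbm{1}_{B_1}-\mathbbm{1}_{B_2}$ with $A_1,A_2\in\sigma(X)$ disjoint and $B_1,B_2\in\sigma(Y)$ disjoint yields $\cov(U_1,V_1)=\sum_{i,j\in\{1,2\}}(-1)^{i+j}\big(P(A_i\cap B_j)-P(A_i)P(B_j)\big)$, which is $\le 4\alpha(\sigma(X),\sigma(Y))$ by the very definition of the mixing coefficient; scaling then recovers $|\cov(U,V)|\le 4\|U\|_\infty\|V\|_\infty\,\alpha$.

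For the general case I would pass to quantile functions. Using the layer-cake identity $X=\int_0^\infty(\mathbbm{1}_{\{X>t\}}-\mathbbm{1}_{\{-X>t\}})\,dt$ together with the analogous identity for $Y$ and the bounded-case bound applied to the indicators, one obtains Rio's covariance inequality
\[
|\cov(X,Y)|\ \le\ 2\int_0^{2\alpha(\sigma(X),\sigma(Y))}Q_{|X|}(u)\,Q_{|Y|}(u)\,du,
\]
where $Q_{|X|}$ is the quantile function of $|X|$ (so $\int_0^1 Q_{|X|}(u)^p\,du=\E|X|^p$). Applying the three-exponent H\"older inequality with exponents $p,q,r$ (legitimate since $\tfrac1p+\tfrac1q+\tfrac1r=1$) to the last integral gives
\[
\int_0^{2\alpha}Q_{|X|}(u)\,Q_{|Y|}(u)\,du\ \le\ \Big(\int_0^1 Q_{|X|}^p\Big)^{1/p}\Big(\int_0^1 Q_{|Y|}^q\Big)^{1/q}(2\alpha)^{1/r}=\|X\|_p\,\|Y\|_q\,(2\alpha)^{1/r},
\]
so $|\cov(X,Y)|\le 2^{1+1/r}\|X\|_p\|Y\|_q\alpha^{1/r}\le 4\|X\|_p\|Y\|_q\alpha^{1/r}$ because $r\ge1$ forces $2^{1+1/r}\le4$. (An alternative, if one tolerates an unspecified absolute constant in place of $4$, is to skip the quantile machinery: truncate $X,Y$ at levels $M=\|X\|_p\alpha^{-1/p}$ and $N=\|Y\|_q\alpha^{-1/q}$, bound the truncated--truncated cross term by the bounded case above and the three remaining cross terms by Markov's inequality and H\"older, each of which becomes $O(\|X\|_p\|Y\|_q\alpha^{1/r})$ for this choice of $M,N$.)

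The delicate parts are, first, keeping the constant equal to $4$ in the bounded case — the sign-function ``peeling'' must be executed exactly as above so that no extra factor creeps in — and, second, establishing Rio's quantile inequality, which needs the layer-cake reduction to indicators followed by a careful Fubini interchange to re-express the resulting double integral in terms of $Q_{|X|}$ and $Q_{|Y|}$ restricted to $[0,2\alpha]$. Everything after that (the H\"older step, the numerical bound $2^{1+1/r}\le4$, and the trivial degenerate case $\alpha=0$) is routine.
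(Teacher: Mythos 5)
Your proposal is correct, but there is nothing in the paper to compare it against: the paper states this lemma in its appendix on $\alpha$-mixing without any proof, treating it as the classical Davydov--Rio covariance inequality. Your two-stage argument is the standard derivation and, importantly, it is the right one for the stated constant: the bounded case via the sign-function peeling gives exactly $|\mathrm{Cov}(U,V)|\le 4\|U\|_\infty\|V\|_\infty\,\alpha$ with no slack, and the general case must go through Rio's quantile inequality $|\mathrm{Cov}(X,Y)|\le 2\int_0^{2\alpha}Q_{|X|}(u)Q_{|Y|}(u)\,du$ followed by three-exponent H\"older to land on $2^{1+1/r}\le 4$ (using that $1/p+1/q+1/r=1$ forces $r\ge 1$, and that $2\alpha\le 1$ so the quantile integrals are dominated by $\|X\|_p^p$ and $\|Y\|_q^q$). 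Your parenthetical alternative --- Davydov-style truncation at $M=\|X\|_p\alpha^{-1/p}$, $N=\|Y\|_q\alpha^{-1/q}$ --- is also valid but, as you note, yields only an unspecified absolute constant (classically $12$ or so), which would not match the lemma as stated; since the paper only ever uses this bound up to constants, either route would suffice for its applications, but the quantile route is the one that actually proves the displayed inequality.
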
 

\subsection{Strong law of large numbers}
\begin{theorem}\label{SLLN: alpha mixing}
    Let $\{Z_t\}$ be centered alpha mixing time series such that $\alpha(k) = O\lft( \frac{1}{ \lft(k L_k^2\rgt)^{\rho/(\rho-2)} } \rgt)$ for some $\rho >2$, where $L_k$ is non-decreasing sequence satisfying 
    $$
    \sum_{k=1}^\infty\frac{1}{kL_k} < \infty \qquad\qquad \text{and} \qquad\qquad L_{k} - L_{k-1} = O\lft(L_k/k\rgt).
    $$
    Suppose for some $1 \le p \le \rho < \infty$ one has
    $$
        \sum_{t=1}^\infty \frac{\E^{2/\rho}\lft[ |Z_t|^p\rgt]}{t^p} < \infty.
    $$
Then $\sum_{t =1}^n Z_t/n$ converges a.s  to $0$.
\end{theorem}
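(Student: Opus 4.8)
The strategy is the classical route for strong laws under a summable–moment condition: reduce $n^{-1}\sum_{t\le n}Z_t\to 0$ to the almost sure convergence of the weighted series $\sum_t Z_t/t$ via Kronecker's lemma, and then establish the latter by a dyadic blocking argument whose engine is a maximal moment inequality for $\alpha$-mixing sums. A useful first simplification: the hypothesis $\sum_t \E^{2/\rho}\!\left[|Z_t|^p\right]/t^p<\infty$ forces $\E\!\left[|Z_t|^p\right]\to 0$, so for all large $t$ the number $\E\!\left[|Z_t|^p\right]$ lies in $(0,1)$ and hence $\E\!\left[|Z_t|^p\right]\le \E^{2/\rho}\!\left[|Z_t|^p\right]$ (a base in $(0,1)$ raised to the power $2/\rho<1$ only increases), whence $\sum_t \E\!\left[|Z_t|^p\right]/t^p<\infty$ as well. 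I will carry out the bookkeeping with $b_t:=\E\!\left[|Z_t|^p\right]$ and $c_k:=\sum_{t\in I_k}b_t/t^p$, where $I_k=\{2^k,\dots,2^{k+1}-1\}$, so that $\sum_k c_k<\infty$.

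Next, for the series $\sum_t Z_t/t$: set $M_k=\max_{m\in I_k}\bigl|\sum_{t=2^k}^{m}Z_t/t\bigr|$. It suffices to prove $\sum_k\E[M_k]<\infty$, for then $\sum_k M_k<\infty$ almost surely, and since for $K\le K'$ the partial–sum increment $\sum_{t=2^K}^{m}Z_t/t$ over any $m<2^{K'}$ splits at the dyadic points into finitely many full blocks (each bounded in absolute value by the corresponding $M_k$) plus one partial block (bounded by $M_{K'-1}$), one gets $\sup_{m\ge 2^K}\bigl|\sum_{t=2^K}^m Z_t/t\bigr|\le\sum_{k\ge K}M_k\to 0$, i.e.\ the series is a.s.\ Cauchy. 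Since $\E[M_k]\le\bigl(\E[M_k^p]\bigr)^{1/p}$ (as $p\ge 1$), and since on $I_k$ one has $1/t\le 2^{-k}$ together with $\sum_{t\in I_k}t^{-2}\lesssim 2^{-k}$, a summation by parts reduces the weighted block maximum to the plain block sums:
\[
\E\!\left[M_k^p\right]\ \lesssim\ 2^{-kp}\,\E\!\left[\max_{m\in I_k}\Bigl|\sum_{t=2^k}^{m}Z_t\Bigr|^{\,p}\right].
\]

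The crux is therefore a \emph{maximal moment inequality for $\alpha$-mixing sequences}: under $\alpha(k)=O\bigl((kL_k^2)^{-\rho/(\rho-2)}\bigr)$ one expects a bound of the type $\E\bigl[\max_{m\le N}\bigl|\sum_{t=a+1}^{a+m}Z_t\bigr|^{p}\bigr]\le C\sum_{t=a+1}^{a+N}\E\!\left[|Z_t|^p\right]$, the $\alpha$-mixing analogue of the Doob/Kolmogorov ($p=2$) and von Bahr--Esseen/Marcinkiewicz--Zygmund ($1<p<2$) maximal inequalities; such inequalities follow from the two–scale big-block/small-block decomposition of the partial sums together with the covariance inequality \Cref{lemma: mixing covariance inequality}, and moment bounds of precisely this flavour already appear in the paper via \citet{yokoyama1980moment}. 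The refinement $L_k^2$ in the mixing rate, combined with $\sum_k (kL_k)^{-1}<\infty$ and $L_k-L_{k-1}=O(L_k/k)$, is exactly what absorbs the borderline logarithmic factors that a bare $\alpha(k)=O(k^{-\rho/(\rho-2)})$ would leave, so that the right-hand side is genuinely of ``independent-sum type''. Granting this, $\E[M_k^p]\lesssim 2^{-kp}\sum_{t\in I_k}b_t\le 2^{-kp}\cdot 2^{(k+1)p}c_k = 2^{p}c_k$, whence $\sum_k\E[M_k]\lesssim\sum_k c_k^{1/p}$; when $p=1$ this is $\sum_k c_k<\infty$ directly, and in general one closes the estimate by keeping the $p$-th power throughout (applying the maximal inequality to the \emph{weighted} sums $\sum_t (1/t)Z_t$, still $\alpha$-mixing with the same rate, so that $\E\bigl[\sup_{m\ge N}|\sum_{t=N}^m Z_t/t|^p\bigr]\lesssim\sum_{t\ge N}b_t/t^p\to 0$) and localizing any residual slowly varying factor to the block scale $\log 2^k=k$, where the $L_k$-conditions guarantee summability. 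This gives $\sum_k\E[M_k]<\infty$, hence a.s.\ convergence of $\sum_t Z_t/t$, and Kronecker's lemma then yields $n^{-1}\sum_{t\le n}Z_t\to 0$ almost surely.

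The main obstacle is obtaining the maximal moment inequality with the \emph{correct} right-hand side without any uniform higher-moment assumption: in the regime of interest $p$ may be strictly below $2$ while $\rho$ is large, so only a $p$-th moment inequality is available (a $\rho$-th moment Rosenthal bound is not, since $Z_t$ need not have $\rho$-th moments — the exponent $\rho$ enters solely through the mixing rate and through the $\E^{2/\rho}[\cdot]$ appearing in the hypothesis), and one must handle non-stationarity of $\{Z_t\}$ throughout. A clean way to package the argument is to isolate this maximal inequality as a standalone lemma and then invoke a general ``moment inequality $\Rightarrow$ strong law'' principle of Fazekas--Klesov type, which is precisely the blocking-plus-Kronecker bookkeeping above.
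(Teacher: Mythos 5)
Your proposal has two genuine gaps, one of which is a concrete error that undermines the whole bookkeeping.

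First, the ``first simplification'' is wrong. The hypothesis is $\sum_t \E^{2/\rho}\lft[|Z_t|^p\rgt]/t^p<\infty$, which only forces $\E^{2/\rho}\lft[|Z_t|^p\rgt]=o(t^p)$, i.e.\ $\E\lft[|Z_t|^p\rgt]=o(t^{p\rho/2})$; it does \emph{not} force $\E\lft[|Z_t|^p\rgt]\to 0$. For example, with $p>1$ one can take $\E\lft[|Z_t|^p\rgt]\asymp t^{a}$ for any $0<a<\rho(p-1)/2$ and the hypothesis still holds while $\E\lft[|Z_t|^p\rgt]\to\infty$. Consequently the inequality $\E\lft[|Z_t|^p\rgt]\le \E^{2/\rho}\lft[|Z_t|^p\rgt]$ (which needs the base to lie in $(0,1)$) is unavailable, and your working hypothesis $\sum_t b_t/t^p<\infty$ with $b_t=\E\lft[|Z_t|^p\rgt]$ is strictly stronger than what the theorem assumes. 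Everything downstream — the definition of $c_k$, the bound $\E[M_k^p]\lesssim 2^p c_k$, and the conclusion $\sum_k \E[M_k]<\infty$ — rests on this unjustified reduction. The exponent $2/\rho$ is not a cosmetic feature to be stripped off: it is precisely the currency in which mixing of size $\rho/(\rho-2)$ is traded against moments in the mixingale framework, and discarding it removes the mechanism that makes the theorem true under such weak moment conditions.

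Second, the maximal moment inequality $\E\bigl[\max_{m}\bigl|\sum_{t}Z_t\bigr|^{p}\bigr]\le C\sum_{t}\E\lft[|Z_t|^p\rgt]$ is never established; you explicitly ``grant'' it. For $1<p\le 2$ with no moments above $p$ assumed, the covariance inequality (\Cref{lemma: mixing covariance inequality}) cannot even be applied to $\mathrm{Cov}(Z_s,Z_t)$, since it requires $1/p+1/q+1/r=1$ with finite $\|Z_s\|_p,\|Z_t\|_q$, hence moments of order strictly greater than $2$; and Yokoyama-type bounds likewise presuppose moments above the target exponent. You correctly identify this as ``the main obstacle'' but do not overcome it, so the crux of the proof is missing. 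For comparison, the paper sidesteps both issues by applying Lemma 2.9 of \citet{mcleish1975} directly to $X_t=Z_t/t$ with $g_t(x)=x^p$ and $d_t=1$: McLeish's condition is stated exactly in terms of $\sum_t\E^{2/\rho}[g_t(|X_t|)]$ and mixing of size $\rho/(\rho-2)$ (the $L_k$ refinement entering through his Definition 1.4 and Remark 2.6b), which is why the theorem's hypotheses take the form they do. If you want a self-contained proof, you would need to reproduce McLeish's mixingale maximal inequality, not a Marcinkiewicz--Zygmund-type bound in terms of $\sum_t\E\lft[|Z_t|^p\rgt]$.
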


\
\\
\begin{proof}
    Using $L_n$ in the Definition 1.4 of \citet{mcleish1975}, $\{\alpha(n)\}$ is sequence of size$-\rho/(\rho-2)$. Following their Remark 2.6b, the results directly follows from their Lemma 2.9 with $g_t(x) = x^p$, $d_t=1$ and $X_t = Z_t/t$.
\end{proof}


\subsection{Central limit theorem}

Below is the central limit theorem for $\alpha$-mixing random variable. For a proof, one may see \cite{doukhan2012mixing}.
\begin{theorem}\label{lemma: CLT alpha-mixing}
Let $\{Z_t\}$ be a centred $\alpha$-mixing stationary time series. Suppose that it holds for some $\delta>0$,
$$
\sum_{k=1}^\infty \alpha(k)^{\delta/(2+\delta)} < \infty \text{ \ and \ } \E(|Z_1|^{2+\delta}) <\infty.
$$
Denote $S_n = \sum_{t=1}^n Z_t$ and $\sigma_n^2 = \E\left[ |S_n|^2 \right]$. Then
$$
\frac{S_{ \lfloor{nt}\rfloor} }{\sigma_n} \to W(t),
$$
where the convergence is in Skorohod topology and $W(t)$ is the standard Brownian motion on $[0,1]$.
\end{theorem}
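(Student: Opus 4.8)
\quad The assertion is the classical functional central limit theorem (invariance principle) for strongly mixing stationary sequences, so the shortest route is to verify that its hypotheses match those of a standard invariance principle --- for instance the one recorded in \cite{doukhan2012mixing}, going back to Herrndorf and to Doukhan, Massart and Rio --- and to invoke it. I sketch the self-contained argument below. It has two parts: (i) convergence of all finite-dimensional distributions of the rescaled partial-sum process $t\mapsto \sigma_n^{-1}S_{\lfloor nt\rfloor}$ to those of Brownian motion, and (ii) tightness in the Skorokhod space $D[0,1]$. First I would note that $\sum_{k\ge1}\alpha(k)^{\delta/(2+\delta)}<\infty$ together with \Cref{lemma: mixing covariance inequality} (used with exponents $2+\delta$, $2+\delta$, $(2+\delta)/\delta$) and $\E(|Z_1|^{2+\delta})<\infty$ makes $\sum_{k\in\mathbb Z}\operatorname{Cov}(Z_0,Z_k)$ absolutely convergent; writing $\sigma^2$ for this sum, a Ces\`aro argument gives $\sigma_n^2/n\to\sigma^2$, and more generally $\sigma_{\lfloor nt\rfloor}^2/\sigma_n^2\to t$. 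The statement is understood in the non-degenerate case $\sigma^2>0$, the case $\sigma^2=0$ being trivial. Given this, part (i) reduces by stationarity and the Cram\'er--Wold device to the one-dimensional CLT $S_n/\sigma_n\xrightarrow{\mathcal D} N(0,1)$.

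For that CLT I would use Bernstein's big-block/small-block decomposition. The plan is to pick integers $p=p_n$, $q=q_n$ with $q=o(p)$, $p=o(n)$ and $(n/p)\,\alpha(q)\to0$ --- possible since $\alpha(k)\to0$ --- partition $\{1,\dots,n\}$ into consecutive big blocks of length $p$ separated by small blocks of length $q$, and write $S_n=\sum_i U_i+\sum_i V_i$ with $U_i$, $V_i$ the sums over the $i$-th big, resp.\ small, block. Then I would show $\operatorname{Var}(\sum_i V_i)=o(\sigma_n^2)$ via the same covariance bound, so that the small blocks are asymptotically negligible; apply the Volkonskii--Rozanov coupling lemma to obtain independent $\tilde U_i\overset{d}{=}U_i$ with $\sum_i\mathbb P(U_i\ne\tilde U_i)\le (n/p)\,\alpha(q)\to0$; and finally verify the Lindeberg--Feller conditions for $\sigma_n^{-1}\sum_i\tilde U_i$, using a block moment bound $\E|U_i|^{2+\delta}\lesssim p^{1+\delta/2}$ (a Rio-type moment inequality for mixing sums under the stated summability) for the Lindeberg condition, and the negligibility of the small blocks and of the cross-block covariances for the variance normalisation $\operatorname{Var}(\sum_i\tilde U_i)/\sigma_n^2\to1$.

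For the tightness in part (ii) I would derive, from the Doukhan--Massart--Rio moment inequality, an increment bound $\E|S_b-S_a|^{2+\delta'}\lesssim (b-a)^{1+\delta'/2}$ valid for all $0\le a<b\le n$ and a suitable $\delta'\in(0,\delta]$ (with the $L^2$ version $\E|S_b-S_a|^2\lesssim b-a$ as the base case); after rescaling by $\sigma_n\asymp\sqrt n$ this verifies the classical moment criterion for tightness in $D[0,1]$ (e.g.\ Theorem~13.5 of Billingsley). Combining (i) and (ii) yields weak convergence of $\sigma_n^{-1}S_{\lfloor n\cdot\rfloor}$ to $W$ in the Skorokhod topology. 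The part I expect to be the real obstacle is the simultaneously compatible choice of the block lengths $p_n,q_n$ --- fast enough that the small-block variance and the coupling error vanish, yet slow enough that the big-block Lindeberg condition still holds with only $2+\delta$ finite moments --- together with the derivation of the block-wise moment and increment inequalities from Rio's covariance estimates; this is precisely the bookkeeping carried out in the cited references, which is why in the paper the result is simply quoted.
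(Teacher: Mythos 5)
The paper does not actually prove this statement: it is quoted as a known invariance principle for strongly mixing sequences, with the proof outsourced entirely to the citation of \citet{doukhan2012mixing}. Your sketch therefore supplies the argument that the paper deliberately omits, and it follows the canonical route (Bernstein big-block/small-block decomposition plus Volkonskii--Rozanov coupling and Lindeberg--Feller for the finite-dimensional distributions; a moment criterion for tightness). The preliminary step is right: with exponents $2+\delta$, $2+\delta$, $(2+\delta)/\delta$ in \Cref{lemma: mixing covariance inequality} the covariance series is absolutely summable, giving $\sigma_n^2/n\to\sigma^2$ and $\sigma_{\lfloor nt\rfloor}^2/\sigma_n^2\to t$, and the blocking argument for the one-dimensional CLT is the standard Ibragimov proof. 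One small caveat: the case $\sigma^2=0$ is not ``trivial'' under the normalisation $S_{\lfloor nt\rfloor}/\sigma_n$ --- the theorem as stated implicitly presupposes non-degeneracy, as every textbook version does; it is cleaner to state that assumption than to dismiss the degenerate case.

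The one place where your plan is thinner than you suggest is tightness. The increment bound $\E\lft|S_b-S_a\rgt|^{2+\delta'}\lesssim (b-a)^{1+\delta'/2}$ is a Rosenthal/Yokoyama-type inequality whose hypotheses are \emph{strictly stronger} than the bare Ibragimov condition $\sum_k\alpha(k)^{\delta/(2+\delta)}<\infty$: Yokoyama's theorem (which the paper itself invokes elsewhere, in the proof of \Cref{thm: long run variance estimation}) requires a weighted summability condition of the form $\sum_k (k+1)^{\delta'/2}\alpha(k)^{(\delta-\delta')/((2+\delta')( \cdot ))}<\infty$, which does not follow from the stated assumption for every $\delta'\in(0,\delta]$. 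Under the hypotheses as given, the tightness proof in Herrndorf and in Doukhan's monograph proceeds instead by truncation at a level depending on $n$ combined with a maximal inequality for the truncated sums --- this is exactly the ``bookkeeping'' you defer to the references, but it is not mere bookkeeping: as written, your tightness step would need either an added mixing-rate assumption or that more delicate truncation argument. Since the paper only ever applies the theorem in settings where sixth moments and the stronger rate $\sum_k k^{1/3}\alpha^{1/3}(k)<\infty$ are assumed, the gap is harmless for the paper's purposes, but it should be acknowledged if the proof is to stand on the stated hypotheses alone.
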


\section{Inequalities}

\begin{lemma}\label{lemma: sum-integration bound}
Let $f:[0,\infty] \to [0,\infty]$ be monotonically decreasing continuous function such that $\int_1^{\infty} f(x) dx < \infty$. Then
$$
\int_1^\infty f(x) dx \le \sum_{k\ge1} f(k) \le f(1) + \int_1^\infty f(x) dx \le f(0) + \int_1^\infty f(x) dx. 
$$

\end{lemma}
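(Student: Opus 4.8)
The plan is to run the standard integral-comparison (``integral test'') argument, using only the monotonicity and nonnegativity of $f$. The key observation is that for every integer $k\ge 1$ and every $x\in[k,k+1]$ we have $f(k+1)\le f(x)\le f(k)$ by monotonicity, and integrating this over $[k,k+1]$ gives the two-sided bound $f(k+1)\le \int_k^{k+1} f(x)\,dx\le f(k)$. Everything else is assembled from these inequalities by summation.

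First I would establish the lower bound. Summing the right-hand inequality $\int_k^{k+1} f(x)\,dx \le f(k)$ over $k\ge 1$ and using that all terms are nonnegative gives $\int_1^\infty f(x)\,dx = \sum_{k\ge1}\int_k^{k+1} f(x)\,dx \le \sum_{k\ge1} f(k)$. Since $f\ge 0$, the partial sums $\sum_{k=1}^N f(k)$ are nondecreasing, and by the upper bound proved next they are bounded above, so the series $\sum_{k\ge1} f(k)$ converges; this handles the implicit convergence claim in the statement.

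Next I would establish the upper bound. Summing the left-hand inequality $f(k+1)\le \int_k^{k+1} f(x)\,dx$ over $k\ge 1$ yields $\sum_{k\ge 2} f(k) = \sum_{k\ge 1} f(k+1)\le \sum_{k\ge1}\int_k^{k+1} f(x)\,dx = \int_1^\infty f(x)\,dx$. Adding $f(1)$ to both sides gives $\sum_{k\ge1} f(k)\le f(1) + \int_1^\infty f(x)\,dx$. Finally, the last inequality in the display is immediate from monotonicity of $f$: since $1\ge 0$ we have $f(1)\le f(0)$, hence $f(1)+\int_1^\infty f(x)\,dx\le f(0)+\int_1^\infty f(x)\,dx$.

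There is essentially no obstacle here; the only point deserving a word of care is the interchange of the infinite sum with the improper integral, i.e.\ the identity $\int_1^\infty f(x)\,dx = \sum_{k\ge1}\int_k^{k+1} f(x)\,dx$, which is justified because the integrand is nonnegative --- either by monotone convergence, or concretely by writing $\int_1^{N+1} f(x)\,dx = \sum_{k=1}^N \int_k^{k+1} f(x)\,dx$ for each finite $N$ and letting $N\to\infty$ (the limit exists and is finite by hypothesis).
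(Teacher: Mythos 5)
Your proof is correct and is the standard integral-comparison argument; the paper states this lemma without proof precisely because this canonical argument (bounding $f$ on each $[k,k+1]$ by its values at the endpoints, summing, and using nonnegativity to interchange sum and improper integral) is the intended one. No gaps.
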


\begin{lemma}\label{sequence rate bound general}
Let $r>1$ be a constant. For any positive sequence $\mathfrak{s}_j \asymp j^{-2r}$ and $\varphi\ge1/2$ we have
\begin{equation}
    \sum _{j\ge 1} \frac{\mathfrak{s}_j^{\varphi}}{(\alpha + \mathfrak{s}_j)^{\varphi}} \le c_1 \alpha^{-1/2r} 
\end{equation}
given any $\alpha > 0$. Here $c_1 >0$ is some constant.
\end{lemma}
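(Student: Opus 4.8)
The plan is to split the series at the index where $\mathfrak{s}_j$ crosses the level $\alpha$ and to control each piece by comparison with an integral. Fix constants $0<c_-\le c_+$ with $c_-j^{-2r}\le\mathfrak{s}_j\le c_+j^{-2r}$ for all $j\ge1$. Since $x\mapsto\bigl(x/(\alpha+x)\bigr)^{\varphi}$ is nondecreasing on $[0,\infty)$ and bounded by $1$, the general term obeys
$$
\frac{\mathfrak{s}_j^{\varphi}}{(\alpha+\mathfrak{s}_j)^{\varphi}}\;\le\;\min\Bigl\{1,\;\bigl(\mathfrak{s}_j/\alpha\bigr)^{\varphi}\Bigr\}\;\le\;\min\Bigl\{1,\;(c_+/\alpha)^{\varphi}\,j^{-2r\varphi}\Bigr\}\;=:\;f(j).
$$
The function $f$ is nonincreasing on $[0,\infty)$ with $f(0)=1$, so \Cref{lemma: sum-integration bound} (combined with $f(1)\le\int_0^1 f(x)\,dx$, which holds since $f$ is nonincreasing) yields $\sum_{j\ge1}f(j)\le\int_0^\infty f(x)\,dx$. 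The key structural point is that $r>1$ and $\varphi\ge1/2$ force $2r\varphi\ge r>1$ strictly, so the tail of $f$ is integrable with a constant that does not blow up as $\varphi\downarrow1/2$.

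Next I would evaluate the integral explicitly. Setting $x^{\star}:=(c_+/\alpha)^{1/(2r)}$ so that $f(x)=1$ for $x\le x^{\star}$ and $f(x)=(c_+/\alpha)^{\varphi}x^{-2r\varphi}$ for $x>x^{\star}$, one gets
$$
\int_0^\infty f(x)\,dx\;=\;x^{\star}+(c_+/\alpha)^{\varphi}\,\frac{(x^{\star})^{1-2r\varphi}}{2r\varphi-1}\;=\;x^{\star}\cdot\frac{2r\varphi}{2r\varphi-1},
$$
where the middle step uses the exact cancellation $(c_+/\alpha)^{\varphi}(x^{\star})^{1-2r\varphi}=(c_+/\alpha)^{\varphi+(1-2r\varphi)/(2r)}=(c_+/\alpha)^{1/(2r)}=x^{\star}$. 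Because $t\mapsto t/(t-1)$ is decreasing on $(1,\infty)$ and $2r\varphi\ge r$, we have $\tfrac{2r\varphi}{2r\varphi-1}\le\tfrac{r}{r-1}$, while $x^{\star}=c_+^{1/(2r)}\alpha^{-1/(2r)}$; hence
$$
\sum_{j\ge1}\frac{\mathfrak{s}_j^{\varphi}}{(\alpha+\mathfrak{s}_j)^{\varphi}}\;\le\;\frac{r}{r-1}\,c_+^{1/(2r)}\,\alpha^{-1/(2r)},
$$
which is the claimed bound with $c_1=\tfrac{r}{r-1}c_+^{1/(2r)}$, depending only on $r$ and the implicit constant in $\mathfrak{s}_j\asymp j^{-2r}$, and in particular uniform in both $\varphi$ and $\alpha$.

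I do not anticipate a genuine obstacle: the whole argument is a routine split-and-integrate estimate. The only two things worth a moment's care are that the strict inequality $2r\varphi>1$ (from $r>1$) is what keeps the constant $1/(2r\varphi-1)$ bounded uniformly in $\varphi\ge1/2$, and that the $\alpha$-powers must cancel exactly in the tail term, which is the identity recorded above. If one prefers to avoid the integral comparison altogether, the identical conclusion follows by splitting the sum at $j_0=\lceil x^{\star}\rceil$, bounding the first $j_0$ terms by $1$ and the remaining terms by $(c_+/\alpha)^{\varphi}\sum_{j>j_0}j^{-2r\varphi}\le(c_+/\alpha)^{\varphi}\int_{j_0}^\infty x^{-2r\varphi}\,dx$.
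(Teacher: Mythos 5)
Your proof is correct and follows essentially the same route as the paper's: bound the summand, compare the sum to an integral via the monotone-comparison lemma, and split the integral at the crossover point $(c/\alpha)^{1/(2r)}$, where the two pieces each contribute $O(\alpha^{-1/(2r)})$. The only difference is cosmetic — you first replace the summand by $\min\{1,(\mathfrak{s}_j/\alpha)^{\varphi}\}$ and track the constant explicitly (noting uniformity in $\varphi\ge 1/2$ via $2r\varphi\ge r>1$), which the paper leaves implicit.
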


\begin{proof}
Given $\mathfrak{s}_j \asymp j^{-2r}$, we have some positive constants $c_2$ and $c_3$ such that $c_2 j^{-2r} \le \mathfrak{s}_j \le c_3 j^{-2r}$,
$$
\implies \sum _{j\ge 1} \frac{\mathfrak{s}_j^{\varphi}}{(\alpha + \mathfrak{s}_j)^{\varphi}} \le \sum_{j\ge1} \frac{ (c_3 j^{-2r})^{\varphi} }{ (\alpha + c_2 j^{-2r} )^{\varphi} } = c_3 \sum_{j\ge1} \frac{1}{ (\alpha j^{2r} + c_2 ) ^{\varphi} }.
$$
Now, we shall upper bound the quantity on the right of above equation using \Cref{lemma: sum-integration bound}. Observe that the function defined by $x \mapsto \frac{1}{ (\alpha x^{2r} + c_2 ) ^{\varphi} }$ satisfy the conditions of \Cref{lemma: sum-integration bound} and therefore

$$
\sum_{j\ge1} \frac{1}{ (\alpha j^{2r} + c_2 ) ^{\varphi} } \le c_4 + \int_1^\infty  \frac{1}{ (\alpha x^{2r} + c_2 ) ^{\varphi} } dx
$$
Observe that
\begin{equation}\label{test 1}
    \begin{aligned}
    &\int_1^{(c_2/\alpha)^{1/2r}}  \frac{1}{ (\alpha x^{2r} + c_2 ) ^{\varphi} } dx \le \int_1^{(c_2/\alpha)^{1/2r}} \frac{1}{ (c_2 ) ^{\varphi} } dx = \frac{c_5}{\alpha^{1/2r}} - c_4 \le \frac{c_5}{\alpha^{1/2r}}
    \\
    & \int_{(c_2/\alpha)^{1/2r}}^\infty \frac{1}{ (\alpha x^{2r} + c_2 ) ^{\varphi} } dx \le \int_{(c_2/\alpha)^{1/2r}}^\infty \frac{1}{ (\alpha x^{2r} )  ^{\varphi} } dx = \frac{c_6}{\alpha^{1/2r}}.
\end{aligned}
\end{equation}
Using \eqref{test 1} we may write,
$$
\int_1^\infty  \frac{1}{ (\alpha x^{2r} + c_2 ) ^{\varphi} } dx \le \frac{c_5 + c_6}{\alpha^{1/2r}}
$$
which lead us to the required result.

\end{proof}

\begin{corollary}\label{sequence rate bound required}
Let $\left\{ \alpha_n \right\}_{n\ge1} $ be positive sequence converging to 0. Under assumptions of \Cref{sequence rate bound general}, we have
\begin{equation}
    \sum _{j\ge 1} \frac{\mathfrak{s}_j^{1+2t}}{(\alpha_n + \mathfrak{s}_j)^{1+2t}} = O\left( \alpha_n^{- 1/2r} \right)
\end{equation}
\end{corollary}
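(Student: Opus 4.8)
The plan is to obtain the corollary as an immediate specialisation of \Cref{sequence rate bound general}. The exponent $1+2t$ here plays exactly the role of $\varphi$ in that lemma, so the first thing I would record is that in every setting where this corollary is invoked one has $1+2t \ge 1/2$ (equivalently $t \ge -1/4$), which is precisely the admissibility hypothesis $\varphi \ge 1/2$ of \Cref{sequence rate bound general}. Granting this, the lemma gives, for every fixed $\alpha > 0$,
$$
\sum_{j\ge 1} \frac{\mathfrak{s}_j^{1+2t}}{(\alpha + \mathfrak{s}_j)^{1+2t}} \;\le\; c_1\, \alpha^{-1/2r},
$$
and the point I would stress is that the constant $c_1$ produced there depends only on $r$, on $\varphi = 1+2t$, and on the two absolute constants hidden in $\mathfrak{s}_j \asymp j^{-2r}$; in particular it does not depend on $\alpha$.

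The second and final step is then to take $\alpha = \alpha_n$ for each $n \ge 1$. Since $c_1$ is independent of $n$, this yields $\sum_{j\ge 1}\mathfrak{s}_j^{1+2t}/(\alpha_n + \mathfrak{s}_j)^{1+2t} \le c_1 \alpha_n^{-1/2r}$ simultaneously for all $n$, which is exactly the assertion $\sum_{j\ge 1}\mathfrak{s}_j^{1+2t}/(\alpha_n + \mathfrak{s}_j)^{1+2t} = O(\alpha_n^{-1/2r})$. I would also note that the hypothesis $\alpha_n \to 0$ is not actually needed for this $O(\cdot)$ statement on its own; it is stated because the corollary is later combined with other $n$-dependent estimates where the decay of $\alpha_n$ is what matters.

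I do not expect any real obstacle: the argument is a one-line application of \Cref{sequence rate bound general}, and the only things to make explicit are (i) $1+2t \ge 1/2$ so that the lemma applies and (ii) uniformity of its constant in $\alpha$ (hence in $n$). Should a self-contained derivation be desired instead, one could simply rerun the estimate inside the proof of \Cref{sequence rate bound general}: bound the sum by $c_3\big(c_4 + \int_1^\infty (\alpha_n x^{2r} + c_2)^{-(1+2t)}\,dx\big)$ using \Cref{lemma: sum-integration bound}, then split the integral at $x = (c_2/\alpha_n)^{1/2r}$ exactly as in \eqref{test 1} to read off the $\alpha_n^{-1/2r}$ rate; but citing the lemma is the cleaner route.
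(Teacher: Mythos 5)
Your proposal is correct and matches the paper's treatment: the corollary is stated as an immediate specialisation of \Cref{sequence rate bound general} with $\varphi = 1+2t$ and $\alpha = \alpha_n$, and the paper gives no separate proof precisely because, as you observe, the constant $c_1$ in the lemma is uniform in $\alpha$. Your two remarks — that one must check $1+2t \ge 1/2$ and that $\alpha_n \to 0$ is not actually needed for the bound itself — are both accurate.
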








\begin{lemma}\label{lemma: maximal inequality useful}
    Let $\{Z_i\}$  be a sequence of random variable. Let $\xi > 0$. Suppose 
    $$
    \E\lft[ \lft| \sum_{m=1}^t Z_j \rgt|^2  \rgt] = O(t).
    $$
    Then for any positive non-increasing sequence $\{\gamma_t\}$, we have
    \begin{equation}\label{eq: maximal useful general}
        \E\lft[ \max_{1<t\le n} \lft| \gamma_t \sum_{m=1}^t Z_j\rgt|^2  \rgt] = O(1) \sum_{t=2}^n \gamma_t^2.
    \end{equation}
    Consequently we have 
    \begin{equation}\label{eq: maximal useful specific}
        \E\lft[ \max_{1<t\le n} \lft|\frac{1}{\sqrt{t \log^{1+\xi} t}} \sum_{m=1}^t Z_j\rgt|^2  \rgt] = O(1).
    \end{equation}
    
\end{lemma}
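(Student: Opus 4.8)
Looking at Lemma~\ref{lemma: maximal inequality useful}, the task is to prove a maximal inequality: from the second-moment bound $\E[|\sum_{m=1}^t Z_j|^2] = O(t)$, deduce $\E[\max_{1<t\le n}|\gamma_t \sum_{m=1}^t Z_j|^2] = O(1)\sum_{t=2}^n \gamma_t^2$ for non-increasing $\{\gamma_t\}$, and then specialize to $\gamma_t = 1/\sqrt{t\log^{1+\xi}t}$.

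\textbf{Proof proposal.} The plan is to first handle the sum-free partial-sum process using a dyadic (Rademacher--Menshov type) argument, and then transfer a bound on $\max_t |\sum_{m=1}^t Z_j|$ over dyadic blocks to a bound with the weights $\gamma_t$ via Abel summation (summation by parts), exploiting that $\{\gamma_t\}$ is non-increasing and hence of bounded variation. Write $S_t = \sum_{m=1}^t Z_j$ and $S_0 = 0$. First I would establish the classical Rademacher--Menshov maximal inequality: for any $N$, $\E[\max_{1\le t\le N}|S_t|^2] \le C (\log_2 N + 1)^2 \max_{1 \le t \le N}\E[|S_t|^2]$ — but actually for our purposes a cleaner route is available because we need the bound in terms of $\sum \gamma_t^2$, not a crude $\log^2$ factor. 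So instead I would directly bound
$$
\E\Big[\max_{1<t\le n}|\gamma_t S_t|^2\Big]
$$
by decomposing $[2,n]$ into dyadic blocks $I_k = (2^{k-1}, 2^k]$ for $k = 1, \ldots, \lceil\log_2 n\rceil$. On block $I_k$, since $\gamma_t$ is non-increasing, $\gamma_t \le \gamma_{2^{k-1}+1}$ for $t \in I_k$, so $\max_{t\in I_k}|\gamma_t S_t|^2 \le \gamma_{2^{k-1}+1}^2 \max_{t\in I_k}|S_t|^2$. Then $\E[\max_{t\in I_k}|S_t|^2] \le \E[\max_{2^{k-1}< t \le 2^k}|S_t - S_{2^{k-1}}|^2] + \E[|S_{2^{k-1}}|^2] \lesssim k^2 2^{k-1} + 2^{k-1} \lesssim k^2 2^k$ by the Rademacher--Menshov inequality applied to the increments within the block (which again only uses the second-moment hypothesis, via the variance bound $\E[|S_b - S_a|^2] = O(b-a)$ that follows from stationarity-type telescoping — here one would note $\E[|S_b - S_a|^2] \le 2\E[|S_b|^2] + 2\E[|S_a|^2] = O(b)$, which suffices up to adjusting constants, or more carefully use the hypothesis directly). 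Summing, $\E[\max_{1<t\le n}|\gamma_t S_t|^2] \lesssim \sum_k k^2 2^k \gamma_{2^{k-1}+1}^2 \lesssim \sum_k k^2 \sum_{t\in I_k}\gamma_t^2$, and on block $I_k$ one has $k \asymp \log_2 t$, giving a bound of the shape $\sum_{t=2}^n (\log_2 t)^2 \gamma_t^2$.

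Here lies the one subtlety: the claimed bound \eqref{eq: maximal useful general} has no logarithmic factor, just $\sum_{t=2}^n \gamma_t^2$. I expect this to be the main obstacle, and I would resolve it by a sharper treatment: rather than using the crude within-block Rademacher--Menshov $k^2$ factor on each of $\log_2 n$ blocks and summing, one performs a single global chaining/bisection where the logarithmic losses from successive refinements combine telescopically against the decreasing weights. Concretely, the standard trick is to write $\max_{1<t\le n}|\gamma_t S_t| \le \sum_{\ell=0}^{\lceil\log_2 n\rceil} \max_{\text{dyadic blocks }B \text{ at level }\ell} \gamma_{\min B}\,|S_{\text{(increment over }B)}|$ and then use $(\sum a_\ell)^2 \le (\sum w_\ell)(\sum a_\ell^2/w_\ell)$ with weights $w_\ell$ chosen so that $\sum w_\ell < \infty$ (e.g. $w_\ell = (\ell+1)^{-2}$ times a normalizing constant, absorbing the needed $\log$), which reorganizes everything into $C\sum_t \gamma_t^2$ after using monotonicity of $\gamma$ to compare $\gamma_{\min B}^2 |B|$ with $\sum_{t\in B}\gamma_t^2$. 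Since the lemma is only invoked via its consequences in the paper and $\log^{1+\xi}t$ with $\xi>0$ dominates $(\log t)^1$, even the slightly lossy version suffices downstream; but I would present the clean version using the weighted Cauchy--Schwarz over dyadic levels.

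Finally, for the consequence \eqref{eq: maximal useful specific}, apply \eqref{eq: maximal useful general} with $\gamma_t = (t\log^{1+\xi}t)^{-1/2}$, which is non-increasing for $t \ge 2$ (or $t$ large; handle small $t$ separately since finitely many terms contribute $O(1)$). Then $\sum_{t=2}^n \gamma_t^2 = \sum_{t=2}^n \frac{1}{t\log^{1+\xi}t}$, and by Lemma~\ref{lemma: sum-integration bound} applied to the decreasing function $x\mapsto (x\log^{1+\xi}x)^{-1}$, this sum is bounded by $\int_2^\infty \frac{dx}{x\log^{1+\xi}x} + \text{const} = \frac{(\log 2)^{-\xi}}{\xi} + \text{const} < \infty$ since $\xi > 0$. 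Hence $\E[\max_{1<t\le n}|\gamma_t S_t|^2] = O(1)$ uniformly in $n$, which is \eqref{eq: maximal useful specific}. The only care needed is the small-$t$ monotonicity caveat, which is harmless because $\max$ over a finite initial segment of $t$ contributes a bounded quantity by the hypothesis $\E[|S_t|^2]=O(t)$ directly.
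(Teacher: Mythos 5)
Your proposal diverges from the paper at the very first step: the paper does not prove \eqref{eq: maximal useful general} at all, but obtains it as a direct citation of a ready-made H\'ajek--R\'enyi-type inequality (Theorem B.3 of Kirch, 2006), and then deduces \eqref{eq: maximal useful specific} exactly as you do, from the monotonicity of $t\mapsto (t\log^{1+\xi}t)^{-1/2}$ and the summability of $\sum_{t\ge 2}(t\log^{1+\xi}t)^{-1}$. Your attempt to supply a self-contained proof of the general inequality contains a genuine gap precisely at the step you flag as "the one subtlety." The weighted Cauchy--Schwarz over dyadic levels, $(\sum_\ell a_\ell)^2 \le (\sum_\ell w_\ell)(\sum_\ell a_\ell^2/w_\ell)$ with $w_\ell \asymp (\ell+1)^{-2}$, does not remove the logarithmic loss --- it is the mechanism that \emph{creates} it. Each dyadic level $\ell$ contributes $\E[a_\ell^2] \lesssim \sum_{t} \gamma_t^2$ (the blocks at a fixed level partition $(1,n]$), so $\sum_\ell \E[a_\ell^2]/w_\ell \asymp \sum_\ell (\ell+1)^2 \sum_t \gamma_t^2$, which is a $\log^2 n$ (indeed $\log^3 n$ after summing over $\ell$) degradation of the target bound, exactly the Rademacher--Menshov factor. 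Worse, under the hypothesis as you use it --- second moments of block sums only, with no martingale, mixing, or maximal-block-moment structure --- the log-free bound is not merely hard to prove but unattainable: Menshov's construction of orthonormal systems shows the $\log^2 n$ factor in $\E[\max_{t\le n}S_t^2]$ is sharp for sequences with orthogonal unit-variance increments, which satisfy $\E[(S_k-S_j)^2]=k-j$. So no rearrangement of the dyadic argument can close this gap; the paper avoids the issue only because the cited theorem's hypotheses are verified in its applications through the stationary $\alpha$-mixing structure (via Lemmas~\ref{lemma: general markov type bound I} and \ref{lemma: general markov type bound II}), not from the bare second-moment display in the lemma statement.

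Two further points. First, your fallback claim that "the slightly lossy version suffices downstream" is quantitatively wrong: a residual $\log^2 t$ factor against the weight $\gamma_t^2 = (t\log^{1+\xi}t)^{-1}$ gives $\sum_t (t\log^{\xi-1}t)^{-1}$, which diverges for every $\xi \le 2$, whereas the paper needs \eqref{eq: maximal useful specific} for arbitrary $\xi>0$ (one could rescue the downstream results by inflating every exponent of $\log$ by $2$, but that is a change to the statements, not a proof of this one). Second, your remark that $\E[(S_b-S_a)^2]\le 2\E[S_b^2]+2\E[S_a^2]=O(b)$ "suffices up to adjusting constants" is false for the within-block refinements: there you need $O(b-a)$ for short blocks located deep in the sequence, and $O(b)$ is off by an unbounded factor. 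The hypothesis genuinely needs to be read as a bound over all blocks $(j,k]$ (as in Kirch's theorem), which is what the paper's applications actually establish.
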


\
\\
\begin{proof}
    Observe that \eqref{eq: maximal useful general} follows directly from Theorem B.3 of \citet{kirch2006resampling}.

    \
    \\
    Note that $\lft\{  \frac{1}{ \sqrt{t \log^{1+\xi} t}}\rgt\}_{t=2}^n$ is a non-increasing sequence and from \eqref{eq: maximal useful general}  
    $$
    \E\lft[ \max_{1<t\le n} \lft|\frac{1}{\sqrt{t \log^{1+\xi} t}} \sum_{m=1}^t Z_j\rgt|^2  \rgt] = O(1) \sum_{t=2}^n \frac{1}{t \log^{1+\xi} t},
    $$
    and the \eqref{eq: maximal useful specific} follows from the fact that $\sum_{t=2}^\infty \frac{1}{t \log^{1+\xi} t} < \infty$.
\end{proof}

\
\\
\begin{lemma}\label{lemma: max to Op by peeling}
Let $\nu > 0$.  Let $\{Z_i\}$  be a sequence of random variable. Suppose we have
$$
\E\left[ |S_i^j|^2 \right] \le c' (j-i),
$$
where $S_i^j = \sum_{k=i}^j Z_k$ and $c'>0$ is some absolute constant.
Then for any given $\eps>0$
$$
P\left( \forall r > 1/\nu, \qquad  |S_r| \le \frac{C_1}{\sqrt{\eps}} \sqrt{r} (\log r\nu + 1)  \right) > 1 -\eps,
$$
where $C_1 = \frac{\pi}{\log 2}\sqrt{\frac{c}{6}}$.

\
\\
In other words
$$
\max_{r > 1/\nu} \frac{1}{\sqrt{r}\lft( \log (r\nu) + 1\rgt) } \lft| \sum_{j=1}^r Z_j \rgt| = O_p(1).
$$
\end{lemma}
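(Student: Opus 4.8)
The symbol $S_r$ here means the partial sum $S_1^r=\sum_{j=1}^r Z_j$, and (writing $c=c'$ for the constant in the hypothesis $\E[|S_i^j|^2]\le c(j-i)$) the plan is to first prove the second-moment bound
\[
\E\!\left[\ \sup_{r>1/\nu}\ \frac{|S_r|^2}{r\,(\log(r\nu)+1)^2}\ \right]\ \le\ C_1^2,\qquad C_1^2=\frac{c}{(\log 2)^2}\cdot\frac{\pi^2}{6},
\]
and then to read off the stated tail bound from Markov's inequality: with $M=C_1/\sqrt\eps$ one has $\mathbb P\big(\sup_{r>1/\nu}|S_r|/(\sqrt r(\log(r\nu)+1))>M\big)\le C_1^2/M^2=\eps$, and the complement of this event is exactly the event in the statement. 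Since $\log(r\nu)+1$ and $\sqrt r$ are both increasing in $r$, the weight $\gamma_t:=\mathbbm{1}\{t\nu>1\}\,\big(\sqrt t(\log(t\nu)+1)\big)^{-1}$ is non-increasing, so the hypothesis $\E[|S_1^t|^2]\le ct=O(t)$ lets me invoke the maximal inequality of \Cref{lemma: maximal inequality useful} — equation \eqref{eq: maximal useful general} — with this $\gamma_t$.

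Carrying this out, \eqref{eq: maximal useful general} gives $\E[\max_{1<t\le n}|\gamma_t S_1^t|^2]\lesssim \sum_{t=2}^n\gamma_t^2=\sum_{1/\nu<t\le n}\frac1{t(\log(t\nu)+1)^2}$; letting $n\to\infty$ this sum is bounded, uniformly in $\nu$ and $n$, by $1+\int_{1/\nu}^\infty\frac{dx}{x(\log(x\nu)+1)^2}$, and the substitutions $u=x\nu$ and then $v=\log u$ collapse the integral to $\int_0^\infty(v+1)^{-2}\,dv=1$. Monotone convergence upgrades the $\max_{t\le n}$ to $\sup_{r>1/\nu}$, so the supremum above has finite second moment, bounded by an absolute multiple of $c$. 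To pin down the precise constant $C_1=\tfrac{\pi}{\log 2}\sqrt{c/6}$ rather than an unspecified $O(1)$, I would expose the dyadic peeling underlying \eqref{eq: maximal useful general}: partition $\{r:r\nu>1\}$ into blocks $\mathcal B_\ell=\{r:2^{\ell-1}<r\nu\le 2^\ell\}$, $\ell\ge1$, observe that on $\mathcal B_\ell$ the normalizer is at least $2^{(\ell-1)/2}\nu^{-1/2}\,\ell\log 2$ (using $(\ell-1)\log 2+1\ge\ell\log 2$), control $\mathbb P(\max_{r\le 2^\ell/\nu}|S_r|>M\cdot2^{(\ell-1)/2}\nu^{-1/2}\ell\log 2)$ by Chebyshev together with the block-level bound $\E[\max_{r\le 2^\ell/\nu}|S_r|^2]\le c\,2^\ell/\nu$, and sum $\frac{c}{M^2(\log 2)^2}\sum_{\ell\ge1}\ell^{-2}=\frac{c\pi^2}{6M^2(\log 2)^2}$ over the blocks.

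The delicate step is the block-level maximal inequality $\E[\max_{r\le m}|S_r|^2]\le c\,m$ with no logarithmic loss: under only $\E[|S_i^j|^2]\le c(j-i)$ the classical Rademacher--Menshov bound would carry an extra $(\log m)^2$, which destroys summability over $\ell$. Handling this is precisely the point of \Cref{lemma: maximal inequality useful} / Theorem~B.3 of the cited reference, whose weighted form converts that would-be $(\log m)^2$ into the summable factor $(\log(r\nu)+1)^{-2}$ that is already built into the normalizer; so the real content is choosing the weights $\gamma_t$ so the cancellation is exact and the per-block contributions telescope into $\sum_{\ell\ge1}\ell^{-2}=\pi^2/6$. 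The remaining items — verifying $\{r\in\mathbb Z: r\nu>1\}=\bigcup_{\ell\ge1}\mathcal B_\ell$ up to integrality of $r$, treating the $\ell=1$ block where $\log(r\nu)+1$ is only bounded below by $1$, and the passage from $\max_{t\le n}$ to $\sup_{r}$ — are routine bookkeeping.
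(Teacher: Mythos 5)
Your proposal is correct and, in its second half, reproduces the paper's own argument almost verbatim: dyadic blocks $\{r: 2^{\ell-1}<r\nu\le 2^\ell\}$, the log-free block maximal bound $\E[\max_{r\le m}|S_r|^2]\lesssim m$ obtained from \Cref{lemma: maximal inequality useful} with $\gamma_t\equiv 1$, per-block Chebyshev, and summation of $\sum_\ell \ell^{-2}=\pi^2/6$ to get $C_1=\tfrac{\pi}{\log 2}\sqrt{c/6}$. The alternative first route (weighted maximal inequality with $\gamma_t=(\sqrt t(\log(t\nu)+1))^{-1}$ plus Markov) also works for the $O_p(1)$ claim, modulo the cosmetic point that the indicator $\mathbbm{1}\{t\nu>1\}$ makes the weight jump up from $0$ and so is not literally non-increasing; restrict the maximum to $t>1/\nu$ (or extend $\gamma_t$ constantly below $1/\nu$) and the lemma applies as you intend.
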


\begin{proof}
Observe that using \Cref{lemma: maximal inequality useful} with $\gamma_t = 1$, we can get 
$$
\E\left[ \max_{ i < t \le j} |S_i^t|^2 \right] \le c (j-i)
$$
for some absolute constant $c>0$.

\
\\
We are going to use the peeling argument for the proof. With
$$
\E\left[  \max_{m\le k \le 2m} \left|\frac{S_k}{\sqrt{k}}\right|^2 \right] \le \frac{1}{m} \E\left[  \max_{m\le k \le 2m} |S_k|^2 \right] \le c.
$$
Let's define $A_j = \left[ 2^{j-1}/\nu, 2^{j}/\nu \right]$.
Using Markov's inequality we may write,
\begin{align*}
    &P\left( \max_{m\le k \le 2m} \left|\frac{S_k}{\sqrt{k}}\right|  \ge x  \right) \le \frac{1}{x^2} \E\left[  \max_{m\le k \le 2m} \left|\frac{S_k}{\sqrt{k}}\right|^2 \right] \le cx^{-2} 
    \\
    \implies & P\left( \bigcup_{k\in A_j} \left\{ \left|\frac{S_k}{\sqrt{k}}\right|  \ge \alpha j \right\} \right) \le \frac{c}{\alpha^2 j^2}
    \\
    \implies & P\left( \bigcup_{k\in A_j} \left\{ \left|\frac{S_k}{\sqrt{k}}\right|  \ge \alpha (\log_2 {k\nu} + 1) \right\} \right) \le \frac{c}{\alpha^2 j^2}.
\end{align*}
The last equation follows from 
$$
2^{j-1}/\nu \le k \le  2^{j}/\nu \implies  j \le \log_2 {k\nu} + 1 \le j+1.
$$
And
\begin{align*}
    & P \left( \bigcup_{r \ge 1/\nu} \left\{ \left|\frac{S_r}{\sqrt{r}}\right|  \ge \alpha (\log_2 {r\nu} + 1) \right\} \right) 
    \\
    = & P \left( \bigcup_{j\ge1} \bigcup_{k \in A_j} \left\{ \left|\frac{S_k}{\sqrt{k}}\right|  \ge \alpha (\log_2 {k\nu} + 1) \right\} \right)
    \\
    \le & \sum_{j = 1}^{\infty} P\left( \bigcup_{k\in A_j} \left\{ \left|\frac{S_k}{\sqrt{k}}\right|  \ge \alpha (\log_2 {k\nu} + 1) \right\} \right)
    \\
    \le &\frac{c}{\alpha^2} \sum_{j = 1}^{\infty} \frac{1}{j^2} = \frac{c \pi^2}{6 \alpha^2}.
\end{align*}
With $\alpha^2 = \frac{c\pi^2}{6\eps \log^2{2}} $ and $\log (2) < 1$, we can have
$$
P \left( \max_{r\ge 1/\nu} |S_r|  \ge \frac{C_1}{\sqrt{\eps}} \sqrt{r}(\log{r\nu} + 1)  \right) \le \eps.
$$

\end{proof}

\end{document}